\newtheorem{theorem}{Theorem}[subsection]
\newtheorem{definition}[theorem]{Definition}
\newtheorem{definition-lemma}[theorem]{Definition/Lemma}
\newtheorem{definition-explanation}[theorem]{Definition/Explanation}
\newtheorem{explanation-definition}[theorem]{Explanation/Definition}
\newtheorem{definition-fact}[theorem]{Definition/Fact}
\newtheorem{definition-notation}[theorem]{Definition/Notation}
\newtheorem{definition-conjecture}[theorem]{Definition/Conjecture}
\newtheorem{lemma}[theorem]{Lemma}
\newtheorem{lemma-definition}[theorem]{Lemma/Definition}
\newtheorem{remark}[theorem]{\it Remark}
\newtheorem{remark-notation}[theorem]{\it Remark/Notation}
\newtheorem{remark-convention}[theorem]{\it Remark/Convention}
\newtheorem{application-lemma}[theorem]{Application/Lemma}
\newtheorem{example}[theorem]{Example}
\newtheorem{example-definition}[theorem]{Example/Definition}
\newtheorem{definition-prototype}[theorem]{Definition-Prototype}
\newtheorem{terminology}[theorem]{\it Terminology}
\numberwithin{equation}{subsection}
\newtheorem{stheorem}{Theorem}[section]
\newtheorem{sdefinition}[stheorem]{Definition}
\newtheorem{sdefinition-lemma}[stheorem]{Definition/Lemma}
\newtheorem{sdefinition-explanation}[stheorem]{Definition/Explanation}
\newtheorem{sexplanation-definition}[stheorem]{Explanation/Definition}
\newtheorem{sdefinition-fact}[stheorem]{Definition/Fact}
\newtheorem{sdefinition-notation}[stheorem]{Definition/Notation}
\newtheorem{sdefinition-conjecture}[stheorem]{Definition/Conjecture}
\newtheorem{slemma-definition}[stheorem]{Lemma/Definition}
\newtheorem{sremark}[stheorem]{\it Remark}
\newtheorem{sremark-notation}[stheorem]{\it Remark/Notation}
\newtheorem{sremark-convention}[stheorem]{\it Remark/Convention}
\newtheorem{sapplication-lemma}[stheorem]{Application/Lemma}
\newtheorem{sexample-definition}[stheorem]{Example/Definition}
\newtheorem{sdefinition-prototype}[stheorem]{Definition-Prototype}
\newtheorem{sstheorem}{Theorem}[subsubsection]
\newtheorem{ssdefinition}[sstheorem]{Definition}
\newtheorem{ssdefinition-lemma}[sstheorem]{Definition/Lemma}
\newtheorem{ssdefinition-explanation}[sstheorem]{Definition/Explanation}
\newtheorem{ssexplanation-definition}[sstheorem]{Explanation/Definition}
\newtheorem{ssdefinition-fact}[sstheorem]{Definition/Fact}
\newtheorem{ssdefinition-notation}[sstheorem]{Definition/Notation}
\newtheorem{ssdefinition-conjecture}[sstheorem]{Definition/Conjecture}
\newtheorem{sslemma}[sstheorem]{Lemma}
\newtheorem{sslemma-definition}[sstheorem]{Lemma/Definition}
\newtheorem{ssproposition}[sstheorem]{Proposition}
\newtheorem{sscorollary}[sstheorem]{Corollary}
\newtheorem{ssremark}[sstheorem]{\it Remark}
\newtheorem{ssremark-notation}[sstheorem]{\it Remark/Notation}
\newtheorem{ssremark-convention}[stheorem]{\it Remark/Convention}
\newtheorem{ssapplication-lemma}[sstheorem]{Application/Lemma}
\newtheorem{ssexample}[sstheorem]{Example}
\newtheorem{ssexample-definition}[sstheorem]{Example/Definition}
\newtheorem{ssdefinition-prototype}[sstheorem]{Definition-Prototype}
\newcommand{\Ann}{\mbox{\it Ann}\,}
\newcommand{\Aut}{\mbox{\it Aut}\,}
\newcommand{\Der}{\mbox{\it Der}\,}
  \newcommand{\sDer}{\mbox{\it sDer}\,}
\newcommand{\End}{\mbox{\it End}\,}
\newcommand{\Endsheaf}{\mbox{\it ${\cal E}\!$nd}\,}
\newcommand{\GL}{\mbox{\it GL}}
\newcommand{\Higgsscriptsize}{\mbox{\scriptsize\it  Higgs}}
\newcommand{\Hom}{\mbox{\it Hom}\,}
\newcommand{\Homsheaf}{\mbox{\it ${\cal H}$om}\,}
\newcommand{\Id}{\mbox{\it Id}\,}
\newcommand{\Image}{\mbox{\it Im}\,}
\newcommand{\Ker}{\mbox{\it Ker}\,}
\newcommand{\Lorentzscriptsize}{\mbox{\scriptsize\it Lorentz}}
\newcommand{\ModCategory}{\mbox{\it ${\cal M}$\!od}\,}
\newcommand{\QCohCategory}{\mbox{\it ${\cal Q}$${\cal C}$\!oh}\,}
\newcommand{\Quot}{\mbox{\it Quot}\,}
\newcommand{\SL}{\mbox{\it SL}}
\newcommand{\SO}{\mbox{\it SO}\,}
\newcommand{\Span}{\mbox{\it Span}\,}
\newcommand{\Spin}{\mbox{\it Spin}\,}
\newcommand{\Stab}{\mbox{\it Stab}\,}
\newcommand{\SU}{\mbox{\it SU}}
\newcommand{\Supp}{\mbox{\it Supp}\,}
 \newcommand{\scriptsizeSupp}{\mbox{\scriptsize\it Supp}\,}
\newcommand{\Tr}{\mbox{\it Tr}\,}
\newcommand{\Yukawascriptsize}{\mbox{\scriptsize\it  Yukawa}}
 \newcommand{\scriptsizeeven}{\mbox{\scriptsize\it even}\,}
\newcommand{\fermionscriptsize}{\mbox{\scriptsize\it  fermion}}
\newcommand{\gaugescriptsize}{\mbox{\scriptsize\it  gauge}}
\newcommand{\gl}{\mbox{\it gl}\,}
\newcommand{\nc}{\mbox{\scriptsize\it nc}}
 \newcommand{\scriptsizeodd}{\mbox{\scriptsize\it odd}\,}
\newcommand{\pr}{\mbox{\it pr}}
\newcommand{\redscriptsize}{\mbox{\scriptsize\rm red}\,}
\newcommand{\su}{\mbox{\it su}\,}
\begin{document}

\enlargethispage{24cm}

\begin{titlepage}

$ $

\vspace{-1.5cm} % Re: -1.5cm for PC; -2.5cm for UT-Math-system

\noindent\hspace{-1cm}
\parbox{6cm}{\small October 2014}\
   \hspace{6cm}\
   \parbox[t]{6cm}{yymm.nnnn [hep-th] \\
                D(11.2): fermionic D-brane%,  \\
                % $\mbox{\hspace{3.8em}} $
			 	% fermionic brane
				}

%\vspace{2em}
\vspace{2cm}

%title
\centerline{\large\bf
 D-branes and Azumaya/matrix noncommutative differential geometry,}
\vspace{1ex}
\centerline{\large\bf
 II: Azumaya/matrix supermanifolds and differentiable maps therefrom}
 \vspace{1ex}
 \centerline{\large\bf
 -- with a view toward dynamical fermionic D-branes in string theory}
% \vspace{1ex}
% \centerline{\large\bf
%   -- foundations,  and examples in symplectic geometry and string theory}
% \vspace{1ex}
% \centerline{\large\bf
%  ?????? }
% end-title

\bigskip

%\bigskip
%\vspace{2.4em}
\vspace{3em}

%authors-'n-addresses
\centerline{\large
  Chien-Hao Liu    % \,, \hspace{1ex}
                                % Cumrun Vafa\,,
            \hspace{1ex} and \hspace{1ex}
  Shing-Tung Yau
}

%\vspace{2em}
%\vspace{3em}
\vspace{4em}

%abstract%
\begin{quotation}
\centerline{\bf Abstract}

\vspace{0.3cm}

\baselineskip 12pt  %13pt for [12pt] style
{\small  
 In this Part II of D(11), 
 we introduce new objects: super-$C^k$-schemes and Azumaya super-$C^k$-manifolds with a fundamental module
  (or synonymously matrix super-$C^k$-manifolds with a fundamental module),
  and extend the study in D(11.1) ([L-Y3], arXiv:1406.0929 [math.DG]) to define the notion of
  `differentiable maps from an Azumaya/matrix supermanifold with a fundamental module
   to a real manifold or supermanifold'.
 This allows us to introduce the notion of `fermionic D-branes'  in two different styles,
  one parallels Ramond-Neveu-Schwarz fermionic string and the other Green-Schwarz fermionic string.
 A more detailed discussion on
   the Higgs mechanism on dynamical D-branes in our setting,
      taking maps from the D-brane world-volume to the space-time in question
	             and/or sections of the Chan-Paton bundle on the D-brane world-volume
	  as Higgs fields,
  is also given for the first time in the D-project.
 Finally note that mathematically  string theory begins with the notion of a differentiable map
  from a string world-sheet (a $2$-manifold) to a target space-time (a real manifold).
 In comparison to this,
  D(11.1) and the current D(11.2) together bring us to the same starting point
  for studying D-branes in string theory as dynamical objects.
 } % endsmall
\end{quotation}

% \smallskip
% \bigskip
\vspace{9em}

\baselineskip 12pt
{\footnotesize
\noindent
{\bf Key words:} \parbox[t]{14cm}{D-brane, fermionic D-brane, sheaf of spinor fields;
    super-$C^k$-ring; supermanifold, super-$C^k$-scheme; Azumaya supermanifold, matrix supermanifold;
   $C^k$-map;
   Higgs mechanism, generation of mass.
 }} %end-footnotesize

%\smallskip
 \bigskip

\noindent {\small MSC number 2010: 
  81T30, 58A40,14A22; 
  58A50, 16S50, 51K10, 46L87, 81T60, 81T75, 81V22.
} % end-small

% \smallskip
\bigskip

\baselineskip 10pt
% Re: 11pt for [11pt] style; 12pt for [12pt] style
{\scriptsize
\noindent{\bf Acknowledgements.}
We thank
 Cumrun Vafa
       for lectures and discussions that influence our understanding of string theory.
C.-H.L.\ thanks in addition
 Harald Dorn for illuminations on nonabelian Dirac-Born-Infeld action for coincident D-branes;
 Gregory Moore, Cumrun Vafa
   for illuminations on Higgs fields on D-branes;
 Dennis Westra
   for thesis that influences his understanding of super-algebraic geometry
   in line with Grothendieck's Algebraic Geometry;
 Murad Alim, Ga\"{e}tan Borot, Daniel Freed, Siu-Cheong Lau, Si Li, Baosen Wu
   for discussions on issues beyond;
 Alison Miller, Freed, Vafa
   for topic and basic courses, fall 2014;
 Gimnazija Kranj Symphony Orchestra for work of Nikolay Rimsky-Korsakov
   that accompanies the typing of the notes;
 Ling-Miao Chou
   for discussions on electrodynamics, comments on illustrations, and moral support.
 D(11.1) and D(11.2) together bring this D-project to another phase;
 for that, special thanks also to
 Si Li, Ruifang Song
   for the bi-weekly Saturday D-brane Working Seminar, spring 2008,
   that gave him another best time at Harvard and tremendous momentum to the project.
The project is supported by NSF grants DMS-9803347 and DMS-0074329.
} %endscriptsize
1
\end{titlepage}

\newpage

\begin{titlepage}

$ $

%\vspace{2em}
%\vspace{4em}
\vspace{12em}

\centerline{\small\it
 Chien-Hao Liu dedicates this note}
\centerline{\small\it
 to his another advisor Prof.\ Orlando Alvarez}
\centerline{\small\it
  during his Berkeley and Miami years,}
\centerline{\small\it
 who gave him the first lecture on D-branes}
\centerline{\small\it
  and brought him to the amazing world of stringy dualities;}
\centerline{\small\it
 and to Prof.\ Rafael Nepomechie, }
\centerline{\small\it
 a pioneer on higher-dimensional extended objects beyond strings,}
\centerline{\small\it
 who gave him the first course on supersymmetry.}
% % \centerline{\small\it
% % ????????????????}
    
\end{titlepage}

%paper

\newpage
$ $

\vspace{-3em}
% \vspace{-4em}  % Re: -4cm for PC; -6cm for UT-Math-system

%short heading
\centerline{\sc
 DB \& NCDG II: Maps from Matrix Supermanifolds, and Fermionic D-Branes
 } %

\vspace{2em}

% \baselineskip 14pt  %Re: 14pt for [11pt] style
                                      %Re: 15pt for [12pt] style.

\begin{flushleft}
{\Large\bf 0. Introduction and outline}
\end{flushleft}
As a preparation to study D-branes in string theory as dynamical objects,
in [L-Y3] (D(11.1))
 we developed
  the notion of `differentiable maps from an Azumaya/matrix manifold with a fundamental module to a real manifold'
   along the line of Algebraic Geometry of Grothendieck
      and synthetic/$C^k$-algebraic differential geometry of Dubuc, Joyce, Kock, Moerdijk, and Reyes;
 and gave examples to illustrate
  how deformations of differentiable maps in our setting capture various behaviors of D-branes.
	
In this continuation of [L-Y3] (D(11.1)),
 we extend the study to the notion of
  `differentiable maps from an Azumaya/matrix supermanifold with a fundamental module to a real supermanifold'.
This allows us to introduce the notion of `fermionic D-branes'  in two different styles,
  one parallels Ramond-Neveu-Schwarz fermionic string and the other Green-Schwarz fermionic string.
[L-Y1] (D(1)), [L-L-S-Y] (D(2)), [L-Y3] (D(11.1)) and the current note (D(11.2))
 together bring
   \begin{itemize}
    \item[{\Large $\cdot$}]
    the study of D-branes in string theory as dynamical objects
     in the context/realm/language of algebraic geometry or differential/symplectic/calibrated geometry,
	 without supersymmetry or with supersymmetry
   \end{itemize}
 all in the equal footing.
This brings us to the door of a new world on dynamical D-branes,
  whose mathematical and stringy-theoretical details have yet to be understood.

\bigskip

The organization of the current note is as follows.
In Sec.~1,
 we brings out the notion of differential maps from an Azumaya/matrix brane with fermions
  in a most primitive setting based on [L-Y3] (D(11.1)).
In Sec.~2 -- Sec.~4,
 we first pave our way toward uniting the new fermionic degrees of freedom into the Azumaya/matrix geometry
   involved,  as is done in the study of supersymmetric quantum field theory to the ordinary geometry,  and
 then
  define the notion of
    `differentiable maps from an Azumaya/matrix supermanifold with a fundamental module to a real manifold'
    in Sec.~4.2  and
  further extend it to the notion of
    `differentiable maps from an Azumaya/matrix supermanifold with a fundamental module to a real supermanifold'
    in Sec.~4.3.
{To} give string-theory-oriented readers a taste of how such notions are put to work for D-branes,
 in Sec.~5.1, we introduce the two notions of fermionic D-branes,
   one following the style of Ramond-Neveu-Schwarz fermionic string and
   the other the style of Green-Schwarz fermionic string;   and
 in Sec.~5.2 we give a more precise discussion on
    the Higgs mechanism on dynamical D-branes in our setting for the first time in this D-project.
Seven years have passed 	since the first note [L-Y1] (D(1)) in this project in progress.
In Sec.~6, we reflect where we are in this journey on D-branes, with a view toward the future.

\bigskip

\bigskip

\noindent
{\bf Convention.}
 Standard notations, terminology, operations, facts in
  (1) superring theory toward superalgebraic geometry;$\,$
  (2) supersymmetry, supersymmetric quantum field theory;$\,$			
  (3) Higgs mechanism, gauge symmetry breaking; grand unification theory$\,$
 can be found respectively in$\,$
  (1) [Wes];$\,$
  (2) [Arg1], [Arg2], [Arg3], [Arg4], [D-E-F-J-K-M-M-W], [Freed], [Freund], [G-G-R-S], [St], [Wei], [W-B];$\,$
  (3) [I-Z], [P-S], [Ry]; [B-H], [Mo], [Ros].$\,$
There are several inequivalent notions of$\,$
 (4) `supermanifold';$\,$
 all intend to (and each does) capture (some part of) the geometry behind supersymmetry in physics.
The setting in$\,$
 (4) [Man], [S-W]$\,$
   is particularly in line with Grothendieck's Algebraic Geometry and hence relevant to us.
 \begin{itemize}
  \item[$\cdot$]
   `{\it field}' in the sense of quantum {\it field} theory (e.g.\ {\it fermionic field}) vs.\
  `{\it field}' as an algebraic structure in ring theory (e.g.\ the {\it field ${\Bbb R}$ of real numbers}).
 
  \item[$\cdot$]
   For clarity, the {\it real line} as a real $1$-dimensional manifold is denoted by ${\Bbb R}^1$,
    while the {\it field of real numbers} is denoted by ${\Bbb R}$.
   Similarly, the {\it complex line} as a complex $1$-dimensional manifold is denoted by ${\Bbb C}^1$,
    while the {\it field of complex numbers} is denoted by ${\Bbb C}$.
	
  \item[$\cdot$]	
  The inclusion `${\Bbb R}\hookrightarrow{\Bbb C}$' is referred to the {\it field extension
   of ${\Bbb R}$ to ${\Bbb C}$} by adding $\sqrt{-1}$, unless otherwise noted.

 \item[$\cdot$]	
  The {\it real $n$-dimensional vector spaces} ${\Bbb R}^{\oplus n}$
      vs.\ the {\it real $n$-manifold} $\,{\Bbb R}^n$; \\
  similarly, the {\it complex $r$-dimensional vector space ${\Bbb C}^{\oplus r}$}
     vs.\ the {\it complex $r$-fold} $\,{\Bbb C}^r$.

 \item[$\cdot$]
  All manifolds are paracompact, Hausdorff, and admitting a (locally finite) partition of unity.
  We adopt the {\it index convention for tensors} from differential geometry.
   In particular, the tuple coordinate functions on an $n$-manifold is denoted by, for example,
   $(y^1,\,\cdots\,y^n)$.
  The up-low index summation convention is always spelled out explicitly when used.

  %--------------------------------------------------------------------------------------------------
  % \item[$\cdot$]
  % ??????????
  % All varieties, schemes and their products are over ${\Bbb C}$;
  % a `{\it curve}' means a $1$-dimensional proper scheme over ${\Bbb C}$.
  % % a `{\it stack}' means an {\it Artin stack}.
  %======================================================

  \item[$\cdot$]
   `{\it differentiable}' = $k$-times differentiable (i.e.\ $C^k$)
         for some $k\in{\Bbb Z}_{\ge 1}\cup{\infty}$;
   `{\it smooth}' $=C^{\infty}$;
   $C^0$ = {\it continuous} by standard convention.

 %-----------------------------------------------------------------------------------------------------------------   
 % \item[$\cdot$]
 %   $\Spec R $ ($:=\{\mbox{prime ideals of $R$}\}$)
 %         of a commutative Noetherian ring $R$  in algebraic geometry\\
 %   vs.\ $\Spec R$ of a $C^k$-ring $R$
 %  ($:=\Spec^{\Bbb R}R :=\{\mbox{$C^k$-ring homomorphisms $R\rightarrow {\Bbb R}$}\}$).
 %
 %  \item[$\cdot$]
 %  {\it morphism} between schemes in algebraic geometry
 %    vs.\ {\it $C^k$-map} between $C^k$-manifolds or $C^k$-schemes
 %         	in differential topology and geometry or $C^k$-algebraic geometry.
 %========-------------------------------------============---------------------------------------
 
  \item[$\cdot$]
   All the $C^k$-rings in this note can be assumed to be
     {\it finitely generated and finitely-near-point determined}.
   In particular, they are {\it finitely generated and germ-determined} ([Du], [M-R: Sec.\ I.4])
    i.e.\ {\it fair} in the sense of [Joy: Sec.~2.4]. (Cf.\ [L-Y4].)
  
  \item[$\cdot$]
   For a $C^k$-subscheme $Z$ of a $C^k$-scheme $Y$,
   $Z_{\redscriptsize}$ denotes its associated reduced subscheme of $Y$
   by modding out all the nilpotent elements in ${\cal O}_Z$.

  \item[$\cdot$]
   The `{\it support}' $\Supp({\cal F})$
    of a quasi-coherent sheaf ${\cal F}$ on a scheme $Y$ in algebraic geometry
     	or on a $C^k$-scheme in $C^k$-algebraic geometry
    means the {\it scheme-theoretical support} of ${\cal F}$
   unless otherwise noted;
    ${\cal I}_Z$ denotes the {\it ideal sheaf} of
    a (resp.\ $C^k$-)subscheme of $Z$ of a (resp.\ $C^k$-)scheme $Y$;
    $l({\cal F})$ denotes the {\it length} of a coherent sheaf ${\cal F}$ of dimension $0$.

  \item[$\cdot$]
   {\it coordinate-function index}, e.g.\ $(y^1,\,\cdots\,,\, y^n)$ for a real manifold
      vs.\  the {\it exponent of a power},
	  e.g.\  $a_0y^r+a_1y^{r-1}+\,\cdots\,+a_{r-1}y+a_r\in {\Bbb R}[y]$.

 %--------------------------------------------------------------------------------------------	  
 %  \item[$\cdot$]	
 % {\it cotangent sheaf} $\, \Omega^1_Y$ of a manifold $Y$
 %     vs.\ {\it holomorphic $n$-form} $\Omega$ on a Calabi-Yau $n$-fold.
 %--------------------------------------------------------------------------------------------
 
  \item[$\cdot$]
    {\it global section functor} $\varGamma (\,\cdot\,)$ on sheaves
	   vs.\  {\it graph} $\varGamma_f$ of a function $f$.

 %--------------------------------------------------------------------------------------------------	   
 %  \item[$\cdot$]
 %    {\it algebraic operation} $\mu$ of a ring on  a module
 %	   vs.\ {\it Maslov index} $\mu$ of a loop in a symplectic group
 %	   vs.\ {\it dummy indices} in  e.g.\   space(-time) coordinates $y^{\mu},\, y^{\nu}$.
 %======---------------------------=======-----------------------------------------------
 
  \item[$\cdot$]
   `{\it d-manifold}$\,$' in the sense of `{\it d}erived manifold'
    vs.\  `{\it D-manifold}$\,$' in the sense of `{\it D}(irichlet)-brane that is supported on a manifold'
	vs.\  `{\it D-manifold}$\,$' in the sense of works [B-V-S1] and [B-V-S2]
	   of Michael Bershadsky, Cumrun Vafa and Vladimir Sadov.

  \item[$\cdot$]
   The current Note D(11.2) continues the study in
	  \begin{itemize}
       \item[] $\mbox{\hspace{2.3em}}$
	   \parbox[t]{34em}{\hspace{-4.86em} [L-Y1]\hspace{1em} {\it	
	     Azumaya-type noncommutative spaces and morphism therefrom: Polchinski's
         D-branes in string theory from Grothendieck's viewpoint},
		 arXiv:0709.1515 [math.AG] (D(1)).
		 } % end-parbox

	   \medskip	
	   \item[] $\mbox{\hspace{2.3em}}$
	    \parbox[t]{34em}{\hspace{-4.86em} [L-L-S-Y]\hspace{1em}
	     (with Si Li and Ruifang Song),$\;$ {\it	
         Morphisms from Azumaya prestable curves with a fundamental module to a projective variety:
		 Topological D-strings as a master object for curves},
		 arXiv:0809.2121 [math.AG](D(2)).
		 } % end-parbox
	
	    \medskip
	    \item[]  \hspace{-2em} [L-Y3]\hspace{1em} \parbox[t]{34em}{{\it
    	   D-branes and Azumaya/matrix noncommutative differential geometry,
        I: D-branes as fundamental objects in string theory  and differentiable maps
         from Azumaya/matrix manifolds with a fundamental module to real manifolds},
         arXiv:1406.0929 [math.DG](D(11.1)).
		 }% end-parbox
	  \end{itemize}  	
   Notations and conventions follow these earlier works when applicable.
 \end{itemize}

\bigskip

%\bigskip
\newpage

\begin{flushleft}
{\bf Outline}
\end{flushleft}
\nopagebreak
{\small
\baselineskip 12pt  %13pt
\begin{itemize}
 \item[0.]
 Introduction.

 \item[1.]
 Differentiable maps from fermionic Azumaya/matrix branes: A primitive setting
   \vspace{-.6ex}
   \begin{itemize}
    \item[{\Large $\cdot$}]
	 Bosonic fields and fermionic fields on the world-volume of coincident D-branes
	
	\item[{\Large $\cdot$}]
     Differentiable maps from matrix branes with fermions
   \end{itemize}

 \item[2.]
 Algebraic geometry over super-$C^k$-rings
 \vspace{-.6ex}
  \begin{itemize}
	\item[2.1]
	 Superrings, modules, and differential calculus on superrings
	
	\item[2.2]
	 Super-$C^k$-rings, modules, and differential calculus on super-$C^k$-rings

    \item[2.3]
	 Super-$C^k$-manifolds, super-$C^k$-ringed spaces, and super-$C^k$-schemes	
	     	
	\item[2.4]
	 Sheaves of modules and differential calculus on a super-$C^k$-scheme
  \end{itemize}

 \item[3.]
 Azumaya/matrix super-$C^k$-manifolds with a fundamental module
 %%%%%--------------------------------------------------------------------------------------------
 % %
 %  \vspace{-.6ex}
 %  \begin{itemize}
 %	\item[3.1]
 %	 Azumaya/matrix algebras over a super-$C^k$-ring, modules, and differential calculus thereupon
 %	
 %	\item[3.2]
 %	 Azumaya/matrix super-$C^k$-manifolds with a fundamental module
 %		
 %	\item[3.3]
 %     Remarks on general endomorphism-ringed super-$C^k$-schemes and differential calculus thereupon
 %  \end{itemize}
 %%%%%%%%%%%%%%%%-----------------------------------------------------------------
  
 \item[4.]
 Differentiable maps from an Azumaya/matrix supermanifold with a fundamental module to a real manifold
  \vspace{-.6ex}
  \begin{itemize}
	\item[4.1]
	A local study: The affine case	
	
    \item[4.2]
    Differentiable maps from an Azumaya/matrix supermanifold with a fundamental module
    to a real manifold	
    \begin{itemize}
     \item[4.2.1]
	  Aspect I [fundamental]: Maps as gluing systems of ring-homomorphisms
	
	 \item[4.2.2]
	  Aspect II: The graph of a differentiable map
	
	 \item[4.2.3]
      Aspect III: From maps to the stack of D0-branes
	
	 \item[4.2.4]
	  Aspect IV: From associated $\GL_r({\Bbb C})$-equivariant maps
    \end{itemize}
			
    \item[4.3]
    Remarks on differentiable maps from a general endomorphism-ringed super-$C^k$-scheme
	with a fundamental module to a real supermanifold
  \end{itemize}

 \item[5.]
 A glimpse of super D-branes, as dynamical objects, and the Higgs mechanism in the current setting
  % %
  % \vspace{-.6ex}
  \begin{itemize}
   \item[5.1]
   Fermionic D-branes as fundamental/dynamical objects in string theory
    \begin{itemize}
     \item[{\Large $\cdot$}]
	  Ramond-Neveu-Schwarz fermionic string and Green-Schwarz fermionic string
          from the viewpoint Grothendieck's Algebraic Geometry
		
	 \item[{\Large $\cdot$}]
	 Fermionic D-branes as dynamical objects \`{a} la RNS or GS fermionic strings
    \end{itemize}
   
   \item[5.2]
   The Higgs mechanism on D-branes vs.\ deformations of maps from a matrix brane
    \begin{itemize}
     \item[5.2.1]
	  The Higgs mechanism in the Glashow-Weinberg-Salam model
	
	 \item[5.2.2]
	  The Higgs mechanism on the matrix brane world-volume
    \end{itemize}
  \end{itemize}

 \item[6.]
  Where we are, and some more new directions
 %---------------------------------------------------------------------------
 %
 % \item[?.]
 %  ??????????????????????.
 %  %
 %  \vspace{-.6ex}
 %  \begin{itemize}
 %   \item[?.?]
 %    ??????????????.
 %
 %   \item[?.?]
 %    ?????????????
 %    %
 %    \begin{itemize}
 %     \item[?.?.?]
 %      ???????????????.
 %    \end{itemize}
 %
 %  \end{itemize}
 %=====--------------------------------------=======----------------

\end{itemize}
} %endsmall

\newpage

\section{Differentiable maps from fermionic Azumaya/matrix branes: A primitive setting}

In this section, we give a very primitive view of D-branes with fermions
 directly from the viewpoint of [L-Y3] (D(11.1)) that
 a bottommost ingredient to describe a D-brane in string theory as a dynamical object is
 the notion of a differentiable map from a matrix manifold (cf.\ the world-volume of coincident D-branes)
  to a real manifold (cf.\ the target space-time).

\bigskip
  
\begin{flushleft}
{\bf Bosonic fields and fermionic fields on the world-volume of coincident D-branes\footnote{
                                          The contents of this theme are by now standard textbook materials
										     under the theme on {\it quantization of closed or open strings and
											 the spectrum closed or open strings create on the target}.
										  The concise conceptual highlight here is only meant to make a passage
  										     to relate fields on the world-volume of coincident D-branes to sheaves of modules
											 on a matrix manifold in the next theme.
                                          Unfamiliar mathematicians are referred to
										    [B-B-Sc], [G-S-W], [Po2] 
											for details.
                     }.  % end-footnote
}\end{flushleft}
Fields on the world-volume of a D-brane are created by excitations of oriented open strings
 through their end-points that stick to the D-brane.
Cf.~{\sc Figure}~1-1.
%
% \marginpar{\raggedright\tiny $\bullet$ {\sc Figure}:  \\  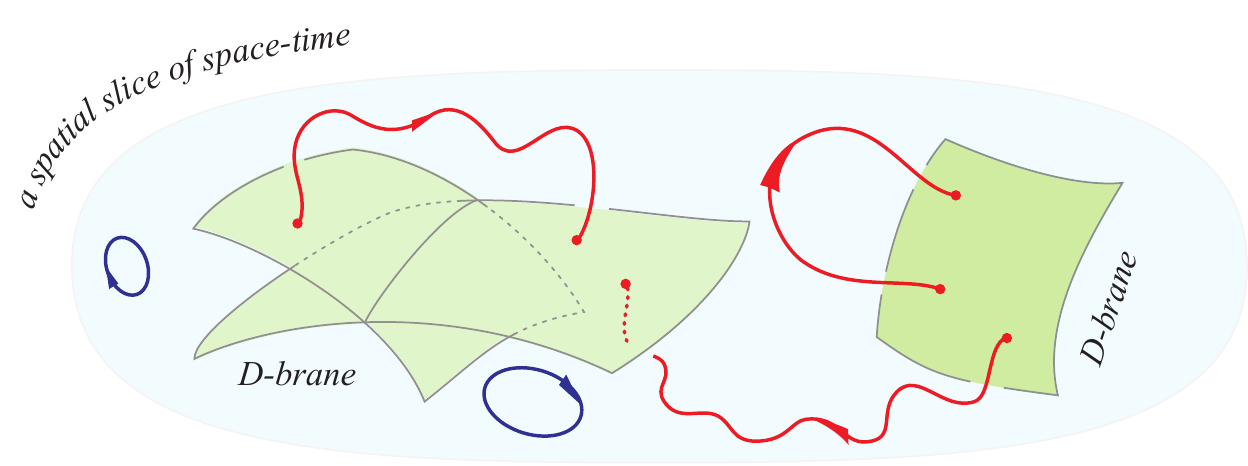}
%
\begin{figure}[htbp]
 \bigskip
  \centering
  \includegraphics[width=0.80\textwidth]{db-os.pdf}
 
  \bigskip
  \bigskip
 \centerline{\parbox{13cm}{\small\baselineskip 12pt
  {\sc Figure}~1-1.
  Fields on the world-volume of a D-brane are created by excitations of oriented open strings
    through their end-points that stick to the D-branes.
  The dynamics of these fields are dictated by the anomaly-free requirement of the conformal field theory
    on the open-string world-sheet ([Le]).
  }}% end-small % end-centerline
\end{figure}
When the open string carries in addition fermionic degrees of freedom,
 the fields it creates on the D-brane world-volume include not only bosonic ones but also fermionic ones.
Each of these fields is associated to an open string state $|\Lambda \rangle$
 from a representation of  the $2$-dimensional superconformal algebra
                                                         (associated to the open-string world-sheet theory)
  that is repackaged to a representation of Lorentz group under the requirement
  that the quantum field theory of these fields on the D-brane world-volume be Lorentz invariant.
  
When $r$-many simple D-branes coincide, the open string spectrum $\{|\Lambda\rangle\,|\, \Lambda\}$
 on the common D-brane world-volume gets enhanced.
There are three possible sectors of the newly re-organized spectrum of open string states:
\begin{itemize}
 \item[(1)]
 {\it From oriented open strings with both end-points stuck to the coincident D-brane} $\,$: \\
  In this case, one has an enhancement
   $$
      |\Lambda\rangle\;  \Rightarrow \;   |\Lambda; i,\bar{j}  \rangle\,,        \hspace{1em}
	   1\le i,j\le r\,.
   $$
  The field $\psi_{\Lambda}$ on the D-brane world-volume that is associated
     to $\{|\Lambda;i,\bar{j} \rangle\,|\, 1\le i, j\le r \}$ as a collection
   takes now $r\times r$-matrix-values.
     
 \item[(2)]
  {\it When, for example, the whole target space-time itself is taken as a background simple D-brane
    or equivalently one of the two end-point of the oriented open string takes the  Neumann boundary condition
   instead of the Dirichlet boundary condition}$\,$:
   
 \item[]
 There are two sectors in this case:
  $$
	|\Lambda\rangle\; \Rightarrow\;  |\Lambda; i \rangle\,, \hspace{1em}1\le i\le r\,,
  	  \hspace{3em}\mbox{and}\hspace{3em}
    |\Lambda\rangle\; \Rightarrow\;  |\Lambda; \bar{j} \rangle\,,  \hspace{1em}1\le j\le r\,.
  $$
 The former are created by oriented open strings with only the beginning end-point stuck to the D-brane world-volume;
     and the latter are created by oriented open strings with only the ending end-point stuck to the D-brane world-volume.	
 The field $\psi_{\Lambda}$ on the D-brane world-volume that is associated
     to $\{|\Lambda;i \rangle\,|\, 1\le i \le r \}$ as a collection
   takes now $r\times 1$-matrix-values, i.e., column-vector-values.
 And the field $\psi_{\Lambda}^{\prime}$ on the D-brane world-volume that is associated
     to $\{|\Lambda;\bar{j} \rangle\,|\, 1\le \bar{j} \le r \}$ as a collection
   takes now $1\times r$-matrix-values, i.e., row-vector-values.
\end{itemize}
Cf.~{\sc Figure}~1-2.
%
% \marginpar{\raggedright\tiny $\bullet$ {\sc Figure} 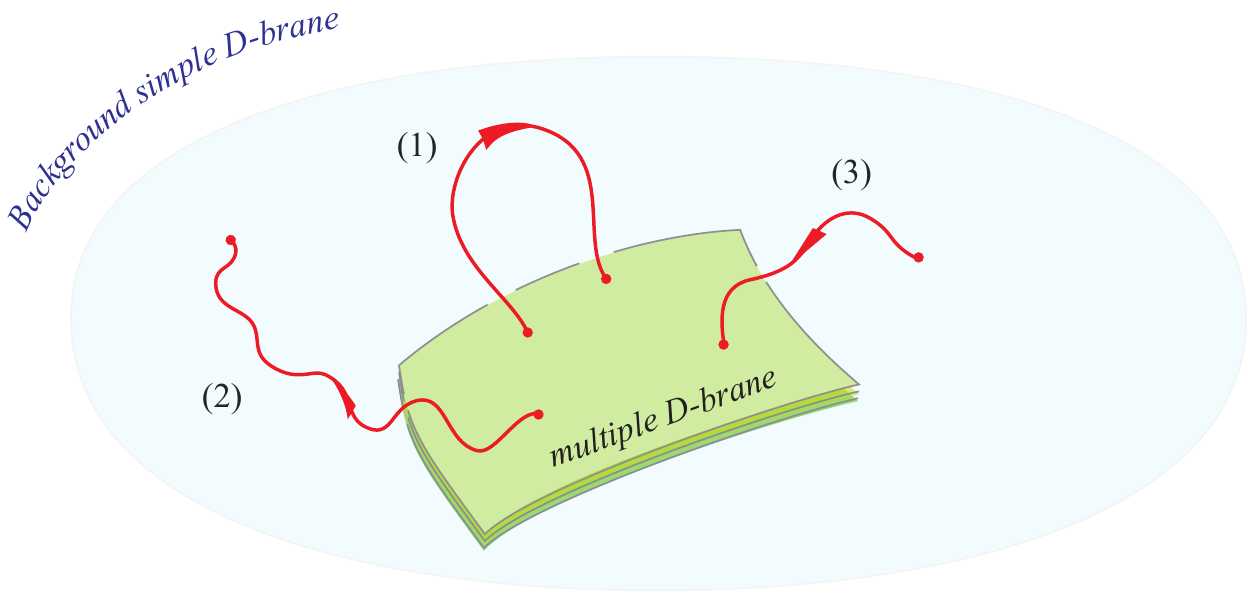}
%
\begin{figure}[htbp]
 \bigskip
  \centering
  \includegraphics[width=0.80\textwidth]{multiple-D-brane.pdf}
 
  \bigskip
  \bigskip
 \centerline{\parbox{13cm}{\small\baselineskip 12pt
  {\sc Figure}~1-2.
   Three possible sectors of fields on the world-volume of coincident $D$-branes.
   They are created respectively by
   (1) oriented open strings with both end-points stuck to the D-brane world-volume, or
   (2) oriented open strings with only the beginning end-point stuck to the D-brane world-volume;   or
   (3) oriented open strings with only the ending end-point stuck to the D-brane world-volume.
   Sector (1) is always there on the D-brane world-volume
   while Sectors (2) and (3) can arise only when there is a background D-brane world-volume in the space-time
      to which the other end of oriented open strings can stick.
   Fields in Sector (1) (resp.\ Sector (2), Sector (3)) are matrix-valued
    (resp.\ column-vector-valued,  row-vector-valued).	
  }}  % endsmall % end-centerline
\end{figure}

\bigskip

\begin{flushleft}	
{\bf Differentiable maps from matrix branes with fermions}
\end{flushleft}
As explained in [L-Y1] (D(1)),
  while originally the pair $(i,\bar{j})$ should be thought of  as labeling elements in the Lie algebra $u(r)$,
to bring geometry to  the enhanced scalar field on the D-brane world-volume
     that describes the collective deformations of the coincident D-branes
	 along the line of Grothendieck's Algebraic Geometry,
it is more natural to embed $u(n)$ into the Lie algebra $\gl(r,{\Bbb C})$
  which now has the underlying unital associative algebra structure, namely
  the matrix ring $M_{r\times r}({\Bbb C})$.
Following this line,
 coincident D-branes are now collectively described
 by a differentiable map
   $$
     \varphi\; :\;
     (X^{\!A\!z},{\cal E})\,  :=\,
		 (X,{\cal O}_X^{A\!z}:=\Endsheaf_{{\cal O}_X^{\,\Bbb C}}({\cal E}),
		{\cal E})\;
		 \longrightarrow\; Y
   $$		
 from a matrix manifold $X^{\!A\!z}$ with a fundamental module ${\cal E}$ to the target space-time $Y$;
cf.\ [L-Y3] (D(11.1)).
The vector bundle associated to ${\cal E}$ plays the role of the Chan-Paton bundle on the D-brane world-volume.
 
The enhancement of  fields on the D-brane world-volume due to coincidence of simple D-branes
  now takes the following form:
\begin{itemize}
 \item[(1)]
 {\it Field in Sector $(1)$ corresponding to
         $|  \Lambda\rangle \Rightarrow \{|\Lambda; i,\bar{j}  \rangle \,|\, 1\le i,j\le r\}$}$\,$:\\[.6ex]
  ${\cal O}_X$-module ${\cal F}_{\Lambda}\;\;\;
     \Rightarrow\;\;\;$
	 bi-${\cal O}_X^{A\!z}$-module
    $\; {\cal G}_{\Lambda}\;
	    :=\; {\cal E}\otimes_{{\cal O}_X^{\,\Bbb C}} {\cal F}_{\Lambda}
                     \otimes_{{\cal O}_{X}}{\cal E}^{\vee}
	  	\simeq {\cal O}_X^{A\!z}\otimes_{{\cal O}_X^{A\!z}}{\cal F}_{\Lambda}$.
 
 \item[(2)]
{\it Field in Sector $(2)$ corresponding to
         $|  \Lambda\rangle \Rightarrow \{|\Lambda; i \rangle \,|\, 1\le i \le r\}$}$\,$:\\[.6ex]
  ${\cal O}_X$-module ${\cal F}_{\Lambda}\;\;\;
    \Rightarrow \;\;\;$
  	 left ${\cal O}_X^{A\!z}$-module
    $\; {\cal G}_{\Lambda}\; :=\; {\cal E}\otimes_{{\cal O}_X}{\cal F}_{\Lambda}$.

 \item[(3)]
{\it Field in Sector $(3)$ corresponding to
         $|  \Lambda\rangle \Rightarrow \{|\Lambda; \bar{j} \rangle \,|\, 1\le j \le r\}$}$\,$:\\[.6ex]
 ${\cal O}_X$-module ${\cal F}_{\Lambda}\;\;\;
      \Rightarrow\;\;\;$
  right ${\cal O}_X^{A\!z}$-module
	 $\; {\cal G}_{\Lambda}\;
	 :=\; {\cal F}_{\Lambda}\otimes_{{\cal O}_X^{\,\Bbb C}}{\cal E}^{\vee}$.
\end{itemize}
Here
 ${\cal E}^{\vee}
     :=\Homsheaf_{{\cal O}_X^{\,\Bbb C}}({\cal E},{\cal O}_X^{\,\Bbb C}) $
 is the dual of ${\cal E}$.	
Note that  the functor in Item (2) (resp.\ Item (3)) is the functor that appears in the Morita equivalence
 of the category of ${\cal O}_X^{\,\Bbb C}$-modules and the category of left (resp.\ right)
 ${\cal O}_X^{A\!z}$-modules.
 
Sections of ${\cal G}_{\Lambda}$ correspond to the field  $\psi_{\Lambda}$
  on the world-volume of coincident D-branes in the previous theme.
Here, $\psi_{\Lambda}$  can be either bosonic or fermionic.
The dynamics of differentiable map $\varphi$  and that of sections of the various ${\cal G}_{\Lambda}$'s
 in general will influence each other through their equations of motion, which is a topic in its own right.
  
\bigskip

With this primitive setting in mind and as a motivation,
we now proceed to study
 how the fermionic degrees of freedom on a D-brane world-volume can be united into the geometry
  of the D-brane world-volume --- rendering it a matrix supermanifold with a fundamental module ---   and
 how the notion of differentiable maps from a matrix manifold can be promoted to the notion of
  differentiable maps from a matrix supermanifold.
  
\bigskip

\begin{sremark}$[\,$reduction from $M_{r\times r}({\Bbb C})$ to $u(r)$$\,]$. {\rm
 Gauge theoretically, a reduction from the underlying Lie algebra
   $\gl(r,{\Bbb C})$ of $M_{r\times r}({\Bbb C})$ to the original $u(r)$
  can be realized by introducing a Hermitian metric on the fundamental module ${\cal E}$.
 However, how this influences or constrains the notion of differential maps in our setting in [L-Y3] (D(11.1))
   should be studied in more detail.
}\end{sremark}

\bigskip

\section{Algebraic geometry over super-$C^k$-rings}

Basic notions and terminology from super-$C^k$-algebraic geometry required for the current note
 are introduced in this section.
The setting given is guided by
 a formal ${\Bbb Z}/2$-graded extension of $C^k$-algebraic geometry  and
 the goal to study fermionic D-branes later.

%%%%------------------------------------------------------------------------------------------------------------------------- 
%
% Basic notions in $C^k$-algebraic geometry for superspaces that merge the work
%  %
%   \begin{itemize}
%    \item[{\Large $\cdot$}]
%      [???] of ????? and
%      [???] of Dominic Joyce  on the mathematical side  and
% 	
%    \item[{\Large $\cdot$}]	
% 	 ?????? on the physical side
%   \end{itemize}
%  %
%  in terms of super-$C^k$-rings and modules and super-$C^k$-schemes and sheaves
%  are introduced in this section.
%
%%%%%%%%%%%%%%---------------------------------------------------------------------------------------------

\bigskip

\subsection{Superrings, modules, and differential calculus on superrings}

We collect in this subsection the most basic notions in superrings, supermodules, and superdifferential calculus
 needed for the current note.
Readers are referred to the thesis `{\sl Superrings and supergroups}' [Wes] of Dennis Westra
 for further details and the foundation toward super-algebraic geometry in line with Grothendieck's Algebraic Geometry.

\bigskip

\begin{flushleft}
{\bf Superrings and modules over superrings}
\end{flushleft}
\begin{definition}{\bf [superring].} {\rm
 A {\it superring} $A$  is a ${\Bbb Z}/2$-graded
    ${\Bbb Z}/2$-commutative (unital associative) ring
  $A=A_0\oplus A_1$
  such that the multiplication $A\times A \rightarrow A$ satisfies
  $$
   \begin{array}{llll}
    \mbox{({\it ${\Bbb Z}/2$-graded})}            &&&
      A_0A_0\; \subset\;  A_0\,, \hspace{1em}
	  A_0A_1\; =\; A_1A_0\; \subset\;  A_1\,, \hspace{1em}\mbox{and}\hspace{1em}
	  A_1A_1\; \subset\;  A_0\,, \\[.6ex]
	\mbox{({\it ${\Bbb Z}/2$-commutative})}  &&&
	  aa^{\prime}\; =\; (-1)^{ii^{\prime}}a^{\prime}a
	  \hspace{2em}\mbox{for $a\in A_i$ and $a^{\prime}\in A_{i^{\prime}}$,
	                                                  $i, i^{\prime}=0,\,1\,$.}	
   \end{array}	
  $$
 A {\it morphism} between superrings (i.e.\ {\it superring-homomorphism})
  is a ${\Bbb Z}/2$-grading-preserving ring-homomorphism of the underlying unital associative rings.
 The elements of $A_0$ are called {\it even}, the elements of $A_1$ are called {\it odd},
 and an element that is either even or odd is said to be {\it homogeneous}.
 For a homogeneous element $a\in A$,
  denote by $|a|$ the {\it ${\Bbb Z}/2$-degree} or {\it parity} of $a$;
  $|a|=i$ if $a\in A_i$, for $i=0,\,1$.
 
 An {\it ideal} of $I$ of $A$ is said to be {\it ${\Bbb Z}/2$-graded}
  if $I=(I\cap A_0)+(I\cap A_1)$.
 In this case, $A$ induces a superring structure on the quotient ring $A/I$,
   with the ${\Bbb Z}/2$-grading given by $A/I=(A_0/(I\cap A_0)) \oplus(A_1/(I\cap A_1))  $.
 The converse is also true; cf.\ Definition/Lemma~2.1.2.
                                                 % Definition/Lemma [${\Bbb Z}/2$-graded ideal = supernormal ideal]
}\end{definition}

\medskip

\begin{definition-lemma}{\bf [${\Bbb Z}/2$-graded ideal = supernormal ideal].}
{\rm
   An ideal $I$ of a superring $A$ is called {\it supernormal}
    if $A$ induces a superring structure on the quotient ring $A/I$.}
 In terms of this, $I$ is ${\Bbb Z}/2$-graded if and only if $I$ is supernormal.	
\end{definition-lemma}
 
\begin{proof}
 The only-if part is immediate.
 For the if part,
   let $\kappa:A\rightarrow A/I$ is the quotient-superring map
      and $a=a_0+a_1\in I =\Ker(\kappa)$.
 If, say, $a_0\not\in I$,
 then both $\kappa(a_0)$ 	 and $\kappa(a_1)$ are non-zero in $A/I$
    and hence have parity even and odd respectively
   since $\kappa$ is a superring-homomorphism by the assumption.
 On the other hand,
  $\kappa(a)=\kappa(a_0)+\kappa(a_1)=0$; thus, $\kappa(a_0)=-\kappa(a_1)$.
 Since $(A/I)_0\cap (A/I)_1=0$,
  this implies that $\kappa(a_0)=\kappa(a_1)=0$, which is a contradiction.
 This proves the lemma.
 
\end{proof}

\bigskip

\begin{definition}{\bf [module over superring].} {\rm
 Let $A$ be a superring.
 An {\it $A$-module} $M$ is a {\it left module} over the unital associative ring underlying $A$
   that is endowed with a {\it ${\Bbb Z}/2$-grading} $M=M_0\oplus M_1$
    such that
     $$
	   A_0M_0\; \subset\; M_0\,, \hspace{1em}
	   A_1M_0\; \subset\; M_1\,, \hspace{1em}
	   A_0M_1\; \subset\; M_1\,, \hspace{1em}\mbox{and}\hspace{1em}
	   A_1M_1\;\subset\; M_0\,.
	 $$
 The elements of $M_0$ are called {\it even}, the elements of $M_1$ are called {\it odd},
   and an element that is either even or odd is said to be {\it homogeneous}.
 For a homogeneous element $m\in M$,
  denote by $|m|$ the {\it ${\Bbb Z}/2$-degree} or {\it parity} of $m$;
  $|m|=i$ if $m\in M_i$, for $i=0,\,1$.	
		
 For a superring $A$,
  \begin{itemize}
   \item[{\Large $\cdot$}]
 {\it a left $A$-module is canonically a right $A$-module}
    by setting
      $ma := (-1)^{|m||a|}am  $ for homogeneous elements $a\in A$ and $m\in M$
       and then extending ${\Bbb Z}$-linearly to all elements.
  \end{itemize}	
 For that reason, as in the case of commutative rings and modules,
  we don't  distinguish a left-, right-, or bi-module for a module over a superring.
  
 A {\it morphism} (or {\it module-homomorphism})
     $h:M\rightarrow M^{\prime}$ between $A$-modules
   is a right-module-homomorphism between the right-module over the unital associative ring underlying $A$;
 or equivalently
   a left-module-homomorphism between the left-module over the unital associative ring underlying $A$
   but with the sign rule applied to homogeneous components of $h$ and homogeneous elements of $A$.
 Explicitly,
 $h$ is said to be
  {\it even} if it preserves the ${\Bbb Z}/2$-grading   or
  {\it odd}  if it switches the ${\Bbb Z}/2$-grading;
 decompose $h$ to $h=h_0+h_1$  a summation of even and odd components,
 then $h_i(am)=(-1)^{i|a|}a h_i(m)$, $i=0,\,1$, for $a\in A$ homogeneous and $m\in M$.
 
 Subject to the above sign rules when applicable, the notion of
 
      \medskip
	
	{\Large $\cdot$}
       {\it submodule} $M^{\prime} \hookrightarrow M$, (cf.\ {\it monomorphism}),
	
	{\Large $\cdot$}
       {\it quotient module}  $M \twoheadrightarrow M^{\prime}$, (cf.\ {\it epimorphism}), 	
	
	{\Large $\cdot$}
	   {\it direct sum}  $M\oplus M^{\prime}$ of $A$-modules,
	
    {\Large $\cdot$}
      {\it tensor product} $M\otimes_A M^{\prime}$ of $A$-modules,
	
    {\Large $\cdot$}
	    {\it finitely generated}:
    		if $A^{\oplus l} \twoheadrightarrow M$
		     exists for some $l$,
			
    {\Large $\cdot$}
        {\it finitely presented}:
           if $A^{\oplus l^{\prime}}
		          \rightarrow   A^{\oplus l} \rightarrow M \rightarrow 0$
              is exact for some $l$, $l^{\prime}$				
	
       \medskip
	
	\noindent
    are all defined in the ordinary way as in commutative algebra.
}\end{definition}

\bigskip

\begin{flushleft}
{\bf Differential calculus on a superring}
\end{flushleft}
\begin{definition} {\bf [superderivation on superring].} {\rm
 Let $A$ be a superring over another superring $B$
   with $B\rightarrow A$ the underlying superring-homomorphism.
 Then,
  a ({\it left}) {\it super-$B$-derivation} $\zeta$ on $A$ {\it of  fixed parity} $|\zeta|=0$ or $1$
   is a map
    $$
      \zeta \;:\; A\;\longrightarrow\; A
    $$
   that satisfies
    $$
    \begin{array}{llll}	
     \mbox{({\it left-$B$-superlinearity})}
	   &&&  \zeta(ba+b^{\prime}a^{\prime})\;
	             =\;  (-1)^{|\zeta||b|}b\,\zeta(a)\,
				          +\,  (-1)^{|\zeta||b^{\prime}|} \,b^{\prime}\,\zeta(a^{\prime})\,,  \\[.6ex]
	 \mbox{({\it super Leibniz rule})}
	  &&& \zeta(aa^{\prime})\;
	           =\; \zeta(a)\, a^{\prime}\,
			           +\, (-1)^{|\zeta||a|}\,a\, \zeta(a^{\prime})
    \end{array}
   $$	
   for all $b,\,b^{\prime}\in B$ and $a,\,a^{\prime} \in A$ homogeneous,
   and all the ${\Bbb Z}$-linear extensions of these relations.
 A ({\it left}) {\it super-$B$-derivation} $\zeta$ on $A$
   is a formal sum $\zeta=\zeta_0+\zeta_1$
   of  a super-$B$-derivation $\zeta_0$ on $A$ of even parity  and
        a super-$B$-derivation $\zeta_1$ on $A$ of odd parity.
 
 Denote by $\sDer_B(A)$ the set of all super-$B$-derivations on $A$.
 Then, $\sDer_B(A)$ is ${\Bbb Z}/2$-graded by construction.
 Furthermore,
   if $\zeta\in \sDer_B(A)$,
      then so does $a \zeta = (-1)^{|a||\zeta|}\zeta a$,
     with
	 $$
	   (a\zeta)(\,\cdot\,)\; :=\;   a(\zeta(\,\cdot\,))
  	        \hspace{2em}\mbox{and}\hspace{2em}
	   (\zeta a)(\,\cdot\,)\; :=\;  (-1)^{|a||\,\cdot\,|}\,(\zeta(\,\cdot\,))a
	 $$
	 for $a\in A$.
 Thus,
  {\it $\sDer_B(A)$ is naturally a (bi-)$A$-module},
    with $a\cdot \zeta :=  a\zeta$ and $\zeta\cdot a := \zeta a$
	        for $a\in A$ and $\zeta\in \sDer_B(A)$.
			
 Note that
  if $\zeta,\, \zeta^{\prime} \in \sDer_B(A)$ (homogeneous),
   then so does their {\it super Lie bracket}  (synonymously, {\it supercommutator})
   $$
     [\zeta, \zeta^{\prime}]\;
	    :=\;  \zeta\zeta^{\prime}\,
		         -\,(-1)^{|\zeta||\zeta^{\prime}|}\,\zeta^{\prime}\zeta\,.
   $$
 Thus, {\it $\sDer_B(A)$ is naturally a super-Lie algebra}.
 Homogeneous elements in which satisfy the super-anti-commutativity identity and the super-Jacobi identity:
  $$
    \begin{array}{c}
	  [\zeta, \zeta^{\prime}]\;
	    =\;  -\,(-1)^{|\zeta||\zeta^{\prime}|}\,[\zeta^{\prime},\zeta]\,, \\[.6ex]
	  [\zeta,[\zeta^{\prime},\zeta^{\prime\prime}]]\;
	    =\;  [[\zeta,\zeta^{\prime}], \zeta^{\prime\prime}] \,
		        +\, (-1)^{|\zeta||\zeta^{\prime}|}\,[\zeta^{\prime},[\zeta,\zeta^{\prime\prime}]]\,.	
   \end{array}
  $$
 
 When $A$ is a $k$-algebra,
  we will denote $\sDer_k(A)$ also by $\sDer(A)$, with the ground field $k$ understood.
}\end{definition}

\medskip

\begin{remark}$[\,$equivalent condition$\,]$. {\rm
 The {\it left-$B$-superlinearity condition} and the {\it super Leibniz rule condition} in Definition~2.1.4,
                                                                                                           % Definition [superderivation on superring] 
  are equivalent to
   $$
    \begin{array}{llll}	
     \mbox{({\it right-$B$-linearity})}
	  &&&  \zeta(ab+a^{\prime}b^{\prime})\;
	             =\;  \zeta(a)\,b\, +\,   \zeta(a^{\prime})\,b^{\prime}   \,,  \\[.6ex]
	 \mbox{({\it super Leibniz rule}$^{\prime}$)}
	  &&&  \zeta(aa^{\prime})\;
	             =\; \zeta(a)\, a^{\prime}\,
			                   +\, (-1)^{|a||a^{\prime}|}\,\zeta(a^{\prime})\,a
    \end{array}
   $$	
  respectively.
 Note that the parity $|\zeta|$ of $\zeta$ is removed in this equivalent form.
 The format in Definition~2.1.4 is more natural-looking
                    % Definition [superderivation on superring]
   while the above equivalent form is more convenient to use occasionally.
 Cf.\ Definition~2.1.7.
    % Definition [superderivation with value in module]
   
}\end{remark}

\medskip

\begin{remark} $[\,$inner superderivation$\,]$. {\rm
 Similarly to the commutative case,
   all the inner superderivations of  a superring are zero.
}\end{remark}

\medskip

\begin{definition} {\bf [superderivation with value in module].} {\rm
  Let $A$ be a superring over another superring $B$ and $M$ be an $A$-module.
  A $\Bbb Z$-linear map
    $$
	   d\; :\;  A\;  \rightarrow\;  M
	$$
    is called a {\it super-$B$-derivation with values in $M$}
  if $d$ satisfies
    $$
    \begin{array}{llll}	
     \mbox{({\it right-$B$-linearity})}
	  &&&  d(ab+a^{\prime}b^{\prime})\;
	             =\;  d(a)\,b\, +\,   d(a^{\prime})\,b^{\prime}   \,,  \\[.6ex]
	 \mbox{({\it super Leibniz rule}$^{\prime}$)}
	  &&&  d(aa^{\prime})\;
	             =\; d(a)\, a^{\prime}\,
			                   +\, (-1)^{|a||a^{\prime}|}\,d(a^{\prime})\,a
    \end{array}
   $$	
   for $a$, $a^{\prime}\in A$ homogeneous, and all ${\Bbb Z}$-linear extension of such identities.
 In particular, $d$ is a $B$-module-homomorphism.
 $d$ is said to be {\it even} if $d(A_0)\subset M_0$ and $d(A_1)\subset M_1$;
   and {\it odd} if $d(A_0)\subset M_1$ and $d(A_1)\subset M_0$.
 The set $\sDer_B(A,M)$ of all super-$B$-derivations with values in $M$
  is naturally a (${\Bbb Z}/2$-graded) bi-$A$-module,
   with the multiplication defined by
    $$
	   a \cdot d\;:\; a^{\prime}\; \longmapsto\;   a\,(d(a^{\prime}))
	    \hspace{2em}\mbox{and}\hspace{2em}
	  d \cdot a\;:\; a^{\prime}\; \longmapsto\;   (-1)^{|a||a^{\prime}|}\,(d(a^{\prime}))\,a
	$$
 for $a\in A$ and $d\in \sDer_B(A,M)$ homogeneous, plus a ${\Bbb Z}$-linear extension.
}\end{definition}

\medskip

\begin{definition} {\bf [module of differentials of superring].} {\rm
 Continuing Definition~2.1.7.
                 % Definition [superderivation with value in module]
 An $A$-module $M$ with a super-$B$-derivation $d: A\rightarrow M$
   is called the {\it cotangent module} of $A$
  if it satisfies the following universal property:
  \begin{itemize}
   \item[{\Large $\cdot$}]
    For any $A$-module $M^{\prime}$  and
    super-$B$-derivation $d^{\prime}: A  \rightarrow M^{\prime}$,
    there exists a unique homomorphism of $A$-modules $\psi:M\rightarrow M^{\prime}$
    such that $\;d^{\prime}= \psi \circ d\,$.	
 	$$
     \xymatrix{
	 & A \ar[rr]^-{d} \ar[dr]_-{d^{\prime}}  && M   \ar @{.>} [dl]^-{\psi}\\
	 &&    M^{\prime}&&.		  		
	  }
    $$			
  \end{itemize}
  (Thus, $M$ is unique up to a unique $A$-module isomorphism.)
 We denote this $M$ with $d: A\rightarrow M$ by $\Omega_{A/B}$,
  with the built-in super-$B$-derivation $d: A \rightarrow \Omega_{A/B}$ understood.
}\end{definition}

\medskip

\begin{remark}$[\,$explicit construction of $\Omega_{A/B}$$\,]$. {\rm
 The cotangent module $\Omega_{A/B}$ of a superring $A$ over $B$
   can be constructed explicitly from the $A$-module generated by the set
   $$
     \{ d(a)\,|\,  a \in A   \}\,,
   $$
   subject to the relations
   $$
    \begin{array}{llll}	
	 \mbox{({\it ${\Bbb Z}/2$-grading})}
	  &&& |d(a)|\;=\; |a|\,,   \\[.6ex]
     \mbox{({\it right-$B$-linearity})}
	  &&& d(ab+a^{\prime}b^{\prime})\;
	            =\; d(a)\,b\, +\,d(a^{\prime})\, b^{\prime}\,,   \\[.6ex]
	 \mbox{({\it super Leibniz rule$^{\prime}$})}
	  &&&  d(aa^{\prime})\;
	               =\;  d(a)\,a^{\prime}\,
				          +\, (-1)^{|a||a^{\prime}|}\, d(a^{\prime})\, a \,,   \\[.6ex]
	 \mbox{({\it bi-$A$-module structure})}
	  &&&  d(a)\,a^{\prime}\;=\;   (-1)^{|a||a^{\prime}|}\,  a^{\prime}\,d(a)
    \end{array}
   $$	
   for all $b,\, b^{\prime}\in B$, $a,\,a^{\prime}\in A$ homogeneous,
      plus a ${\Bbb Z}$-linear extension of these relations.
 Denote the image of $d(a)$ under the quotient by $da$.
 Then, by definition, the built-in map
  $$
    \begin{array}{ccccc}
	 d & : &  A      & \longrightarrow   & \Omega_{A/B}   \\[.6ex]
	    &   &  a       & \longmapsto        & da
	\end{array}
  $$
  is a super-$B$-derivation from $A$ to $\Omega_{A/B}$.
}\end{remark}

\medskip

\begin{remark}
$[\,$relation between $\sDer_B(A)$ and $\Omega_{A/B}$$\,]$.
{\rm
 The universal property of $\Omega_{A/B}$ implies that
  there is an $A$-module-isomorphism
   $\sDer_B(A,M)\simeq  \Hom_A(\Omega_{A/B},M)$ for any $A$-module $M$.
 In particular, for $M=A$, one has
   $\sDer_B(A)\simeq  \Hom_A(\Omega_{A/B}, A)$.
}\end{remark}

\bigskip

\subsection{Super-$C^k$-rings, modules, and differential calculus on super-$C^k$-rings}

Basic notions and terminology from super-$C^k$-algebra are introduced in this subsection.
They serve the basis to construct super-$C^k$-schemes and quasi-coherent sheaves thereupon

\bigskip

\begin{remark}$[\,$on the setting in this subsection, alternative, and issue beyond$\,]$. {\rm
 The setting in the current subsection allows the transcendental/nonalgebraic notion of $C^k$-rings
   to merge with the algebraic notion of superrings immediately.
 It leads to the notion of super-$C^k$-manifolds and super-$C^k$-schemes
   that are nothing but a sheaf-type super-thickening of ordinary $C^k$-manifolds and ordinary $C^k$-schemes;
   cf.\ Remark~2.3.16 and {\sc Figure}~3-1.
                        % Remark [sheaf-type super-thickening]
						% Figure[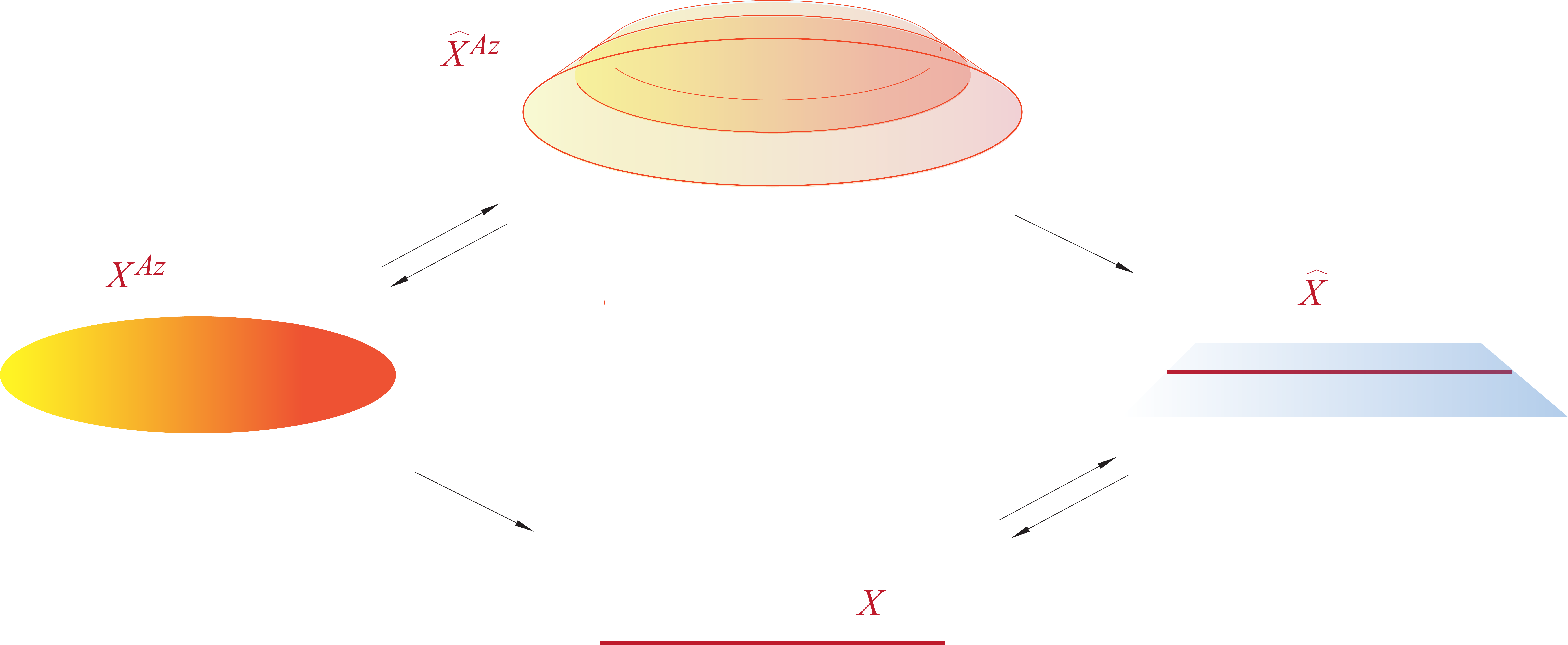]
 While mathematically these are not the most general kind of superspaces,
   physically they are broad enough to cover the superspaces and supermanifolds that appear
   in supersymmetric quantum field theory and superstring theory in most situations.

 There are alternatives to our setting.
 A most fundamental one would be re-do the algebraic geometry over $C^k$-rings,
  using now the $C^k$-function rings
  $\coprod_{(p,q)}C^k({\Bbb R}^{p|q})$,
  where ${\Bbb R}^{p|q}$ is the super-$C^k$-manifold
   whose coordinates $(x^1,\,\cdots\,, x^p;\theta^1,\,\cdots\,, \theta^q)$ have
    both commuting and anti-commuting variables.
 Different interpretations/settings
   to the evaluation of ordinary $C^k$-functions of ${\Bbb R}^p$
   on $p$-tuples of supernumbers $\in {\Bbb R}^{1|q^{\prime}}$
  may lead to different classes of algebraic geometry over super-$C^k$-rings.
 In this sense, our setting is the simplest one.
 Other more general settings should be studied in their own right.
}\end{remark}

\clearpage
%\bigskip

\begin{flushleft}
{\bf Super-$C^k$-rings}
\end{flushleft}
With Sec.~2.1 as background, we now build into the superring in question
 an additional $C^k$-ring structure on an appropriate subring of its even subring.

\bigskip

\begin{definition} {\bf [superpolynomial ring over $C^k$-ring].} {\rm
 Let $R$ be a $C^k$-ring.
 A {\it superpolynomial ring over $R$} is a (unital) associative ring over $R$ of the following form
  $$
     R[\theta^1, \,\cdots\,,\theta^s]\;
       :=\;  \frac{  R\langle \theta^1, \,\cdots\,,\theta^s\rangle}
			   {( r\theta^{\alpha}-\theta^{\alpha}r\,,\,
   			            \theta^{\beta}\theta^{\gamma}+\theta^{\gamma}\theta^{\beta}
						 \rule{0em}{1em}
			             \:|\: r\in R\,,\, 1\le \alpha,\beta,\gamma \le s)}\,,
  $$
 where
 \begin{itemize}
  \item[{\Large $\cdot$}]
   $R\langle \theta^1, \,\cdots\,,\theta^s\rangle$
   is the unital associative ring over $R$ generated by the variables $\theta^1, \,\cdots\,,\theta^s$,
		
  \item[{\Large $\cdot$}]		
   $( r\theta^{\alpha}-\theta^{\alpha}r\,,\,
   			            \theta^{\beta}\theta^{\gamma}+\theta^{\gamma}\theta^{\beta}
						 \rule{0em}{1em}
			             \:|\: r\in R\,,\, 1\le \alpha,\beta,\gamma \le s)$   
   is the bi-ideal in $R\langle \theta^1,\,\cdots\,,\theta^s \rangle $  generated by the elements indicated.

  \item[{\Large $\cdot$}]
   The ${\Bbb Z}/2$-grading is determined by specifying
    $|r|=0$, $|\theta^{\alpha}|=1$ for all $r\in R$ and $1\le \alpha\le s$  and
 	the extension by the product rule
	$|\widehat{r}\widehat{r}^{\prime}|=|\widehat{r}||\widehat{r}^{\prime}|$
	whenever applicable.
 \end{itemize}
 
 Underlying $R[\theta^1,\,\cdots\,\theta^s]$ as an algebra-extension of $R$
  is a {\it built-in split short exact sequence} of $R$-modules
  $$
      \xymatrix{
        0  \ar[r]    &   (\theta^1,\,\cdots\,,\theta^s)  \ar[r]      & R[\theta^1,\,\cdots\,,\theta^s]    \ar[r]
	                     & R \ar[r]   \ar@{.>}@/^/[l]     & 0\;,
       }
  $$
  where $R\rightarrow R[\theta^1,\,\cdots\,, \theta^s]$ is the built-in $R$-algebra inclusion map.
 It is also an exact sequence of $R[\theta^1,\,\cdots\,,\theta^s]$-modules in the sense of Definition~2.1.3.
                                                                                                                                % Definition [module over superring]

 Let
   $V$ be a vector space over ${\Bbb R}$ of dimension $s$,
       spanned by $\{\theta^1,\,\cdots\,,\theta^s\}$ , and
   $\bigwedge^{\bullet}V$ be the (${\Bbb Z}/2$-graded ${\Bbb Z}/2$-commutative)
   {\it Grassmann/exterior algebra} associated to $V$.
 Then
  $$\mbox{$
     R[\theta^1,\,\cdots\,,\theta^s]\;  \simeq\;   R\otimes_{\Bbb R}\bigwedge^{\bullet}V
	 $}
  $$
   as ${\Bbb Z}/2$-graded ${\Bbb Z}/2$-commutative rings over $R$,
   with
     $$\mbox{$
	    R[\theta^1,\,\cdots\,,\theta^s]_0\;
	      \simeq\;  R\otimes_{\Bbb R}\bigwedge^{\scriptsizeeven}V
   		\hspace{2em}\mbox{and}\hspace{2em}
       R[\theta^1,\,\cdots\,,\theta^s]_1\;
	      \simeq\;  R\otimes_{\Bbb R}\bigwedge^{\scriptsizeodd}V\,.
		$}
     $$.
}\end{definition}

\smallskip

\begin{definition} {\bf [super-$C^k$-ring: split super-extension of $C^k$-ring].} {\rm
 Let $R$ be a $C^k$-ring.
 A {\it split super-extension $\widehat{R}$ of $R$} is a (unital) associative ring $\widehat{R}$ over $R$
   that is equipped with a split short exact sequence of $R$-modules
   $$
     \xymatrix{
      0  \ar[r]    &   M  \ar[r]    & \widehat{R}\ar[r]   & R \ar[r]   \ar@{.>}@/^/[l]   &  0
      }
   $$
  such that
  \begin{itemize}
   \item[{\Large $\cdot$}]
    There exists a superpolynomial ring $R[\theta^1,\,\cdots\,,\theta^s]$ over $R$, for some $s$,
	  that can realize $\widehat{R}$ as its superring-quotient $R$-algebra
     $$
       \xymatrix{
	     R[\theta^1,\,\cdots\,,\theta^s]    \ar@{->>}[r]       & \widehat{R} \\
        }
     $$
   in such a way that  the following induced diagram commutes
   $$
     \xymatrix{
      &  0  \ar[r]    &   (\theta^1,\,\cdots\,,\theta^s)  \ar[r] \ar@{->>}[d]
	                     & R[\theta^1,\,\cdots\,,\theta^s]    \ar[r] \ar@{->>}[d]
	                     & R \ar[r]   \ar@{.>}@/^/[l]   \ar@{=}[d]    &  0   \\
	  &  0  \ar[r]    &   M  \ar[r]    & \widehat{R}\ar[r]   & R \ar[r]   \ar@{.>}@/^/[l]   &  0   &.
     }
   $$
   \end{itemize}
 We will call $\widehat{R}$ synonymously a {\it super-$C^k$-ring} over $R$.
 In particular, $R$ itself is trivially a super-$C^k$-ring, with $M=0$.
}\end{definition}

\smallskip

\begin{definition}{\bf [homomorphism between super-$C^k$-rings].} {\rm
 Let
   $R$ and $S$ be $C^k$-rings,
   $\widehat{R}$ be a super-$C^k$-ring over $R$,  and
   $\widehat{S}$ be a super-$C^k$-ring over $S$.
 A {\it super-$C^k$-ring-homomorphism} from $\widehat{R}$ to $\widehat{S}$
 is a pair of superring-homomorphisms (cf.\ Definition~2.1.1)
                                                                          % Definition [superring]
  $$
   \widehat{f}\;:\; \widehat{R}\;\longrightarrow\; \widehat{S}
     \hspace{2em}\mbox{and}\hspace{2em}
    f \;:\;   R \; \longrightarrow\;  S
  $$
  such that
  \begin{itemize}
   \item[(1)]
    $f$ is a $C^k$-ring-homomorphism,
	
  \item[(2)]
   $(\widehat{f}, f)$ is compatible with the underlying super-$C^k$-ring structure
    of $\widehat{R}$ and $\widehat{S}$; namely, the following diagram commutes
   $$
    \xymatrix{
	 &  \widehat{R}\ar[rr]^-{\widehat{f}} \ar@{->>}[d]     && \widehat{S}\ar@{->>}[d]\\
     &  R \ar@/^/@{.>}[u]  \ar[rr]^-{f}                     && S \ar@/^/@{.>}[u]    &.
	}
   $$
 \end{itemize}
 For the simplicity of notations, we may denote the pair
   $(\widehat{f},f):(\widehat{R},R)\rightarrow (\widehat{S},S)$
  also as $\widehat{f}:\widehat{R}\rightarrow \widehat{S}$.
 We say that
  a super-$C^k$-ring-homomorphism $\widehat{f}:\widehat{R}\rightarrow \widehat{S}$ is {\it injective}
     (resp.\ {\it surjective})
    if both $\widehat{f}$ and $f$ are injective (resp.\ surjective).
 In this case, $\widehat{f}$ is called a {\it super-$C^k$-ring-monomorphism}
  (resp.\ {\it super-$C^k$-ring-epimorphism}).	
}\end{definition}

\medskip

\begin{definition} {\bf [ideal, super-$C^k$-normal ideal, super-$C^k$-quotient].} {\rm
 Let $\widehat{R}$ be a super-$C^k$-ring.
 An {\it ideal} $\widehat{I}$ of $\widehat{R}$ is an ideal $\widehat{I}$ of $\widehat{R}$
  as a ${\Bbb Z}/2$-graded ${\Bbb R}$-algebra.
 $\widehat{I}$ is called {\it super-$C^k$-normal}
  if the super-$C^k$-ring structure on $\widehat{R}$ descends to a super-$C^k$-ring structure
  on the quotient ${\Bbb R}$-algebra $\widehat{R}/\widehat{I}$.
 In this case,
  $\widehat{I}$ must be a ${\Bbb Z}/2$-graded ideal of $\widehat{R}$
  (cf.\ Lemma~2.1.2).
        % Lemma [${\Bbb Z}_2$-graded ideal = supernormal ideal]
 $\widehat{R}/\widehat{I}$ with the induced super-$C^k$-ring structure
    is called a {\it super-$C^k$-quotient} of $\widehat{R}$
   and one has the following commutative diagram
    $$
    \xymatrix{
	 &  \widehat{R}\ar[rr]^-{\widehat{q}} \ar@{->>}[d]
	     && \widehat{R}/\widehat{I}\ar@{->>}[d]\\
     &  R \ar@/^/@{.>}[u]  \ar[rr]^-{q}                     && R/I \ar@/^/@{.>}[u]    &\hspace{-3em},
	}
   $$
   where
     $\widehat{q}:\widehat{R}\rightarrow \widehat{R}/\widehat{I}$ is the quotient map  and
     $I:= \widehat{I}\cap R$ is now a $C^k$-normal ideal of the $C^k$-ring $R$.
 Note that for $\widehat{I}$ $C^k$-normal,
    the quotient super-$C^k$-ring structure on $\widehat{R}/\widehat{I}$
    is compatible with the quotient ${\Bbb R}$-algebra structure.
}\end{definition}

%%%%%%--------------------------------------------------------------------------------------------------------------
%
% \bigskip
%
% \begin{remark} {$[$??????$]$.} {\rm
%  A noncommutative generalization of $C^k$-ring structure:
%  %
%  \marginpar{\raggedright\tiny $\bullet$ {\bf Q.} {Do we need this generalization?} }
%  %
%  \begin{itemize}
%   \item[(1)]
%    Let $C^k({\Bbb R}^m)\{\zeta^1,\,\cdot\,,\zeta^m\}$
%     be the(unital) associative ring over the $C^k$-ring $C^k({\Bbb R}^m)$
% 	generated algebraically by $m$-many variables $\zeta^1,\,\cdots\,,\zeta^m\,$.
%   It has center $C^k({\Bbb R}^m)$ and 	
%   elements finite combination of variables from the set $\{\zeta^1,\,\cdots\,,\zeta^m\}$
%     through additions and multiplications with coefficients in $C^k({\Bbb R}^m)$.
%  	
%   \item[(2)]	
%   The built-in split short exact sequence
%    $$
%     ???????????????????.
%    $$
% 	
%   \item[(3)]	
%    The $???-C^k$-ring structure.
%
%   \item[(4)]
%   A super-$C^k$-ring is an example of $????-C^k$-ring.
%  \end{itemize}
% }\end{remark}
%
%%%%%%%%%%%%%%-------------------------------------------------------------------------------------

\medskip

\begin{definition}
 {\bf [localization of super-$C^k$-ring].} {\rm
 Let
   $\widehat{R}$ be a super-$C^k$-ring,
	         with $R\hookrightarrow \widehat{R}$ the built-in inclusion,
   $S$ be a subset of $R$,  and
   $R[S^{-1}]$	be the localization of the $C^k$-ring $R$ at $S$,
	   with the built-in $C^k$-ring-homomorphism $R\rightarrow R[S^{-1}]$.
 The {\it localization} of  $\widehat{R}$ at $S$, denoted by $\widehat{R}[S^{-1}]$,
    is the super-$C^k$-ring over $R[S^{-1}]$ defined by
	$$
	 \widehat{R}[S^{-1}]\; :=\;   \widehat{R}\otimes_R R[S^{-1}]\,.
	$$
 It goes with a built-in super-$C^k$-ring-homomorphism
  $\widehat{R}\rightarrow \widehat{R}[S^{-1}]$.
}\end{definition}

\bigskip

\begin{flushleft}
{\bf Modules over super-$C^k$-rings}
\end{flushleft}

\begin{definition} {\bf [module over super-$C^k$-ring].} {\rm
  Let $R$ be a $C^k$-ring and $\widehat{R}$ be a super-$C^k$-ring over $R$.
  Recall Definition~2.1.3.
          % Definition [module over superring]
  A {\it module $\widehat{M}$ over $\widehat{R}$}, or {\it $\widehat{R}$-module},
    is a module over $\widehat{R}$ as a superring.

  The notion of
   
     \medskip
	
    {\Large $\cdot$}
	   {\it homomorphism} $\widehat{M}_1 \rightarrow \widehat{M}_2$ of $\widehat{R}$-modules,
 
      \medskip
	
	{\Large $\cdot$}
       {\it submodule} $\widehat{M}_1 \hookrightarrow \widehat{M}_2$, (cf.\ {\it monomorphism}),
	
	  \medskip
	
	{\Large $\cdot$}
       {\it quotient module}  $\widehat{M}_1
	      \twoheadrightarrow \widehat{M}_2$, (cf.\ {\it epimorphism}), 	
	
	  \medskip
	
	{\Large $\cdot$}
	   {\it direct sum}  $\widehat{M}_1\oplus \widehat{M}_2$ of $\widehat{R}$-modules,
	
	  \medskip
	
    {\Large $\cdot$}
      {\it tensor product}
	      $\widehat{M}_1\otimes_{\widehat{R}}\widehat{M}_2$ of $\widehat{R}$-modules,
	
	  \medskip
	
    {\Large $\cdot$}
	    {\it finitely generated}:
    		if $\widehat{R}^{\,\oplus l} \twoheadrightarrow \widehat{M}$
		     exists for some $l$,
	
	  \medskip
	
    {\Large $\cdot$}
        {\it finitely presented}:
           if $\widehat{R}^{\,\oplus l^{\prime}}
		          \rightarrow  \widehat{R}^{\,\oplus l} \rightarrow  \widehat{M}\rightarrow 0$
              is exact for some $l$, $l^{\prime}$				
	
      \medskip
	
	\noindent
    are all defined as in Definition~2.1.3 for modules over a superring.
	                              % Definition [module over superring]

 Denote by $\ModCategory(\widehat{R})$ the category of modules over  $\widehat{R}$.	
}\end{definition}

\medskip

\begin{remark} $[\,$module over super-$C^k$-ring vs.\ module over $C^k$-ring$\,]$. {\rm
 Recall the built-in ring-homomorphism $R\rightarrow \widehat{R}$.
 Thus every $\widehat{R}$-module is canonically an $R$-module.
 The induced functor $\ModCategory(\widehat{R})\rightarrow \ModCategory(R)$ is exact.
}\end{remark}

\medskip

\begin{definition}{\bf [localization of module over super-$C^k$-ring].}   {\rm
 Let
   $\widehat{M}$ be a module over a super-$C^k$-ring $\widehat{R}$.
 Recall the built-in inclusion $R\hookrightarrow \widehat{R}$.
 Let $\widehat{R}\rightarrow \widehat{R}[S^{-1}]$
   be the localization of $\widehat{R}$ at a subset $S\subset R$.
 Then, the {\it localization of $\widehat{M}$ at $S$}, denoted by $\widehat{M}[S^{-1}]$,
   is the $\widehat{R}[S^{-1}]$-module defined by
     $$
	   \widehat{M}[S^{-1}]\;
	     :=\;  \widehat{R}[S^{-1}]\otimes_{\widehat{R}} \widehat{M}\,.
	 $$
 By construction,  it is equipped with an $\widehat{R}$-module-homomorphism
   $\widehat{M} \rightarrow \widehat{M}[S^{-1}]$.
}\end{definition}

\bigskip

\begin{flushleft}
{\bf Differential calculus on super-$C^k$-rings}
\end{flushleft}

\begin{definition} {\bf [superderivation on super-$C^k$-ring].} {\rm
 Let $\widehat{R}$ be a super-$C^k$-ring over another super-$C^k$-ring $\widehat{S}$
   with $\widehat{S}\rightarrow\widehat{R}$ the built-in super-$C^k$-ring-homomorphism.
 Then,
  a ({\it left}) {\it super-$C^k$-$\widehat{S}$-derivation $\widehat{\Theta}$}
   on $\widehat{R}$ is a map
    $$
      \widehat{\Theta} \;:\; \widehat{R}\;\longrightarrow\; \widehat{R}
    $$
   that satisfies
    $$
    \begin{array}{llll}	
     \mbox{({\it right-$\widehat{S}$-linearity})}
	  &&& \widehat{\Theta}(\widehat{r}\widehat{s}
	                                                       +\widehat{r}^{\prime}\widehat{s}^{\prime})\;
	            =\;  \widehat{\Theta}(\widehat{r})\,\widehat{s}\,
				      +\,   \widehat{\Theta}(\widehat{r}^{\prime})\,\widehat{s}^{\prime}   \,,   \\[.6ex]
	 \mbox{({\it super Leibniz rule$^{\prime}$})}
	  &&& \widehat{\Theta}(\widehat{r}\widehat{r}^{\prime})\;
	           =\; \widehat{\Theta}(\widehat{r})\, \widehat{r}^{\prime}\,
			           +\, (-1)^{|\widehat{r}||\widehat{r}^{\prime}|}\,
					           \widehat{\Theta}(\widehat{r}^{\prime})\,\widehat{r}
    \end{array}
   $$	
   for all $\widehat{s},\,\widehat{s}^{\prime}\in \widehat{S}$,
        $\widehat{r},\,\widehat{r}^{\prime} \in \widehat{R}$ homogeneous,
	 and the ${\Bbb R}$-linear extensions of these relations,  and
  $$
   \begin{array}{l}
    \hspace{-2em}\mbox{({\it chain rule})}\\[1.2ex]
    \widehat{\Theta}(h(r_1,\,\cdots\,,\,r_l))\;
	  =\;  \partial_1 h(r_1,\,\cdots\,,\,r_l)\,\widehat{\Theta}(r_1)\;
	         +\; \cdots\; +\;
			 \partial_l h(r_1,\,\cdots\,,\,r_l)\,\widehat{\Theta}(r_l)	
   \end{array}			
  $$
  for all $h\in C^k({\Bbb R}^l)$,
    $l\in {\Bbb Z}_{\ge 1}$, and $r_1,\,\cdots\,,\,r_l\in$
	     the $C^k$-ring $R \subset \widehat{R}$.
 %%%%------------------------------------------------------------------------------------------------------------------------------------		 
 % A ({\it left}) {\it super-$\widehat{S}$-derivation $\widehat{\Theta}$} on $\widehat{R}$
 %   is a formal sum $\widehat{\Theta}=\widehat{\Theta}_0+\widehat{\Theta}_1$
 %   of  a super-$\widehat{S}$-derivation $\widehat{\Theta}_0$ on $\widehat{R}$
 %    of even parity  and
 %   a super-$\widehat{S}$-derivation $\widehat{\Theta}_1$ on $\widehat{R}$
 %    of odd parity.
 %%%%%%%%%%%%-------------------------------------------------------------------------------------------------------------
 
 Denote by $\sDer_{\widehat{S}}(\widehat{R})$
   the set of all super-$C^k$-$\widehat{S}$-derivations on $\widehat{R}$.
 Then, $\sDer_{\widehat{S}}(\widehat{R})$ is ${\Bbb Z}/2$-graded by construction.
 Furthermore,
   if $\widehat{\Theta}\in \sDer_{\widehat{S}}(\widehat{R})$,
      then so does
	    $\widehat{r}\widehat{\Theta}
		    = (-1)^{|\widehat{r}||\widehat{\Theta}|}\widehat{\Theta}\widehat{ r}$,
     with $(\widehat{r}\widehat{\Theta})(\,\cdot\,)
	                    := \widehat{r}(\widehat{\Theta}(\,\cdot\,))$ and
	         $(\widehat{\Theta}\widehat{ r})(\,\cdot\,)
			            :=  (-1)^{|\widehat{r}||\,\cdot\,|} (\widehat{\Theta}(\,\cdot\,))\widehat{r}$,
	 for $\widehat{r}\in\widehat{R}$.
 Thus,
  {\it $\sDer_{\widehat{S}}(\widehat{R})$ is naturally a (bi-)$\widehat{R}$-module},
    with $\widehat{r}\cdot \widehat{\Theta} := \widehat{r}\widehat{\Theta}$ and
            $\widehat{\Theta}\cdot \widehat{r} := \widehat{\Theta}\widehat{r}$
	        for $\widehat{\Theta}\in \sDer_{\widehat{S}}(\widehat{R})$,
			     $\widehat{r}\in \widehat{R}$.
 
 Furthermore,
  if $\widehat{\Theta},\, \widehat{\Theta}^{\prime}
         \in \sDer_{\widehat{S}}(\widehat{R})$  (homogeneous),
   then so does the {\it super Lie bracket}
   $$
     [\widehat{\Theta}, \widehat{\Theta}^{\prime}]\;
	    :=\;  \widehat{\Theta}\widehat{\Theta}^{\prime}\,
		         -\,(-1)^{|\widehat{\Theta}||\widehat{\Theta}^{\prime}|}\,
				          \widehat{\Theta}^{\prime}\widehat{\Theta}
   $$
   of $\widehat{\Theta}$ and $\widehat{\Theta}^{\prime}$.
 Thus, {\it $\sDer_{\widehat{S}}(\widehat{R})$ is naturally a super-Lie algebra}.
 
 We will denote $\sDer_{\Bbb R}(\widehat{R})$ also by $\sDer(\widehat{R})$.
}\end{definition}

\medskip

\begin{remark} $[\,$inner superderivation$\,]$. {\rm
 Similarly to the commutative case,
   all the inner superderivations of  a super-$C^k$-ring are zero.
}\end{remark}

\medskip

\begin{remark} $[\,$relation to $\Der_S(R)$$\,]$.  {\rm
 Note that the built-in inclusion $R\hookrightarrow \widehat{R}$ induces
  a natural $\widehat{R}$-module-homomorphism
   $\sDer_{\widehat{S}}(\widehat{R})
       \rightarrow \Der_S(R)\otimes_R\widehat{R} $.
}\end{remark}

\medskip

\begin{definition} {\bf [superderivation with value in module].}
{\rm
  Let
     $\widehat{R}$ be a super-$C^k$-ring over a super-$C^k$-ring $\widehat{S}$  and
     $\widehat{M}$ an $\widehat{R}$-module.
  An ${\Bbb R}$-linear map
    $$
	   d\; :\;  \widehat{R}\;  \rightarrow\; \widehat{ M}
	$$
    is called a {\it super-$C^k$-$\widehat{S}$-derivation with values in $\widehat{M}$},
  if
   $$
     \begin{array}{llll}
      \mbox{({\it right-$\widehat{S}$-linearity})}
	  &&& d(\widehat{r}\widehat{s}+\widehat{r}^{\prime}\widehat{s}^{\prime})\;
	            =\;  d(\widehat{r})\,\widehat{s}\,
				      +\,   d(\widehat{r}^{\prime})\,\widehat{s}^{\prime}   \,,   \\[.6ex]	
	  \mbox{({\it super Leibniz rule$^{\prime}$})}
	   &&& d(\widehat{r}\widehat{r}^{\prime})\;
	               =\;   d(\widehat{r})\,\widehat{r}^{\prime}\,
                    		 +\,  (-1)^{|\widehat{r}||\widehat{r}^{\prime}|}\,
							                   d(\widehat{r}^{\prime})\,\widehat{r}    \\[.6ex]
      \mbox{({\it chain rule})}	
       &&&   d (f(r_1,\,\cdots\,,\,r_n)) \;
		           =\; \sum_{i=1}^n (\partial_if)(r_1,\,\cdots,\,r_n)\cdot dr_i
	 \end{array}	
   $$
   for all $\widehat{s},\,\widehat{s}^{\prime}\in\widehat{S}$,
             $\widehat{r}$, $\widehat{r}^{\prime}\in \widehat{R}$ homogeneous,
             $f\in \cup_{n}C^k({\Bbb R}^n)$,	 and $r_i\in R$.
  Here
    $\partial_if$ is the partial derivative of $f\in C^k({\Bbb R}^n)$
    with respect  to the $i$-th coordinate of ${\Bbb R}^n$.
 In particular, $d$ is a $\widehat{S}$-module-homomorphism.
 The set $\sDer_{\widehat{S}}(\widehat{R}, \widehat{M})$
   of all super-$C^k$-$\widehat{S}$-derivation with values in $\widehat{M}$
   is naturally an $\widehat{R}$-module,
   with the multiplication defined by
      $\widehat{r}\cdot d: \widehat{r}^{\prime}\mapsto
           \widehat{r}(d(\widehat{r}^{\prime}))$ and
	  $d\cdot \widehat{r}: \widehat{r}^{\prime}
	     \mapsto  (-1)^{|\widehat{r}||\widehat{r}^{\prime}|}
		                     (d(\widehat{r}^{\prime}))\widehat{r}$
   for $\widehat{r}\in\widehat{R}$.
}\end{definition}

\medskip

\begin{definition} {\bf [module of differentials of super-$C^k$-ring].} {\rm
 Let $\widehat{R}$ be a super-$C^k$-ring over $\widehat{S}$.
 An $\widehat{R}$-module $\widehat{M}$ with a super-$C^k$-$\widehat{S}$-derivation
   $d:\widehat{R}\rightarrow M$
    is called the {\it $C^k$-cotangent module} of $\widehat{R}$ over $\widehat{S}$
   if it satisfies the following universal property:
   \begin{itemize}
    \item[{\Large $\cdot$}]
	 For any $\widehat{R}$-module $\widehat{M}^{\prime}$  and
	   super $C^k$-$\widehat{S}$-derivation
	   $d^{\prime}:\widehat{R}\rightarrow \widehat{M}^{\prime}$,
	 there exists a unique homomorphism of $\widehat{R}$-modules
	   $\psi:\widehat{M}\rightarrow \widehat{M}^{\prime}$
	  such that $\;d^{\prime}= \psi \circ d\,$.	
  	  $$
	    \xymatrix{
		 & \widehat{R} \ar[rr]^-{d} \ar[dr]_-{d^{\prime}}
		     && \widehat{M}   \ar @{.>} [dl]^-{\psi}\\
		 &&    \widehat{M}^{\prime}&&.		  		
		}
	  $$			
   \end{itemize}
   (Thus, $\widehat{M}$ is unique up to a unique $\widehat{R}$-module isomorphism.)
  We denote this $\widehat{M}$ with $d: \widehat{R}\rightarrow \widehat{M}$
     by $\Omega_{\widehat{R}/\widehat{S}}$,
   with the built-in super-$C^k$-$\widehat{S}$-derivation
   $d:\widehat{R}\rightarrow \Omega_{\widehat{R}/\widehat{S}}$ understood.
}\end{definition}

\medskip

\begin{remark}$[\,$explicit construction of $\Omega_{\widehat{R}/\widehat{S}}$$\,]$. {\rm
 The $C^k$-cotangent module $\Omega_{\widehat{R}/\widehat{S}}$
    of $\widehat{R}$ over $\widehat{S}$
   can be constructed explicitly from the $\widehat{R}$-module generated by the set
   $$
     \{ d(\widehat{r})\,|\,  \widehat{r}\in \widehat{R}   \}\,,
   $$
   subject to the relations
   $$
    \begin{array}{llll}	
	 \mbox{({\it ${\Bbb Z}/2$-grading})}
	  &&& |d(\widehat{r})|\;=\; |\widehat{r}|\,,   \\[.6ex]
     \mbox{({\it right-${\Bbb R}$-linearity})}
	  &&& d(\widehat{r}\widehat{s} + \widehat{r}^{\prime}\widehat{s}^{\prime})\;
	            =\;   d(\widehat{r})\,\widehat{s}\,
				        +\, d(\widehat{r}^{\prime})\,\widehat{s}^{\prime}\,,   \\[.6ex]
	 \mbox{({\it super Leibniz rule$^{\prime}$})}
	  &&&  d(\widehat{r}\widehat{r}^{\prime})\;
	               =\;  d(\widehat{r})\,\widehat{r}^{\prime}\,
				          +\, (-1)^{|\widehat{r}||\widehat{r}^{\prime}|}\,
						          d(\widehat{r}^{\prime})\, \widehat{r} \,,   \\[.6ex]
	 \mbox{({\it bi-$\widehat{R}$-module structure})}
	  &&&  d(\widehat{r})\,\widehat{r}^{\prime}\;
	            =\;   (-1)^{|\widehat{r}||\widehat{r}^{\prime}|}\,
				               \widehat{r}^{\prime}\,d(\widehat{r})
    \end{array}
   $$	
   for all $\widehat{s},\,\widehat{s}^{\prime}\in  \widehat{S}$,
			 $\widehat{r},\, \widehat{r}^{\prime}$,  and
  $$
   \begin{array}{llll}
    \mbox{({\it chain rule})}\hspace{3em}
      &&&  d(h(r_1,\,\cdots\,,\,r_s))   \\
	  &&&   =\;  \partial_1 h(r_1,\,\cdots\,,\,r_s)\, d(r_1)\;
	                     +\; \cdots\; +\;
			           \partial_s h(r_1,\,\cdots\,,\,r_s)\, d(r_s)	
   \end{array}			
  $$
  for all $h\in C^k({\Bbb R}^s)$,
            $s\in {\Bbb Z}_{\ge 1}$, and
			$r_1,\,\cdots\,,\,r_s\in R\subset \widehat{R}$.
 Denote the image of $d(\widehat{r})$ under the quotient by $d\widehat{r}$.
 Then, by definition, the built-in map
  $$
    \begin{array}{ccccc}
	 d & :&  \widehat{R} & \longrightarrow  & \Omega_{\widehat{R}/\widehat{S}}   \\[.6ex]
	    &&   \widehat{r}  & \longmapsto        & d\widehat{r}
	\end{array}
  $$
  is a super-$C^k$-$\widehat{S}$-derivation from $\widehat{R}$
  to $\Omega_{\widehat{R}/\widehat{S}}$.
}\end{remark}

\medskip

\begin{remark}$[\,$relation to $\Omega_{R/S}$$\,]$. {\rm
 Note that the built-in inclusion $R\hookrightarrow \widehat{R}$ induces
  a natural $\widehat{R}$-module-homomorphism
   $\Omega_{R/S}\otimes_R\widehat{R}
                   \rightarrow \Omega_{\widehat{R}/\widehat{S}}$.
}\end{remark}

\medskip

\begin{remark}
$[\,$relation between $\sDer_{\widehat{S}}(\widehat{R})$
        and $\Omega_{\widehat{R}/\widehat{S}}$$\,]$.
{\rm
 The universal property of $\Omega_{\widehat{R}/\widehat{S}}$ implies that
   there is an $\widehat{R}$-module-isomorphism
  $$
    \sDer_{\widehat{S}}(\widehat{R})\;     \longrightarrow\;
	 \Hom_{\widehat{R}}(\Omega_{\widehat{R}/\widehat{S}},\widehat{R})\,.
  $$
}\end{remark}

\bigskip

\subsection{Super-$C^k$-manifolds, super-$C^k$-ringed spaces, and super-$C^k$-schemes}

The notion of super-$C^k$-manifolds, super-$C^k$-ringed spaces, and super-$C^k$-schemes
  are introduced in this subsection.
They are a super generalization of related notions from works of 
   Eduardo Dubuc [Du],
   Dominic Joyce [Joy], 
   Anders Kock [Ko], 
   Ieke Moerdijk and Gonzalo E.\ Reyes [M-R], 
   Juan Navarro Gonz\'{a}lez and Juan Sancho de Salas  [NG-SdS]
 in $C^{\infty}$-algebraic geometry or synthetic differential geometry.
The presentation here proceeds particularly with [Joy] in mind.

\clearpage
%\bigskip

\begin{flushleft}
{\bf Super-$C^k$-manifolds}
\end{flushleft}
The notion of `{\it supermanifolds}' to capture the geometry
   (either the space-time over which fermionic fields are defined or the symmetry group itself)
  behind supersymmetric quantum field theory
  was studied by various authors, including
    Majorie Batchelor [Bat],
    Felix Berezin and Dimitry Leites [B-L],
	Bryce DeWitt [DeW],
    Bertram Kostant [Kos],
    Alice Rogers [Rog],
    Mitchell Rothstein [Rot],
  leading to several inequivalent notions of `supermanifolds'.
  
For this note, we follow the direction of
  Yuri Manin [Man]  and Steven Shnider and Raymond Wells, Jr., [S-W].
In essence,
   \begin{itemize}
    \item[{\Large $\cdot$}]
	{\it A supermanifold is a ringed space
	          with the underlying space an ordinary manifold $(X,{\cal O}_X)$, as a $C^k$-scheme, 
	        but with the structure sheaf
			  a ${\Bbb Z}/2$-graded ${\Bbb Z}/2$-commutative ${\cal O}_X$-algebra.}
   \end{itemize}
 Such setting (cf.\ the detail below) is
   both physically direct and compatible
   and mathematically in line with Grothendieck's Algebraic Geometry.
 It suits best for our purpose of generalization to
  the notion of matrix supermanifolds to describe coincident fermionic D-branes as maps therefrom.
 
\bigskip

\begin{definition} {\bf [super-$C^k$-manifold].} {\rm
 Let
   $(X,{\cal O}_X)$ be a $C^k$-manifold  of dimension $m$ and
   ${\cal F}$ be a  locally-free sheaf of ${\cal O}_X$-modules of rank $s$.
 Then
   $$\mbox{$
     \widehat{\cal O}_X\;
	   :=\;  \bigwedge^{\bullet}_{{\cal O}_X}{\cal F}
   $}$$
  is a sheaf of superpolynomial rings over ${\cal O}_X$
  (or equivalently a sheaf of Grassmann ${\cal O}_X$-algebras).
 The ${\Bbb Z}/2$-grading of $\widehat{\cal O}_X$ is given by setting
 $$\mbox{$
    \widehat{\cal O}_X^{\,\scriptsizeeven}\;
     :=\;  \bigwedge^{\scriptsizeeven}_{{\cal O}_X}{\cal F}
     	 \hspace{2em}\mbox{and}\hspace{2em}
   \widehat{\cal O}_X^{\,\scriptsizeodd}\;
     :=\;  \bigwedge^{\scriptsizeodd}_{{\cal O}_X}{\cal F}\,.  	 	
 $}$$
 The new ringed space
  $$
    \widehat{X}\; :=\; (X,\widehat{\cal O}_X)
  $$
  is called a {\it super-$C^k$-manifold}.
 The dimension $m$  of $X$ is called the {\it even dimension} of $\widehat{X}$
   while the rank $s$ of ${\cal F}$ is called the {\it odd dimension} of $\widehat{X}$.
 By construction, there is a {\it built-in split short exact sequence} of ${\cal O}_X$-modules
   (also as $\widehat{\cal O}_X$-modules)
   $$
     \xymatrix{
        \, 0  \ar[r]    &   \widehat{\cal I}_X  \ar[r]    & \widehat{O}_X \ar[r]
	       & {\cal O}_X \ar[r]   \ar@{.>}@/^/[l]   &  0  \, ,
      }
   $$
 where $\widehat{\cal I}_X  := \bigwedge^{\mbox{\tiny $\ge$}\,1}_{\;{\cal O}_X}{\cal F}$.
 In terms of ringed spaces, one has thus
  $$
    \xymatrix{
     \;\widehat{X} \ar@<.7ex>[rr]^-{\widehat{\pi}}
	      && X\ar@<.2ex>[ll]^-{\widehat{\iota}}\;,
	}
  $$
  where $\widehat{\pi}$ is a dominant morphism and $\widehat{\iota}$ is an inclusion
  such that $\widehat{\pi} \circ \widehat{\iota}=\Id_X$.
 Note that since in this case
   $\widehat{\cal I}_X$ is the nil-radical 
    (i.e.\ the ideal sheaf of all nilpotent sections) of $\widehat{\cal O}_X$,
  $X=\widehat{X}_{\redscriptsize}\subset \widehat{X}$.
 Cf.~Definition~2.3.15.
    % Definition [super-$C^k$-normal ideal sheaf and super-$C^k$-subscheme]
		
 A {\it morphism}
   $$
      \hat{f}:= (f, \widehat{f}^{\sharp})\;:\;
	     \widehat{X}:= (X, \widehat{\cal O}_X)\;
		 \longrightarrow\;  \widehat{\cal O}_Y:= (Y, \widehat{\cal O}_Y)
   $$
   between super-$C^k$-manifolds is a $C^k$-map $f: X \rightarrow Y$ between $C^k$-manifolds
   together with a sheaf-homomorphism
    $$
	 \widehat{f}^{\sharp}\; :\; f^{-1}\widehat{\cal O}_Y\;
	           \longrightarrow\;  \widehat{\cal O}_X
    $$
  that fits into the following commutative diagram
   $$
    \xymatrix{
	 &  f^{-1}\widehat{\cal O}_Y\ar[rr]^-{\widehat{f}^{\sharp}} \ar[d]
	      && \widehat{\cal O}_X\ar[d]\\
     &  f^{-1}{\cal O}_Y \ar@/^/@{.>}[u]  \ar[rr]^-{f^{\sharp}}
	      && {\cal O}_X \ar@/^/@{.>}[u]    & \hspace{-3em}.
	}
   $$
  Here,
    the homomorphism
  	 $f^{\sharp}:f^{-1}{\cal O}_Y\rightarrow {\cal O}_X$ between sheaves of $C^k$-rings on $X$
    is induced  by the $C^k$-map $f:X\rightarrow Y$ between $C^k$-manifolds.
 In terms of morphisms between ringed spaces, one has the commutative diagram:
  $$
    \xymatrix{
     &  \widehat{X} \ar@<.4ex>[d]^-{\widehat{\pi}} \ar[rr]^-{\widehat{f}}
                 &&     \widehat{Y} \ar@<.4ex>[d]^-{\widehat{\pi}} 	 \\
	 &  X\ar@<.4ex>[u]^-{\widehat{\iota}} \ar[rr]^-f
	             &&     Y\ar@<.4ex>[u]^-{\widehat{\iota}}  & \hspace{-3em}.
	}
  $$
 We will call $\hat{f}$ also a {\it $C^k$-map}.
}\end{definition}

\medskip

\begin{remark} {$[$supermanifold/superspace in physics: spinor and issue of central charge$\,]$.} {\rm
 The above Definition~2.3.1 of super-$C^k$-manifolds gives a general mathematical setting.
                % Definition [super-$C^k$-manifold]
 However, for a physical application some refinement or extension of the above definition is required:
 \begin{itemize}
  \item[(1)]
   For a physical application to {\it $N=1$ supersymmetric quantum field theories},
    the $C^k$-manifold $X$ is equipped with a Riemannian or Lorentzian metric and
    the sheaf ${\cal F}$ {\it is a sheaf of spinors}
     that arises {\it from} the associated bundle of the orthonormal-frame bundle
	 to an {\it irreducible spinor representation} of the orthonormal group related to the metric.
   
  \item[(2)]
   For a physical application to {\it $N\ge 2$ supersymmetric quantum field theories},
    not only that
      the $C^k$-manifold $X$ is equipped with a Riemannian or Lorentzian metric and
      the sheaf ${\cal F}$ {\it is a sheaf of spinors}
       that arises {\it from} the associated bundle of the orthonormal-frame bundle
       to {\it a direct sum of $N$-many irreducible spinor representations}
 	   of the orthonormal group related to the metric,
	one has but also a new issue of whether to include the (bosonic) extension to
	  the structure sheaf $\widehat{\cal O}_X$ in Definition~2.3.1
	                                                                                   % Definition [super-$C^k$-manifold]
	  that takes care also of the {\it central charges} in the $N\ge 2$ supersymmetry algebra;  
	  cf.\ [So], [W-B], and [W-O].
 \end{itemize}
 Such necessary refinement or extension should be made case by case to reflect physics.
 For the current note,
  our focus is on the notion of `differentiable maps from a matrix-supermanifold to a real manifold'.
 The framework we develop (cf.\ Sec.~3 and Sec.~4) is intact
   once a refined or extended structure sheaf $\widehat{\cal O}_X$ for a supermanifold is chosen.
}\end{remark}

\medskip

\begin{remark} $[\,$real spinor vs.\ complex spinor$\,]$. {\rm
 Also to reflect physics presentation,
  it may be more convenient case by case to take ${\cal F}$ in Definition~2.3.1
                                                                                                              % Definition [super-$C^k$-manifold]
    to be an ${\cal O}_X^{\,\Bbb C}$-module, rather than an ${\cal O}_X$-module,
	when one constructs the structure sheaf $\widehat{\cal O}_X$.
 See, for example, Example~2.3.8 below.
                            % Example [$d=3+1$, $N=1$ superspace]
 Again, our notion of `differentiable maps from a matrix-supermanifold to a real manifold' remains intact.
}\end{remark}
 
\bigskip

The following sample list  of superspaces is meant to give mathematicians a taste of
 the role of spinor representations in physicists' notion of a supermanifold
 beyond just a ${\Bbb Z}/2$-graded ${\Bbb Z}/2$-commutative manifold.
See, for example, [Freed: Lecture 3] of Daniel Freed.
 
\bigskip

\begin{example} {\bf [superspace ${\Bbb R}^{m|s}$ as super-$C^k$-manifold].} {\rm
 This is the supermanifold of topology ${\Bbb R}^m$ and
  function ring the superpolynomial ring $C^k({\Bbb R}^m)[\theta^1,\,\cdots\,,\theta^s]$
  over the $C^k$-ring $C^k({\Bbb R}^m)$.
 In particular, ${\Bbb R}^{0|s}$ is called a {\it superpoint}.
}\end{example}

\medskip

\begin{example} {\bf [$d=1+1$, $N=(2,2)$ superspace].} {\rm
 The supermanifold ${\Bbb R}^{2|4}$
   of the underlying space the $(1+1)$-dimensional Minkowski space-time ${\Bbb M}^{1+1}$ and
   of function ring the superpolynomial ring
     $C^{\infty}({\Bbb R}^2)
       [\theta^1, \theta^2,\bar{\theta}^{\dot{1}},\bar{\theta}^{\dot{2}}]$.
 Furthermore,
  each of $\theta^1$ and $\theta^2$
  (resp.\ $\bar{\theta}^{\dot{1}}$ and $\bar{\theta}^{\dot{2}}$)
   is in the left (resp.\ right) Majorana-Weyl spinor representation of $\SO(1,1)$.
 This is a basic superspace for a $d=2$ superconformal field theory and a superstring theory.
 }\end{example}
 
\medskip

\begin{example} {\bf [$d=2+1$, $N=1$ superspace].} {\rm
 The supermanifold ${\Bbb R}^{3|2}$
   of the underlying space the $(2+1)$-dimensional Minkowski space-time ${\Bbb M}^{2+1}$ and
   of function ring the superpolynomial ring
     $C^{\infty}({\Bbb R}^3)[\theta^1, \theta^2]$.
 Furthermore,
  the tuple $(\theta^1, \theta^2)$
   is in the Majorana spinor representation of $\SO(2,1)$
   ($\simeq$ the fundamental representation of $\SL(2,{\Bbb R})$).
 This is a basic superspace for a $d=2+1$, $N=1$ supersymmetric quantum field theory.
 }\end{example}

\medskip

\begin{example} {\bf [$d=3$, $N=1$ superspace].} {\rm
 The supermanifold ${\Bbb R}^{3|4}$
   of the underlying space the $3$-dimensional Euclidean space ${\Bbb E}^3$ and
   of function ring the superpolynomial ring
     $C^{\infty}({\Bbb R}^3)[\theta^1, \theta^2,\theta^3,\theta^4]$.
 Furthermore,
  the tuple $(\theta^1, \theta^2,\theta^3,\theta^4)$
   is in the pseudo-real spinor representation of $\SO(3)$.
 Notice how the signature of the metric may influence
  the dimension of a minimal/irreducible spinor representation at the same manifold dimension;
 cf.\ Example~2.3.6.
                    % Example [$d=2+1$, $N=1$ superspace]
 }\end{example}
 
\medskip

\begin{example} {\bf [$d=3+1$, $N=1$ superspace].} {\rm
 The supermanifold ${\Bbb R}^{4|4}$
   of the underlying space the $3+1$-dimensional Minkowski space-time ${\Bbb M}^{3+1}$ and
   of function ring the superpolynomial ring
     $C^{\infty}({\Bbb R}^4)[\theta^1, \theta^2,\theta^3,\theta^4]$.
 Furthermore,
  the tuple $(\theta^1, \theta^2,\theta^3,\theta^4)$
  is in the real Majorana spinor representation of $\SO(3,1)$.
 In physics, it is more convenient to consider instead
  complex Weyl spinor representations of $\SO(3,1)$
    via the isomorphism $\Spin(3,1)\simeq \SL(2,{\Bbb C})$.
 In this case we take as the function ring
  $C^{\infty}({\Bbb R}^4)\otimes_{\Bbb R}
     [\theta^1,\theta^2, \bar{\theta}^{\dot{1}},\bar{\theta}^{\dot{2}}]^{\Bbb C}$,
  in which $(\theta^1,\theta^2)$	and $(\bar{\theta}^{\dot{1}},\bar{\theta}^{\dot{2}})$
   are in complex Weyl spinor representations of opposite chirality.	
 This is a basic superspace for a $d=3+1$, $N=1$ supersymmetric quantum field theory.
 See, for example, [W-B].
 }\end{example}

\medskip

\begin{example} {\bf [$d=3+1$, $N=2$ superspace].} {\rm
The supermanifold ${\Bbb R}^{4|8}$
   of the underlying space the $3+1$-dimensional Minkowski space-time ${\Bbb M}^{3+1}$ and
   of function ring the superpolynomial ring
     $C^{\infty}({\Bbb R}^4)
	    [\theta^1, \theta^2,\theta^3,\theta^4,
		 \theta^{\prime 1},\theta^{\prime 2}, \theta^{\prime 3},\theta^{\prime 4}]$.
 Furthermore,
  each of the tuples $(\theta^1, \theta^2,\theta^3,\theta^4)$
     and $(\theta^{\prime 1},\theta^{\prime 2}, \theta^{\prime 3},\theta^{\prime 4})$
   is in the real Majorana representation of $\SO(3,1)$. 	
 Here, we ignore the central charge in the $d=3+1$, $N=2$ super-Poincar\'{e} algebra.
 As in Example~2.3.8,
       % Example [$d=3+1$, $N=1$ superspace]	
   for physics, it is more convenient to take as the function ring
  $C^{\infty}({\Bbb R}^4)\otimes_{\Bbb R}
     [\theta^1,\theta^2, \bar{\theta}^{\dot{1}},\bar{\theta}^{\dot{2}},
	  \theta^{\prime 1}, \theta^{\prime 2},
	  \bar{\theta}^{\prime \dot{1}}, \bar{\theta}^{\prime \dot{2}}]^{\Bbb C}$
  through complex Weyl representations of $\SO(3,1)$.	
 This is a basic superspace for  a $d=3+1$, $N=2$ supersymmetric quantum field theory.
 }\end{example}

\bigskip

\begin{flushleft}
{\bf Super-$C^k$-ringed spaces and super-$C^k$-schemes}
\end{flushleft}

\begin{definition} {\bf [super-$C^k$-ringed space, $C^k$-map].} {\rm
 A {\it super-$C^k$-ringed space}
   $$
      \widehat{X}\; =\;  (X,{\cal O}_X, \widehat{\cal O}_X)
   $$
   is a $C^k$-ringed space $(X,{\cal O}_X)$ together with a sheaf $\widehat{\cal O}_X$ of
   super-$C^k$-rings over ${\cal O}_X$.
 By construction, it has a built-in split short exact sequence of ${\cal O}_X$-modules
  (also as $\widehat{\cal O}_X$-modules)
    $$
     \xymatrix{
        \, 0  \ar[r]    &   \widehat{\cal I}_X  \ar[r]    & \widehat{O}_X \ar[r]
	       & {\cal O}_X \ar[r]   \ar@{.>}@/^/[l]   &  0  \, .
      }
    $$
   Here, $\widehat{\cal I}_X:=\Ker(\widehat{\cal O}_X\rightarrow {\cal O}_X)$
       is an ideal sheaf of $\hat{\cal O}_X$.
 The split short exact sequence defines the a pair of morphisms between the ringed spaces
  $$
    \xymatrix{
     \;\widehat{X} \ar@<.7ex>[rr]^-{\widehat{\pi}}
	      && X\ar@<.2ex>[ll]^-{\widehat{\iota}}\;,
	}
  $$
  where $\widehat{\pi}$ is a dominant morphism and $\widehat{\iota}$ is an inclusion
  such that $\widehat{\pi} \circ \widehat{\iota}=\Id_X$.

 A {\it morphism}
   $$
      \hat{f}:= (f,f^{\sharp}, \widehat{f}^{\sharp})\;:\;
	     \widehat{X}:= (X,{\cal O}_X,\widehat{\cal O}_X)\;
		 \longrightarrow\;  \widehat{\cal O}_Y:= (Y,{\cal O}_Y,\widehat{\cal O}_Y)
   $$
   between super-$C^k$-ringed spaces is a morphism
   $(f,f^{\sharp}):(X,{\cal O}_X)\rightarrow (Y,{\cal O}_Y)$ between $C^k$-ringed spaces
   together with a sheaf-homomorphism
    $$
	 \widehat{f}^{\sharp}\; :\; f^{-1}\hat{\cal O}_Y\;
	           \longrightarrow\;  \widehat{\cal O}_X
    $$
  that fits into the following commutative diagram
   $$
    \xymatrix{
	 &  f^{-1}\widehat{\cal O}_Y\ar[rr]^-{\widehat{f}^{\sharp}} \ar[d]
	      && \widehat{\cal O}_X\ar[d]\\
     &  f^{-1}{\cal O}_Y \ar@/^/@{.>}[u]  \ar[rr]^-{f^{\sharp}}
	      && {\cal O}_X \ar@/^/@{.>}[u]    & \hspace{-3em}.
	}
   $$
 That is, a commutative diagram of morphisms between ringed spaces:
  $$
    \xymatrix{
     &  \widehat{X} \ar@<.4ex>[d]^-{\widehat{\pi}} \ar[rr]^-{\widehat{f}}
                 &&     \widehat{Y} \ar@<.4ex>[d]^-{\widehat{\pi}} 	 \\
	 &  X\ar@<.4ex>[u]^-{\widehat{\iota}} \ar[rr]^-f
	             &&     Y\ar@<.4ex>[u]^-{\widehat{\iota}}  & \hspace{-3em}.
	}
  $$
 We will call $\hat{f}$ also a {\it $C^k$-map}.
}\end{definition}

\medskip

\begin{definition} {\bf [affine super-$C^k$-scheme].} {\rm
 Let $\widehat{R}$ be a super-$C^k$-ring with the structure ring-homomorphisms
  $\xymatrix{ \widehat{R}\ar[r]   & R \ar@{.>}@/^/[l]    }$.
 The {\it affine super-$C^k$-scheme associated to $\hat{R}$}  is a super-$C^k$-ringed space
   $(X,{\cal O}_X,\widehat{\cal O}_X)$ defined as follows:
  \begin{itemize}
   \item[{\Large $\cdot$}]
    $(X,{\cal O}_X)$ is the affine $C^k$-scheme associated to $R$.
	
   \item[{\Large $\cdot$}]	
    $\widehat{\cal O}_X$ is the quasi-coherent sheaf on $X$
	   associated to $\widehat{R}$ as an $R$-module,
	  as defined in $C^k$-algebraic geometry via localizations of $\widehat{R}$ at subsets of $R$.
  \end{itemize}
 By construction,
  $\widehat{\cal O}_X$ is a sheaf of super-$C^k$-rings over ${\cal O}_X$   and
  the ring-homomorphisms $\xymatrix{ \widehat{R}\ar[r]   & R \ar@{.>}@/^/[l]    }$
    induce a split short exact sequence
	$$
     \xymatrix{
        \, 0  \ar[r]    &   \widehat{\cal I}_X  \ar[r]    & \widehat{O}_X \ar[r]
	       & {\cal O}_X \ar[r]   \ar@{.>}@/^/[l]   &  0  \,,
      }
    $$
 which defines a pair of built-in morphisms
  $$
    \xymatrix{
     \;\widehat{X} \ar@<.7ex>[rr]^-{\widehat{\pi}}
	      && X\ar@<.2ex>[ll]^-{\widehat{\iota}}\;,
	}
   $$
  where $\widehat{\pi}$ is a dominant morphism and $\widehat{\iota}$ is an inclusion
  such that $\widehat{\pi} \circ \widehat{\iota}=\Id_X$.
	
 A {\it morphism}
   $(X,{\cal O }_X, \widehat{\cal O}_X)\rightarrow (Y,{\cal O}_Y,\widehat{\cal O}_Y)$
   between affine super-$C^k$-schemes
  is defined to be a $C^k$-map
   $\widehat{f}:
     (X,{\cal O }_X,\widehat{\cal O}_X)\rightarrow (Y,{\cal O}_Y,\widehat{\cal O}_Y)$
	of the underlying super-$C^k$-ringed spaces. 
}\end{definition}

\medskip

\begin{remark} $[\,$super-$C^k$-ring vs.\ affine super-$C^k$-scheme$\,]$. {\rm 
 For our purpose, 
  we shall assume that all the $C^k$-rings in our discussion are finitely generated and germ-determined.
 In this case, the category of affine super-$C^k$-schemes
      is contravariantly equivalent to the category of super-$C^k$-rings.
 See, for example, [Joy: Proposition 4.15].	  
}\end{remark}	  
	  
\medskip

\begin{definition} {\bf [super-$C^k$-scheme].} {\rm
 A super-$C^k$-ringed space $\widehat{X}:= (X,{\cal O}_X, \widehat{\cal O}_X)$
    is called a {\it super-$C^k$-scheme}
  if $X$ admits an open-set covering $\{U_{\alpha}\}_{\alpha\in A}$
   such that
     $(U_{\alpha}, {\cal O}_X|_{U_{\alpha}}, \widehat{\cal O}_X|_{U_{\alpha}})$
	 is an affine super-$C^k$-scheme for all $\alpha\in A$.	
 By construction,
  $\widehat{\cal O}_X$ is a sheaf of super-$C^k$-rings over ${\cal O}_X$
  with a built-in a split short exact sequence
	$$
     \xymatrix{
        \, 0  \ar[r]    &   \widehat{\cal I}_X  \ar[r]    & \widehat{O}_X \ar[r]
	       & {\cal O}_X \ar[r]   \ar@{.>}@/^/[l]   &  0  \,,
      }
    $$
 which defines a pair of built-in morphisms
  $$
    \xymatrix{
     \;\widehat{X} \ar@<.7ex>[rr]^-{\widehat{\pi}}
	      && X\ar@<.2ex>[ll]^-{\widehat{\iota}}\;,
	}
   $$
  where $\widehat{\pi}$ is a dominant morphism and $\widehat{\iota}$ is an inclusion
  such that $\widehat{\pi} \circ \widehat{\iota}=\Id_X$.
		
 A {\it morphism}
   $(X,{\cal O }_X, \widehat{\cal O}_X)\rightarrow (Y,{\cal O}_Y,\widehat{\cal O}_Y)$
   between super-$C^k$-schemes
  is defined to be a $C^k$-map
   $\widehat{f}:
     (X,{\cal O }_X, \widehat{\cal O}_X)\rightarrow (Y,{\cal O}_Y,\widehat{\cal O}_Y)$
	of the underlying super-$C^k$-ringed spaces.
}\end{definition}

\medskip

\begin{remark}$[\,$super-$C^k$-scheme vs.\ equivalence class of gluing systems of super-$C^k$-rings$\,]$.
 {\rm
 Recall Remark~2.3.12.
          %Remark [super-$C^k$-ring vs.\ affine super-$C^k$-scheme]
 Under the assumption that all the $C^k$-rings in our discussion be finitely generated and germ-determined, 
 the category of super-$C^k$-schemes is contravariantly equivalent to 
  the category of equivalence classes of gluing systems of super-$C^k$-rings.	
}\end{remark}

\medskip

\begin{definition} {\bf [super-$C^k$-normal ideal sheaf and super-$C^k$-subscheme].} {\rm
 Let $\widehat{X}:= (X,{\cal O}_X, \widehat{\cal O}_X)$ be a super-$C^k$-scheme.
 A {\it super-$C^k$-normal ideal sheaf} $\widehat{\cal I}$ on $\widehat{X}$ 
   is a sheaf of super-$C^k$-normal ideals of $\widehat{\cal O}_X$.
 In this case (and only in this case), $\widehat{\cal I}$ defines 
  a {\it super-$C^k$-subschme} $\widehat{Z}:=(Z,{\cal O}_Z,\widehat{\cal O}_Z)$ 
    of $\widehat{X}$ 
    with a built-in commutative diagram of $\widehat{\cal O}_X$-modules
  $$
   \xymatrix{
    0 \ar[r]   
	   & \widehat{\cal I}\ar[r]
       & \widehat{\cal O}_X \ar[r]  \ar@{->>}[d]  
	   & \widehat{\cal O}_Z \ar[r]  \ar@{->>}[d]                                                              & 0  \\
    0 \ar[r]
       & {\cal I}:= \widehat{\cal I}_0\cap {\cal O}_X   \rule{0ex}{1.2em}  
           	   \ar[r] \ar@{^{(}->}[u]
       &{\cal O}_X \rule{0ex}{1em} \ar[r] \ar@<.7ex>@{^{(}->}[u]                    
	   &  {\cal O}_Z  \rule{0ex}{1em}\ar[r]  \ar@<.7ex>@{^{(}->}[u]          & 0
   }
  $$    
  such that both horizontal sequences are exact.  
 Here 
   $\widehat{\cal I}=\widehat{\cal I}_0+\widehat{\cal I}_1$ 
      is the decomposition of $\widehat{\cal I}$ by its homogeneous components.   
}\end{definition}

\medskip

\begin{remark} {$[$sheaf-type super-thickening$]$.} {\rm
 One may think of
    the graded-commutative scheme $\widehat{X}$
     as a {\it sheaf-type super-thickening} of the underlying $C^k$-scheme $X$,
  and the morphism $\widehat{f}:\widehat{X}\rightarrow \widehat{Y}$
   as a lifting of $C^k$-map $f:X\rightarrow Y$.
 Cf.~{\sc Figure}~3-1.
              % Figure: Az-supermanifold.pdf
}\end{remark}

\bigskip

\subsection{Sheaves of modules and differential calculus on a super-$C^k$-scheme}	

Basic notions of sheaves of modules and differential calculus on a super-$C^k$-scheme
are collected in this subsection.

\bigskip

\begin{flushleft}
{\bf Sheaves of modules on a super-$C^k$-scheme}	
\end{flushleft}
\begin{definition} {\bf [sheaf of modules on super-$C^k$-scheme].} {\rm
 Let $\widehat{X}=(X,{\cal O}_X, \widehat{\cal O}_X)$ be a super-$C^k$-scheme.
 A {\it sheaf of modules} on $\widehat{X}$ (i.e.\ {\it $\widehat{\cal O}_X$-module})
  is defined to be a sheaf $\widehat{\cal F}$ of modules
  on the (${\Bbb Z}/2$-graded ${\Bbb Z}/2$-commutative) ringed space that underlies $\widehat{X}$.
 In particular,
  $\widehat{\cal F}$ is ${\Bbb Z}/2$-graded
     $\widehat{\cal F}=\widehat{\cal F}_0\oplus\widehat{\cal F}_1$.
  Sections of $\widehat{\cal F}_0$  (resp.\ $\widehat{\cal F}_1$) is said to be {\it even}
   (resp.\ {\it odd}).
  Sections of either $\widehat{\cal F}_0$ or $\widehat{\cal F}_1$ is said to be {\it homogeneous}.	
 The left-$\widehat{\cal O}_X$-module structure on $\widehat{\cal F}$
   induces the right-$\widehat{\cal O}_X$-structure on $\widehat{\cal F}$ and vice versa;
 thus we won't distinguish left-, right-, or bi-module structures in our discussions.
	
 The notion of
  
      \medskip
	
    {\Large $\cdot$}
	   {\it homomorphism} $\widehat{\cal F} \rightarrow \widehat{\cal G}$
	    of $\widehat{\cal O}_X$-modules,

      \smallskip
	
	{\Large $\cdot$}
       {\it submodule} $\widehat{\cal G} \hookrightarrow \widehat{\cal F}$,
	   (cf.\ {\it monomorphism}),
	
	  \smallskip
	
	{\Large $\cdot$}
       {\it quotient module}  $\widehat{\cal F}\twoheadrightarrow \widehat{\cal G}$,
	   (cf.\ {\it epimorphism}), 	
	
	  \smallskip
	
	{\Large $\cdot$}
	   {\it direct sum}  $\widehat{\cal F}\oplus \widehat{\cal G}$
	   of $\widehat{\cal O}_X$-modules,
	
	  \smallskip
	
    {\Large $\cdot$}
      {\it tensor product}
	    $\widehat{\cal F}\otimes_{\widehat{\cal O}_X}\!\widehat{\cal G}$
		of $\widehat{\cal O}_X$-modules,
	
	  \smallskip
	
    {\Large $\cdot$}
	    {\it finitely generated}:
    		if $\widehat{\cal O}_X^{\,\oplus l}  \twoheadrightarrow \widehat{\cal F}$
		     exists for some $l$,
	
      \smallskip
	
    {\Large $\cdot$}
        {\it finitely presented}:
           if $\widehat{\cal O}_X^{\,\oplus l^{\prime}}
		          \rightarrow \widehat{\cal O}_X^{\,\oplus l}
				  \rightarrow \widehat{\cal F} \rightarrow 0$
              is exact for some $l$, $l^{\prime}$				
	
      \medskip
	
	\noindent
    are all defined in the ordinary way as in commutative algebraic geometry.
  A homomorphism $\widehat{\cal F} \rightarrow \widehat{\cal G}$
	    of $\widehat{\cal O}_X$-modules is said to be {\it  even} (resp.\ {\it odd})
	if it preserves (resp.\ switches) the parity of homogeneous sections.	
	
 Denote by $\ModCategory(\widehat{X})$ the category of all $\widehat{\cal O}_X$-modules.
}\end{definition}

\medskip

\begin{definition} {\bf [push-forward and pull-back of sheaf of modules].} {\rm
 Let
    $\widehat{f}:\widehat{X}\rightarrow \widehat{Y}$ be a morphism
       between super-$C^k$-schemes  and
	$\widehat{\cal F}$ (resp.\ $\widehat{\cal G}$) be a $\widehat{\cal O}_X$-module
     (resp.\ $\widehat{\cal O}_Y$-module).
 Then, the structure sheaf-of-rings homomorphism
    $\widehat{f}^{\sharp}: f^{-1}\widehat{\cal O}_Y\rightarrow \widehat{\cal O}_X$
	renders $\widehat{\cal F}$ an $\widehat{\cal O}_Y$-module.
 It is called the {\it push-forward} of $\widehat{\cal F}$ by $\widehat{f}$ and
	 is denoted by
	   $\widehat{f}_{\ast}(\widehat{\cal F})$ or $\widehat{f}_{\ast}\widehat{\cal F}$.
 The inverse-image sheaf  $f^{-1}\widehat{\cal G}$ of $\widehat{\cal G}$ under $f$
    is an $f^{-1}\widehat{\cal O}_Y$-module.
 Define
     the {\it pull-back}
	   $\widehat{f}^{\ast}\widehat{\cal G}$	(or $\widehat{f}^{\ast}\widehat{\cal G}$)
	 of $\widehat{\cal G}$ under $\widehat{f}$
   to be the $\widehat{\cal O}_X$-module
    $f^{-1}\widehat{\cal G}\otimes_{f^{-1}\widehat{\cal O}_Y}\widehat{\cal O}_X$.
}\end{definition}

\bigskip

Let
  $\widehat{X}=(X, {\cal O}_X, \widehat{\cal O}_X)$ be an affine super-$C^k$-scheme
      associated to a super-$C^k$-ring $\widehat{R}\supset R$   and
   $\widehat{M}$ be an $\widehat{R}$-module.
 Then, the assignment
    $U\mapsto       \widehat{M}\otimes_{\widehat{R}}\widehat{\cal O}_X(U)$,
  with the restriction map $\Id_{\widehat{M}}\otimes  \rho_{UV}$ for $V\subset U$,
   where $\rho_{UV}:\widehat{\cal O}_X(U)\rightarrow \widehat{\cal O}_X(V)$
    is the restriction map of $\widehat{\cal O}_X$,
  is a presheaf on $\widehat{X}$.
 Let $\widehat{M}^{\sim}$ be its sheafification.

\bigskip
 
\begin{definition} {\bf [quasi-coherent sheaf on affine super-$C^k$-scheme].} {\rm
 The sheaf $\widehat{M}^{\sim}$ of $\widehat{\cal O}_X$-modules
   on the affine super-$C^k$-scheme $\widehat{X}$ thus obtained
  from the $\widehat{R}$-module $\widehat{M}$ is called a {\it quasi-coherent sheaf}
  on $\widehat{X}$.
 %%%----------------------------------------------------------------------------------
 % If furthermore, $\widehat{M}$ is finitely presented,
 %  then $\widehat{M}^{\sim}$ is called a {\it coherent sheaf} on $\widehat{X}$.
 %%%%%%%-----------------------------------------------------------------------
}\end{definition}
 
\medskip

\begin{definition} {\bf [quasi-coherent sheaf on super-$C^k$-scheme].} {\rm
 Let $\widehat{X}=(X,{\cal O}_X, \widehat{\cal O}_X)$ be a super-$C^k$-scheme.
 An $\widehat{\cal O}_X$-module $\widehat{\cal F}$ is said to be {\it quasi-coherent}
  if $X$ admits an open-set covering $\{U_{\alpha}\}_{\alpha\in A}$
   such that
     $(U_{\alpha}, {\cal O}_X|_{U_{\alpha}}, \widehat{\cal O}_X|_{U_{\alpha}})$
	 is an affine super-$C^k$-scheme   and
     $\widehat{\cal F}|_{U_{\alpha}}$ is quasi-coherent in the sense of Definition~2.4.3
   for all $\alpha$.	
  	                                                                               % Definition [quasi-coherent sheaf on affine super-$C^k$-scheme]
																				
 Denote by $\QCohCategory(\widehat{X})$
   the category of all quasi-coherent $\widehat{\cal O}_X$-modules.
}\end{definition}
 
\bigskip

Let $\widehat{X}=(X,{\cal O}_X, \widehat{\cal O}_X)$ be a super-$C^k$-scheme.
Recall the built-in dominant morphism $\widehat{\pi}: \widehat{X} \rightarrow X$.
The following immediate lemmas relate modules over $\widehat{X}$ and those over $X$:
 
\bigskip

\begin{lemma}
{\bf [quasi-coherent sheaf on super-$C^k$-scheme vs.\ on $C^k$-scheme].}
 An $\widehat{\cal O}_X$-module $\widehat{\cal F}$ is quasi-coherent
   if and only if $\widehat{\pi}_{\ast}(\widehat{\cal F})$ is quasi-coherent
    on $X$.	
 The push-forward functor
  $$
     \widehat{\pi}_{\ast}\;:\;
	   \ModCategory_{\widehat{X}}\;  \longrightarrow\;   \ModCategory_X
  $$
  is exact and takes $\QCohCategory(\widehat{X})$ to $\QCohCategory(X)$.
\end{lemma}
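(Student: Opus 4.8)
The plan is to exploit the fact that $\widehat{\pi}:\widehat{X}\to X$ is, on underlying topological spaces, the identity, and that $\widehat{\pi}^{\sharp}:{\cal O}_X\to\widehat{\cal O}_X$ is nothing but the built-in section (inclusion) of the split short exact sequence $0\to\widehat{\cal I}_X\to\widehat{\cal O}_X\to{\cal O}_X\to 0$. Consequently $\widehat{\pi}_{\ast}$ is just \emph{restriction of scalars along ${\cal O}_X\hookrightarrow\widehat{\cal O}_X$}: it carries a sheaf $\widehat{\cal F}$ of $\widehat{\cal O}_X$-modules to the same sheaf of abelian groups, now regarded as an ${\cal O}_X$-module. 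First I would record the three soft consequences of this description. (i) $\widehat{\pi}_{\ast}$ is exact, since exactness of a sequence of sheaves of modules is tested on stalks, where $\widehat{\pi}_{\ast}$ is the plain forgetful functor $\ModCategory(\widehat{\cal O}_{X,x})\to\ModCategory({\cal O}_{X,x})$, which is exact — the affine/module incarnation of this is already recorded in the excerpt as the exactness of $\ModCategory(\widehat{R})\to\ModCategory(R)$. (ii) $\widehat{\pi}_{\ast}$ is faithful. (iii) $\widehat{\pi}_{\ast}$ \emph{reflects isomorphisms}, i.e.\ a morphism of $\widehat{\cal O}_X$-modules is an isomorphism if and only if its underlying morphism of ${\cal O}_X$-modules is. Item (iii) will be the pivot for the ``if'' direction below.

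The heart of the matter is a single affine computation. Work over an open $U\subset X$ for which $(U,{\cal O}_X|_U,\widehat{\cal O}_X|_U)$ is an affine super-$C^k$-scheme, with ring $\widehat{R}:=\widehat{\cal O}_X(U)$ over $R:={\cal O}_X(U)$; such $U$ form a basis of the topology, because localizing $\widehat{R}$ at an element $r\in R$ again yields an affine super-$C^k$-scheme whose underlying $C^k$-scheme is the distinguished open of the affine $C^k$-scheme of $R$ cut out by $r$, and these distinguished opens form a basis. I claim $\widehat{\pi}_{\ast}(\widehat{M}^{\sim})=(\widehat{M})^{\sim}$ as ${\cal O}_X$-modules, where on the right $\widehat{M}$ is viewed as an $R$-module. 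Indeed, by the very definitions of the localization of a super-$C^k$-ring and of a module over it, for any $S\subset R$ one has
$$
\widehat{M}\otimes_{\widehat{R}}\widehat{R}[S^{-1}]\;=\;\widehat{M}\otimes_{\widehat{R}}\bigl(\widehat{R}\otimes_R R[S^{-1}]\bigr)\;=\;\widehat{M}\otimes_R R[S^{-1}]\,,
$$
so the presheaf $V\mapsto\widehat{M}\otimes_{\widehat{R}}\widehat{\cal O}_X(V)$ whose sheafification is $\widehat{M}^{\sim}$ has, after restriction of scalars, exactly the same sections as the presheaf defining the $C^k$-quasi-coherent sheaf on $(U,{\cal O}_X|_U)$ attached to $\widehat{M}$-as-$R$-module; this proves the claim. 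Gluing over such $U$ then gives the forward implication of the ``iff'', and with it the last assertion that $\widehat{\pi}_{\ast}$ carries $\QCohCategory(\widehat{X})$ into $\QCohCategory(X)$.

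For the converse, assume $\widehat{\pi}_{\ast}\widehat{\cal F}$ is quasi-coherent on $X$. Shrinking to a $U$ as above on which moreover $\widehat{\pi}_{\ast}\widehat{\cal F}|_U$ has the form $N^{\sim}$ — and, under the standing finitely-generated/germ-determined hypotheses, $N$ may be taken to be the module of global sections $\varGamma(U,\widehat{\pi}_{\ast}\widehat{\cal F})$ — set $\widehat{M}:=\varGamma(U,\widehat{\cal F})$ as an $\widehat{R}$-module, so that its underlying $R$-module is $N$. The canonical morphism of $\widehat{\cal O}_X|_U$-modules $\widehat{M}^{\sim}\to\widehat{\cal F}|_U$ becomes, upon applying $\widehat{\pi}_{\ast}$ and invoking the affine identity of the previous paragraph, the canonical morphism $N^{\sim}\to\widehat{\pi}_{\ast}\widehat{\cal F}|_U$, which is an isomorphism by hypothesis. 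Since $\widehat{\pi}_{\ast}$ reflects isomorphisms (item (iii)), the original morphism is itself an isomorphism, so $\widehat{\cal F}|_U$ is quasi-coherent; as the $U$ cover $X$, $\widehat{\cal F}$ is quasi-coherent.

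The only step that is not purely formal bookkeeping is the affine identity $\widehat{\pi}_{\ast}(\widehat{M}^{\sim})=(\widehat{M})^{\sim}$ — equivalently, that localization is indifferent to whether one localizes over $\widehat{R}$ or over $R$ at a subset of $R$ — and even that is immediate from the definitions, being exactly $\widehat{R}[S^{-1}]=\widehat{R}\otimes_R R[S^{-1}]$ together with $\widehat{M}[S^{-1}]=\widehat{R}[S^{-1}]\otimes_{\widehat{R}}\widehat{M}$. The one point to watch is that a \emph{single} affine neighbourhood can be chosen that is affine for $\widehat{X}$, affine for $X$, and trivializes $\widehat{\pi}_{\ast}\widehat{\cal F}$; this is handled by the basis remark above, since refining an affine cover by distinguished opens preserves affineness at the super level. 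Everything else — exactness, faithfulness, reflecting isomorphisms, and the two gluing arguments — follows formally from the observation that $\widehat{\pi}$ is the identity on underlying spaces, which is why the excerpt labels the statement an immediate lemma.
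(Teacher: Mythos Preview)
Your argument is correct and is precisely the natural unpacking of why the paper calls this an ``immediate lemma'' and gives no proof at all. The paper simply states the result after noting the built-in dominant morphism $\widehat{\pi}:\widehat{X}\to X$; your affine identity $\widehat{\pi}_{\ast}(\widehat{M}^{\sim})=(\widehat{M})^{\sim}$, derived straight from Definitions~2.2.6 and~2.2.9, together with the observation that $\widehat{\pi}_{\ast}$ is restriction of scalars on the same underlying space, is exactly the content the authors are taking as evident.
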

 
\medskip

\begin{lemma}{\bf [natural push-pull relation].}
 Let
    $\widehat{f}:\widehat{X}\rightarrow \widehat{Y}$ be a morphism
       between super-$C^k$-schemes  and
	$\widehat{\cal F}$ (resp.\ $\widehat{\cal G}$) be a $\widehat{\cal O}_X$-module
     (resp.\ $\widehat{\cal O}_Y$-module).
 Then, there are canonical isomorphisms
  $$
      f_{\ast}(\widehat{\pi}_{\ast}\widehat{\cal F})\;
        \simeq\;
		   \widehat{\pi}_{\ast}(\widehat{f}_{\ast}\widehat{\cal F})
       \hspace{2em}\mbox{and}\hspace{2em}		
     f^{\ast}(\widehat{\pi}_{\ast}\widehat{\cal G})\;
        \stackrel{\sim}{\rightarrow}\;
		   \widehat{\pi}_{\ast}(\widehat{f}^{\ast}\widehat{\cal G})\,.
  $$
\end{lemma}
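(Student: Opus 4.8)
The plan is to reduce both isomorphisms to the affine situation and there to a purely algebraic bookkeeping of restriction of scalars and of base change of modules over super-$C^k$-rings. The basic observation to exploit is that $\widehat{X}$, $X$, $\widehat{Y}$, $Y$ involve only two underlying topological spaces ($X$ and $Y$), and that $\widehat{\pi}_X\colon\widehat{X}\to X$, $\widehat{\pi}_Y\colon\widehat{Y}\to Y$, $f$ and $\widehat{f}$ all induce either an identity or the single map $f$ on topological spaces; so $\widehat{\pi}_{X\,\ast}$ and $\widehat{\pi}_{Y\,\ast}$ are just restriction of scalars along the built-in inclusions ${\cal O}_X\hookrightarrow\widehat{\cal O}_X$, ${\cal O}_Y\hookrightarrow\widehat{\cal O}_Y$, while $f_{\ast},\widehat{f}_{\ast}$ share the same underlying sheaf push-forward and $f^{\ast},\widehat{f}^{\ast}$ are the corresponding $\otimes$-base changes. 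Using the affine-local description of quasi-coherent sheaves (Definitions~2.4.3--2.4.4) together with Lemma~2.4.5 (which tells us all six functors preserve quasi-coherence), it suffices to prove the two statements over an affine open cover of $Y$ refined compatibly over $X$, i.e.\ for modules over a pair of super-$C^k$-rings $\widehat{R}_Y\supset R_Y$, $\widehat{R}_X\supset R_X$ equipped with the ring maps $f^{\sharp}\colon R_Y\to R_X$ and $\widehat{f}^{\sharp}\colon\widehat{R}_Y\to\widehat{R}_X$ forming the commutative square of Definition~2.3.10. Compatibility over overlaps is then automatic from the naturality of all the maps in play.

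\textbf{First isomorphism.} Here the only input is functoriality of module push-forward together with the identity $f\circ\widehat{\pi}_X=\widehat{\pi}_Y\circ\widehat{f}$ of morphisms of ringed spaces, which is precisely the commutativity of the structure-sheaf square of Definition~2.3.10. On an affine patch both $f_{\ast}(\widehat{\pi}_{X\,\ast}\widehat{\cal F})$ and $\widehat{\pi}_{Y\,\ast}(\widehat{f}_{\ast}\widehat{\cal F})$ have underlying sheaf the push-forward $f_{\ast}\widehat{\cal F}$, and the two resulting ${\cal O}_Y$-module structures both act through the composite $R_Y\stackrel{f^{\sharp}}{\longrightarrow}R_X\hookrightarrow\widehat{R}_X$, which by the square coincides with $R_Y\hookrightarrow\widehat{R}_Y\stackrel{\widehat{f}^{\sharp}}{\longrightarrow}\widehat{R}_X$. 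Hence the identity map of $f_{\ast}\widehat{\cal F}$ is already the desired canonical isomorphism; the only thing to record is that it is ${\cal O}_Y$-linear.

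\textbf{Second isomorphism: the canonical map.} By the universal property of the tensor product there is an ${\cal O}_X$-linear map
$$ f^{\ast}(\widehat{\pi}_{Y\,\ast}\widehat{\cal G})\;=\;f^{-1}\widehat{\cal G}\otimes_{f^{-1}{\cal O}_Y}{\cal O}_X\;\longrightarrow\;f^{-1}\widehat{\cal G}\otimes_{f^{-1}\widehat{\cal O}_Y}\widehat{\cal O}_X\;=\;\widehat{\pi}_{X\,\ast}(\widehat{f}^{\ast}\widehat{\cal G})\,, $$
given by $s\otimes b\mapsto s\otimes b$ with $b$ read inside $\widehat{\cal O}_X$ via ${\cal O}_X\hookrightarrow\widehat{\cal O}_X$; its well-definedness is again exactly the commutativity of the structure-sheaf square (so that $f^{-1}{\cal O}_Y$ acts compatibly through ${\cal O}_X$ and through $f^{-1}\widehat{\cal O}_Y$). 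Passing to an affine patch and writing $\widehat{N}$ for the $\widehat{R}_Y$-module corresponding to $\widehat{\cal G}$, the claim becomes that the natural map $\widehat{N}\otimes_{R_Y}R_X\to\widehat{N}\otimes_{\widehat{R}_Y}\widehat{R}_X$ (the left-hand tensor over $R_Y$ using restriction of scalars, Remark~2.2.8) is an isomorphism of $R_X$-modules.

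\textbf{The main obstacle.} By associativity and cancellation of tensor products this last statement is equivalent to the \emph{base-change identity for the super-structure}: the natural map
$$ R_X\otimes_{R_Y}\widehat{R}_Y\;\simeqrightarrow\;\widehat{R}_X\,, $$
i.e.\ that the commutative square of structure rings is co-Cartesian ($\widehat{X}\simeq\widehat{Y}\times_Y X$ over the square). I expect this to be the technical heart of the proof: it has to be extracted from the fact (Definition~2.2.3) that $\widehat{R}_Y$ is a split super-extension of $R_Y$ and that $\widehat{f}^{\sharp}$ is pinned down by its restriction $f^{\sharp}$ together with the images of the odd generators, and it genuinely requires that the odd directions of $\widehat{X}$ be pulled back along $f$ from those of $\widehat{Y}$ — which is precisely the geometric situation this note is built around, where the odd part of a supermanifold is a spinor sheaf transported from the target (cf.\ Remarks~2.3.2--2.3.3); absent this compatibility the displayed map is only a natural transformation, so it should be understood as part of the data of the morphisms under consideration. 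Granting the identity, one gets $\widehat{N}\otimes_{\widehat{R}_Y}\widehat{R}_X\cong\widehat{N}\otimes_{\widehat{R}_Y}(\widehat{R}_Y\otimes_{R_Y}R_X)\cong\widehat{N}\otimes_{R_Y}R_X$ as $R_X$-modules, this composite being the inverse of the canonical map above; combined with the affine-local patching of the previous paragraph this yields the second isomorphism and completes the lemma.
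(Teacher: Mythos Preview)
The paper gives no proof of this lemma at all: it is introduced together with Lemma~2.4.5 by the phrase ``The following immediate lemmas\ldots'' and left unargued. So there is nothing to compare your argument against on the paper's side beyond the implicit claim that both statements are obvious.

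Your treatment of the first isomorphism is exactly right and is presumably what the authors have in mind: since $\widehat{\pi}_X$ and $\widehat{\pi}_Y$ are identities on the underlying spaces, both sides have the same underlying abelian sheaf $f_{\ast}\widehat{\cal F}$, and the two ${\cal O}_Y$-actions agree by the commutative square in Definition~2.3.10.

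For the second isomorphism you have been more careful than the paper. Your reduction to the affine statement
\[
  \widehat{N}\otimes_{R_Y}R_X\;\longrightarrow\;\widehat{N}\otimes_{\widehat{R}_Y}\widehat{R}_X
\]
is correct, and you are right that this is an isomorphism for all $\widehat{N}$ if and only if the natural map $\widehat{R}_Y\otimes_{R_Y}R_X\to\widehat{R}_X$ is an isomorphism, i.e.\ the square $\widehat{X}\to\widehat{Y}$, $X\to Y$ is Cartesian. This condition is \emph{not} part of the general Definition~2.3.10 of a morphism of super-$C^k$-schemes (take $X=Y$ a point with $\widehat{Y}$ having two odd generators, $\widehat{X}$ one, and $\widehat{f}^{\sharp}$ killing one generator: then $\widehat{R}_Y\otimes_{R_Y}R_X=\widehat{R}_Y\not\simeq\widehat{R}_X$). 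So the second isomorphism as stated does not hold in the generality claimed; your instinct that an extra hypothesis on the odd directions is needed is correct. In the specific situations the paper later uses (pull-backs of the form $\widehat{\cal E}=\widehat{\pi}^{\ast}{\cal E}$, products $\widehat{X}\times Y$, etc.) the square is Cartesian by construction, which is likely why the authors did not notice the gap.
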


\bigskip
 
\begin{flushleft}
{\bf Differential calculus on a super-$C^k$-scheme}	
\end{flushleft}
\begin{definition} {\bf [tangent sheaf and cotangent sheaf].}  {\rm
 Let $\widehat{X}=(X,{\cal O}_X,\widehat{\cal O}_X)$ be a super-$C^k$-scheme.
 (1)
 The presheaf on $\widehat{X}$  that associates to an open set $U\subset X$
  the $\widehat{\cal O}_X(U)$-module
  $\Der_{\Bbb R}(\widehat{\cal O}_X(U))$ is a quasi-coherent sheaf of
   $\widehat{\cal O}_X$-modules,
   denoted by ${\cal T}_{\ast}\widehat{X}$
   and called interchangeably the {\it tangent sheaf of $\widehat{X}$} or
    the {\it sheaf of super-$C^k$-derivations on $\widehat{\cal O}_X$}.
  
 (2)
 The presheaf on $\widehat{X}$  that associates to an open set $U\subset X$
  the $\widehat{\cal O}_X(U)$-module $\Omega_{\widehat{\cal O}_X(U)/{\Bbb R}}$
  is a quasi-coherent sheaf of $\widehat{\cal O}_X$-modules,
  denoted by ${\cal T}^{\ast}\widehat{X}$
   and called interchangeably the {\it cotangent sheaf of $\widehat{X}$} or
    the {\it sheaf of differentials of $\widehat{\cal O}_X$}.
 By construction, there is a canonical even map
  $d: \widehat{\cal O}_X\rightarrow {\cal T}^{\ast}\widehat{X}$
   as sheaves of ${\Bbb R}$-vector spaces on $\widehat{X}$.
 }\end{definition}

\bigskip

It follows from the local study in Sec.~2.2 that
 there is a canonical isomorphism
  $$
    {\cal T}_{\ast}\widehat{X}\;  \longrightarrow\; 
         \Homsheaf_{\widehat{\cal O}_X}
		     ({\cal T}^{\ast}\widehat{X}, \widehat{\cal O}_X)
  $$
 as $\widehat{\cal O}_X$-modules.
Cf.\  Remark 2.2.17.
    % Remark [relation between $\sDer_{\widehat{S}}(\widehat{R})$ 
	%                 and $\Omega_{\widehat{R}/\widehat{S}}$]

\bigskip

\section{Azumaya/matrix super-$C^k$-manifolds with a fundamental\\ module}

Once the basics of super-$C^k$-rings are laid down (Sec.~2.2),
 the extension of the notion of super-$C^k$-rings to the notion of Azumaya/matrix super-$C^k$-rings
 proceeds in the same manner
 as the extension of the notion of $C^k$-rings to the notion of Azumaya/matrix $C^k$-rings
  in [L-Y3] (D(11.1)).
After localizations at a subset in the center of the rings in question and then gluings from local to global,
 the extension of the notion of super-$C^k$-manifolds
   to the notion of Azumaya/matrix super-$C^k$-manifolds with a fundamental module
 proceeds in the same manner
 as the extension of the notion of $C^k$-manifold
   to the notion of Azumaya/matrix $C^k$-manifolds with a fundamental module
 in ibidem.
A brief review is given below for the introduction of terminology and notations we need
  and the completeness of the note.

\bigskip

\begin{flushleft}
{\bf Azumaya/matrix algebras over a super-$C^k$-ring, modules, and differential calculus}
\end{flushleft}
Let
  $\widehat{R}$ be a super-$C^k$-ring  and
  $\widehat{R}^{\,\Bbb C}:= \widehat{R}\otimes_{\Bbb R}{\Bbb C}$ be its complexification.
The ({\it complex}) {\it matrix algebra}   of rank $r$ over $\widehat{R}^{\,\Bbb C}$
 is the algebra $M_{r\times r}(\widehat{R}^{\,\Bbb C})$   of $r\times r$ matrices
  with entries elements in $\widehat{R}^{\,\Bbb C}$.
The addition and the multiplication of elements of $M_{r\times r}(\widehat{R}^{\,\Bbb C})$
 are defined through the matrix addition and the matrix multiplication in the ordinary way:
 $$\mbox{$
   \left(\, \rule{0ex}{1em}\widehat{a}_{ij}\, \right)_{ij}\,
     +\,  \left(\,\widehat{b}_{ij}\,\right)_{ij}\;
     :=\;  \left(\,\widehat{a}_{ij}+\widehat{b}_{ij}\,\right)_{ij}
	 \hspace{2em}\mbox{and}\hspace{2em}
   \left(\, \rule{0ex}{1em}\widehat{a}_{ij}\, \right)_{ij}\,
     \cdot\,  \left(\, \widehat{b}_{ij}\, \right)_{ij}\;
     :=\;  \left(\sum_{k=1}^r\widehat{a}_{ik}\widehat{b}_{kj}\right)_{ij}\,,
	 $}
 $$
 where $\widehat{a}_{ij}+\widehat{b}_{ij}$
   and $\widehat{a}_{ik}\cdot\widehat{b}_{kj}$ are respectively the addition and the multiplication
   in $\widehat{R}^{\,\Bbb C}$.
As an abstract ({\it complex}) {\it Azumaya algebra} over $\widehat{R}$,
 $$
  M_{r\times r}(\widehat{R}^{\,\Bbb C})\; \simeq\;
    \frac{\widehat{R}^{\,\Bbb C}\langle\, e_i^{\;j} \,|\, 1\le i,j\le r \,\rangle   }
	 {(\,   \widehat{r}e_i^{\;j}-e_i^{\;j}\widehat{r}\,,\,
	        e_i^{\;j}e_{i^{\prime}}^{\;j^{\prime}}
			  - \delta_{i^{\prime}}^j\, e_i^{\;j^{\prime}}\;|\;
		\widehat{r}\in \widehat{R}^{\,\Bbb C}\,,\, 1 \le i, j, i^{\prime}, j^{\prime}\le r \,)}\,.
 $$
 Here
   $\widehat{R}^{\,\Bbb C}\langle\, e_i^{\;j} \,|\, 1\le i,j\le r \,\rangle$
     is the unital associative algebra generated by
	   $\widehat{R}^{\,\Bbb C}$ and the set $\{e_i^{\;j} \,|\, 1\le i,j\le r\}$    and
   $(\,   \widehat{r}e_i^{\;j}-e_i^{\;j}\widehat{r}\,,\,
	        e_i^{\;j}e_{i^{\prime}}^{\;j^{\prime}}
			  - \delta_{i^{\prime}}^j\, e_i^{\;j^{\prime}}\;|\;
		 \widehat{r}\in \widehat{R}^{\,\Bbb C}\,,\, 1 \le i, j, i^{\prime}, j^{\prime}\le r \,)$
     is the bi-ideal thereof generated by the elements indicated.
The ${\Bbb Z}/2$-grading of $M_{r\times r}(\widehat{R}^{\,\Bbb C})$
  follows from the ${\Bbb Z}/2$-grading of $\widehat{R}^{\,\Bbb C}$
  by assigning in addition the parity of $e_i^{\;j}$ to be even.
As ${\Bbb Z}/2$-graded ${\Bbb C}$-algebras, one has the isomorphism
 $$
   M_{r\times r}(\widehat{R}^{\,\Bbb C})\;
    \simeq\;   M_{r\times r}({\Bbb C})\otimes_{\Bbb C}\widehat{R}^{\,\Bbb C}\,.
 $$
  	 	
$M_{r\times r}(\widehat{R}^{\,\Bbb C})$ naturally represents
  on the free module $\widehat{R}^{\,\Bbb C}$-module
  $\widehat{F}:=(\widehat{R}^{\,\Bbb C})^{\oplus r}$ of rank $r$,
  by the matrix multiplication to the left on a column vector.
We will call $\widehat{F}$ the {\it fundamental module}  of
  $M_{r\times r}(\widehat{R}^{\,\Bbb C})$.
The dual
 $\widehat{F}^{\vee}
    := \Hom_{\widehat{R}^{\,\Bbb C}}(\widehat{F},\widehat{R}^{\,\Bbb C})$
   of $\widehat{F}$, as a $\widehat{R}^{\,\Bbb C}$-module,
 is a right-$M_{r\times r}(\widehat{R}^{\,\Bbb C})$-module.
Denote by $M_{r\times r}(\widehat{R}^{\,\Bbb C})\mbox{-}\ModCategory$
  the category of left-$M_{r\times r}(\widehat{R}^{\,\Bbb C})$-modules.
Then there are natural functors
 $$
   \begin{array}{cccl}
   \widehat{R}^{\,\Bbb C}\mbox{-}\ModCategory
     & \longrightarrow
	 & M_{r\times r}(\widehat{R}^{\,\Bbb C})\mbox{-}\ModCategory \\[.6ex]
	\widehat{M}
	 & \longmapsto
	 & \widehat{F}\otimes_{\widehat{R}^{\,\Bbb C}}\widehat{M}
   \end{array}	
 $$
  and
 $$
   \begin{array}{cccl}
   M_{r\times r}(\widehat{R}^{\,\Bbb C})\mbox{-}\ModCategory
     & \longrightarrow
	 & \widehat{R}^{\,\Bbb C}\mbox{-}\ModCategory            \\[.6ex]
	\widehat{N}    & \longmapsto
	 & \widehat{F}^{\vee}\otimes_{M_{r\times r}(\widehat{R}^{\,\Bbb C})}\widehat{N}\,.
   \end{array}	
 $$
When $\widehat{R}$ is the function-ring of a super-$C^k$-manifold,
 these two functors render $\widehat{R}^{\,\Bbb C}\mbox{-}\ModCategory$ and
  $M_{r\times r}(\widehat{R}^{\,\Bbb C})\mbox{-}\ModCategory$
 equivalent, called the {\it Morita equivalence}.

Let $\widehat{R}$ be a super-$C^k$-ring over another super-$C^k$-ring $\widehat{S}$.
Then, under the isomorphism
  $M_{r\times r}(\widehat{R}^{\,\Bbb C})
      \simeq M_{r\times r}({\Bbb C})\otimes_{\Bbb C}\widehat{R}^{\,\Bbb C}$
  between ${\Bbb C}$-algebras,
 $$
  \begin{array}{crcl}
   & \sDer_{\widehat{S}}(M_{r\times r}(\widehat{R}^{\,\Bbb C}))
    & \simeq
    & \Der_{\Bbb C}(M_{r\times r}({\Bbb C}))
		                                                     \otimes_{\Bbb C}\widehat{R}^{\,\Bbb C}
        \oplus															
		\Id_{r\times r}	   \otimes_{\Bbb C}\sDer_{\widehat{S}}(\widehat{R})^{\Bbb C}
	      											 \\[1.2ex]
   \mbox{and}\hspace{2em}
    & \Omega_{M_{r\times r}(\widehat{R}^{\,\Bbb C})/\widehat{S}^{\,\Bbb C}}
	& \simeq
	& \Omega_{M_{r\times r}({\Bbb C})/{\Bbb C}}
	                                                   \otimes_{\Bbb C}\widehat{R}^{\,\Bbb C}
         \oplus
         M_{r\times r}({\Bbb C})
		   \otimes_{\Bbb C}\Omega_{\widehat{R}/\widehat{S}}^{\,\Bbb C }\,.
		   \hspace{7em}		 	
  \end{array}
 $$
Caution that
  the former is only a $\widehat{R}^{\,\Bbb C}$-module
  while the latter is an $M_{r\times r}(\widehat{R}^{\,\Bbb C})$-module.
 
\bigskip

\begin{sremark} $[\,$center of $M_{r\times r}(\widehat{R}^{\,\Bbb C})$$\,]$. {\rm
 Despite being only ${\Bbb Z}/2$-commutative,
 conceptually it is instructive to regard $\widehat{R}^{\,\Bbb C}$
  as playing the role of the center of the matrix ring $M_{r\times r}(\widehat{R}^{\,\Bbb C})$.
}\end{sremark}

\bigskip

\begin{flushleft}
{\bf Azumaya/matrix super-$C^k$-manifolds with a fundamental module}
\end{flushleft}
The notion of Azumaya/matrix manifolds ([L-Y3] (D(11.1))) and the notion of supermanifolds (Sec.~2.3)
 can be merged into the notion of Azumaya/matrix supermanifolds:
 $$
  \xymatrix @C=-5em{
    &    &  \{\,\mbox{\it $C^k$-manifolds}\,\}\raisebox{-2ex}{\rule{0ex}{1ex}}\hspace{1em}
	                \ar@{_{(}->}[dl]  \ar@{^{(}->}[dr] &    \\
	&  \{\,\mbox{\it Azumaya/matrix $C^k$-manifolds}\,\}\raisebox{-2ex}{\rule{0ex}{1ex}}
	                \ar@{^{(}->}[dr]  &
	       & \{\,\mbox{\it super-$C^k$-manifolds}\,\}\raisebox{-2ex}{\rule{0ex}{1ex}}
		       \ar@{_{(}->}[dl]\\
    &    &  \{\,\mbox{\it Azumaya/matrix super-$C^k$-manifolds}\,\}   &  &.		
  }
 $$

\bigskip
 
\begin{sdefinition} {\bf [Azumaya/matrix super-$C^k$-manifold with a fundamental module].} {\rm
 An {\it Azumaya} (or {\it matrix}) {\it super-$C^k$-manifold with a fundamental module}
   is the following tuple:
  $$
    (X,{\cal O}_X, \widehat{\cal O}_X,
	         \widehat{\cal O}_X^{A\!z}
			    :=\Endsheaf_{\widehat{\cal O}_X^{\,\Bbb C}}(\widehat{\cal E}), {\cal E})\;
				=:\;  (\widehat{X}^{\!A\!z},\widehat{\cal E})\,,
  $$
  where
   \begin{itemize}
    \item[{\Large $\cdot$}]
	 $\widehat{X}:=(X,{\cal O}_X, \widehat{\cal O}_X)$ is a super-$C^k$-manifold;
	
	\item[{\Large $\cdot$}]
	 ${\cal E}$ is a locally free ${\cal O}_X^{\,\Bbb C}$-module of finite rank, say, $r$;
	
	\item[{\Large $\cdot$}]
	 $\widehat{\cal E}\;
	    :=\;  {\cal E}
	                 \otimes_{{\cal O}_X}\widehat{\cal O}_X\;
		 =\;   \widehat{\pi}^{\ast}{\cal E}\,$,
	  where
	  $\xymatrix{
	       \widehat{X} \ar@<-.2ex>[r]_-{\widehat{\pi}}
	        & X \ar@<-.5ex> @{_{(}->}[l]_-{\widehat{\iota}}  }$ 	
	   are the built-in morphisms for $\widehat{X}$.
   \end{itemize}
 Note that  one has the canonical isomorphism
  $$
    \widehat{\cal O}_X^{A\!z}\;
	:=\;             \Endsheaf_{\widehat{\cal O}_X^{\,\Bbb C}}(\widehat{\cal E})\;
	\simeq\;   \Endsheaf_{{\cal O}_X^{\,\Bbb C}}({\cal E})
	                                                       \otimes_{{\cal O}_X}\widehat{\cal O}_X\;
    =\;            {\cal O}_X^{A\!z}\otimes_{{\cal O}_X}\widehat{\cal O}_X
	=:\; 			\widehat{\pi}^{\ast}{\cal O}_X^{A\!z}\,.
  $$
}\end{sdefinition}

\bigskip

Built into the definition is the following commutative diagram of morphisms:
$$
  \xymatrix{
  & && \widehat{X}^{\!A\!z} \ar@<.4ex>[dll]^-{\dot{\widehat{\pi}}}
                                                             \ar[drr]^-{\dot{\pi}^{A\!z}} \\
  & X^{\!A\!z}\ar @<.4ex>@ {^{(}->}[urr]^-{\dot{\widehat{\iota}}} \ar[drr]^-{\pi^{A\!z}}
     &&&& \widehat{X}\ar@<.4ex>[dll]^-{\widehat{\pi}} \\
  & && X \ar@<.4ex>@{^{(}->}[urr]^-{\widehat{\iota}}  &&&.
  }
$$
{\sc Figure} 3-1.
%
% \marginpar{\raggedright\tiny $\bullet$ {\sc Figure}:   Az-supermanifold.pdf}
%
\begin{figure}[htbp]
 \bigskip
  \centering
  \includegraphics[width=0.80\textwidth]{Az-supermanifold.pdf}
  
  \bigskip
  \bigskip
 \centerline{\parbox{13cm}{\small\baselineskip 12pt
  {\sc Figure}~3-1.
  The built-in morphisms that underlie an Azumaya/matrix supermanifold $\widehat{X}^{\!A\!z}$.   
  It is worth emphasizing that, as in the case of Azumaya/matrix manifolds, 
    the superspace $\widehat{X}$ (and hecne $X$) should be regarded only 
	  as an auxiliary space, providing a topology underlying $\widehat{X}^{\!A\!z}$.
  The major object is the structure sheaf $\widehat{\cal O}_X^{A\!z}$, 
    which	should be thought of as a matrix-type noncommutayive cloud over $\widehat{X}$ 
	 and contains the geometrical contents that are relevant to D-branes.  	
  The built-in pair of morphisms 
    $\widehat{X}^{\!A\!z}
	   \begin{array}{c}
	    \raisebox{.4ex}{$\longrightarrow$} \\[-2.2ex] \longleftarrow \end{array}$ $X^{\!A\!z}$	 
   (resp.\  $\widehat{X}\begin{array}{c}
	                   \raisebox{.4ex}{$\longrightarrow$} \\[-2.2ex] \longleftarrow \end{array}X$)
    means that $\widehat{X}^{\!A\!z}$ is a sheaf-type thickening of $X^{\!A\!z}$ 
	 and contains $X^{\!A\!z}$ as the zero-section, as indicated,
     (resp.\  $\widehat{X}$ is a sheaf-type thickening of $X$ 
	    and contains $X$ as the zero-section, as indicated). 
  In the illustration, 
    the bluish shade indicates a ``{\it super}ization"  
	  while the orangish shade indicates ``{\it matrix}ization"; 
	and their combination is indicated by a bluish orangish shade.
  }}% end-small % end-centerline
\end{figure}

\bigskip

\begin{sremark}$[\,$smearing of matrix-points$\,]$. {\rm
 Conceptually, it is instructive to regards an Azumaya/ matrix super-$C^k$-manifold
  either as a smearing of unfixed matrix points over a super-$C^k$-manifold
 or as a smearing of unfixed matrix superpoint over a $C^k$-manifold.
} \end{sremark}

\medskip	

\begin{sdefinition} {\bf [tangent sheaf].} {\rm
 Continuing the notation in Definition~3.2.
                % Definition [Azumaya/matrix super-$C^k$-manifold with a fundamental module]
 The {\it tangent sheaf} ${\cal T}_{\ast}\widehat{X}^{\!A\!z}$ of $\widehat{X}^{\!A\!z}$
   is the $\widehat{\cal O}_X$-module which assigns to each open $U\subset X$
   the $\widehat{\cal O}_X(U)$-module
   $\sDer_{\Bbb C}(\widehat{\cal O}_X^{A\!z}(U))$.
}\end{sdefinition}

\medskip

\begin{sdefinition} {\bf [cotangent sheaf].} {\rm
 Continuing the notation in Definition~3.2.
                % Definition [Azumaya/matrix super-$C^k$-manifold with a fundamental module]
 The {\it cotangent sheaf} ${\cal T}^{\ast}\widehat{X}^{\!A\!z}$ of $\widehat{X}^{\!A\!z}$
   is the $\widehat{\cal O}_X^{A\!z}$-module which assigns to each open $U\subset X$
   the $\widehat{\cal O}_X^{A\!z}(U)$-module
   $\Omega_{\widehat{\cal O}_X^{A\!z}(U)/{\Bbb C}}$.
}\end{sdefinition}

\bigskip

Through the Morita equivalence,
 general (left) $\widehat{\cal O}_X^{A\!z}$-modules on $\widehat{X}^{\!A\!z}$
  can be obtained by the tensor
  $\widehat{\cal E}\otimes_{\widehat{\cal O}_X}\widehat{\cal F}$
  of the fundamental module $\widehat{\cal E}$ with $\widehat{\cal O}_X$-modules $\widehat{\cal F}$.

\bigskip

\begin{flushleft}
{\bf Remarks on general endomorphism-ringed super-$C^k$-schemes and
         differential calculus thereupon}	
\end{flushleft}
Let
 $\widehat{X}:= (X,{\cal O}_X,\widehat{\cal O}_X)$ be a super-$C^k$-scheme,
   with ${\cal O}_X$ a sheaf of finitely-generated germ-determined $C^k$-rings,    and
 $\widehat{\cal F}$ be a finitely-presented quasi-coherent $\widehat{\cal O}_X^{\,\Bbb C}$-module
   on $\widehat{X}$.
Then, the sheaf  $\Endsheaf_{\widehat{\cal O}_X^{\,\Bbb C}}(\widehat{\cal F})$
 of endomorphisms  of $\widehat{\cal F}$ is a quasi-coherent $\widehat{\cal O}_X^{\,\Bbb C}$-module
 that is finitely presentable as well.
Thus,
 if one assumes further that the built-in sheaf-of-rings homomorphism
  $\widehat{\cal O}_X^{\,\Bbb C}
      \rightarrow \Endsheaf_{\widehat{\cal O}_X^{\,\Bbb C}}(\widehat{\cal F})$
  is injective; i.e.\ $\widehat{\cal F}$ is supported on the whole $\widehat{X}$,
  and defines
   $$
      \widehat{\cal O}_X^{\nc}\;
	  :=\;  \Endsheaf_{\widehat{\cal O}_X^{\,\Bbb C}}(\widehat{\cal F})\,,
   $$
 then $\widehat{\cal O}_X^{\nc}$ is a sheaf of rings that is a finitely-presentable (noncommutative)
 algebraic extension of the sheaf of rings $\widehat{\cal O}_X^{\,\Bbb C}$.
This defines a new ringed topological space
 $$
   \widehat{X}^{\nc}\; :=\; (X,{\cal O}_X,\widehat{\cal O}_X,\widehat{\cal O}_X^{\nc})\,,
 $$
 which may be named {\it endomorphism-ringed super-$C^k$-scheme with a fundamental module}.
The Azumaya/matrix case $\widehat{X}^{\!A\!z}$ corresponds
  to the case ${\cal F}$ is in addition locally free.
Notions or constructions for $\widehat{X}^{\!A\!z}$ that depend only on
  the finite-presentability or algebraicness of  the extension
    $\widehat{\cal O}_X\hookrightarrow \widehat{\cal O}_X^{A\!z} $
 generalize immediately to the general endomorphism-ringed space $\widehat{X}^{\nc}$.	
In particular, the basics:
  \begin{itemize}
   \item[{\Large $\cdot$}]
    the notion of $\widehat{X}^{\nc}$
	  as an equivalence class of gluing systems of rings,

   \item[{\Large $\cdot$}]	
   the functor
     $\widehat{\cal F}\otimes_{\widehat{\cal O}_X}(\,\cdot\,):
	   \widehat{\cal O}_X\mbox{-}\ModCategory
	    \rightarrow  \widehat{\cal O}_X^{\nc}\mbox{-}\ModCategory$
   that turn an $\widehat{\cal O}_X$-module to a (left) $\widehat{\cal O}_X^{\nc}$-module,
    
   \item[{\Large $\cdot$}]
   the tangent sheaf ${\cal T}_{\ast}\widehat{X}^{\nc}$   and
   the cotangent sheaf ${\cal T}^{\ast}\widehat{X}^{\nc}$
   of  $\widehat{X}^{\nc}$
  \end{itemize}	
 can all be defined/establised.
And it remains instructive to regard the whole $\widehat{\cal O}_X$,
   despite only ${\Bbb Z}/2$-commutative,
 as the center of $\widehat{\cal O}_X^{\nc}$.

\bigskip

\section{Differentiable maps from an Azumaya/matrix supermanifold with a fundamental module
                    to a real manifold}

With the preparations in Sec.~2	and Sec.~3,
we now come to the main theme of the current note:				
the notion of `{\it differentiable map} from an Azumaya/matrix supermanifold with a fundamental module'
 that generalizes the setting in [L-Y3] (D(11.1)) for the case of Azumaya/matrix manifolds.

\bigskip

\subsection{A local study: The affine case}

The local study in this subsection is the foundation to the general notion of a differentiable map
 from an Azumaya/matrix supermanifold with a fundamental module to a real manifold.

\bigskip

\begin{flushleft}
{\bf From the aspect of function-rings and modules}
\end{flushleft}
\begin{definition} {\bf [admissible homomorphism from $C^k$-ring to Azumaya/matrix super-$C^k$-ring].}
{\rm
 Let
  \begin{itemize}
    \item[{\Large $\cdot$}]
      $U\subset {\Bbb R}^m$ and $V\subset {\Bbb R}^n$ be open sets,
	
	\item[{\Large $\cdot$}]
     $E$ be a complex $C^k$ vector bundle of rank $r$ on $U$,
	  for our purpose we may assume that $E$ is trivial,

	\item[{\Large $\cdot$}]
      $\xymatrix{\widehat{U} \ar@<-.2ex>[r]_-{\widehat{\pi}}
	                         & U \ar@<-.5ex> @{_{(}->}[l]_-{\widehat{\iota}}}$
       be a super-$C^k$-manifold supported on $U$ associated to the super-$C^k$-polynomial ring
        $C^k(U)[\theta^1,\,\cdots\,,\theta^s]$,
	    where $\theta^\alpha\theta^{\beta}+\theta^{\beta}\theta^{\alpha}=0$
	     for $1\le \alpha,\,\beta\le s$,
		
    \item[{\Large $\cdot$}]		
	 $\pr_{\widehat{U}}:\widehat{U}\times V\rightarrow \widehat{U}$,
	 $\pr_V:\widehat{U}\times V\rightarrow V$
	  be the projection maps, and
		
    \item[{\Large $\cdot$}]		
     $\widehat{E}:= \widehat{\pi}^{\ast}E$ be the pull-back complex vector bundle on $\widehat{U}$.
  \end{itemize}
 Then the endomorphism algebra of $\widehat{E}$ over  $\widehat{U}$
    is isomorphic to the $r\times r$ matrix ring\\
	$M_{r\times r}(C^k(U)[\theta^1,\,\cdots\,,\theta^s])$
	   over $C^k(U)[\theta^1,\,\cdots\,,\theta^s]^{\Bbb C}$.
 With this identification,
   a ring-homomorphism
    $$
	  \widehat{\varphi}^{\sharp}\;:\; C^k(V)\;
	      \longrightarrow\;  M_{r\times r}(C^k(U)[\theta^1,\,\cdots\,,\theta^s]^{\Bbb C})
	$$
 	 over ${\Bbb R}\hookrightarrow{\Bbb C}$  is said to be {\it $C^k$-admissible}
   if the following diagram of ring-homomorphisms
    $$
	  \xymatrix{
	     M_{r\times r }(C^k(U)[\theta^1,\,\cdots\,,\theta^s]^{\Bbb C})
		    && C^k(V)\ar[ll]_-{\widehat{\varphi}^{\sharp}}\\
         C^k(U)[\theta^1,\,\cdots\,,\theta^s]	\rule{0em}{1.2em}
       		    \ar@{^{(}->}[u]^-{\dot{\pi}^{A\!z,\sharp}}		
	   }
	$$
    extends to a commutative diagram of ring-homomorphisms
	  (over ${\Bbb R}$ or ${\Bbb R}\hookrightarrow {\Bbb C}$ when applicable)
    $$
	  \xymatrix{
	     M_{r\times r }(C^k(U)[\theta^1,\,\cdots\,,\theta^s]^{\Bbb C})
		    && C^k(V)\ar[ll]_-{\widehat{\varphi}^{\sharp}}
			                        \ar@{_{(}->}[d]^-{pr_V^{\sharp}}  \\
         \hspace{1ex}C^k(U)[\theta^1,\,\cdots\,,\theta^s]	\rule{0em}{1.2em}\hspace{1ex}
       		    \ar@{^{(}->}[u]^-{\dot{\pi}^{A\!z,\sharp}}		
				\ar@{^{(}->}[rr]_-{pr_{\widehat{U}}^{\sharp}}
			&& C^k(U\times V)[\theta^1,\,\cdots\,,\theta^s]
			         \ar[llu]_-{\tilde{\widehat{\varphi}}^{\sharp}}
	   }
	$$
    such that the following two conditions are satisfied:
	\begin{itemize}
	  \item[(1)]
	    The image of $\tilde{\widehat{\varphi}}^{\sharp}$
	    $$
		  \Image \tilde{\widehat{\varphi}}^{\sharp}\;
		    \simeq\;  \frac{C^k(U\times V)[\theta^1,\,\cdots\,,\theta^s]}
			                          {\Ker\tilde{\widehat{\varphi}}^{\sharp}}
		$$
        admits a quotient super-$C^k$-ring structure from the super-$C^k$-ring
		 $C^k(U\times V)[\theta^1,\,\cdots\,,\theta^s]$.
		
      \item[(2)]		
	   Regard $\dot{\pi}^{A\!z,\sharp}$ and $\widehat{\varphi}^{\sharp}$
	     now as ring-homomorphism to $\Image\tilde{\widehat{\varphi}}^{\sharp}$;
	   then, with respect to the super-$C^k$-ring structure in Condition (1),
	   both ring-homomorphisms
	    $$
		  \begin{array}{c}
		    \dot{\pi}^{A\!z,\sharp}\; :\;
			   C^k(U)[\theta^1,\,\cdots\,,\theta^s]\;
			     \longrightarrow\; \Image\tilde{\widehat{\varphi}}^{\sharp}\;,  \\[1.2ex]
		    \widehat{\varphi}^{\sharp}\;:\;
			    C^k(V)\; \longrightarrow\; \Image\tilde{\widehat{\varphi}}^{\sharp}						 
		  \end{array}		
		$$
	    are super-$C^k$-ring-homomorphisms.	
	\end{itemize}
 Note that
   since $C^k(U\times V)[\theta^1,\,\cdots\,,\theta^s]$ is the push-out
    of $C^k(U)[\theta^1,\,\cdots\,,\theta^s]$ and $C^k(V)$ in the category of super-$C^k$-rings,
  $\tilde{\widehat{\varphi}}^{\sharp}$ is unique if exists.
 In this case, one may think of  $\Image\tilde{\widehat{\varphi}}^{\sharp}$   as
 {\it the super-$C^k$-ring generated by $C^k(U)[\theta^1,\,\cdots\,,\theta^s]$
          and $\Image\widehat{\varphi}^{\sharp}$}
   in $M_{r\times r}(C^k(U)[\theta^1,\,\cdots\,,\theta^s]^{\Bbb C})$.
 In notation,
   $$
    A_{\widehat{\varphi}}\;
       :=\; 	  C^k(U)[\theta^1,\,\cdots\,,\theta^s]     \langle \Image \hat{\varphi}^{\sharp}\rangle\;
	   :=\;     \Image\tilde{\widehat{\varphi}}^{\sharp}\,.	
   $$ 	
}\end{definition}

\bigskip

Bringing both
     $A_{\widehat{\varphi}}$ and
     the module of sections of the fundamental vector bundle $\widehat{E}$ over $\widehat{U}$
   into the picture,
 one now has the following full diagram for a $C^k$-admissible ring-homomorphism
  $\widehat{\varphi}: C^k(V)\rightarrow
    M_{r\times r}(C^k(U)[\theta^1,\,\cdots\,,\theta^s]^{\Bbb C})$
  over ${\Bbb R}\hookrightarrow {\Bbb C}\,$:
   $${\footnotesize
   \xymatrix{
     (C^k(U)[\theta^1,\,\cdots\,,\theta^s]^{\Bbb C})^{\oplus r}\\
	& M_{r\times r}(C^k(U)[\theta^1,\,\cdots\,,\theta^s]^{\Bbb C})  \ar@{~>}[lu] \\
    &    A_{\widehat{\varphi}}\,:=\,\rule{0ex}{3ex}
         		C^k(U)[\theta^1,\,\cdots\,,\theta^s]\langle\Image\widehat{\varphi}^{\sharp}\rangle
				      \ar@{^{(}->}[u]_-{\sigma_{\widehat{\varphi}}^{\sharp}}
					  \ar@{~>}@/^2ex/[luu]
                   				&&& C^k(V)\ar[lllu]_-{\widehat{\varphi}^{\sharp}}
								                           \ar[lll]^-{f_{\widehat{\varphi}}^{\sharp}}
														   \ar@{_{(}->}[d]^-{pr_V^{\sharp}}\\
    &  \;\;C^k(U)[\theta^1,\,\cdots\,,\theta^s]\;\; \rule{0ex}{3ex}
                        	       \ar@{^{(}->}[u]_-{\pi_{\widehat{\varphi}}^{\sharp}}
	                               \ar@{~>}@/^5ex/[luuu]
						           \ar@{^{(}->}[rrr]_-{pr_{\widehat{U}}^{\sharp}}
         &&&  \; C^k(U\times V)[\theta^1,\,\cdots\,,\theta^s]
		                           \ar@{->>}[ulll]^-{\tilde{\widehat{\varphi}}^{\sharp}}\;.
   }  }%end-footnotesize
  $$
 
\bigskip

\begin{remark} $[\,$$k=\infty$$\,]$. {\rm
 For the case $k=\infty$,
  Condition (1)  in Definition~4.1.1 is always satisfied and, hence, redundant.
                % Definition [admissible homomorphism from $C^k$-ring to Azumaya super-$C^k$-ring]
}\end{remark}

\bigskip

\begin{flushleft}
{\bf From the aspect of super-$C^k$-schemes and sheaves}
\end{flushleft}
Continuing the notations of the previous theme.
Let
 \begin{itemize}
  \item[{\Large $\cdot$}]
   ${\cal O}_U$ be the sheaf of $C^k$-functions on $U$,
   ${\cal O}_V$ be the sheaf of $C^k$-functions on $V$,
   $\widehat{\cal O}_U$ be the structure sheaf of $\widehat{U}$ as the affine super-$C^k$-scheme
     associated to the super-$C^k$-ring $C^k(U)[\theta^1,\,\cdots\,,\theta^s]$,

  \item[{\Large $\cdot$}]
   ${\cal A}_{\widehat{\varphi}}$ be the $\widehat{\cal O}_U$-algebra
   associated to the $C^k(U)[\theta^1,\,\cdots\,,\theta^s]$-algebra $A_{\widehat{\varphi}}$,
	
  \item[{\Large $\cdot$}]
   ${\cal E}$ be the sheaf of $C^k$-sections of $E$ over $U$,
   $\widehat{\cal E}:= \widehat{\pi}^{\ast}{\cal E}$
     the induced locally free $\widehat{\cal O}_U$-module,

  \item[{\Large $\cdot$}]
   $\widehat{\cal O}_U^{A\!z}
       := \Endsheaf_{\widehat{\cal O}_U^{\,\Bbb C}}(\widehat{\cal E})$
   the structure sheaf of the Azumaya/matrix super-$C^k$-manifold $\widehat{U}^{\!A\!z}$,
   which realizes $\widehat{\cal E}$ as the fundamental module on $\widehat{U}^{\!A\!z}$.
 \end{itemize}
Let
 $$
    \widehat{\varphi}^{\sharp}\; :\;
	   C^k(V)\; \longrightarrow\;
         M_{r\times r}(C^k(U)[\theta^1,\,\cdots\,,\theta^s]^{\Bbb C})
 $$
 be a $C^k$-admissible ring-homomorphism over ${\Bbb R}\hookrightarrow {\Bbb C}$
 as in the previous theme.
Then the full diagram of ring-homomorphisms associated to $\widehat{\varphi}$ in the previous theme
 can be translated into the following diagram of maps between spaces:
$$
   \xymatrix{
    \; \widehat{\cal E} \ar@{.>}[rd]     \ar@{.>}@/_1ex/[rdd]    \ar@{.>}@/_2ex/[rddd]   \\
      & \widehat{U}^{\!A\!z}
	                                    \ar[rrrd]^-{\widehat{\varphi}}
	                                    \ar@{->>}[d]^-{\sigma_{\widehat{\varphi}}}   \\
      &\;\; \widehat{U}_{\widehat{\varphi}}\;\;
	                                   \ar[rrr]_-{f_{\widehat{\varphi}}}
	                                   \ar@{_{(}->} [rrrd]_-{\tilde{\widehat{\varphi}}}
                                	   \ar@{->>}[d]^-{\pi_{\widehat{\varphi}}} &&& \;\;V\;\; \\
	  &\;\; \widehat{U}\;\;
	     &&& \;\; \widehat{U}\times V
		                   \ar@{->>}[u]_-{pr_V} \ar@{->>}[lll]^-{pr_{\widehat{U}}}  \;.
    }
  $$
 Notice that in the above diagram,
  the maps $\pi_{\widehat{\varphi}}$, $f_{\widehat{\varphi}}$, and
    $\tilde{\widehat{\varphi}}$
   are now maps between super-$C^k$-ringed spaces in the sense of Definition~2.3.10
                                                                                               % Definition [super-$C^k$-ringed space]
 while the `maps' $\widehat{\varphi}$ and $\sigma_{\widehat{\varphi}}$
   are only conceptual without real contents and are defined solely contravariantly
   through the ring-homomorphisms
    $\widehat{\varphi}^{\sharp}$ and $\sigma_{\widehat{\varphi}}^{\sharp}$
   respectively.
   
\bigskip

\begin{definition}
{\bf [$C^k$-map from Azumaya/matrix super-$C^k$-manifold $\widehat{U}^{\!A\!z}$].} {\rm
 We shall call
  $\widehat{\varphi}:(\widehat{U}^{\!A\!z},\widehat{\cal E} )\rightarrow V$
  in the above diagram
  a  {\it $k$-times differentiable map} (in brief,  {\it $C^k$-map})
  from the super-$C^k$-manifold with a fundamental module
     $(\widehat{U}^{\!A\!z},\widehat{\cal E})$ to the $C^k$-manifold $V$.
 Through the underlying admissible ring-homomorphism $\widehat{\varphi}^{\sharp}$
    in the previous theme,
  $\widehat{\cal E}$	becomes an ${\cal O}_Y$-module,
  called the {\it push-forward} of $\widehat{\cal E}$ under $\widehat{\varphi}$   and
  denoted by $\widehat{\varphi}_{\ast}(\widehat{\cal E})$  or
                       $\widehat{\varphi}_{\ast}\widehat{\cal E}$.
 The {\it image} of $\widehat{\varphi}: \widehat{U}^{\!A\!z}\rightarrow V$,
   in notation $\Image\widehat{\varphi}$ or $\widehat{\varphi}(\widehat{U}^{\!A\!z})$,
 is the $C^k$-subscheme of $V$ defined by the ideal $\Ker(\widehat{\varphi})$ of $C^k(V)$.
}\end{definition}

\medskip

\begin{lemma}
{\bf [image $\Image\widehat{\varphi}$
           vs.\ support $\Supp(\widehat{\varphi}_{\ast}(\widehat{\cal E}))$].}
 Continuing the notation in Definition~4.1.3.
                % Definition [$C^k$-map from Azumaya/matrix super-$C^k$-manifold $\widehat{U}^{\!A\!z}$]
 The image $\Image(\widehat{\varphi})$ of the $C^k$-map $\widehat{\varphi}$
    is identical to the $C^k$-scheme-theoretical support
      $\Supp(\widehat{\varphi}_{\ast}(\widehat{\cal E}))$
	  of the push-forward $\widehat{\varphi}_{\ast}(\widehat{\cal E})$.
\end{lemma}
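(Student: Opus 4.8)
The plan is to reduce the assertion to an equality of two ideals of $C^k(V)$. In the affine/local situation of this subsection, and trivializing $E$, the push-forward $\widehat{\varphi}_{\ast}(\widehat{\cal E})$ is a quasi-coherent sheaf on $V$ whose $C^k(V)$-module of global sections is the free module $\widehat{F}:=(C^k(U)[\theta^1,\,\cdots\,,\theta^s]^{\Bbb C})^{\oplus r}$, viewed as a $C^k(V)$-module by restriction of scalars along $\widehat{\varphi}^{\sharp}:C^k(V)\rightarrow M_{r\times r}(C^k(U)[\theta^1,\,\cdots\,,\theta^s]^{\Bbb C})$ composed with the tautological action of $M_{r\times r}(C^k(U)[\theta^1,\,\cdots\,,\theta^s]^{\Bbb C})$ on its fundamental module. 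By Definition~4.1.3, $\Image\widehat{\varphi}$ is the $C^k$-subscheme of $V$ cut out by the ideal $\Ker(\widehat{\varphi}^{\sharp})$, whereas $\Supp(\widehat{\varphi}_{\ast}\widehat{\cal E})$ is the $C^k$-subscheme cut out by the annihilator ideal $\Ann_{C^k(V)}(\widehat{F})$. Hence it suffices to prove $\Ker(\widehat{\varphi}^{\sharp})=\Ann_{C^k(V)}(\widehat{F})$ as ideals of $C^k(V)$.

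The inclusion $\Ker(\widehat{\varphi}^{\sharp})\subseteq\Ann_{C^k(V)}(\widehat{F})$ is immediate, since the $C^k(V)$-action on $\widehat{F}$ factors through $\widehat{\varphi}^{\sharp}$: if $\widehat{\varphi}^{\sharp}(g)=0$ then $g$ acts as the zero operator on $\widehat{F}$. For the reverse inclusion the key input is the faithfulness of the fundamental representation of $M_{r\times r}(C^k(U)[\theta^1,\,\cdots\,,\theta^s]^{\Bbb C})$ on $\widehat{F}$: a matrix $(\widehat{a}_{ij})_{ij}$ that annihilates every standard generator $e_k$ of $\widehat{F}$ carries $e_k$ to the $k$-th column of $(\widehat{a}_{ij})_{ij}$, forcing $\widehat{a}_{ik}=0$ for all $i$ and $k$, hence the matrix vanishes. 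This argument is insensitive to the ${\Bbb Z}/2$-grading and runs verbatim over the ${\Bbb Z}/2$-commutative ring $C^k(U)[\theta^1,\,\cdots\,,\theta^s]^{\Bbb C}$. Consequently, if $g\in C^k(V)$ annihilates $\widehat{F}$ then $\widehat{\varphi}^{\sharp}(g)$ acts as zero on $\widehat{F}$, so $\widehat{\varphi}^{\sharp}(g)=0$, i.e.\ $g\in\Ker(\widehat{\varphi}^{\sharp})$. This proves the equality of ideals, whence the two $C^k$-subschemes of $V$ they define coincide.

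The only point requiring any real care --- and what I would flag as the (mild) main obstacle --- is the translation from the just-established equality of ideals of $C^k(V)$ to the equality of the corresponding $C^k$-subschemes of $V$, which forces one to be mindful of the germ-determinedness (fairness) requirements on the $C^k$-rings and ideals at play when forming $C^k$-subschemes; since, however, both $\Image\widehat{\varphi}$ and $\Supp(\widehat{\varphi}_{\ast}\widehat{\cal E})$ are manufactured from one and the same ideal $\Ker(\widehat{\varphi}^{\sharp})=\Ann_{C^k(V)}(\widehat{F})$ of $C^k(V)$ by the same bookkeeping, they are equal. Here I would lean on the standing convention that all the $C^k$-rings in play are finitely generated and germ-determined, together with the contravariant equivalence between affine $C^k$-schemes and $C^k$-rings (cf.\ Remark~2.3.12 and [Joy]), so that the computed equality of ideals transfers directly to the equality of the two closed $C^k$-subschemes of $V$ asserted by the lemma.
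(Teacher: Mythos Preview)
Your proof is correct and is precisely the natural argument: reduce to the equality of the ideals $\Ker(\widehat{\varphi}^{\sharp})$ and $\Ann_{C^k(V)}(\widehat{F})$ in $C^k(V)$, and deduce the nontrivial inclusion from the faithfulness of the fundamental representation of $M_{r\times r}(C^k(U)[\theta^1,\ldots,\theta^s]^{\Bbb C})$ on $\widehat{F}$. The paper itself does not supply a proof of this lemma; it states it and relies on the parallel statement in the non-super case [L-Y3: Sec.~5.1--5.2] (D(11.1)), cf.\ Remark~4.1.11, so your argument is exactly what the paper is tacitly invoking.
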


\medskip

\begin{definition}
{\bf [surrogate of $\widehat{U}^{\!A\!z}$ specified by $\widehat{\varphi}$].}
{\rm
 Continuing the discussion.
 The super-$C^k$-scheme $\widehat{U}_{\widehat{\varphi}}$
    together with the built-in $C^k$-maps
    $$
     \xymatrix{
       \widehat{U}_{\widehat{\varphi}}\ar[rrr]^-{f_{\widehat{\varphi}}}
   	                                                       \ar@{->>}[d]^-{\pi_{\widehat{\varphi}}} &&& V                \\
	   \widehat{U}
      }
    $$
  is called the {\it surrogate of} (the noncommutative) $\widehat{U}^{\!A\!z}$
   {\it specified by $\widehat{\varphi}:\widehat{U}^{\!A\!z}\rightarrow V$}.   	
 Caution that in general there is {\it no} $C^k$-map $\widehat{U}\rightarrow V$	
   that makes the diagram commute.
} \end{definition}

%\clearpage
\bigskip

\begin{flushleft}
{\bf The role of the fundamental module $\widehat{\cal E}$}
\end{flushleft}
Let
  $$
    \widehat{\varphi}\;:\; (\widehat{U}^{\!A\!z},\widehat{\cal E})\;
	   \longrightarrow\; V
  $$
 be a $C^k$-map defined by a $C^k$-admissible ring-homomorphism 	
  $$
    \widehat{\varphi}^{\sharp}\; :\;
	   C^k(V)\; \longrightarrow\;
         M_{r\times r}(C^k(U)[\theta^1,\,\cdots\,,\theta^s]^{\Bbb C})
   $$
  over ${\Bbb R}\hookrightarrow {\Bbb C}$ after fixing a trivialization of ${\cal E}$ on $U$.
Then
 the same argument as in [L-Y3: Sec.\ 5.2] (D(11.1)) implies by construction  the following properties
 related to the fundamental module $\widehat{\cal E}$  on the super-$C^k$-manifold $\widehat{U}$:
 \begin{itemize}
  \item[(1)]
   The fundamental module $\widehat{\cal E}:= \widehat{\pi}^{\ast}{\cal E}$,
       first on $\widehat{U}$,
	 is also naturally an ${\cal A}_{\widehat{\varphi}}$-module
	          on $\widehat{U}_{\widehat{\varphi}}$  and
	  an $\widehat{\cal O}_U^{A\!z}$-module on $\widehat{U}^{\!A\!z}$.
   We will denote them all by $\widehat{\cal E}$.	
	
  \item[(2)]
  The ${\cal O}_V$-modules
       $\varphi_{\ast}(\widehat{\cal E})$ and
       $f_{\widehat{\varphi},\ast}(\widehat{\cal E})$
   are canonically isomorphic.
 \end{itemize}
 
\bigskip
 
\begin{definition} {\bf [graph of $\widehat{\varphi}$].} {\rm
 The {\it graph} of the $C^k$-map
  $\widehat{\varphi}:(\widehat{U}^{\!A\!z},\widehat{\cal E})\rightarrow V$
  is defined to be the ${\cal O}_{\widehat{U}\times V}$-module
   $$
     \tilde{\widehat{\cal E}}_{\widehat{\varphi}}\;
	  :=\;  \tilde{\widehat{\varphi}}_{\ast}(\widehat{\cal E})
   $$
   on the product-space  $\widehat{U}\times V$.
}\end{definition}

\medskip
 
\begin{definition}
{\bf [$C^k$-admissible ${\cal O}_{\widehat{M}}$-module].}   {\rm
 Let $\widehat{M}$ be a super-$C^k$-manifold.
 An ${\cal O}_{\widehat{M}}$-module ${\widehat{\cal F}}$
       is said to be {\it $C^k$-admissible}
   if the annihilator ideal sheaf
		 $\Ker({\cal O}_{\widehat{M}}
		       \rightarrow
		           \Endsheaf_{{\cal O}_{\widehat{M}}}(\widehat{\cal F}))$
      is super-$C^k$-normal.
 In this case,
  $\Ker({\cal O}_{\widehat{M}}
		       \rightarrow
		           \Endsheaf_{{\cal O}_{\widehat{M}}}(\widehat{\cal F}))$
   defines a super-$C^k$-subscheme structure for $\Supp(\widehat{\cal F})$
    	   in $\widehat{M}$; 		
  i.e., the quotient map of ${\cal O}_{\widehat{M}}$, as a sheaf of super-$C^k$-rings,
   $$
     {\cal O}_{\widehat{M}}\;
	    \longrightarrow\;
		  {\cal O}_{\scriptsizeSupp(\widehat{\cal F})}
		      :={\cal O_{\widehat{M}}}/
			            \Ker({\cal O}_{\widehat{M}}
		                  \rightarrow
		                    \Endsheaf_{{\cal O}_{\widehat{M}}}(\widehat{\cal F}))
   $$
   induces a sheaf-of-super-$C^k$-rings structure
     on ${\cal O}_{\scriptsizeSupp(\widehat{\cal F})}$.
}\end{definition}

\medskip

\begin{remark} $[\,$case $k=\infty$$\,]$. {\rm
 For a super-$C^{\infty}$-manifold $\widehat{M}$,
 every ${\cal O}_{\widehat{M}}$-module $\widehat{\cal F}$
  is $C^{\infty}$-admissible.
}\end{remark}

\medskip

\begin{lemma}
{\bf [basic properties of $\tilde{\widehat{\cal E}}_{\widehat{\varphi}}$].}
 The graph $\tilde{\widehat{\cal E}}_{\widehat{\varphi}}$ of $\widehat{\varphi}$
  has the following properties:
  \begin{itemize}
   \item[$(1)$]
     $\tilde{\widehat{\cal E}}_{\widehat{\varphi}}$ is
       a $C^k$-admissible ${\cal O}_{\widehat{U}\times V}^{\,\Bbb C}$-module;
	  its super-$C^k$-scheme-theoretical support
	     $\Supp(\tilde{\widehat{\cal E}}_{\widehat{\varphi}})$
	   is isomorphic to the surrogate $ \widehat{U}_{\widehat{\varphi}}$
	     of $\widehat{U}^{\!A\!z}$ specified by $\widehat{\varphi}$.
     In particular,
	  $\tilde{\widehat{\cal E}}_{\widehat{\varphi}}$
	  is of relative dimension $0$ over $\widehat{U}$	
		   	
   \item[$(2)$]	
    There is a canonical isomorphism
	 $\widehat{\cal E}
	    \stackrel{\sim}{\longrightarrow}
      	  \pr_{\widehat{U},\ast}(\tilde{\widehat{\cal E}})$
	 of ${\cal O}_{\widehat{U}}^{\,\Bbb C}$-modules.
   In particular, $\tilde{\widehat{\cal E}}_{\widehat{\varphi}}$
      is flat over $\widehat{U}$, of relative complex length $r$.
	
   \item[$(3)$]	 	
    There is a canonical exact sequence of
	 ${\cal O}_{\widehat{U}\times V}^{\,\Bbb C}$-modules
	 $$
	  \pr_{\widehat{U}}^{\ast}(\widehat{\cal E})\;
	    \longrightarrow\;  \tilde{\widehat{\cal E}}_{\widehat{\varphi}}\;
		\longrightarrow\;  0\,.
	 $$
	
   \item[$(4)$]
    The $\widehat{\cal O}_V$-modules
      $\pr_{V,\ast}(\tilde{\widehat{\cal E}}_{\widehat{\varphi}})$
      and $\widehat{\varphi}_{\ast}(\widehat{\cal E})$
     are canonically isomorphic.	 	
  \end{itemize}
\end{lemma}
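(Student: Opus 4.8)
The plan is to reduce everything to the affine local picture of Definition~4.1.1 and the ring-homomorphism diagram displayed there, and to verify the four assertions by unwinding the definitions of the various pushforwards, pullbacks, and the surrogate $\widehat{U}_{\widehat{\varphi}}$. Throughout, the key structural fact is that $\tilde{\widehat{\varphi}}^{\sharp}: C^k(U\times V)[\theta^1,\,\cdots\,,\theta^s]\twoheadrightarrow A_{\widehat{\varphi}}$ is surjective with $C^k$-admissible kernel, so that $\widehat{U}_{\widehat{\varphi}}=\Supp(\tilde{\widehat{\cal E}}_{\widehat{\varphi}})$ carries a genuine super-$C^k$-subscheme structure of $\widehat{U}\times V$ by the admissibility clause of Definition~4.1.1(1). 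First I would establish $(1)$: by Definition~4.1.7, $\tilde{\widehat{\cal E}}_{\widehat{\varphi}}:=\tilde{\widehat{\varphi}}_{\ast}(\widehat{\cal E})$, and since $\widehat{\cal E}$ is an $A_{\widehat{\varphi}}$-module (Property~(1) of the previous theme) annihilated exactly by $\Ker\tilde{\widehat{\varphi}}^{\sharp}$ (this uses that $\widehat{\cal E}$ is the fundamental module, hence faithful over $A_{\widehat{\varphi}}\subset M_{r\times r}$), its scheme-theoretic support as an ${\cal O}_{\widehat{U}\times V}^{\,\Bbb C}$-module is cut out by $\Ker\tilde{\widehat{\varphi}}^{\sharp}$, i.e.\ is $\widehat{U}_{\widehat{\varphi}}$. $C^k$-admissibility of $\tilde{\widehat{\cal E}}_{\widehat{\varphi}}$ then follows from Definition~4.1.1(1) together with Definition~4.1.8, and relative dimension $0$ over $\widehat{U}$ follows because $f_{\widehat{\varphi}}:\widehat{U}_{\widehat{\varphi}}\to V$ factors the map and $\pi_{\widehat{\varphi}}:\widehat{U}_{\widehat{\varphi}}\to\widehat{U}$ is finite (as $A_{\widehat{\varphi}}$ is a subalgebra of $M_{r\times r}$ of the base, hence a finite module over it).

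Next, assertions $(2)$ and $(4)$ are the commutativity statements read off the big diagram. For $(2)$: the composite $\widehat{U}_{\widehat{\varphi}}\xrightarrow{\pi_{\widehat{\varphi}}}\widehat{U}$ corresponds to the inclusion $\pi_{\widehat{\varphi}}^{\sharp}: C^k(U)[\theta^1,\,\cdots\,,\theta^s]\hookrightarrow A_{\widehat{\varphi}}$, and $\widehat{\cal E}$ as an $A_{\widehat{\varphi}}$-module restricts along this inclusion to the original fundamental module $\widehat{\pi}^{\ast}{\cal E}$ of rank $r$ over $C^k(U)[\theta^1,\,\cdots\,,\theta^s]^{\Bbb C}$ — this is exactly the content of $A_{\widehat{\varphi}}\hookrightarrow M_{r\times r}(C^k(U)[\theta^1,\,\cdots\,,\theta^s]^{\Bbb C})$ acting on $(C^k(U)[\theta^1,\,\cdots\,,\theta^s]^{\Bbb C})^{\oplus r}$. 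Hence $\pi_{\widehat{\varphi},\ast}(\tilde{\widehat{\cal E}}_{\widehat{\varphi}})\simeq\widehat{\cal E}$ canonically as ${\cal O}_{\widehat{U}}^{\,\Bbb C}$-modules, and since $\widehat{U}_{\widehat{\varphi}}\hookrightarrow\widehat{U}\times V$ with $\pi_{\widehat{\varphi}}=\pr_{\widehat{U}}\circ\tilde{\widehat{\varphi}}$, this gives $\widehat{\cal E}\xrightarrow{\sim}\pr_{\widehat{U},\ast}(\tilde{\widehat{\cal E}})$; flatness of relative complex length $r$ is then immediate because the module is locally free of rank $r$ over the base already upstairs. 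Assertion $(4)$ is the analogous identity on the $V$-side: $f_{\widehat{\varphi}}^{\sharp}=\widehat{\varphi}^{\sharp}$ post-composed with $C^k(V)\to A_{\widehat{\varphi}}$ landing in the image, and $\pr_V\circ\tilde{\widehat{\varphi}}=f_{\widehat{\varphi}}$, so $\pr_{V,\ast}(\tilde{\widehat{\cal E}}_{\widehat{\varphi}})\simeq f_{\widehat{\varphi},\ast}(\widehat{\cal E})\simeq\widehat{\varphi}_{\ast}(\widehat{\cal E})$, the last isomorphism being Property~(2) of the fundamental-module theme.

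For $(3)$, the canonical surjection $\pr_{\widehat{U}}^{\ast}(\widehat{\cal E})\twoheadrightarrow\tilde{\widehat{\cal E}}_{\widehat{\varphi}}$ comes from the surjectivity of $\tilde{\widehat{\varphi}}^{\sharp}$: pulling back $\widehat{\cal E}$ along $\pr_{\widehat{U}}$ and then restricting to the closed sub–super-$C^k$-scheme $\widehat{U}_{\widehat{\varphi}}$ of $\widehat{U}\times V$ produces a module generated by the same generators (the chosen trivialization of ${\cal E}$) but now over the quotient ring $A_{\widehat{\varphi}}$, which is precisely $\tilde{\widehat{\cal E}}_{\widehat{\varphi}}$; a chase of the split exact sequence $0\to\widehat{\cal I}_U\to\widehat{\cal O}_U\to{\cal O}_U\to 0$ shows no odd generators are lost. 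The step I expect to be the main obstacle is the \emph{admissibility} bookkeeping inside $(1)$ — namely checking carefully that the annihilator ideal sheaf of $\tilde{\widehat{\cal E}}_{\widehat{\varphi}}$ on $\widehat{U}\times V$ is exactly $\Ker\tilde{\widehat{\varphi}}^{\sharp}$ and is super-$C^k$-normal, so that $\Supp(\tilde{\widehat{\cal E}}_{\widehat{\varphi}})$ really is the surrogate \emph{as a super-$C^k$-scheme} and not merely as a ringed space; once that is pinned down, the globalization over a cover $\{U_\alpha\}$ is routine since all the isomorphisms produced are canonical and hence glue, by the gluing-systems description of Remark~2.3.13 and the exactness of $\widehat{\pi}_{\ast}$ (Lemma~2.4.6) together with the push-pull compatibilities of Lemma~2.4.7.
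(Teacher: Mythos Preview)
Your proposal is correct and is essentially the approach the paper has in mind: the paper does not give an explicit proof of this lemma, treating it as following directly from the construction (and referring implicitly to the analogous argument in [L-Y3: Sec.~5.2] (D(11.1)) for the non-super case), and your argument is precisely the definition-unwinding that ``by construction'' abbreviates. Two minor cosmetic points: your cross-reference numbers are shifted by one (the graph is Definition~4.1.6, $C^k$-admissibility of a module is Definition~4.1.7), and the final sentence about globalization over a cover $\{U_\alpha\}$ is unnecessary here since this is the local/affine lemma---that step belongs to the global version (Lemma~4.2.2.2), which the paper deduces from this one.
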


\bigskip
 
Conversely, one has the following lemma of reconstruction,
 which follows the same argument as
    in [L-L-S-Y] (D(2)) for the algebraic case  and
    	[L-Y3] (D(11.1)) for the $C^k$-case:

\bigskip

\begin{lemma}
{\bf [reconstructing $C^k$-map via ${\cal O}_{\widehat{U}\times Y}^{\,\Bbb C}$-module].}
 Let $\tilde{\widehat{\cal E}}$ be an ${\cal O}_{\widehat{U}\times V}^{\,\Bbb C}$-module
  that is $C^k$-admissible,
     and of relative dimension $0$, of finite relative complex length, and flat over $\widehat{U}$.
 For the moment, we assume further that
   $\pr_{\widehat{U},\ast}(\tilde{\widehat{\cal E}})$	is trivial.
 Let ${\widehat{\cal E}}:=  \pr_{\widehat{U},\ast}(\tilde{\widehat{\cal E}})$.
 Then
  $\tilde{\widehat{\cal E}}$ specifies a $C^k$-map
   $\widehat{\varphi}:(\widehat{U}^{\!A\!z},\widehat{\cal E})\rightarrow V$
    whose graph $\tilde{\widehat{\cal E}}_{\widehat{\varphi}}$
      is canonically isomorphic to  $\tilde{\widehat{\cal E}}$.
\end{lemma}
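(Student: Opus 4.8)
The plan is to mimic, in the super-$C^k$ setting, the reconstruction argument of [L-L-S-Y] (D(2)) and [L-Y3] (D(11.1)), the only genuinely new input being the bookkeeping required to keep everything compatible with the built-in ${\Bbb Z}/2$-grading and the super-$C^k$-ring structure on the even part. First I would reduce to the affine statement: since $\widehat{U}$ is affine with function super-$C^k$-ring $\widehat{R}:=C^k(U)[\theta^1,\,\cdots\,,\theta^s]$ and $\widehat{U}\times V$ is affine with function super-$C^k$-ring $\widehat{R}\otimes_{\Bbb R}C^k(V)$ (the push-out in the category of super-$C^k$-rings), the ${\cal O}_{\widehat{U}\times V}^{\,\Bbb C}$-module $\tilde{\widehat{\cal E}}$ corresponds to a finitely-presented module $\widehat{N}$ over $(\widehat{R}\otimes_{\Bbb R}C^k(V))^{\,\Bbb C}$, and the hypotheses ``relative dimension $0$, finite relative complex length, flat over $\widehat{U}$'' translate into: $\widehat{N}$ is finitely generated and projective of rank $r$ as an $\widehat{R}^{\,\Bbb C}$-module, and $\pr_{\widehat{U},\ast}(\tilde{\widehat{\cal E}})=\widehat{N}$ viewed as an $\widehat{R}^{\,\Bbb C}$-module is assumed free, so $\widehat{N}\simeq(\widehat{R}^{\,\Bbb C})^{\oplus r}=\widehat{F}$ as $\widehat{R}^{\,\Bbb C}$-modules. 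Fix such a trivialization; this is where the choice of $\widehat{\cal E}$ comes in.

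Next I would define the ring-homomorphism. The $C^k(V)$-module structure on $\widehat{N}$ is, after the identification $\widehat{N}\simeq\widehat{F}$, a homomorphism of ${\Bbb R}$-algebras
$$
   \widehat{\varphi}^{\sharp}\;:\;  C^k(V)\;\longrightarrow\;  \End_{\widehat{R}^{\,\Bbb C}}(\widehat{F})\;=\; M_{r\times r}(\widehat{R}^{\,\Bbb C})\;=\;M_{r\times r}(C^k(U)[\theta^1,\,\cdots\,,\theta^s]^{\Bbb C})\,,
$$
automatically over ${\Bbb R}\hookrightarrow{\Bbb C}$ and commuting with $\dot{\pi}^{A\!z,\sharp}$ by construction of $\widehat{N}$ as a module over the push-out. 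It then remains to check that $\widehat{\varphi}^{\sharp}$ is $C^k$-admissible in the sense of Definition~4.1.1 — i.e.\ that $\Image\tilde{\widehat{\varphi}}^{\sharp}$ inherits a quotient super-$C^k$-ring structure from $C^k(U\times V)[\theta^1,\,\cdots\,,\theta^s]$ and that both $\dot{\pi}^{A\!z,\sharp}$ and $\widehat{\varphi}^{\sharp}$ land in it as super-$C^k$-ring-homomorphisms. This is where the $C^k$-admissibility hypothesis on $\tilde{\widehat{\cal E}}$ is used: by definition the annihilator $\Ker({\cal O}_{\widehat{U}\times V}\rightarrow\Endsheaf(\tilde{\widehat{\cal E}}))$ is super-$C^k$-normal, and its sheaf-of-rings quotient — which is exactly ${\cal A}_{\widehat{\varphi}}$, the super-$C^k$-scheme-theoretic support — therefore carries the required super-$C^k$-ring structure; Conditions (1) and (2) of Definition~4.1.1 are precisely the affine translation of this (and, for $k=\infty$, are automatic by Remark~4.1.2). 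Once admissibility is in hand, Definition~4.1.3 produces the $C^k$-map $\widehat{\varphi}:(\widehat{U}^{\!A\!z},\widehat{\cal E})\rightarrow V$.

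Finally I would verify that the graph of the map so produced recovers $\tilde{\widehat{\cal E}}$. By Definition~4.1.7 the graph is $\tilde{\widehat{\cal E}}_{\widehat{\varphi}}=\tilde{\widehat{\varphi}}_{\ast}(\widehat{\cal E})$, which as an ${\cal O}_{\widehat{U}\times V}^{\,\Bbb C}$-module is $\widehat{\cal E}$ equipped with the $C^k(V)$-action coming from $\widehat{\varphi}^{\sharp}$ and the $\widehat{\cal O}_U$-action it already has; but that is, tautologically by the construction of $\widehat{\varphi}^{\sharp}$ out of the module structure on $\widehat{N}\simeq\tilde{\widehat{\cal E}}$, the original ${\cal O}_{\widehat{U}\times V}^{\,\Bbb C}$-module structure on $\tilde{\widehat{\cal E}}$. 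So the isomorphism $\tilde{\widehat{\cal E}}_{\widehat{\varphi}}\stackrel{\sim}{\rightarrow}\tilde{\widehat{\cal E}}$ is canonical, completing the argument. I expect the main obstacle to be entirely in the middle step: showing that admissibility of $\tilde{\widehat{\cal E}}$ yields a genuine super-$C^k$-ring (not merely ${\Bbb Z}/2$-graded ${\Bbb R}$-algebra) structure on $\Image\tilde{\widehat{\varphi}}^{\sharp}$ compatible with the even $C^k$-subring, i.e.\ checking that the chain-rule closure condition of Definition~2.2.1 survives the quotient — which is exactly what ``super-$C^k$-normal'' (Definition~2.2.5, via Lemma~2.1.2) is engineered to guarantee, but it must be spelled out; everything else is formal diagram-chasing parallel to the non-super case.
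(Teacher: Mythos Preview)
Your proposal is correct and follows essentially the same approach as the paper's own proof: both extract $\widehat{\varphi}^{\sharp}$ from the $C^k(V)$-action on $\tilde{\widehat{\cal E}}$ via $\pr_V^{\sharp}$, using that this action commutes with the $\widehat{\cal O}_U^{\,\Bbb C}$-action (centrality in the push-out), and both appeal to the $C^k$-admissibility hypothesis on $\tilde{\widehat{\cal E}}$ to get admissibility of $\widehat{\varphi}^{\sharp}$. The paper's proof is a two-sentence sketch that omits the affine reduction, the explicit identification of the admissibility check with the super-$C^k$-normality of the annihilator, and the verification that the graph recovers $\tilde{\widehat{\cal E}}$; you have spelled all of these out, which is an improvement in exposition rather than a different argument.
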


\begin{proof}
 Note that
  the ${\cal O}_V$-action on $\tilde{\widehat{\cal E}}$ via $\pr_V^{\,\sharp}$
    and the ${\cal O}_{\widehat{U}}^{\,\Bbb C}$-action on $\tilde{\widehat{\cal E}}$
	     via $\pr_{\widehat{U}}^{\,\sharp}$ commute
   since
     the image of $\pr_V^{\,\sharp}:{\cal O}_V \rightarrow {\cal O}_{\widehat{U}\times V}$
	 lies in the center of ${\cal O}_{\widehat{U}\times V}$.
 Thus, 		
    the ${\cal O}_V$-action on $\tilde{\widehat{\cal E}}$ via $\pr_V^{\,\sharp}$
      induces a $C^k$-admissible
	 $\widehat{\varphi}^{\sharp}:
	    {\cal O}_V    \rightarrow
		   \Endsheaf_{{\cal O}_{\widehat{U}}^{\,\Bbb C}}(\widehat{\cal E})$,
   which defines a $C^k$-map\\
    $\widehat{\varphi}:
	  (U,
	      \widehat{\cal O}_U^{\!A\!z}
		     :=\Endsheaf_{{\cal O}_{\widehat{U}}^{\,\Bbb C}}(\widehat{\cal E}),
          \widehat{\cal E})
		  \rightarrow V$.		
\end{proof}
 	
\bigskip

\begin{remark} {$[\,$when $\widehat{U}=U$$\,]$.} {\rm
 For ${\cal S}=0$ the zero-${\cal O}_U$-module, $\widehat{U}=U$;   and
 all the settings/objects/statements in this subsection for the super-$C^k$-case
  reduce to the corresponding settings/objects/statements in
  [L-Y3: Sec.~5.1 \& Sec.~5.2] (D(11.1)) for the $C^k$ case.
}\end{remark}

\bigskip

\begin{flushleft}
{\bf The induced $C^k$-map $\varphi:(U^{\!A\!z},{\cal E})\rightarrow V$}
\end{flushleft}
By the post-composition with $\dot{\widehat{\iota}}^{\sharp}$
    of the built-in ring-homomorphisms of the matrix super-$C^k$-ring
  $$
      \xymatrix{
	    \;\; M_{r\times r}(C^k(U)[\theta^1,\,\cdots\,,\theta^s]^{\Bbb C}  )\;
    		      \ar@<.45ex>@{->>}[rr]^-{\dot{\widehat{\iota}}^{\sharp}}
	       && \;  M_{r\times r}(C^k(U)^{\Bbb C})
		                \ar@<.3ex> @{_{(}->}[ll]^-{\dot{\widehat{\pi}}^{\sharp}}\;,
      }
  $$ 	
 a ring-homomorphism
  $$
    \widehat{\varphi}^{\sharp}\;:\; C^k(V)\;\longrightarrow\;
	      M_{r\times r}(C^k(U)[\theta^1,\,\cdots\,,\theta^s]^{\Bbb C})
  $$
  induces a ring-homomorphism
  $$
   \varphi^{\sharp}\;:\;  C^k(V)\; \longrightarrow\; M_{r\times r}(C^k(U)^{\Bbb C})\,.
  $$
It follows by construction that:
  
\bigskip

\begin{lemma} {\bf [induced $C^k$-admissible ring-homomorphism].}
  If $\widehat{\varphi}^{\sharp}$ is $C^k$-admissible,
   then $\varphi^{\sharp}$ is $C^k$-admissible as well.
\end{lemma}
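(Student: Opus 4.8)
The plan is to trace through the two conditions in the definition of $C^k$-admissibility (Definition~4.1.1) for $\varphi^{\sharp}$, using the already-established $C^k$-admissibility of $\widehat{\varphi}^{\sharp}$ and the fact that $\varphi^{\sharp}$ is obtained from $\widehat{\varphi}^{\sharp}$ by post-composition with the ``zero-section'' ring-homomorphism $\dot{\widehat{\iota}}^{\sharp}: M_{r\times r}(C^k(U)[\theta^1,\,\cdots\,,\theta^s]^{\Bbb C})\twoheadrightarrow M_{r\times r}(C^k(U)^{\Bbb C})$ that kills all the odd generators $\theta^{\alpha}$. First I would observe that this zero-section map is exactly the one induced on matrix rings by the super-$C^k$-ring-epimorphism $C^k(U)[\theta^1,\,\cdots\,,\theta^s]\twoheadrightarrow C^k(U)$ whose kernel is the ($C^k$-normal, indeed super-$C^k$-normal) nilradical $(\theta^1,\,\cdots\,,\theta^s)$; this is the key structural input.

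Next, I would produce the candidate extension $\tilde{\varphi}^{\sharp}: C^k(U\times V)\to M_{r\times r}(C^k(U)^{\Bbb C})$ by composing the given $\tilde{\widehat{\varphi}}^{\sharp}: C^k(U\times V)[\theta^1,\,\cdots\,,\theta^s]\to M_{r\times r}(C^k(U)[\theta^1,\,\cdots\,,\theta^s]^{\Bbb C})$ with $\dot{\widehat{\iota}}^{\sharp}$, and then restricting along the canonical inclusion $C^k(U\times V)\hookrightarrow C^k(U\times V)[\theta^1,\,\cdots\,,\theta^s]$. A diagram chase shows that $\tilde{\varphi}^{\sharp}$ makes the required square (relating $\pi^{A\!z,\sharp}$, $\varphi^{\sharp}$, $pr_V^{\sharp}$, $pr_U^{\sharp}$) commute, and that $\tilde{\varphi}^{\sharp}$ is in fact the unique such extension since $C^k(U\times V)$ is the pushout of $C^k(U)$ and $C^k(V)$ in the category of $C^k$-rings. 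For Condition (1), I would note that $\Image\tilde{\varphi}^{\sharp}$ is the image of $\Image\tilde{\widehat{\varphi}}^{\sharp}=A_{\widehat{\varphi}}$ under $\dot{\widehat{\iota}}^{\sharp}$, hence a super-$C^k$-quotient (in fact $C^k$-quotient) of the $C^k$-ring $C^k(U\times V)$: one passes from the super-$C^k$-quotient structure on $A_{\widehat{\varphi}}$ guaranteed by hypothesis to the purely even quotient structure on $A_{\varphi}:=\dot{\widehat{\iota}}^{\sharp}(A_{\widehat{\varphi}})$ by modding out the image of the odd ideal, invoking that quotients of super-$C^k$-rings by super-$C^k$-normal ideals are again super-$C^k$-rings (Definition~2.2.4). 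For Condition (2), composing the super-$C^k$-ring-homomorphisms $\dot{\pi}^{A\!z,\sharp}$ and $\widehat{\varphi}^{\sharp}$ (into $A_{\widehat{\varphi}}$, both super-$C^k$ by hypothesis) with the super-$C^k$-epimorphism $A_{\widehat{\varphi}}\twoheadrightarrow A_{\varphi}$ yields super-$C^k$-ring-homomorphisms $\pi^{A\!z,\sharp}$ and $\varphi^{\sharp}$ into $A_{\varphi}$, since compositions of super-$C^k$-ring-homomorphisms are super-$C^k$-ring-homomorphisms.

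The step I expect to be the main obstacle is verifying cleanly that the descent of the quotient super-$C^k$-ring structure along $\dot{\widehat{\iota}}^{\sharp}$ is compatible with \emph{all} the maps simultaneously — i.e. that the odd ideal $\widehat{I}:=(\theta^1,\,\cdots\,,\theta^s)\cap A_{\widehat{\varphi}}$ being modded out is super-$C^k$-normal \emph{as an ideal of $A_{\widehat{\varphi}}$} (not merely of the ambient matrix ring), so that $A_{\varphi}=A_{\widehat{\varphi}}/\widehat{I}$ inherits a super-$C^k$-ring (equivalently, since $\widehat I$ contains all the odd part, a $C^k$-ring) structure. This requires checking that the $C^k$-operations on $A_{\widehat{\varphi}}$ restrict correctly, which in turn follows from $A_{\widehat{\varphi}}$ being generated over $C^k(U)$ by the even subring $C^k(U)\langle\Image\varphi^{\sharp}\rangle$ (the reduction) together with nilpotents, so that the $C^k$-structure on the reduction is already present inside $A_{\widehat{\varphi}}$ and is precisely what survives under $\dot{\widehat{\iota}}^{\sharp}$. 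For $k=\infty$ this is automatic by Remark~4.1.2, so the content is genuinely at finite $k$; I would isolate this as a short lemma about $\dot{\widehat{\iota}}^{\sharp}$ carrying admissible data to admissible data, mirroring the construction in [L-Y3] (D(11.1)) for the purely $C^k$ (non-super) reduction. Everything else is a formal diagram chase together with the already-cited equivalence (Remark~2.3.12) between super-$C^k$-rings and affine super-$C^k$-schemes.
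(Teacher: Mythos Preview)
Your proposal is correct and is essentially a detailed unfolding of what the paper means by ``it follows by construction'': the paper gives no further argument beyond noting that $\varphi^{\sharp}$ is obtained from $\widehat{\varphi}^{\sharp}$ by post-composition with the built-in quotient $\dot{\widehat{\iota}}^{\sharp}$, and leaves the verification of the two admissibility conditions implicit. Your construction of $\tilde{\varphi}^{\sharp}$ as $\dot{\widehat{\iota}}^{\sharp}\circ\tilde{\widehat{\varphi}}^{\sharp}$ restricted to $C^k(U\times V)$, together with the diagram chase and the observation that $\Image\tilde{\varphi}^{\sharp}=\dot{\widehat{\iota}}^{\sharp}(A_{\widehat{\varphi}})$ inherits a $C^k$-ring structure as the underlying $C^k$-ring of a super-$C^k$-quotient, is exactly the content the paper is suppressing.
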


\bigskip
  
In this case,
 the $C^k$-map
   $$
     \widehat{\varphi}\;:\;
	   (\widehat{U}^{\!A\!z},\widehat{\cal E})\, :=\,
	  (\widehat{U},
	     \widehat{\cal O}_U^{A\!z}:
		     =\Endsheaf_{\widehat{\cal O}_U^{\,{\Bbb C}}}(\widehat{\cal E}),
		 \widehat{\cal E})\;
	  \longrightarrow\; (V,{\cal O}_V)
   $$
   defined by $\widehat{\varphi}^{\sharp}$
 induces a $C^k$-map
   $$
     \varphi\;:\;
	    (U^{\!A\!z},{\cal E})\,:=\,
	   (U,
	    {\cal O}_U^{A\!z}:= \Endsheaf_{{\cal O}_U^{\,\Bbb C}}({\cal E}),
		{\cal E})\;
	    \longrightarrow\;   (V,{\cal O}_V)		
   $$
    defined by $\varphi^{\sharp}$
   that fits into the following commutative diagram
  $$
    \xymatrix{
     &  \widehat{U}^{A\!z}\ar[rrr]^-{\widehat{\varphi}}    &&&  V    \\
     & U^{A\!z}\rule{0em}{1.2em}
           	 \ar@{^{(}->}[u]^-{\dot{\widehat{\iota}}}
	         \ar[urrr]_-{\varphi}  &&&	&,
    }
  $$
whose full detail is given in the commutative diagram below:  (Cf.\ [L-Y3] (D(11.1)).)
$$
   \xymatrix{
    \; \widehat{\cal E}=\widehat{\pi}^{\ast}{\cal E}
	          \ar@{.>}[rrrd]     \ar@{.>}@/_1ex/[rrdd]    \ar@{.>}@/_2ex/[rddd]    \\
      &&& \widehat{U}^{\!A\!z}                            	
	                                    \ar[rrrrd]^-{\widehat{\varphi}}
	                                    \ar@{->>}[ld]_-{\sigma_{\widehat{\varphi}}}
										\ar@<.3ex>@{->>}'[d]'[dd][dddd]^(.3){\dot{\widehat{\pi}}}
										\\
      &&\;\; \widehat{U}_{\widehat{\varphi}}\;\;
	                                   \ar[rrrrr]^(.43){f_{\widehat{\varphi}}}
	                                   \ar@{_{(}->} [rrrrd]^(.6){\tilde{\widehat{\varphi}}}
                                	   \ar@{->>}[ld]_-{\pi_{\widehat{\varphi}}}
									   \ar@<.3ex>@{->>}'[d][dddd]^(.3){\widehat{\pi}}
		 & \raisebox{-1.5em}{ \rule{0em}{2em}} &&&& \;\;V\;\; \ar@{=}[dddd]   \\
	  &\;\; \widehat{U}\;\;     \ar@<.3ex>@{->>}[dddd]^(.42){\widehat{\pi}}
	     &\rule{0ex}{1ex}  &  \rule{0ex}{1ex}
		 &&& \;\; \widehat{U}\times V
		                   \ar@{->>}[ru]_-{pr_V}
						   \ar@{->>}[lllll]^(.4){pr_{\widehat{U}}}
						   \ar@<.3ex>@{->>}[dddd]^(.4){\widehat{\pi}}
						   \;  \\
    \; {\cal E}
	          \ar@{.>}[rrrd]     \ar@{.>}@/_1ex/[rrdd]    \ar@{.>}@/_2ex/[rddd]    \\
      &&& U^{\!A\!z}\rule{0em}{1.2em}
	                                    \ar[rrrrd]^-{\varphi}
	                                    \ar@{->>}[ld]^-{\sigma_{\varphi}}
										\ar@<.3ex>@{^{(}->}'[uu]^(.7){\dot{\widehat{\iota}}}'[uuu][uuuu]
										\\
      &&\;\; U_{\varphi}\rule{0em}{1.2em}\;\;
	                                   \ar'[rrrr]^(.6){f_{\varphi}}[rrrrr]
	                                   \ar@{_{(}->} [rrrrd]^(.6){\tilde{\varphi}}
                                	   \ar@{->>}[ld]^-{\pi_{\varphi}}
									   \ar@<.3ex>@{^{(}->}'[uuu]^(.7){\widehat{\iota}} [uuuu]
		 &&&& \rule{2ex}{0ex}  & \;\;V\;\;    \\
	  &\;\; U\rule{0em}{1.2em}\;\;
	            \ar@<.3ex>@{^{(}->}[uuuu]^(.58){\widehat{\iota}}
	     &&&&& \;\; U\times V \rule{0em}{1.2em}
		                   \ar@{->>}[ru]_-{pr_V}
						   \ar@{->>}[lllll]^(.4){pr_{U}}
                           \ar@<.3ex>@{^{(}->}[uuuu]^(.6){\widehat{\iota}} 		
			        &\;\;\;\;\;\;. 			
    }
  $$

Further observation that
   $\Ker\dot{\widehat{\iota}}^{\sharp}$ is a nilpotent bi-ideal of
   $M_{r\times r}(C^k(U)[\theta^1,\,\cdots\,,\theta^s]^{\Bbb C})$
 implies then:

\bigskip

\begin{lemma} {\bf [images differ by nilpotency].}
 As $C^k$-subschemes of $V$,
  $$
   (\Image\widehat{\varphi})_{\redscriptsize}\;=\; (\Image{\varphi})_{\redscriptsize}\,.
  $$
\end{lemma}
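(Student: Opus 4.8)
The plan is to exploit the factorization of $\widehat{\varphi}^{\sharp}$ through $\dot{\widehat{\pi}}^{\sharp}$ and $\dot{\widehat{\iota}}^{\sharp}$, together with the elementary fact that a ring-surjection with nilpotent kernel induces an isomorphism on the associated reduced $C^k$-subschemes of the target. Concretely, recall that $\varphi^{\sharp} = \dot{\widehat{\iota}}^{\sharp} \circ \widehat{\varphi}^{\sharp} : C^k(V) \to M_{r\times r}(C^k(U)^{\Bbb C})$, and that $\Image\widehat{\varphi}$ (resp.\ $\Image\varphi$) is the $C^k$-subscheme of $V$ cut out by $\Ker(\widehat{\varphi}^{\sharp})$ (resp.\ $\Ker(\varphi^{\sharp})$), per Definition~4.1.3. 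So the statement is equivalent to the claim that the ideals $\Ker(\widehat{\varphi}^{\sharp})$ and $\Ker(\varphi^{\sharp})$ of $C^k(V)$ have the same radical (equivalently, define the same reduced $C^k$-subscheme of $V$).

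First I would establish the inclusion $\Ker(\widehat{\varphi}^{\sharp}) \subset \Ker(\varphi^{\sharp})$, which is immediate since $\varphi^{\sharp}$ factors through $\widehat{\varphi}^{\sharp}$; hence $\Image\varphi \subset \Image\widehat{\varphi}$ as $C^k$-subschemes, and a fortiori $(\Image\varphi)_{\redscriptsize} \subset (\Image\widehat{\varphi})_{\redscriptsize}$. For the reverse inclusion I would use the key structural input quoted just before the statement: $\Ker\dot{\widehat{\iota}}^{\sharp}$ is a \emph{nilpotent} bi-ideal of $M_{r\times r}(C^k(U)[\theta^1,\,\cdots\,,\theta^s]^{\Bbb C})$ --- it is generated by the odd variables $\theta^{\alpha}$, which satisfy $\theta^{\alpha_1}\cdots\theta^{\alpha_{s+1}} = 0$ by anticommutativity, so $(\Ker\dot{\widehat{\iota}}^{\sharp})^{\,s+1} = 0$. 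Now take $g \in \Ker(\varphi^{\sharp})$. Then $\varphi^{\sharp}(g) = \dot{\widehat{\iota}}^{\sharp}(\widehat{\varphi}^{\sharp}(g)) = 0$, i.e.\ $\widehat{\varphi}^{\sharp}(g) \in \Ker\dot{\widehat{\iota}}^{\sharp}$. Since $\widehat{\varphi}^{\sharp}$ is a ring-homomorphism, $\widehat{\varphi}^{\sharp}(g^{\,s+1}) = (\widehat{\varphi}^{\sharp}(g))^{\,s+1} \in (\Ker\dot{\widehat{\iota}}^{\sharp})^{\,s+1} = 0$, so $g^{\,s+1} \in \Ker(\widehat{\varphi}^{\sharp})$. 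Therefore $g$ lies in the radical of $\Ker(\widehat{\varphi}^{\sharp})$, i.e.\ $\Ker(\varphi^{\sharp}) \subset \sqrt{\Ker(\widehat{\varphi}^{\sharp})}$. Combined with the trivial inclusion $\Ker(\widehat{\varphi}^{\sharp}) \subset \Ker(\varphi^{\sharp}) \subset \sqrt{\Ker(\varphi^{\sharp})}$, this gives $\sqrt{\Ker(\widehat{\varphi}^{\sharp})} = \sqrt{\Ker(\varphi^{\sharp})}$, hence equality of the associated reduced $C^k$-subschemes of $V$, which is exactly $(\Image\widehat{\varphi})_{\redscriptsize} = (\Image\varphi)_{\redscriptsize}$.

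The main (and essentially only) subtlety is verifying that passing to the reduced subscheme in the $C^k$-algebraic setting really is controlled by the radical ideal exactly as in ordinary commutative algebra --- i.e.\ that for a $C^k$-ring $C^k(V)$ and an ideal $I$, the reduced $C^k$-subscheme $(\Image)_{\redscriptsize}$ is cut out by $\sqrt{I}$, which is the ideal of nilpotents in $C^k(V)/I$. This is handled by the paper's own Convention (the $C^k$-rings are finitely generated and germ-determined, so the spectrum/function-ring dictionary holds) together with the Convention bullet defining $Z_{\redscriptsize}$ by modding out all nilpotents in ${\cal O}_Z$; one should also note that $C^k(V)/I$ being nilpotent-free means $I$ is radical, and that $\sqrt{I}/I$ is precisely the nilradical of the quotient. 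I do not expect the globalization to cause trouble: the statement is about the affine local model $\widehat{U}^{\!A\!z}$, and the argument is purely at the level of the two ring-homomorphisms out of $C^k(V)$, so no gluing is needed. Thus the proof is short; the only thing to be careful about is not to claim more than the reduced-level equality --- the full (non-reduced) images $\Image\widehat{\varphi}$ and $\Image\varphi$ genuinely can differ, precisely by the nilpotent data coming from $\Ker\dot{\widehat{\iota}}^{\sharp}$.
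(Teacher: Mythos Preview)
Your proof is correct and follows exactly the route the paper indicates: the paper states the lemma immediately after the one-line remark that $\Ker\dot{\widehat{\iota}}^{\sharp}$ is a nilpotent bi-ideal of $M_{r\times r}(C^k(U)[\theta^1,\ldots,\theta^s]^{\Bbb C})$, offering no further argument. You have simply supplied the standard details (the factorization $\varphi^{\sharp}=\dot{\widehat{\iota}}^{\sharp}\circ\widehat{\varphi}^{\sharp}$ and the radical-ideal computation) that make this observation into a proof, so your approach and the paper's are the same.
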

  
\medskip

\begin{remark} {$[\,$for general ${\cal E}$ and $\widehat{U}$$\,]$.} {\rm
 {To} make the discussion in this subsection more explicit and notationwise simpler,
   we choose
      the complex vector bundle $E$ over $U$ to be trivial (and trivialized)   and
	  the super-$C^k$-manifold $\widehat{U}$ to be of product type $U\times \widehat{p}$,
	  where $\widehat{p}$ is a super-point.
 For general $E$ and $\widehat{U}$,
  $$
   {\cal O}_{\widehat{U}}\; :=\;  \widehat{\cal O}_U \;
      =\;  \mbox{$\bigwedge^{\bullet}{\cal S}$}
  $$
   for some locally free ${\cal O}_U$-module ${\cal S}$
     associated to a vector bundle $S$ on $U$ of rank, say, $s$.
  Let
	\begin{itemize}
	  \item[{\Large $\cdot$}]
        $E^{\vee}$ be the dual complex bundle of $E$   and\\
		$\End_U(E)=E\otimes E^{\vee}$
  		   be the bundle of (complex) endomorphisms of $E$.
    \end{itemize}
 Then
    \begin{itemize}
	 \item[{\Large $\cdot$}]
	  $\widehat{\cal O}_U(U)\;=\;  C^k(\bigwedge^{\bullet}S )$,
	
	 \item[{\Large $\cdot$}]
     $\widehat{\cal E}(U)$ is canonically isomorphic to
	    $C^k(E)\otimes_{C^k(U)}C^k(\bigwedge^{\bullet}S)$, 	
	
     \item[{\Large $\cdot$}]
	 $\widehat{\cal O}_U^{A\!z}(U)$ is canonically isomorphic to
	   $C^k(\End_U(E))\otimes_{C^k(U)}C^k(\bigwedge^{\bullet}S)$.		
    \end{itemize}
  And the argument in this subsection goes through with the following replacements:
  \begin{itemize}
   \item[{\Large $\cdot$}]
     the $C^k(U)[\theta^1,\,\cdots\,\theta^s]$-module	
	 $\,(C^k(U)[\theta^1,\,\cdots\,,\theta^s]^{\Bbb C})^{\oplus r}$   \\[1.2ex]
	$\Longrightarrow\;$
	the $C^k(\bigwedge^{\bullet}S)$-module
	 $\,C^k(E)\otimes_{C^k(U)}C^k(\bigwedge^{\bullet}S)$;
	
   \item[{\Large $\cdot$}]
     the $C^k(U)[\theta^1,\,\cdots\,,\theta^s]$-algebra	
	 $\,M_{r\times r}(C^k(U)[\theta^1,\,\cdots\,\theta^s]^{\Bbb C})$ \\[1.2ex]
	$\Longrightarrow\;$
     the  $C^k(\bigwedge^{\bullet}S)$-algebra
	   $\,C^k(\End_U(E))\otimes_{C^k(U)}C^k(\bigwedge^{\bullet}S)$.
  \end{itemize}
}\end{remark}

\bigskip

\subsection{Differentiable maps from an Azumaya/matrix supermanifold with\\ a fundamental module
                           to a real manifold}
						
In Sec.~4.1 we see that locally the notion of a $C^k$-map from an Azumaya/matrix super-$C^k$-manifold
  is fundamentally the same as the notion of a $C^k$-map from an Azumaya/matrix $C^k$-manifold.
The only difference is the ring involved, which may modify the exact presentation but not the underlying concept.
It follows that
 \begin{itemize}
  \item[{\Large $\cdot$}] {\it
   Gluing from local to global with respect to the topology of the (auxiliary) manifold $X$,
   all the settings/discussions in [L-Y3: Sec.~5.3] (D(11.1))
   can be adapted without work to the current super-case.}
 \end{itemize}
The essential details are given in this subsection for the completeness of discussion.

\bigskip

\subsubsection{Aspect I [fundamental]: Maps as gluing systems of ring-homomorphisms}

The notion of a differentiable map
   $$
     \widehat{\varphi}\;:\;
     (\widehat{X},
         \widehat{\cal O}_X^{\!A\!z}
    	   := \Endsheaf_{\widehat{\cal O}_X^{\,\Bbb C}}(\widehat{\cal E}),
 	     \widehat{\cal E})\;
      \longrightarrow\;  Y
   $$
   from an Azumaya/matrix supermanifold with a fundamental module to a real manifold 	
 follows from the notion of `{\sl morphisms between spaces}' studied in
 [L-Y1: Sec.\ 1.2$\;$ {\sl A noncommutative space as a gluing system of rings}] (D(1)).

\bigskip

\begin{flushleft}
{\bf  The fundamental aspect of $C^k$-maps from Azumaya manifolds}
\end{flushleft}
\begin{ssdefinition}{\bf [gluing system of $C^k$-admissible ring-homomorphisms].}
{\rm
 Let
  \begin{itemize}
    \item[{\Large $\cdot$}]
	 $(X,{\cal O}_X)$ be a $C^k$-manifold, with the structure sheaf
	   ${\cal O}_X$ of $C^k$-functions on $X$,
	
	\item[{\Large $\cdot$}]
	 ${\cal E}$ be a locally free ${\cal O}_X^{\,\Bbb C}$-module of finite rank on $X$,
	
    \item[{\Large $\cdot$}]
     $\widehat{X}$ be a super-$C^k$-manifold with the structure sheaf
	   $\widehat{\cal O}_X:= \bigwedge^{\bullet}{\cal S}$
	    for some locally free ${\cal O}_X$-module ${\cal S}$ of finite rank on $X$,
		
	\item[{\Large $\cdot$}]
	 $\widehat{\cal E}:= {\cal E}\otimes_{{\cal O}_X} \widehat{\cal O}_X$,
	
    \item[{\Large $\cdot$}]	
     $(\widehat{X}^{\!A\!z},\widehat{\cal E})\;
	    :=\;
       (\widehat{X},
            \widehat{\cal O}_X^{\!A\!z}
    	      := \Endsheaf_{\widehat{\cal O}_X^{\,\Bbb C}}(\widehat{\cal E}),
 	     \widehat{\cal E})$
      be an Azumaya/matrix super-$C^k$-manifold with a fundamental module,
	  	
    \item[{\Large $\cdot$}]	
     $(Y,{\cal O}_Y)$ be a $C^k$-manifold,
	  with the structure sheaf ${\cal O}_Y$ of $C^k$-functions on $Y$.
  \end{itemize}
 A {\it (contravariant) gluing system of $C^k$-admissible ring-homomorphisms}
      over ${\Bbb R}\hookrightarrow{\Bbb C}$ related to $(\widehat{X}^{\!A\!z},Y)$
  consists of the following data:
  \begin{itemize}
    \item[{\Large $\cdot$}]
	 ({\it local charts on $\widehat{X}^{\!A\!z}$})\hspace{1.1em}
     an open cover ${\cal U}=\{U_{\alpha}\}_{\alpha\in A}$ on $X$,
	
	\item[{\Large $\cdot$}]
	 ({\it local charts on $Y$})\hspace{2em}
     an open cover ${\cal V}=\{V_{\beta}\}_{\beta\in B}$ on $Y$,
	
	\item[{\Large $\cdot$}]
     a gluing system $\widehat{\Phi}^{\sharp}$ of $C^k$-admissible ring-homomorphisms
       from $\{C^k(V_{\beta})\}_{\beta}$\\
       to $\{C^k(\Endsheaf_{{\cal O}_{U_{\alpha}}^{\,\Bbb C}}
	                                               ({\cal E}_{U_{\alpha}}))
				    \otimes_{C^k(U_{\alpha})}
					C^k(\bigwedge^{\bullet}{\cal S}_{U_{\alpha}})\}_{\alpha}$
       over ${\Bbb R}\hookrightarrow {\Bbb C}$, which consists of
	    \begin{itemize}
	      \item[{\Large $\cdot$}]
	      ({\it specification of a target-chart
		               for each local chart on $\widehat{X}^{\!A\!z}$})\hspace{2em}
	       a map $\sigma:A\rightarrow B$,
	
	      \item[{\Large $\cdot$}]
         ({\it  differentiable map from charts on $\widehat{X}^{\!A\!z}$ to charts on $Y$})\\
	       a $C^k$-admissible ring-homomorphism over ${\Bbb R}\hookrightarrow {\Bbb C}$
	       $$
	         \widehat{\phi}^{\,\sharp}_{\alpha, \sigma(\alpha)}\;:\;
   	         C^k(V_{\sigma(\alpha)}) \;
			 \longrightarrow\;
		     C^k(\Endsheaf_{{\cal O}_{U_{\alpha}}^{\,\Bbb C}}
			                                                                                   ({\cal E}_{U_{\alpha}}))
		        \otimes_{C^k(U_{\alpha})}
			    C^k(\mbox{$\bigwedge$}^{\bullet}{\cal S}_{U_{\alpha}})
	       $$	
	       for each $\alpha\in A$
		\end{itemize}
	   that satisfy   % (cf.\ Lemma~5.3.1.2)
	                         %   % Lemma [determinacy of local from global]
		 \begin{itemize}
      	  \item[{\Large $\cdot$}]
		  ({\it gluing/identification of maps at overlapped charts on $\widehat{X}^{\!A\!z}$})\\
		   for each pair $(\alpha_1$, $\alpha_2)\in A\times A$,
		   \begin{itemize}
		     \item[(G1)]
		      $(\widehat{\phi}_{\alpha, \sigma(\alpha_1)})_{\ast}
		         (\widehat{\cal E}_{U_{\alpha_1}\cap\, U_{\alpha_2}})$
			   is completely supported in
			   $V_{\sigma(\alpha_1)}\cap V_{\sigma(\alpha_2)}
			      \subset V_{\sigma(\alpha_1)}$,
						
             \item[(G2)]		
			 recall the $C^k$-admissible ring-homomorphism over ${\Bbb R}\hookrightarrow {\Bbb C}$	
			  {\footnotesize
	          $$\hspace{-8em}
		        \widehat{\phi}_{\alpha_1\alpha_2,\, \sigma(\alpha_1)\sigma(\alpha_2)}
				                            ^{\,\sharp}\;:\;
				   C^k(V_{\sigma(\alpha_1)}\cap V_{\sigma(\alpha_2)})\;
				     \longrightarrow\;
				   C^k(\Endsheaf_{{\cal O}_{U_{\alpha_1}\cap\, U_{\alpha_2}}^{\,\Bbb C}}
				              ({\cal E}_{U_{\alpha_1}\cap\,U_{\alpha_2}}))	 		
                     \otimes_{C^k(U_{\alpha_1}\cap\, U_{\alpha_2})}
					 C^k(\mbox{$\bigwedge$}^{\bullet}
					              {\cal S}_{U_{\alpha_1}\cap\, U_{\alpha_2}})			
		      $$}% end-footnotesize	
			  induced by $\phi_{\alpha_1,\sigma(\alpha_1)}$,
			then 			
	            $$
		          \widehat{\phi}_{\alpha_1\alpha_2,\,\sigma(\alpha_1)\sigma(\alpha_2)}
				                                  ^{\,\sharp}\;
				   =\; \widehat{\phi}_{\alpha_2\alpha_1,\,\sigma(\alpha_2)\sigma(\alpha_1)}
				                                          ^{\,\sharp}\,.
		        $$
		   \end{itemize}
         \end{itemize}
     \end{itemize}		
} \end{ssdefinition}	
	
\bigskip

\begin{ssdefinition}{\bf [equivalent systems].} {\rm
 A  gluing system $({\cal U}^{\prime},{\cal V}^{\prime},  \widehat{\Phi}^{\prime\sharp})$
   is said to be a
 {\it refinement} of another gluing system $({\cal U},{\cal V},  \widehat{\Phi}^{\sharp})$,
    in notation
	  $({\cal U}^{\prime},{\cal V}^{\prime}, \widehat{\Phi}^{\prime\sharp})
            \preccurlyeq({\cal U},{\cal V}, \widehat{\Phi}^{\sharp})$,	
  if
     \begin{itemize}	
	  \item[{\Large $\cdot$}]
	   ${\cal U}^{\prime}
	       = \{U^{\prime}_{\alpha^{\prime}}\}_{\alpha^{\prime}\in A^{\prime}}$
		is a refinement of  ${\cal U}= \{U_{\alpha}\}_{\alpha\in A}$,
		with a map $\tau:A^{\prime}\rightarrow A$ that labels inclusions
		  $U^{\prime}_{\alpha^{\prime}}\hookrightarrow U_{\tau(\alpha^{\prime})}$; 		
       similarly,		
	   ${\cal V}^{\prime}
	       = \{V^{\prime}_{\beta^{\prime}}\}_{\beta^{\prime}\in B^{\prime}}$
		is a refinement of ${\cal V}= \{V_{\beta}\}_{\beta\in B}$,
		 with a map $\upsilon:B^{\prime}\rightarrow B$  that labels inclusions
		   $V^{\prime}_{\beta^{\prime}}
		        \hookrightarrow V_{\upsilon(\beta^{\prime})}$;
        the maps between the index sets $A$, $B$, $A^{\prime}$, and $B^{\prime}$ satisfy
		  the commutative diagram
		  $$
            \xymatrix{
			 &  A^{\prime} \ar[rr]^-{\sigma^{\prime}} \ar[d]_-{\tau}
			     && B^{\prime}\ar[d]^-{\upsilon} \\
		     &  A \ar[rr]^-{\sigma}  && B  &. 			
			}
          $$		
							
      \item[{\Large $\cdot$}]		 % (cf.\ Lemma~5.3.1.2)
	                                                         %     % Lemma  [determinacy of local from global]
	    the $C^k$-admissible ring-homomorphism
		  $$
		    \widehat{\phi}^{\prime\sharp}
			                               _{\alpha^{\prime},\, \sigma^{\prime}(\alpha^{\prime})}\; :\;
		     C^k(V_{\sigma^{\prime}(\alpha^{\prime})})\;  \longrightarrow\;
			  C^k(\Endsheaf_{{\cal O}_{U_{\alpha^{\prime}}}^{\,\Bbb C}}
			           ({\cal E}_{U_{\alpha^{\prime}}})  )
				\otimes_{C^k(U_{\alpha^{\prime}})}
				C^k(\mbox{$\bigwedge$}^{\bullet}{\cal S}_{U_{\alpha^{\prime}}})
		  $$
			in $\widehat{\Phi}^{\prime\sharp}$
         coincides with the $C^k$-admissible ring-homomorphism
		   $$
    		  C^k(V_{\sigma^{\prime}(\alpha^{\prime})}) \;  \longrightarrow\;
			  C^k(\Endsheaf_{{\cal O}_{U_{\alpha^{\prime}}}^{\,\Bbb C}}
			           ({\cal E}_{U_{\alpha^{\prime}}})  )
			      \otimes_{C^k(U_{\alpha^{\prime}})}
				 C^k(\mbox{$\bigwedge$}^{\bullet}{\cal S}_{U_{\alpha^{\prime}}})
		   $$
		    induced by
	       $$\hspace{-1.2em}
		    \mbox{\footnotesize
		     $\widehat{\phi}^{\,\sharp}
		       _{\tau(\alpha^{\prime}),\,\sigma(\tau(\alpha^{\prime}))}  \,
			   =\,
		       \widehat{\phi}^{\,\sharp}
		        _{\tau(\alpha^{\prime}),\, \upsilon(\sigma^{\prime}(\alpha^{\prime}))}\; :\;
		     C^k(V_{\upsilon(\sigma^{\prime}(\alpha^{\prime}))}) \; \longrightarrow\;
			  C^k(\Endsheaf_{{\cal O}_{U_{\tau(\alpha^{\prime}) }}^{\,\Bbb C}}
			           ({\cal E}_{U_{\tau(\alpha^{\prime}) }})  )
			    \otimes_{C^k(U_{\tau(\alpha^{\prime})})}
			   C^k(\bigwedge^{\bullet}{\cal S}_{U_{\tau(\alpha^{\prime})}})$
			  }	
		   $$
			in $\widehat{\Phi}^{\sharp}$ from the inclusions
			 $U_{\alpha^{\prime}}\hookrightarrow U_{\tau(\alpha^{\prime})}$		 and
			 $V_{\sigma^{\prime}(\alpha^{\prime})} \hookrightarrow
			     V_{\upsilon(\sigma^{\prime}(\alpha^{\prime}))  }$.
     \end{itemize}
  Two gluing systems
     $({\cal U}_1,{\cal V}_1, \widehat{\Phi}_1^{\sharp})$
	  and $({\cal U}_2,{\cal V}_2,  \widehat{\Phi}_2^{\sharp})$	
     are said to be {\it equivalent} if they have a common refinement. 	
}\end{ssdefinition}
  
\bigskip

\begin{ssdefinition}{\bf [differentiable map as equivalence class of gluing systems].}
{\rm
 We denote an equivalence class of contravariant gluing systems of $C^k$-admissible ring-homomorphisms
 compactly as
   $$
     \varphi^{\sharp}\;:\; {\cal O}_Y\; \longrightarrow \;
	   \widehat{\cal O}_X^{A\!z}\,
	      :=\, \Endsheaf_{\widehat{\cal O}_X^{\,\Bbb C}}(\widehat{\cal E})\,.
   $$
  This defines a {\it $k$-times differentiable map} (i.e., {\it $C^k$-map})
   $$
     \widehat{\varphi}\;:\;
	  (\widehat{X},
	    \widehat{\cal O}_X^{A\!z}
		     :=\Endsheaf_{\widehat{\cal O}_X^{\,\Bbb C}}(\widehat{\cal E}),
		  \widehat{\cal E})\;
     \longrightarrow\;  Y\,.
   $$
}\end{ssdefinition}

\bigskip

The $C^k$-admissible ring-homomorphism
   $$
	  \widehat{\phi}^{\,\sharp}_{\alpha, \sigma(\alpha)}\;:\;
   	    C^k(V_{\sigma(\alpha)}) \;\longrightarrow\;
		C^k(\Endsheaf_{{\cal O}_{U_{\alpha}}^{\,\Bbb C}}({\cal E}_{U_{\alpha}}))
		 \otimes_{C^k(U_{\alpha})}
		C^k(\mbox{$\bigwedge$}^{\bullet}{\cal S}_{U_{\alpha}})
   $$
   renders
    $C^k({\cal E}_{U_{\alpha}})
	   \otimes_{C^k(U_{\alpha})}
	   C^k(\bigwedge^{\bullet}{\cal S}_{U_{\alpha}})$
   a $C^k(V_{\sigma(\alpha)})$-module.
Passing to germs of $C^k$-sections (with respect to the topology of $X$),
 this defines a  sheaf of ${\cal O}_{V_{\sigma(\alpha)}}$-modules,
 denoted by
  $(\widehat{\phi}_{\alpha,\,\sigma(\alpha)})_{\ast}
        (\widehat{\cal E}_{U_{\alpha}})$.
The following lemma/definition follows by construction:

\bigskip

\begin{sslemma-definition}
 {\bf [push-forward
            $\widehat{\varphi}_{\ast}(\widehat{\cal E})$ under $\widehat{\varphi}$].}
{\rm
  {\it The collection of sheaves on local charts
    $\{(\widehat{\phi}_{\alpha,\,\sigma(\alpha)})_{\ast}
	       (\widehat{\cal E}_{U_{\alpha}})\}_{\alpha\in A}$
	 glue to a sheaf of ${\cal O}_Y$-modules on $Y$.
   It is independent of the contravariant gluing system of
      $C^k$-admissible ring-homomorphisms over ${\Bbb R}\hookrightarrow {\Bbb C}$
     that represents $\widehat{\varphi}$.}
    It is called the {\it push-forward of $\widehat{\cal E}$ under $\widehat{\varphi}$} and
	  is denoted by $\widehat{\varphi}_{\ast}(\widehat{\cal E})$.
}\end{sslemma-definition}

\medskip

\begin{sslemma-definition}
 {\bf [surrogate of $\widehat{X}^{\!A\!z}$ specified by $\widehat{\varphi}$].} {\rm
   (Cf.~Definition~4.1.5.)
         % Definition [surrogate of $\widehat{U}^{\!A!z}$ specified by $\widehat{\varphi}$]
   {\it The collection of local surrogates $\widehat{U}_{\varphi_{\alpha,\sigma(\alpha)}}$
              of $\widehat{U}_{\alpha}^{\!A\!z}$
			  specified by $\widehat{\varphi}_{\alpha,\sigma(\alpha)}$
	  glue to a super-$C^k$-scheme over $\widehat{X}$.
   It is independent of the contravariant gluing system of
      $C^k$-admissible ring-homomorphisms over ${\Bbb R}\hookrightarrow {\Bbb C}$
     that represents $\widehat{\varphi}$.}
    It is called the {\it surrogate of $\widehat{X}^{\!A\!z}$ specified by $\widehat{\varphi}$};
	 in notation, $\widehat{X}_{\widehat{\varphi}}$.
}\end{sslemma-definition}

\bigskip

Readers are referred to [L-Y3: Remark 5.3.1.8] (D(11.1)),
   with  a straightforward adaptation to the current super-case,
 for three conceptually important remarks on Definition~4.2.1.3.

\bigskip

\begin{flushleft}
{\bf The equivalent affine setting}
\end{flushleft}
Recall that the $C^k$-manifolds  $(X,{\cal O}_X)$ and $(Y,{\cal O}_Y)$
 are affine $C^k$-schemes
 associated respectively to the $C^k$-rings $C^k(X)$ and $C^k(Y)$
 in the context of $C^k$-algebraic geometry.
Observe also that
 \begin{itemize}
  \item[{\Large $\cdot$}]
   {\it As an ${\cal O}_X$-module,
    the sheaf $\widehat{\cal O}_X^{A\!z}$ of ${\cal O}_X^{\,\Bbb C}$-algebras is quasi-coherent.
	Explicitly, it is the quasi-coherent sheaf on the affine $C^k$-scheme $(X,{\cal O}_X)$ associated to the
	$C^k(X)^{\,\Bbb C}$-module
	  $C^k(\Endsheaf_{{\cal O}_X^{\,\Bbb C}}({\cal E}))
	     \otimes_{C^k(X)}C^k(\bigwedge^{\bullet}{\cal S})$.}
 \end{itemize}
This implies that
  
%%%%%%%%----(  RE: D(11.1)  )-------------------------------------------------------------------------------------
% For a $C^k$-manifold $Y$, unlike in the case of projective schemes,
% the existence of  a partition of unity subordinate to any open covering of $Y$
%  implies that $C^k(Y)$ not only separates points on $Y$, it actually generates any germ of $C^k$-functions
%  on any open subset of $Y$.
% In other words, $Y$ is an affine manifold in the sense of $C^k$-algebraic geometry.
% Similarly,  $X^{\!A\!z}$ can be regarded as an affine Azumaya manifold.
% It follows that Definition~5.3.1.5 can be pruned to the following equivalent form:
%                    % Definition  [differentiable map as equivalence class of gluing systems]
%%%%%%%%%%%%%%%%-------------------------------------------------------------------------------------------------
 
\bigskip

\begin{sslemma-definition}
{\bf [$C^k$-map in affine setting].}
 The equivalence class
  $$
    \widehat{\varphi}^{\sharp}\; :\;  {\cal O}_Y\;
	    \longrightarrow\;
		\widehat{\cal O}_X^{A\!z}
		:=  \Endsheaf_{\widehat{\cal O}_X^{\,\Bbb C}}(\widehat{\cal E})
  $$
   of gluing systems of
      $C^k$-admissible ring-homomorphisms over ${\Bbb R}\hookrightarrow{\Bbb C}$
   in Definition~4.2.1.3 defines a $C^k$-admissible ring-homomorphism,
    still denoted by $\widehat{\varphi}^{\sharp}$,
   % Definition  [differentiable map as equivalence class of gluing systems]
   $$
     \widehat{\varphi}^{\sharp}\;:\; C^k(Y)\; \longrightarrow\;
	      C^k(\widehat{X}^{\!A\!z})
		     := C^k(\Endsheaf_{{\cal O}_X^{\,\Bbb C}}({\cal E}))
			       \otimes_{C^k(X)}
				   C^k(\mbox{$\bigwedge$}^{\bullet}{\cal S})\,		                                      
   $$
   over ${\Bbb R}\hookrightarrow{\Bbb C}$.
  Conversely, any $C^k$-admissible ring-homomorphism
      $\widehat{\varphi}^{\sharp}: C^k(Y)\rightarrow  C^k(\widehat{X}^{\!A\!z})$
	     over ${\Bbb R}\rightarrow {\Bbb C}$
	 defines an equivalence class
	 $\widehat{\varphi}^{\sharp}: {\cal O}_Y \rightarrow  \widehat{\cal O}_X^{A\!z}$
	  of contravariant gluing systems of $C^k$-admissible ring-homomorphisms
	  over ${\Bbb R}\hookrightarrow {\Bbb C}$ associated to $(\widehat{X}^{\!A\!z},Y)$.
  It follows that the notion of a $C^k$-map
	$$
      \widehat{\varphi}\;:\;
	    (X,
		   \widehat{\cal O}_X^{A\!z}
		         :=\Endsheaf_{\widehat{\cal O}_X^{\,\Bbb C}}(\widehat{\cal E}),
		    \widehat{\cal E})\;
     \longrightarrow\;  Y\,.
    $$
     in Definition~4.2.1.3
	 % Definition [differentiable map as equivalence class of gluing systems]
     can be equivalently defined by a $C^k$-admissible ring-homomorphism\\
      $\widehat{\varphi}^{\sharp}: C^k(Y)\rightarrow  C^k(\widehat{X}^{\!A\!z})$
     over ${\Bbb R}\hookrightarrow{\Bbb C}$.	
\end{sslemma-definition}

%%%%%%%----------(   RE: D(11.1) )----------------------------------------------------------------------------------------
% \begin{proof}
%  A given $C^k$-admissible $\varphi^{\sharp}:{\cal O}_Y \rightarrow {\cal O}_X^{A\!z}$
%    induces a $C^k$-admissible ring-homomorphism
%      $\varphi^{\sharp}:{\cal O}_Y(Y)\rightarrow {\cal O}_X^{A\!z}(X)$
% 	  over ${\Bbb R}\hookrightarrow {\Bbb C}$,
% 	which is exactly a $C^k$-admissible ring-homomorphism
%      $\varphi^{\sharp}:C^k(Y)\rightarrow  C^k(X^{\!A\!z})$
% 	  over ${\Bbb R}\hookrightarrow {\Bbb C}$.
%  Conversely, a given $C^k$-admissible ring-homomorphism
%      $\varphi^{\sharp}:C^k(Y)\rightarrow  C^k(X^{\!A\!z})$
% 	  over ${\Bbb R}\hookrightarrow {\Bbb C}$
% 	 renders $C^k({\cal E})$ a $C^k(Y)$-module.
%   By passing to germs, this defines a sheaf ${\cal F}$ of ${\cal O}_Y$-modules on $Y$,
%    which can be used to recover a gluing system
%    $\varphi^{\sharp}:{\cal O}_Y \rightarrow {\cal O}_X^{A\!z}$
%    of $C^k$-admissible ring-homomorphisms over ${\Bbb R}\hookrightarrow{\Bbb C}$.
%  The corresponding differentiable map $\varphi:(X^{\!A\!z},{\cal E})\rightarrow Y$
%    renders ${\cal F}$ the push-forward $\varphi_{\ast}({\cal E})$.
%
% \end{proof}
%%%%%%%%%%%%%%%%%%%-------------------------------------------------------------------------------------------- 
	
\bigskip

\begin{ssremark} {$[\,$when $\widehat{X}=X$$\,]$.} {\rm
 For ${\cal S}=0$ the zero-${\cal O}_X$-module, $\widehat{X}=X$;   and
 all the settings/objects/statements in this subsection for the super-$C^k$-case in the current subsection
  reduce to the corresponding settings/objects/statements in
  [L-Y3: Sec.~5.3.1] (D(11.1)) for the $C^k$ case.
 And hence similarly,
  Sec.~4.2.2, Sec.~4.2.3, Sec.~4.2.4 of the current note
  to [L-Y3: Sec.~5.3.2, Sec.~5.3.3, Sec.~5.3.4] (D(11.1)).
}\end{ssremark}

%----------------------------------------------------------------
% \bigskip
%
% \begin{flushleft}
% {\bf Generalization to locally-finitely-presentable singular spaces}
% \end{flushleft}
%
% \bigskip
%
%  \noindent $\bullet$
%  ??????????????????.
%
% \bigskip
%===----------------------=====----------------------------------------------

\bigskip

\begin{flushleft}
{\bf The induced $C^k$-map $\varphi:(X^{\!A\!z},{\cal E})\rightarrow Y$}
\end{flushleft}
Continuing the notation in Definition~4.2.1.1.
                                         % Definition [gluing system of $C^k$-admissible ring-homomorphisms]
Let
 $$
  \begin{array}{l}
    \left(\rule{0em}{1.2em}\right.
	{\cal U}= \{U_{\alpha}\}_{\alpha\in A}\,,\;
          {\cal V}= \{V_{\beta}\}_{\beta\in B}\,,	             \\[1.2ex]
	   \hspace{2.4em}	
	   \widehat{\Phi}^{\,\sharp}
		  = \left(\rule{0em}{.9em}\right.\sigma:A\rightarrow B\,,\;
		         \{
				   \widehat{\phi}^{\,\sharp}_{\alpha,\sigma(\alpha)}:
				       C^k(V_{\sigma(\alpha)})
					   \rightarrow
					  C^k(\Endsheaf_{{\cal O}_{U_{\alpha}}^{\,\Bbb C}}({\cal E}))
                        \otimes_{C^k(U_{\alpha})}					
                         C^k(\bigwedge^{\bullet}{\cal S}_{U_{\alpha}})
						 \}_{\alpha\in A}\left.\rule{0em}{0.9em}\right)
     \left.\rule{0em}{1.2em}\right )
  \end{array}						
 $$
 be a contravariant gluing system of $C^k$-admissible ring-homomorphisms
 over ${\Bbb R}\hookrightarrow{\Bbb C}$
 associated to $(\widehat{X}^{\!A\!z},Y)$.
Recall the surjective ring-homomorphism
 $$
    \dot{\widehat{\iota}}^{\,\sharp}\; : \;
	  C^k(\Endsheaf_{{\cal O}_{U_{\alpha}}^{\,\Bbb C}}({\cal E}))
	   \otimes_{C^k(U_{\alpha})}
	   C^k(\mbox{$\bigwedge$}^{\bullet}{\cal S}_{U_{\alpha}})\;
	  \longrightarrow\;
      C^k(\Endsheaf_{{\cal O}_{U_{\alpha}}^{\,\Bbb C}}({\cal E}))	
 $$
 for every $\alpha\in A$  and
let
 $$
   \phi^{\,\sharp}_{\alpha,\sigma(\alpha)}\
    :=\;       \dot{\widehat{\iota}}\,
	                \circ\, \widehat{\phi}^{\,\sharp}_{\alpha,\sigma(\alpha)}\,.
 $$
Then,
 $$
  \begin{array}{l}
    \left(\rule{0em}{1.2em}\right.
	{\cal U}= \{U_{\alpha}\}_{\alpha\in A}\,,\;
          {\cal V}= \{V_{\beta}\}_{\beta\in B}\,,	             \\[1.2ex]
	   \hspace{2.4em}	
	    \Phi^{\,\sharp}
		  = \left(\rule{0em}{.9em}\right.\sigma:A\rightarrow B\,,\;
		         \{
				   \phi^{\,\sharp}_{\alpha,\sigma(\alpha)}:
				       C^k(V_{\sigma(\alpha)})
					   \rightarrow
					  C^k(\Endsheaf_{{\cal O}_{U_{\alpha}}^{\,\Bbb C}}({\cal E}))
						 \}_{\alpha\in A}\left.\rule{0em}{0.9em}\right)
     \left.\rule{0em}{1.2em}\right )
  \end{array}						
 $$
 becomes a contravariant gluing system of $C^k$-admissible ring-homomorphisms
 over ${\Bbb R}\hookrightarrow{\Bbb C}$
 associated to $(X^{\!A\!z},Y)$.
Furthermore,
  if $({\cal U}_1,{\cal V}_1, \widehat{\Phi}_1^{\,\sharp})$  and
     $({\cal U}_2,{\cal V}_2, \widehat{\Phi}_2^{\,\sharp})$
	 are equivalent,
then, so are their associated gluing systems
  $({\cal U}_1,{\cal V}_1, \Phi_1^{\,\sharp})$  and
  $({\cal U}_2,{\cal V}_2, \Phi_2^{\,\sharp})$. 	
It follows that
 
\bigskip

\begin{ssproposition} {\bf [induced $C^k$-map $\varphi:(X^{\!A\!z},{\cal E})\rightarrow Y$].}
 $\widehat{\varphi}^{\,\sharp}:{\cal O}_Y  \rightarrow  \widehat{\cal O}_X^{A\!z}$
   defines an accompanying $\varphi^{\,\sharp}:{\cal O}_Y \rightarrow {\cal O}_X^{A\!z}$
   through the post-composition with
   $\dot{\widehat{\iota}}^{\,\sharp}:
       \widehat{\cal O}_X^{A\!z}\rightarrow {\cal O}_X^{A\!z}$;
   that is, a commutative diagram
   $$
     \xymatrix{
      &  \widehat{\cal O}_X^{A\!z}  \ar@{->>}[d]_-{\dot{\widehat{\iota}}^{\,\sharp}}
	       &&&  {\cal O}_Y
		                 \ar[lll]_-{\widehat{\varphi}^{\,\sharp}} \ar[llld]^-{\varphi^{\,\sharp}}\\
      & {\cal O}_X^{A\!z}&&&	&,
     }
   $$
 or, equivalently, a commutative diagram of $C^k$-maps
   $$
    \xymatrix{
     &  \widehat{X}^{\!A\!z}\ar[rrr]^-{\widehat{\varphi}}    &&&  Y    \\
     & X^{\!A\!z}\rule{0em}{1.2em}
           	 \ar@{^{(}->}[u]^-{\dot{\widehat{\iota}}}
	         \ar[urrr]_-{\varphi}  &&&	&,
    }
  $$
whose full detail is given in the commutative diagram below:
(Cf.\ Diagrams after Lemma~4.1.12.)
       % Lemma [induced $C^k$-admissible ring-homomorphism]

$$
   \xymatrix{
    \; \widehat{\cal E}=\widehat{\pi}^{\ast}{\cal E}
	          \ar@{.>}[rrrd]     \ar@{.>}@/_1ex/[rrdd]    \ar@{.>}@/_2ex/[rddd]    \\
      &&& \widehat{X}^{\!A\!z}                            	
	                                    \ar[rrrrd]^-{\widehat{\varphi}}
	                                    \ar@{->>}[ld]_-{\sigma_{\widehat{\varphi}}}
										\ar@<.3ex>@{->>}'[d]'[dd][dddd]^(.3){\dot{\widehat{\pi}}}
										\\
      &&\;\; \widehat{X}_{\widehat{\varphi}}\;\;
	                                   \ar[rrrrr]^(.43){f_{\widehat{\varphi}}}
	                                   \ar@{_{(}->} [rrrrd]^(.6){\tilde{\widehat{\varphi}}}
                                	   \ar@{->>}[ld]_-{\pi_{\widehat{\varphi}}}
									   \ar@<.3ex>@{->>}'[d][dddd]^(.3){\widehat{\pi}}
		 & \raisebox{-1.5em}{ \rule{0em}{2em}} &&&& \;\;  Y \;\; \ar@{=}[dddd]   \\
	  &\;\; \widehat{X}\;\;     \ar@<.3ex>@{->>}[dddd]^(.42){\widehat{\pi}}
	     &\rule{0ex}{1ex}  &  \rule{0ex}{1ex}
		 &&& \;\; \widehat{X}\times Y
		                   \ar@{->>}[ru]_-{pr_Y}
						   \ar@{->>}[lllll]^(.4){pr_{\widehat{X}}}
						   \ar@<.3ex>@{->>}[dddd]^(.4){\widehat{\pi}}
						   \;  \\
    \; {\cal E}
	          \ar@{.>}[rrrd]     \ar@{.>}@/_1ex/[rrdd]    \ar@{.>}@/_2ex/[rddd]    \\
      &&& X^{\!A\!z}\rule{0em}{1.2em}
	                                    \ar[rrrrd]^-{\varphi}
	                                    \ar@{->>}[ld]^-{\sigma_{\varphi}}
										\ar@<.3ex>@{^{(}->}'[uu]^(.7){\dot{\widehat{\iota}}}'[uuu][uuuu]
										\\
      &&\;\; X_{\varphi}\rule{0em}{1.2em}\;\;
	                                   \ar'[rrrr]^(.6){f_{\varphi}}[rrrrr]
	                                   \ar@{_{(}->} [rrrrd]^(.6){\tilde{\varphi}}
                                	   \ar@{->>}[ld]^-{\pi_{\varphi}}
									   \ar@<.3ex>@{^{(}->}'[uuu]^(.7){\widehat{\iota}} [uuuu]
		 &&&& \rule{2ex}{0ex}  & \;\; Y\;\;    \\
	  &\;\; X\rule{0em}{1.2em}\;\;
	            \ar@<.3ex>@{^{(}->}[uuuu]^(.58){\widehat{\iota}}
	     &&&&& \;\; X\times Y \rule{0em}{1.2em}
		                   \ar@{->>}[ru]_-{pr_Y}
						   \ar@{->>}[lllll]^(.4){pr_X}
                           \ar@<.3ex>@{^{(}->}[uuuu]^(.6){\widehat{\iota}} 		
			        &\;\;\;\;\;\;. 			
    }
  $$
\end{ssproposition}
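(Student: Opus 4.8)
The plan is to make rigorous the reduction already sketched in the paragraph preceding the statement. Fix a contravariant gluing system $({\cal U},{\cal V},\widehat{\Phi}^{\,\sharp})$ of $C^k$-admissible ring-homomorphisms over ${\Bbb R}\hookrightarrow{\Bbb C}$ representing $\widehat{\varphi}$, and for each $\alpha\in A$ set $\phi^{\,\sharp}_{\alpha,\sigma(\alpha)}:=\dot{\widehat{\iota}}^{\,\sharp}\circ\widehat{\phi}^{\,\sharp}_{\alpha,\sigma(\alpha)}$, where $\dot{\widehat{\iota}}^{\,\sharp}:C^k(\Endsheaf_{{\cal O}_{U_\alpha}^{\,\Bbb C}}({\cal E}))\otimes_{C^k(U_\alpha)}C^k(\bigwedge^{\bullet}{\cal S}_{U_\alpha})\twoheadrightarrow C^k(\Endsheaf_{{\cal O}_{U_\alpha}^{\,\Bbb C}}({\cal E}))$ is the built-in (super-)$C^k$-ring quotient of the matrix super-$C^k$-ring. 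The first step is to observe that each $\phi^{\,\sharp}_{\alpha,\sigma(\alpha)}$ is again $C^k$-admissible over ${\Bbb R}\hookrightarrow{\Bbb C}$: this is exactly the local statement Lemma~4.1.12 applied on the chart $U_\alpha$, whose proof rests on the fact that $\Ker\dot{\widehat{\iota}}^{\,\sharp}$ is a nilpotent bi-ideal, so the image subscheme and the surrogate produced by $\phi^{\,\sharp}_{\alpha,\sigma(\alpha)}$ are obtained from those of $\widehat{\phi}^{\,\sharp}_{\alpha,\sigma(\alpha)}$ by modding out a super-$C^k$-normal (in fact nil) ideal, and the admissibility condition descends along such a quotient.

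Second, I would check that the gluing data survives. Condition (G1) concerns supports only; since push-forward along the surjection $\dot{\widehat{\iota}}^{\,\sharp}$ can only shrink scheme-theoretic supports, $\Supp\big((\phi_{\alpha_1,\sigma(\alpha_1)})_{\ast}({\cal E}_{U_{\alpha_1}\cap U_{\alpha_2}})\big)$ is contained in $\Supp\big((\widehat{\phi}_{\alpha_1,\sigma(\alpha_1)})_{\ast}(\widehat{\cal E}_{U_{\alpha_1}\cap U_{\alpha_2}})\big)$, hence in $V_{\sigma(\alpha_1)}\cap V_{\sigma(\alpha_2)}$. Condition (G2) and the preservation of equivalence both come from the naturality of $\dot{\widehat{\iota}}^{\,\sharp}$ in the chart $U_\alpha$ — the structure quotients on overlapping charts, and on a common refinement, are compatible with the restriction maps — so post-composing the identity $\widehat{\phi}_{\alpha_1\alpha_2,\sigma(\alpha_1)\sigma(\alpha_2)}^{\,\sharp}=\widehat{\phi}_{\alpha_2\alpha_1,\sigma(\alpha_2)\sigma(\alpha_1)}^{\,\sharp}$ with $\dot{\widehat{\iota}}^{\,\sharp}$ over $U_{\alpha_1}\cap U_{\alpha_2}$ yields the corresponding identity for the $\phi$'s, and the analogous compatibility over a common refinement shows that equivalent representatives of $\widehat{\varphi}$ give equivalent gluing systems. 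Hence $({\cal U},{\cal V},\Phi^{\,\sharp})$ is a well-defined (up to equivalence) contravariant gluing system associated to $(X^{\!A\!z},Y)$, i.e.\ a $C^k$-map $\varphi:(X^{\!A\!z},{\cal E})\to Y$ depending only on $\widehat{\varphi}$; passing to global sections in the equivalent affine description of Sec.~4.2.1 then produces the $C^k$-admissible ring-homomorphism $\varphi^{\,\sharp}:C^k(Y)\to C^k(X^{\!A\!z})$ with $\varphi^{\,\sharp}=\dot{\widehat{\iota}}^{\,\sharp}\circ\widehat{\varphi}^{\,\sharp}$, which is precisely the stated commutative triangle ${\cal O}_Y\to\widehat{\cal O}_X^{A\!z}\twoheadrightarrow{\cal O}_X^{A\!z}$.

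Finally I would assemble the large commutative cube. On each chart $U_\alpha$, the diagram of Lemma~4.1.12 relates the bottom face (surrogate, graph, and push-forward data for $\varphi_{\alpha,\sigma(\alpha)}$) to the top face (the same data for $\widehat{\varphi}_{\alpha,\sigma(\alpha)}$) via the built-in morphisms $\dot{\widehat{\iota}}$, $\widehat{\iota}$, $\widehat{\pi}$ of the (matrix) super-$C^k$-manifolds and the induced maps of surrogates; combined with Lemma~4.1.13 (images differing by nilpotency) this is the chart-level version of the asserted diagram. Because every vertical and horizontal arrow in the cube is produced by the functorial constructions of Sec.~4.1 (surrogate, graph, push-forward of the fundamental module) and these are compatible with the gluing and refinement data verified in the second step, the local cubes glue along ${\cal U}$ to the global one, yielding the full commutative diagram displayed in the statement. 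I expect the only genuinely substantive point to be the preservation of $C^k$-admissibility under post-composition with $\dot{\widehat{\iota}}^{\,\sharp}$ together with the matching of the super-$C^k$-scheme structures on $\widehat{X}_{\widehat{\varphi}}$ and $X_{\varphi}$; everything else is formal — if lengthy — bookkeeping with gluing systems, and even this point is already supplied by Lemma~4.1.12 and the nilpotency of $\Ker\dot{\widehat{\iota}}^{\,\sharp}$.
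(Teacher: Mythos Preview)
Your proposal is correct and follows essentially the same approach as the paper: the paper's argument is in fact the short paragraph immediately preceding the proposition, which does exactly what you describe---post-compose each local $\widehat{\phi}^{\,\sharp}_{\alpha,\sigma(\alpha)}$ with $\dot{\widehat{\iota}}^{\,\sharp}$, invoke Lemma~4.1.12 for local admissibility, assert that the result is again a gluing system and that equivalence is preserved, and then state the proposition as the conclusion. Your write-up simply makes explicit the verification of (G1), (G2), equivalence-independence, and the gluing of the local cubes, all of which the paper leaves to the reader; one small inaccuracy is that the admissibility in Lemma~4.1.12 is ``by construction'' from the fact that $\dot{\widehat{\iota}}^{\,\sharp}$ is a super-$C^k$-ring quotient, not from nilpotency of its kernel---the nilpotency is what drives Lemma~4.1.13 and Corollary~4.2.1.9 instead.
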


\bigskip

Since $\Ker\dot{\widehat{\iota}}^{\,\sharp}$
  is a nilpotent ideal sheaf of $\widehat{\cal O}_X^{A\!z}$ on $\widehat{X}^{A\!z}$,
one has

\bigskip

\begin{sscorollary}
 As $C^k$-subschemes of $Y$,
 $$
  (\Image\widehat{\varphi})_{\redscriptsize}\;
    =\; (\Image{\varphi})_{\redscriptsize}\,.
 $$
\end{sscorollary}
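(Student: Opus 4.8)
The proof should follow almost immediately from the setup. The plan is to observe that $\varphi^{\,\sharp}$ is defined as $\dot{\widehat{\iota}}^{\,\sharp}\circ\widehat{\varphi}^{\,\sharp}$ (locally $\phi^{\,\sharp}_{\alpha,\sigma(\alpha)}=\dot{\widehat{\iota}}\circ\widehat{\phi}^{\,\sharp}_{\alpha,\sigma(\alpha)}$), so that the two ideals $\Ker(\widehat{\varphi}^{\,\sharp})$ and $\Ker(\varphi^{\,\sharp})$ of $C^k(Y)$ (equivalently, of ${\cal O}_Y$) differ only by the contribution of $\Ker\dot{\widehat{\iota}}^{\,\sharp}$. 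The key point, recorded in the sentence preceding the corollary, is that $\Ker\dot{\widehat{\iota}}^{\,\sharp}$ is a \emph{nilpotent} ideal sheaf of $\widehat{\cal O}_X^{A\!z}$ on $\widehat{X}^{A\!z}$: indeed $\widehat{\cal O}_X^{A\!z}\simeq {\cal O}_X^{A\!z}\otimes_{{\cal O}_X}\widehat{\cal O}_X$ and $\Ker\dot{\widehat{\iota}}^{\,\sharp}= {\cal O}_X^{A\!z}\otimes_{{\cal O}_X}\widehat{\cal I}_X$, where $\widehat{\cal I}_X=\bigwedge^{\ge 1}{\cal S}$ is the nil-radical of $\widehat{\cal O}_X$ (it is generated by the odd variables $\theta^\alpha$, which square to zero and are finite in number, so $(\widehat{\cal I}_X)^{s+1}=0$).

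Concretely I would argue as follows. First, from $\varphi^{\,\sharp}=\dot{\widehat{\iota}}^{\,\sharp}\circ\widehat{\varphi}^{\,\sharp}$ and the surjectivity of $\dot{\widehat{\iota}}^{\,\sharp}$, one has the inclusion of ideal sheaves $\Ker(\widehat{\varphi}^{\,\sharp})\subset\Ker(\varphi^{\,\sharp})$ in ${\cal O}_Y$; hence $\Image\widehat{\varphi}\supset\Image\varphi$ as $C^k$-subschemes of $Y$, so a fortiori $(\Image\widehat{\varphi})_{\redscriptsize}\supset(\Image\varphi)_{\redscriptsize}$. Second, for the reverse inclusion, take a local section $g$ of $\Ker(\varphi^{\,\sharp})$; then $\widehat{\varphi}^{\,\sharp}(g)$ lies in $\Ker\dot{\widehat{\iota}}^{\,\sharp}$, which is nilpotent, say $(\widehat{\varphi}^{\,\sharp}(g))^{N}=0$ for $N=s+1$. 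Since $\widehat{\varphi}^{\,\sharp}$ is a ring-homomorphism, $\widehat{\varphi}^{\,\sharp}(g^{N})=(\widehat{\varphi}^{\,\sharp}(g))^{N}=0$, so $g^{N}\in\Ker(\widehat{\varphi}^{\,\sharp})$. This shows $\Ker(\varphi^{\,\sharp})\subset\sqrt{\Ker(\widehat{\varphi}^{\,\sharp})}$, and combined with the first inclusion we get $\sqrt{\Ker(\widehat{\varphi}^{\,\sharp})}=\sqrt{\Ker(\varphi^{\,\sharp})}$, i.e.\ the two subschemes of $Y$ have the same associated reduced subscheme. (To be careful with $C^k$-rings one passes to stalks/germs and uses that these are all finitely generated and germ-determined, so that the radical of the annihilator ideal sheaf is again an ideal sheaf cutting out $(\Image{-})_{\redscriptsize}$ — this is exactly the bookkeeping already used for the analogous statement Lemma~4.1.12.)

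I do not anticipate a serious obstacle here; this corollary is a formal consequence of the construction of $\varphi$ from $\widehat{\varphi}$ in Proposition~4.2.1.8 together with the nilpotency of $\Ker\dot{\widehat{\iota}}^{\,\sharp}$. The only point that requires a little care — and hence the "hard part" — is making the radical/reduced-subscheme argument precise in the $C^k$-algebraic setting rather than the algebraic one: one must check that taking the reduced $C^k$-subscheme commutes with the operations in play and that nilpotency of the ideal sheaf (a global statement) can be checked stalkwise, for which one invokes the standing assumption that all $C^k$-rings are finitely generated and germ-determined, and the fact that $\widehat{\cal I}_X$ is generated by finitely many square-zero odd elements $\theta^1,\dots,\theta^s$. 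With that in hand the statement follows, and indeed it is the exact super-analogue of Lemma~4.1.12, proved by the identical reasoning after gluing the local charts $U_\alpha$.
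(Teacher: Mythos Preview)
Your proposal is correct and follows exactly the paper's approach: the paper records only the single sentence ``Since $\Ker\dot{\widehat{\iota}}^{\,\sharp}$ is a nilpotent ideal sheaf of $\widehat{\cal O}_X^{A\!z}$ on $\widehat{X}^{A\!z}$, one has'' before stating the corollary, and you have correctly unpacked that observation into the two inclusions $\Ker(\widehat{\varphi}^{\,\sharp})\subset\Ker(\varphi^{\,\sharp})\subset\sqrt{\Ker(\widehat{\varphi}^{\,\sharp})}$. One small slip: the local analogue you mean to cite is Lemma~4.1.13 (``images differ by nilpotency''), not Lemma~4.1.12.
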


\bigskip

\subsubsection{Aspect II: The graph of a differentiable map}

Similar to the studies
  [L-L-S-Y: Sec.\ 2.2] (D(2)) and [L-Y2:  Sec.\ 2.2] (D(6))
 in the algebro-geometric setting and
  [L-Y3: Sec.\ 5.3] (D(11.1)) in the $C^k$-algebro-geometric/synthetic-differential-topological setting,
the graph of a differentiable map
 $\widehat{\varphi}:
   (X,\widehat{\cal O}_X^{A\!z}
            :=\Endsheaf_{\widehat{\cal O}_X^{\,\Bbb C}}(\widehat{\cal E}),
	   \widehat{\cal E})
     \rightarrow Y$
  is a sheaf $\tilde{\widehat{\cal E}}_{\widehat{\varphi}}$
   of ${\cal O}_{\widehat{X}\times Y}^{\,\Bbb C}$-modules
  on the super-$C^k$-manifold $\widehat{X}\times Y$ with special properties.	
And $\widehat{\varphi}$ can be recovered from its graph.
Details are given below for the current super-synthetic-differential-topological setting.

\bigskip

\begin{flushleft}
{\bf Graph of a differentiable map
          $\widehat{\varphi}:(\widehat{X}^{\!A\!z}, \widehat{\cal E})\rightarrow Y$}
\end{flushleft}
It follows from the local study in Sec.~4.1 that
 an equivalence class
  $$
    \widehat{\varphi}^{\sharp}\; :\;  {\cal O}_Y\;
	    \longrightarrow\;
		\widehat{\cal O}_X^{A\!z}
		   :=  \Endsheaf_{\widehat{\cal O}_X^{\,\Bbb C}}(\widehat{\cal E})
  $$
   of gluing systems of
      $C^k$-admissible ring-homomorphisms over ${\Bbb R}\hookrightarrow{\Bbb C}$
 extends canonically to an equivalence class
  $$
    \tilde{\widehat{\varphi}}^{\sharp}\; :\;  {\cal O}_{\widehat{X}\times Y}\;
	    \longrightarrow\;   \widehat{\cal O}_X^{A\!z}
  $$
   of gluing systems of ring-homomorphisms over ${\Bbb R}\hookrightarrow{\Bbb C}$
 that defines canonically to a map
  $$
    \tilde{\widehat{\varphi}}\; :\;
    (X,
	    \widehat{\cal O}_X^{A\!z}
		   :=\Endsheaf_{\widehat{\cal O}_X^{\,\Bbb C}}(\widehat{\cal E}),
		\widehat{\cal E})\;
      \longrightarrow\;  \widehat{X}\times Y\,,
  $$
 making the following diagram commute:
  $$
   \xymatrix{
    & (X,
	        \widehat{\cal O}_X^{A\!z}
			   :=\Endsheaf_{\widehat{\cal O}_X^{\,\Bbb C}}(\widehat{\cal E}),
			\widehat{\cal E})
        \ar[rr]^-{\tilde{\widehat{\varphi}}} \ar[rrd]_-{\widehat{\varphi}}
	    && \widehat{X}\times Y \ar[d]^-{pr_Y} \\
	& && Y  &.
	}
  $$
Here $\pr_Y:\widehat{X}\times Y\rightarrow Y$ is the projection map to $Y$.

\bigskip

\begin{ssdefinition} {\bf [graph of $\widehat{\varphi}$].}  {\rm
 The {\it graph} of a $C^k$-map
   $\widehat{\varphi}:(\widehat{X}^{\!A\!z},\widehat{\cal E})\rightarrow Y$
  is a sheaf $\tilde{\widehat{\cal E}}_{\widehat{\varphi}}$
  of ${\cal O}_{\widehat{X}\times Y}^{\,\Bbb C}$-modules,
  defined by
  $$
    \tilde{\widehat{\cal E}}_{\widehat{\varphi}}\;
	   :=\; \tilde{\widehat{\varphi}}_{\ast}(\widehat{\cal E})\,.
  $$
}\end{ssdefinition}

\bigskip

The following basic properties of $\tilde{\widehat{\cal E}}_{\widehat{\varphi}}$
 follow directly from the local study in Sec.~4.1:

\bigskip

\begin{sslemma}
{\bf [basic properties of $\tilde{\widehat{\cal E}}_{\widehat{\varphi}}$].}
 The graph $\tilde{\widehat{\cal E}}_{\widehat{\varphi}}$ of $\widehat{\varphi}$
  has the following properties:
  \begin{itemize}
   \item[$(1)$]
     $\tilde{\widehat{\cal E}}_{\widehat{\varphi}}$ is
       a $C^k$-admissible ${\cal O}_{\widehat{X}\times Y}^{\,\Bbb C}$-module;
	  its super-$C^k$-scheme-theoretical support
	     $\Supp(\tilde{\widehat{\cal E}}_{\widehat{\varphi}})$
	   is isomorphic to the surrogate $ \widehat{X}_{\widehat{\varphi}}$
	     of $\widehat{X}^{\!A\!z}$ specified by $\widehat{\varphi}$.
     In particular,
	  $\tilde{\widehat{\cal E}}_{\widehat{\varphi}}$
	  is of relative dimension $0$ over $\widehat{X}$	
		   	
   \item[$(2)$]	
    There is a canonical isomorphism
	 $\widehat{\cal E}
	    \stackrel{\sim}{\longrightarrow}
      	  \pr_{\widehat{X},\ast}(\tilde{\widehat{\cal E}})$
	 of ${\cal O}_{\widehat{X}}^{\,\Bbb C}$-modules.
   In particular, $\tilde{\widehat{\cal E}}_{\widehat{\varphi}}$
      is flat over $\widehat{X}$, of relative complex length $r$.
	
   \item[$(3)$]	 	
    There is a canonical exact sequence of
	 ${\cal O}_{\widehat{X}\times Y}^{\,\Bbb C}$-modules
	 $$
	  \pr_{\widehat{X}}^{\ast}(\widehat{\cal E})\;
	    \longrightarrow\;  \tilde{\widehat{\cal E}}_{\widehat{\varphi}}\;
		\longrightarrow\;  0\,.
	 $$
	
   \item[$(4)$]
    The $\widehat{\cal O}_Y$-modules
      $\pr_{Y,\ast}(\tilde{\widehat{\cal E}}_{\widehat{\varphi}})$
      and $\widehat{\varphi}_{\ast}(\widehat{\cal E})$
     are canonically isomorphic.	 	
  \end{itemize}
\end{sslemma}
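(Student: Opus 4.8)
The plan is to reduce every assertion to the affine local study of Sec.~4.1 --- in particular to Lemma~4.1.9 --- and then glue, exactly as the non-super treatment of [L-Y3: Sec.~5.3] (D(11.1)) transports to the super-case. Each of the four properties is local on $\widehat{X}\times Y$ (for (1) and (3)) or on $\widehat{X}$ (for (2) and (4)), and the constructions involved --- the push-forward $\widehat{\varphi}_{\ast}$, the surrogate $\widehat{X}_{\widehat{\varphi}}$, and the graph $\tilde{\widehat{\cal E}}_{\widehat{\varphi}}$ --- are canonical and compatible with restriction to open subsets, so it suffices to verify the conclusion on the charts of a representing gluing system and then patch.

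First I would fix a contravariant gluing system $({\cal U}=\{U_\alpha\}_{\alpha\in A},\ {\cal V}=\{V_\beta\}_{\beta\in B},\ \widehat{\Phi}^{\sharp})$ of $C^k$-admissible ring-homomorphisms over ${\Bbb R}\hookrightarrow{\Bbb C}$ representing $\widehat{\varphi}$, as in Definition~4.2.1.1, with structure map $\sigma:A\to B$ and local members $\widehat{\phi}^{\,\sharp}_{\alpha,\sigma(\alpha)}$. Each $\widehat{\phi}^{\,\sharp}_{\alpha,\sigma(\alpha)}$ defines a local $C^k$-map $\widehat{\varphi}_\alpha:(\widehat{U}_\alpha^{\!A\!z},\widehat{\cal E}|_{U_\alpha})\to V_{\sigma(\alpha)}$ in the sense of Definition~4.1.3, and --- by the construction of $\tilde{\widehat{\varphi}}$ and of the global graph preceding and in Definition~4.2.2.1, together with condition (G1) of Definition~4.2.1.1 (which keeps the relevant supports inside the expected overlaps) --- the restriction of $\tilde{\widehat{\cal E}}_{\widehat{\varphi}}$ to $U_\alpha\times V_{\sigma(\alpha)}$ is canonically the local graph $\tilde{\widehat{\cal E}}_{\widehat{\varphi}_\alpha}$. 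Applying Lemma~4.1.9 (in its general form for arbitrary ${\cal E}$ and $\widehat{U}$, Remark~4.1.14) to each $\widehat{\varphi}_\alpha$ gives, chart by chart: (1) $\tilde{\widehat{\cal E}}_{\widehat{\varphi}_\alpha}$ is a $C^k$-admissible ${\cal O}^{\,\Bbb C}$-module of relative dimension $0$ over $\widehat{U}_\alpha$ with super-$C^k$-scheme-theoretical support the local surrogate $\widehat{U}_{\alpha,\widehat{\varphi}_\alpha}$; (2) a canonical isomorphism $\widehat{\cal E}|_{U_\alpha}\stackrel{\sim}{\to}\pr_{\widehat{U}_\alpha,\ast}(\tilde{\widehat{\cal E}}_{\widehat{\varphi}_\alpha})$ together with flatness of $\tilde{\widehat{\cal E}}_{\widehat{\varphi}_\alpha}$ over $\widehat{U}_\alpha$ of relative complex length $r$; (3) a canonical right-exact sequence $\pr_{\widehat{U}_\alpha}^{\ast}(\widehat{\cal E}|_{U_\alpha})\to\tilde{\widehat{\cal E}}_{\widehat{\varphi}_\alpha}\to0$; and (4) a canonical isomorphism $\pr_{V_{\sigma(\alpha)},\ast}(\tilde{\widehat{\cal E}}_{\widehat{\varphi}_\alpha})\cong(\widehat{\phi}_{\alpha,\sigma(\alpha)})_{\ast}(\widehat{\cal E}|_{U_\alpha})$.

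Next I would glue. $C^k$-admissibility (Definition~4.1.7) and ``relative dimension $0$ over $\widehat{X}$'' being local, the chart-wise (1) yields all of (1) globally except the support identification; for that I invoke the gluing lemma for surrogates in Sec.~4.2.1 (Lemma/Definition~4.2.1.5), by which $\{\widehat{U}_{\alpha,\widehat{\varphi}_\alpha}\}_\alpha$ glue to $\widehat{X}_{\widehat{\varphi}}$, and note that the super-$C^k$-subscheme structure carried by $\Supp(\tilde{\widehat{\cal E}}_{\widehat{\varphi}})$ (which exists precisely because the module is $C^k$-admissible) restricts on each $U_\alpha\times V_{\sigma(\alpha)}$ to that of $\widehat{U}_{\alpha,\widehat{\varphi}_\alpha}$, giving $\Supp(\tilde{\widehat{\cal E}}_{\widehat{\varphi}})\cong\widehat{X}_{\widehat{\varphi}}$ as super-$C^k$-schemes. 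For (2)--(4), the isomorphisms and the right-exact sequence from Lemma~4.1.9 are canonical and natural in the chart, so the overlap compatibility built into a gluing system --- condition (G2) of Definition~4.2.1.1 plus the naturality of the Sec.~4.1 constructions under open restriction --- forces them to agree on $U_{\alpha_1}\cap U_{\alpha_2}$; they then patch to the global isomorphism $\widehat{\cal E}\stackrel{\sim}{\to}\pr_{\widehat{X},\ast}(\tilde{\widehat{\cal E}}_{\widehat{\varphi}})$, the global right-exact sequence $\pr_{\widehat{X}}^{\ast}(\widehat{\cal E})\to\tilde{\widehat{\cal E}}_{\widehat{\varphi}}\to0$, and --- using the gluing lemma for push-forwards in Sec.~4.2.1 (Lemma/Definition~4.2.1.4), that $\{(\widehat{\phi}_{\alpha,\sigma(\alpha)})_{\ast}(\widehat{\cal E}|_{U_\alpha})\}_\alpha$ glue to $\widehat{\varphi}_{\ast}(\widehat{\cal E})$ --- the global isomorphism $\pr_{Y,\ast}(\tilde{\widehat{\cal E}}_{\widehat{\varphi}})\cong\widehat{\varphi}_{\ast}(\widehat{\cal E})$. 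Flatness over $\widehat{X}$, relative complex length $r$, and relative dimension $0$ over $\widehat{X}$ are local and so follow from (1)--(2).

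I do not expect a deep obstacle: the statement is by design the global shadow of Lemma~4.1.9. The only points needing care are (a) checking that the chart-wise super-$C^k$-subscheme structures on the supports and the canonical maps of (2)--(4) genuinely glue --- precisely where conditions (G1)--(G2) of Definition~4.2.1.1 and the naturality of the Sec.~4.1 constructions under open restriction enter --- and (b) the remark that when $k=\infty$ the $C^k$-admissibility hypothesis is vacuous (Remark~4.1.8), so the super-$C^{\infty}$-subscheme structure on $\Supp(\tilde{\widehat{\cal E}}_{\widehat{\varphi}})$ always exists and no condition is being imposed. As an alternative, one may avoid charts on $X$ altogether by passing first, via the affine reformulation of $C^k$-maps in Sec.~4.2.1 (Lemma/Definition~4.2.1.6), to the single $C^k$-admissible ring-homomorphism $\widehat{\varphi}^{\sharp}:C^k(Y)\to C^k(\widehat{X}^{\!A\!z})$ and running the Sec.~4.1 argument globally (still using Remark~4.1.14); the two routes are equivalent.
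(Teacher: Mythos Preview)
Your proposal is correct and follows exactly the approach the paper indicates: the paper states only that these properties ``follow directly from the local study in Sec.~4.1'' (i.e., from Lemma~4.1.9) and gives no further argument. Your elaboration of the gluing step --- invoking the canonicity of the local constructions, conditions (G1)--(G2), and Lemma/Definitions~4.2.1.4--4.2.1.5 --- is a faithful spelling-out of what the paper leaves implicit.
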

 
\medskip

\begin{sslemma} {\bf [presentation of graph of $\widehat{\varphi}$].}
 Continuing the notation.
 The graph $\tilde{\widehat{\cal E}}_{\widehat{\varphi}}$ of $\widehat{\varphi}$
   admits a presentation given by a natural isomorphism
  $$
    \tilde{\widehat{\cal E}}_{\widehat{\varphi}}\;
     \simeq\;
    	 \pr_{\widehat{X}}^{\ast}(\widehat{\cal E})
                 \left/  \left(
				      (\pr_Y^{\ast}(f)
					      -\pr_{\widehat{X}}^{\ast}(\widehat{\varphi}^{\sharp}(f))\,:\, f\in C^k(Y))
					       \cdot \pr_{\widehat{X}}^{\ast}(\widehat{\cal E})				
				           \right) \right.\,.
  $$
\end{sslemma}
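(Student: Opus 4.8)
The plan is to reduce the statement to the affine-chart situation already worked out in Section~4.1 and then glue, since both sides of the claimed isomorphism are constructed chart-by-chart and are compatible with restriction. First I would recall that, by Lemma/Definition~4.2.4 (the affine description of $C^k$-maps) and Definition~4.2.1.1, over each chart $U_{\alpha}$ the map $\widehat{\varphi}$ is given by a $C^k$-admissible ring-homomorphism $\widehat{\phi}^{\,\sharp}_{\alpha,\sigma(\alpha)}: C^k(V_{\sigma(\alpha)})\rightarrow C^k(\Endsheaf_{{\cal O}_{U_{\alpha}}^{\,\Bbb C}}({\cal E}))\otimes_{C^k(U_{\alpha})}C^k(\bigwedge^{\bullet}{\cal S}_{U_{\alpha}})$, and the graph $\tilde{\widehat{\varphi}}^{\sharp}$ is the induced homomorphism from $C^k(U_{\alpha}\times V_{\sigma(\alpha)})[\theta^1,\ldots,\theta^s]$, i.e. the pushout of $C^k(U_{\alpha})[\theta]$ and $C^k(V_{\sigma(\alpha)})$ in super-$C^k$-rings. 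The graph module $\tilde{\widehat{\cal E}}_{\widehat{\varphi}}$ on $U_{\alpha}\times V_{\sigma(\alpha)}$ is then $\tilde{\widehat{\varphi}}_{\ast}(\widehat{\cal E})$, which as an ${\cal O}_{\widehat{U}_{\alpha}\times V_{\sigma(\alpha)}}^{\,\Bbb C}$-module is $\widehat{\cal E}$ regarded through $\tilde{\widehat{\varphi}}^{\sharp}$.

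The core computation is purely algebraic on one chart: one writes $\pr_{\widehat{X}}^{\ast}(\widehat{\cal E})$ over $U_{\alpha}\times V_{\sigma(\alpha)}$ as $\widehat{\cal E}\otimes_{C^k(U_{\alpha})[\theta]}C^k(U_{\alpha}\times V_{\sigma(\alpha)})[\theta]$, and observes that the ${\cal O}_Y$-action via $\pr_Y^{\sharp}$ on the quotient must agree with the ${\cal O}_Y$-action on $\widehat{\cal E}$ coming from $\widehat{\varphi}^{\sharp}$; that is exactly what is enforced by quotienting out the submodule generated by $(\pr_Y^{\ast}(f)-\pr_{\widehat{X}}^{\ast}(\widehat{\varphi}^{\sharp}(f)))\cdot\pr_{\widehat{X}}^{\ast}(\widehat{\cal E})$ for $f\in C^k(Y)$. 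Concretely, the universal property of the pushout $C^k(U_{\alpha}\times V_{\sigma(\alpha)})[\theta]$ together with the fact that $\widehat{\cal E}$ is already a module over $\Image\widehat{\varphi}^{\sharp}$ inside $\Endsheaf_{{\cal O}_{U_{\alpha}}^{\,\Bbb C}}(\widehat{\cal E})$ gives the surjection $\pr_{\widehat{X}}^{\ast}(\widehat{\cal E})\twoheadrightarrow\tilde{\widehat{\cal E}}_{\widehat{\varphi}}$ of Lemma~4.2.3(3), and one checks its kernel is precisely the indicated submodule by comparing the $C^k(V_{\sigma(\alpha)})$-actions: an element of $\pr_{\widehat{X}}^{\ast}(\widehat{\cal E})$ maps to zero iff the two actions coincide on it, which happens iff it lies in the span of the relations $\pr_Y^{\ast}(f)\cdot e - \pr_{\widehat{X}}^{\ast}(\widehat{\varphi}^{\sharp}(f))\cdot e$. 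This is the same argument as in [L-Y3] (D(11.1)) and [L-L-S-Y] (D(2)) for the non-super case, with $C^k(U_{\alpha})$ replaced by $C^k(U_{\alpha})[\theta^1,\ldots,\theta^s]$ throughout; the ${\Bbb Z}/2$-grading is carried along but plays no obstructive role since all the $\theta$'s are central in the matrix ring (their parity being even on the $e_i^{\;j}$, and the super Leibniz signs never enter a pure module computation of this kind).

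Having the isomorphism on each chart, the remaining step is to check it is independent of choices and glues. Both sides are quasi-coherent ${\cal O}_{\widehat{X}\times Y}^{\,\Bbb C}$-modules defined by Lemma/Definition~4.2.1.5 and the construction of $\tilde{\widehat{\varphi}}$, so they restrict compatibly under $U_{\alpha_1}\cap U_{\alpha_2}$; the chart-wise isomorphisms are canonical (built only from $\widehat{\varphi}^{\sharp}$, $\pr_Y^{\sharp}$, $\pr_{\widehat{X}}^{\sharp}$ and the quotient), so they agree on overlaps and patch to a global isomorphism. The main obstacle I anticipate is not conceptual but bookkeeping: one must be careful that the submodule $\big((\pr_Y^{\ast}(f)-\pr_{\widehat{X}}^{\ast}(\widehat{\varphi}^{\sharp}(f)):f\in C^k(Y))\cdot\pr_{\widehat{X}}^{\ast}(\widehat{\cal E})\big)$ is well-defined globally — i.e. that ``$f\in C^k(Y)$'' really means the affine $C^k$-ring of the whole $Y$, using that $Y$ is an affine $C^k$-scheme in $C^k$-algebraic geometry (Lemma/Definition~4.2.4), so no extra sheafification subtlety arises — and that the $C^k$-admissibility hypothesis is exactly what guarantees $\Supp(\tilde{\widehat{\cal E}}_{\widehat{\varphi}})$ is a super-$C^k$-subscheme so that the quotient presentation lands in the right category. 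I expect these to be routine given the results of Section~4.1, in particular Lemmas~4.1.9 and~4.1.10.
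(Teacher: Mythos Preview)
Your proposal is correct and matches the paper's approach: the paper gives no explicit proof for this lemma, instead relying on the blanket statement that the constructions of [L-Y3: Sec.~5.3] (D(11.1)) adapt ``without work'' to the super case via the local study of Sec.~4.1, which is precisely the chart-by-chart reduction and gluing you outline. Your identification of the kernel of the surjection in Lemma~4.2.2.2(3) by comparing the two $C^k(V)$-actions, together with the observation that the $\widehat{R}$-coefficients commute with the matrix generators $e_i^{\;j}$ so the super grading introduces no new obstruction, is exactly the content the paper leaves implicit.
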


\medskip

\begin{ssremark}
{$[$presentation of $\tilde{\widehat{\cal E}}_{\widehat{\varphi}}$
          in local trivialization of $\widehat{\cal E}$$\,]$.}\\
{\rm
 Note that
  with respect to a local trivialization
  ${\Bbb C}^{\oplus r}
        \otimes_{\Bbb R}
	 (C^k(U)\otimes_{\Bbb R}C^k(\bigwedge^{\bullet}{\cal S}_U))$
   of $\widehat{\cal E}$ and,
   hence, a local trivialization
   $$\mbox{$
    {\Bbb C}^{\oplus r}\otimes_{\Bbb R}C^k(U\times Y)
       \otimes_{\Bbb R}
	   C^k(\bigwedge^{\bullet}{\cal S}_U)\;
	   \simeq\;
	   {\Bbb C}^{\oplus r}\otimes_{\Bbb R}(C^k(U)
            \otimes_{\Bbb R}
	         C^k(\bigwedge^{\bullet}{\cal S}_U))\otimes_{C^k}C^k(Y))\,,
      $}			
  $$
   where $\otimes_{C^k}$ is the $C^k$-push-out,
   of $\pr_{\widehat{X}}^{\ast}(\widehat{\cal E})$
   on $\widehat{X}\times Y$ restricted to over $\widehat{U}\subset \widehat{X}$,
 the subsheaf
 $(\pr_Y^{\ast}(f)-\pr_{\widehat{X}}^{\ast}
      (\widehat{\varphi}^{\sharp}(f))\,:\, f\in C^k(Y))
					       \cdot \pr_{\widehat{X}}^{\ast}(\widehat{\cal E})$
  in the above lemma is generated (as an ${\cal O}_{\widehat{X}\times Y}^{\,\Bbb C}$-module)
                   % Lemma [presentation of graph of $\varphi$]
  by elements of the form
  $$
     v\otimes f \; -\; (\widehat{\varphi}^{\sharp}(f)(v))\otimes 1\,,
	    \hspace{2em}f\in C^k(Y)\,.
  $$
 Here, $v$ represents an $r\times 1$ column-vector
  with coefficients in
   $C^k(U)\otimes_{\Bbb R}C^k(\bigwedge^{\bullet}{\cal S}_U)$.
}\end{ssremark}

\bigskip

\begin{flushleft}
{\bf Recovering $\widehat{\varphi}: (\widehat{X}^{\!A\!z}, \widehat{\cal E})\rightarrow Y$
          from an ${\cal O}_{\widehat{X}\times Y}^{\,\Bbb C}$-module}
\end{flushleft}
Conversely,
 let
 $\widehat{X}=(X, \widehat{\cal O}_X:=\bigwedge^{\bullet}{\cal S}) $
     be a super-$C^k$-manifold,
 $Y$ be a $C^k$-manifold,   and
 $\tilde{\widehat{\cal E}}$
  be a sheaf of  ${\cal O}_{\widehat{X}\times Y}^{\,\Bbb C}$-modules \
  with the following properties:
   \begin{itemize}
    \item[]
	  \begin{itemize}
        \item[(M1)]
		 The annihilator ideal sheaf
		   $\Ker({\cal O}_{\widehat{X}\times Y}\rightarrow
		                          \Endsheaf_{{\cal O}_{\widehat{X}\times Y}}
								         (\tilde{\widehat{\cal E}}))$
		   is super-$C^k$-normal in ${\cal O}_{\widehat{X}\times Y}$;
          thus,
   		  $\Supp(\tilde{\widehat{\cal E}})$
    		  is a super-$C^k$-subscheme of the super-$C^k$-manifold $\widehat{X}\times Y$.
         Assume that
		  $\Supp(\tilde{\widehat{\cal E}})$ is of relative dimension $0$ over $\widehat{X}$.
		
 	    \item[(M2)]
         The push-forward
		  $\widehat{\cal E}:= \pr_{\widehat{X},\ast}(\tilde{\widehat{\cal E}})$
	       is a locally free ${\cal O}_{\widehat{X}}^{\,\Bbb C}$-module of finite rank, say, $r$.
	 %
     %  \item[(M2)]
     %    The natural homomorphisms of ${\cal O}_{X\times Y}^{\,\Bbb C}$-modules
     %    $$
     %        \pr_X^{\ast}({\cal E})\; \longrightarrow\; \tilde{\cal E}
     %     $$		
	 %    is surjective with kernel of the form $\,{\cal I}\cdot \pr_X^{\ast}{\cal E}\,$,
	 %    where ${\cal I}$ is an ideal sheaf of ${\cal O}_{X\times Y}$,
	 %    locally generated by $C^k$-functions on the related open sets of $X\times Y$.
	 %   Note that the composition of the natural homomorphisms of ${\cal O}_X^{\,\Bbb C}$-modules
	 %    $$
	 %	    {\cal E}\;\longrightarrow\; {\pr_X}_{\ast}(\pr_X^{\ast}({\cal E}))\;
	 %	       \longrightarrow\;   {\pr_X}_{\ast}(\tilde{\cal E})\,=:\,{\cal E}
	 %	 $$
     %    is the identity homomorphism.
	 \end{itemize}	
   \end{itemize}
Then
 $$
    (X,
	    \widehat{\cal O}_X^{A\!z}
		     :=\Endsheaf_{\widehat{\cal O}_X^{\,\Bbb C}}(\widehat{\cal E}),
	    \widehat{\cal E})
 $$
   is an Azumaya/matrix super-$C^k$-manifold with a fundamental module  and
 $\tilde{\widehat{\cal E}}$ defines an equivalence class
  $$
    \widehat{\varphi}^{\sharp}:{\cal O}_Y\;    \longrightarrow\; \widehat{\cal O}_X^{A\!z}
  $$
  of contravariant gluing systems of $C^k$-admissible ring-homomorphisms
   over ${\Bbb R}\hookrightarrow{\Bbb C}$ related to $(\widehat{X}^{\!A\!z},Y)$
  as follows:
  \begin{itemize}
      \item[(1)]
       Let $\widehat{U}\subset \widehat{X}$ be an open set from an atlas of $X$
	   such that
   	   $\pr_Y(\Supp(\tilde{\widehat{\cal E}}_{\widehat{U}}))$
	      is contained in an open set $V\subset Y$ that lies in an atlas of $Y$.
	   Here, we treat $\tilde{\widehat{\cal E}}$ also as a sheaf over $\widehat{X}$    and
	      $\tilde{\widehat{\cal E}}_{\widehat{U}}
		     := \tilde{\widehat{\cal E}}|_{\widehat{U}\times Y}$
  			is the restriction of $\tilde{\widehat{\cal E}}$ to over $\widehat{U}$.		
		
      \item[(2)]
        Let $f\in C^k(V)$.
	    Then
		  the multiplication by $\pr_Y^{\ast}(f)\in C^k(\widehat{U}\times V)$
	       induces an endomorphism
		   $\tilde{\widehat{\alpha}}_f:
		      \tilde{\widehat{\cal E}}_{\widehat{U}}
			    \rightarrow  \tilde{\widehat{\cal E}}_{\widehat{U}}$
		     as an ${\cal O}_{\widehat{U}\times V}^{\,\Bbb C}$-module.
        Since
		  $\widehat{U}\times V\supset \Supp(\tilde{\widehat{\cal E}}_{\widehat{U}})$
		   and $\pr_Y^{\ast}(f)$ lies in the center of ${\cal O}_{\widehat{U}\times Y}$,
          $$
		    \widehat{\alpha}_f \;
			 :=\; {\pr_{\widehat{X}}}_{\ast}(\tilde{\widehat{\alpha}}_f)
		  $$
		  defines in turn
	      a $C^k$-endomorphism of the $\widehat{\cal O}_U^{\,\Bbb C}$-module
		  $\widehat{\cal E}_U$;
	      i.e.\ $\widehat{\alpha}_f \in \widehat{\cal O}_X^{A\!z}(U)$.
        This defines a ring-homomorphism
	      $\widehat{\varphi}^{\sharp}:C^k(V)\rightarrow \widehat{\cal O}_X^{A\!z}(U)$
	      over ${\Bbb R}\hookrightarrow{\Bbb C}$,
	      with $f\mapsto \widehat{\alpha}_f$.
        By construction, $\widehat{\varphi}^{\sharp}$ is $C^k$-admissible.		
		
      \item[(3)]
	   Compatibility of the system of $C^k$-admissible ring-homomorphisms
  	    $\widehat{\varphi}^{\sharp}:C^k(V)\rightarrow \widehat{\cal O}_X^{A\!z}(U)$
	     over ${\Bbb R}\hookrightarrow{\Bbb C}$
        with gluings follows directly from the construction.		
  \end{itemize}
In this way, $\tilde{\widehat{\cal E}}$  defines a $C^k$-map
  $\widehat{\varphi}:(\widehat{X}^{\!A\!z}, \widehat{\cal E})\rightarrow Y$.
  
By construction,
 the graph $\tilde{\widehat{\cal E}}_{\widehat{\varphi}}$
  of the $C^k$-map $\widehat{\varphi}$ associated to $\tilde{\widehat{\cal E}}$
  is canonically isomorphic to $\tilde{\widehat{\cal E}}$.	
This gives an equivalence of the two notions/categories:
 $$
  \begin{array}{c} \\[-1.2ex]
    \fbox{ $C^k$-maps $\widehat{\varphi}:(\widehat{X}^{\!A\!z}, \widehat{\cal E})
	  \rightarrow Y$ \raisebox{-1.4ex}{\rule{0em}{1.8em}}}\;
	   \Longleftrightarrow\;
    \fbox{ ${\cal O}_{\widehat{X}\times Y}^{\,\Bbb C}$-modules
	              $\tilde{\widehat{\cal E}}$  that satisfy (M1) and (M2) }	   \\[2ex]
  \end{array}	
 $$
({\sc Figure}~4-2-2-1.)
%
% \marginpar{\raggedright\tiny $\bullet$ {\sc Figure}: marz-fmt-super.pdf}
%
\begin{figure} [htbp]
 \bigskip
 \centering
 \includegraphics[width=0.80\textwidth]{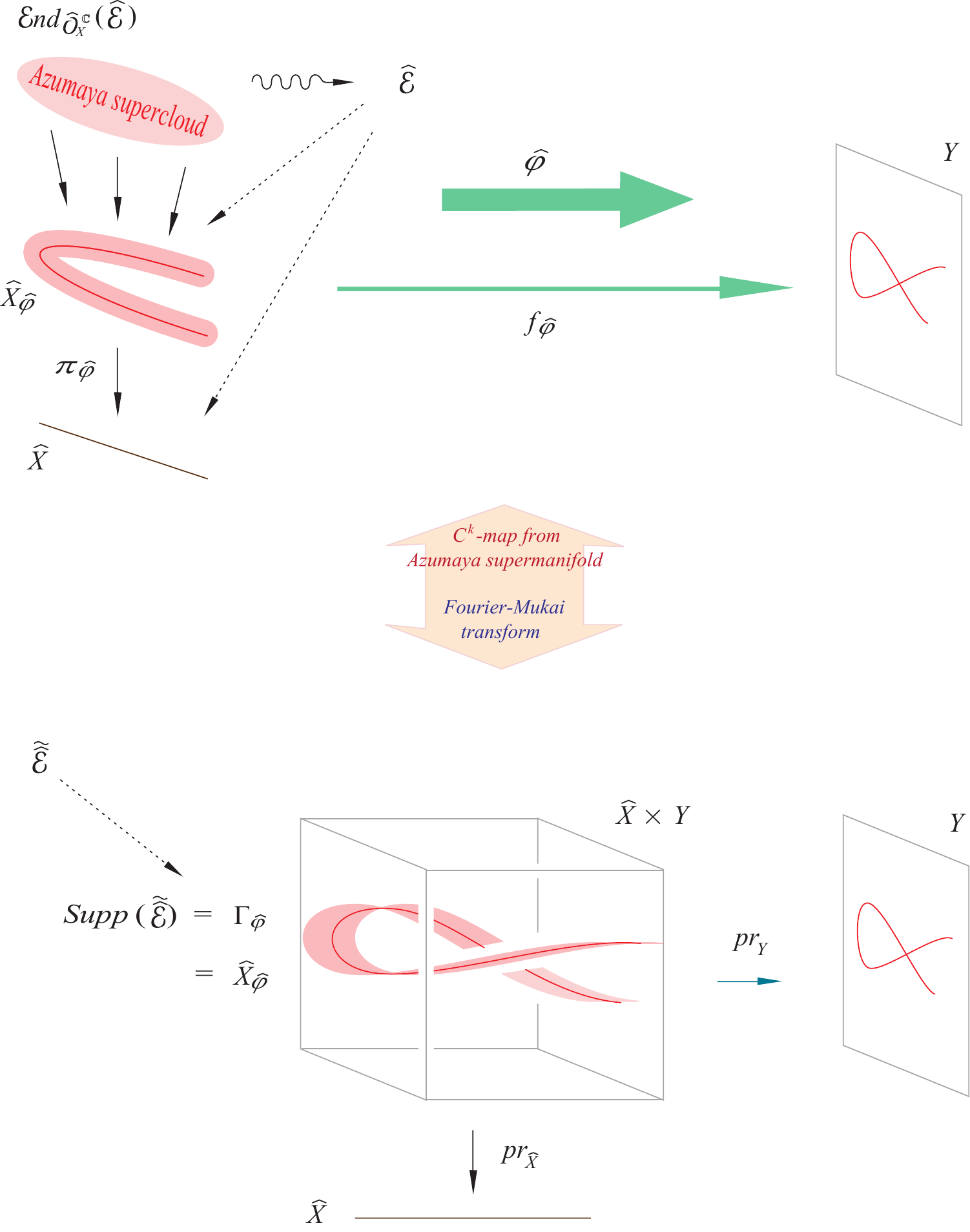}
 
 \vspace{4em}
 \centerline{\parbox{13cm}{\small\baselineskip 12pt
  {\sc Figure}~4-2-2-1.
    The equivalence between
      a $C^k$-map $\widehat{\varphi}$
     	  from an Azumaya/matrix super-$C^k$-manifold with a fundamental module
	    $(\widehat{X}, 
		        \widehat{\cal O}_X^{A\!z}
		         :=\Endsheaf_{\widehat{\cal O}_X^{\,\Bbb C}}(\widehat{\cal E}),
		          \widehat{\cal E})$
	    to a $C^k$-manifold $Y$    and
	  a special kind of Fourier-Mukai transform
	    $\tilde{\widehat{\cal E}}\in \ModCategory^{\Bbb C}(\widehat{X}\times Y)$ 
		from $\widehat{X}$ to $Y$. 		
   Here, $\ModCategory^{\Bbb C}(\widehat{X}\times Y)$
        is the category of ${\cal O}_{\widehat{X}\times Y}^{\,\Bbb C}$-modules.
      }}
 \bigskip
\end{figure}

\bigskip

\subsubsection{Aspect III: From maps to the stack of D0-branes}

Aspect II of a $C^k$-map
 $\widehat{\varphi}:(\widehat{X}^{\!A\!z}, \widehat{\cal E})\rightarrow Y$
   discussed in Sec.~4.2.2
 brings out a third aspect of $\widehat{\varphi}$, which we now explain.

\bigskip
 
\begin{flushleft}
{\bf An Azumaya/matrix supermanifold  $\widehat{X}^{\!A\!z}$
           as a smearing of unfixed Azumaya/matrix points over the underlying supermanifold $\widehat{X}$}
\end{flushleft}
We shall illuminate this in three steps.
First, recall the following example in complex analysis:

\bigskip

\begin{ssexample} {\bf [real contour $\gamma$ in complex line ${\Bbb C}^1$].}
{\rm ([L-Y3 : Example~5.3.3.1] (D(11.1)).)
 A  differentiable contour in the complex line ${\Bbb C}^1$ with complex coordinate $z=x+\sqrt{-1}y$
   is a differentiable map
   $$
     \gamma\; =\; \gamma_x + \sqrt{-1}\gamma_y\; :\; [0,1]\; \longrightarrow\;  {\Bbb C}^1\,.
   $$
 There is no issue about this, if $\gamma$ is treated as  a map between point-sets with a manifold structure:
    from the interval $[0,1]$ to the underlying real $2$-space ${\Bbb R}^2$ of ${\Bbb C}^1$
	with coordinates $(x,y)$.
 However,  in terms of function rings, some care needs to be taken.
 While there is a built-in ring-homomorphism ${\Bbb R}\hookrightarrow{\Bbb C}$ over ${\Bbb R}$,
   there exists no ring-homomorphism ${\Bbb C}\rightarrow{\Bbb R}$ with $0\mapsto 0$ and $1\mapsto 1$.
 If follows that there is no ring-homomorphism
   $\gamma^{\sharp}: C^k({\Bbb C}^1)^{\Bbb C}\rightarrow C^k([0,1])$,
   where $C^k({\Bbb C}^1)^{\Bbb C}$ is the algebra of complex-valued $C^k$-functions on ${\Bbb C}^1$.
 {To} remedy this, one should first complexify $C^k([0,1])$ to
   $$
      C^k([0,1])^{\Bbb C}\;  :=\;   C^k([0,1])\otimes_{\Bbb R}{\Bbb C}\,;
   $$
   then there is a well-defined algebra-homomorphism over ${\Bbb C}$
   $$
	  \gamma^{\sharp}\;:\; C^k({\Bbb C}^1)^{\Bbb C}\;
	       \longrightarrow\;    C^k([0,1])^{\Bbb C}
   $$
   by the pull-back of functions via $\gamma$.
 Here comes the guiding question:
    \begin{itemize}
	  \item[{\bf Q.}]   {\it What
	    is the geometric meaning of the above algebraic operation?}	
	\end{itemize}
 The answer comes from an input to differential topology from algebraic geometry.
 
 By definition, a point with function field ${\Bbb R}$ is an {\it ${\Bbb R}$-point}
  while a point with function field ${\Bbb C}$ is a {\it ${\Bbb C}$-point}.
 Topologically they are the same but algebraically they are different,  as already indicated by
   $$
     {\Bbb R}\; \hookrightarrow\;  {\Bbb C}\,, \hspace{1em}\mbox{while}\hspace{1em}
	   {\Bbb C}\; \hspace{2ex}/\hspace{-3ex}\longrightarrow\;  {\Bbb R}\,,
   $$
   which means algebrao-geometrically, concerning the existence of a map from one to the other,
   $$
       \mbox{\it ${\Bbb C}$-point}\; \longrightarrow\;  \mbox{\it ${\Bbb R}$-point}\,,
	    \hspace{1em}\mbox{while}\hspace{1em}
	   \mbox{\it ${\Bbb R}$-point}\; \hspace{2ex}/\hspace{-3ex}
	     \longrightarrow\;  \mbox{\it ${\Bbb C}$-point}\,.
   $$
  By replacing $C^k([0,1])$ by its complexification $C^k([0,1])^{\Bbb C}$,
   we promote each original ${\Bbb R}$-points on $[0,1]$ to a ${\Bbb C}$-point.
  In other words, we smear ${\Bbb C}$-points along the interval $[0,1]$.
 The map $\gamma$ now simply specifies a $C^k$ $[0,1]$-family of ${\Bbb C}$-points on ${\Bbb C}^1$
    by associating to each ${\Bbb C}$-point on $[0,1]$ a ${\Bbb C}$-point on ${\Bbb C}^1$,
	which is now allowed algebro-geometrically.
 This concludes the example
 
\noindent\hspace{40.7em}$\square$
}\end{ssexample}

\bigskip

Let $p^{A\!z}$ be a point with function ring isomorphic to the endomorphism algebra
 $\End_{\Bbb C}({\Bbb C}^{\oplus r})$
Then, recall from [L-Y3] (D(11.1))
  that  by exactly the same reasoning and geometric pictures as in Example~4.2.3.1,
                               % Example [real contour $\gamma$ in complex line ${\Bbb C}^1$]
   with $(\,\cdots\,)\otimes_{\Bbb R}{\Bbb C}$
            replaced by $(\,\cdots\,)\otimes_{{\Bbb R}}\End_{\Bbb C}(E)$
			  locally  %%%%%-------------------------------------------------------------
			               % \footnote{Here,
	                       %           we remain in the case when the class in the Brauer group $\Br(X)$ of $X$
						   %  		     associated to $X^{\!A\!z}$ is zero.
		                   %           See [L-Y4] (D(5)) and references therein for related discussion.},
						   %           %
						   %%%%%%%%%%%%%-------------------------------------
   where $E$ is a ${\Bbb C}$-vector space and $(\,\cdots\,)$ is the $C^k$-ring in question,
 one has	{\small
 $$
  \begin{array}{lcl} \\[-1.2ex]
    \fbox{Azumayanized manifold
	  $(X, {\cal O}_X
	                \otimes_{\Bbb R}\End_{\Bbb C}({\Bbb C}^{\oplus r}))$  }\;
	   & \Longleftrightarrow\;
       & \fbox{the smearing of {\it fixed} $p^{A\!z}$'s along $X$}	   \\[2ex]
    \fbox{general Azumaya manifold $(X^{\!A\!z},{\cal E})$}\;
	   & \Longleftrightarrow\;
       & \fbox{a smearing of {\it unfixed} $p^{A\!z}$'s along $X$}   \\[2ex]	
  \end{array}	
 $$   }
 
Finally, by the same reasoning but with $(\,\cdots\,)$  above replaced by the super-$C^k$-ring in question,
 one has completely analogously	{\small
 $$
  \hspace{-1ex}
  \begin{array}{lcl} \\[-1.2ex]
    \fbox{Azumayanized supermanifold
	  $(X, \widehat{\cal O}_X
	                \otimes_{\Bbb R}\End_{\Bbb C}({\Bbb C}^{\oplus r}))$  }\;
	   & \Longleftrightarrow\;
       & \fbox{the smearing of {\it fixed} $p^{A\!z}$'s along $\widehat{X}$}	   \\[2ex]
    \fbox{general Azumaya supermanifold $(\widehat{X}^{\!A\!z}, \widehat{\cal E})$}\;
	   & \Longleftrightarrow\;
       & \fbox{a smearing of {\it unfixed} $p^{A\!z}$'s along $\widehat{X}$}   \\[2ex]	
  \end{array}	
 $$   }
  
\bigskip

\begin{ssremark}$[\,$another smearing$\,]$. {\rm
 Though for our purpose the above viewpoint is preferred,
  there is however a second viewpoint:
 \begin{itemize}
   \item[{\Large $\cdot$}]
    Let $\widehat{p}^{A\!z}$ be an Azumaya/matrix superpoint with function ring
	  $\End_{\Bbb C}({\Bbb C}^{\oplus r})
	      \otimes_{\Bbb R} \bigwedge^{\bullet}({\Bbb R}^{\oplus s})$.
    Then $\widehat{X}^{\!A\!z}$	can be regarded as smearing unfixed $\widehat{p}^{A\!z}$'s
      along $X$ as well.	
 \end{itemize}
}\end{ssremark}

\clearpage
%\bigskip

\begin{flushleft}
{\bf A $C^k$-map
          $\widehat{\varphi}:(\widehat{X}^{\!A\!z},\widehat{\cal E})\rightarrow Y$
       as smearing D0-branes on $Y$ along $\widehat{X}$}
\end{flushleft}
{To} press on along this line, we have to list two objects that are studied in algebraic geometry and
 yet their counter-objects are much less known/studied in differential topology/geometry:
\begin{itemize}
 \item[(1)] [{\it Quot-schemes}$\,$] \hspace{1em}
   Grothendieck's $\Quot$-scheme $\Quot_Y^{r}(({\cal O}_Y^{\,\Bbb C})^{\oplus r})$
    of $0$-dimensional quotient sheaves of $({\cal O}_Y^{\,{\Bbb C}})^{\oplus r}$
	of complex-length $r$.
   This is the parameter space of differentiable maps from the fixed Azumaya point
	  $(p^{A\!z},{\Bbb C}^{\oplus r})$  to $Y$;
	cf.\  [L-Y3: Sec.~3, Lemma/Definition~5.3.1.9,  Sec.~5.3.2] (D(11.1)).
                                 % Lemma/Definiition [$C^k$-map in affine setting]
   In other words, it parameterizes D0-branes ${\cal F}$ on $Y$
     (where ${\cal F}$ is a complex $0$-dimensional sheaf on $Y$ of complex length $r$)
     that is decorated with an isomorphism
	  ${\Bbb C}^{\oplus r}\stackrel{\sim}{\rightarrow}C^k({\cal F})$ over ${\Bbb C}$. 	
    	
 \item[(2)] [{\it Quotient stacks}$\,$] \hspace{1em}
  The general linear group $\GL_r({\Bbb C})$
    acts on  $\Quot_Y^{r}({\cal O}_Y^{\,\Bbb C})^{\oplus r})$
	by its tautological action on the ${\Bbb C}^{\oplus r}$-factor in the canonical isomorphism	
   $({\cal O}_Y^{\,{\Bbb C}})^{\oplus r}
       \simeq {\cal O}_Y\otimes_{\Bbb R}{\Bbb C}^{\oplus r}$.
 This defines a quotient stack
   $[ \Quot_Y^{r}(({\cal O}_Y^{\,\Bbb C})^{\oplus r})/\GL_r({\Bbb C})  ]$,
   which now parameterizes differentiable maps $\varphi$ from unfixed Azumaya points
	  $(p^{A\!z}, E)$, where $E$ is a ${\Bbb C}$-vector space of rank $r$,  to $Y$.
   In other words,  $[ \Quot_Y^{r}(({\cal O}_Y^{\,\Bbb C})^{\oplus r})/\GL_r({\Bbb C})  ]$
     is precisely the moduli stack ${\frak M}_r^{\,0^{A\!z^{\!f}}}\!\!(Y)$
	 of D0-branes of complex length $r$ on $Y$,
    realized as complex $0$-dimensional sheaves on $Y$ of complex length $r$ via push-forwards
	$\varphi_{\ast}(E)$, from	[L-Y3: Definition~5.3.1.5] (D(11.1));
                                                 % Definition [differentiable map as equivalence class of gluing systems]
   cf. Definition~4.2.1.3 and Remark~4.2.1.7.
   % Definition [differentiable map as equivalence class of gluing systems]
   % Remark [when $\widehat{X}=X$]
   
\end{itemize}

Recall from Sec.~4.2.2 that a differentiable map
  $\widehat{\varphi}:(\widehat{X}^{\!A\!z}, \widehat{\cal E})\rightarrow Y$
  is completely encoded by its graph $\tilde{\widehat{\cal E}}_{\widehat{\varphi}}$
  on $\widehat{X}\times Y$.
Over any super-$C^k$-subscheme $\widehat{Z}\subset \widehat{X}$,
    $\tilde{\widehat{\cal E}}_{\widehat{\varphi}}|_{\widehat{Z}\times Y}$
  is simply a flat $\widehat{Z}$-family of
     $0$-dimensional ${\cal O}_Y^{\,{\Bbb C}}$-modules of complex length $r$.
Despite missing the details of these parameter ``spaces", it follows from their definition
 as functors or sheaves of groupoids over the category of super-$C^k$-schemes,
   with suitable Grothendieck topology, that
 $$
  \begin{array}{c} \\[-1.2ex]
    \fbox{\rule{0em}{1.2em}
	   $\,C^k$-maps
	     $\widehat{\varphi}:(\widehat{X}^{\!A\!z}, \widehat{\cal E})\rightarrow Y$ }\;
	   \Longleftrightarrow\;
    \fbox{ admissible maps
	     $\widehat{X}\rightarrow {\frak M}_r^{\,0^{A\!z^{\!f}}}\!\!(Y)$ }	   \\[2ex]
  \end{array}	
 $$
This matches perfectly with
   the picture of an Azumaya/matrix supermanifold $(\widehat{X}^{\!A\!z}, \widehat{\cal E})$
      as a smearing of unfixed Azumaya/matrix points $p^{A\!z}$ along $\widehat{X}$
 since, then, a map $\widehat{X}^{\!A\!z}\rightarrow Y$
  is nothing but an $\widehat{X}$-family of maps
  $p^{A\!z}\rightarrow Y$, which is exactly the map
  $\widehat{X} \rightarrow {\frak M}_r^{\,0^{A\!z^{\!f}}}\!\!(Y)$.
 Cf.~{\sc Figure}~4-2-3-1.
%
% \marginpar{\raggedright\tiny $\bullet$ {\sc Figure}:  \\  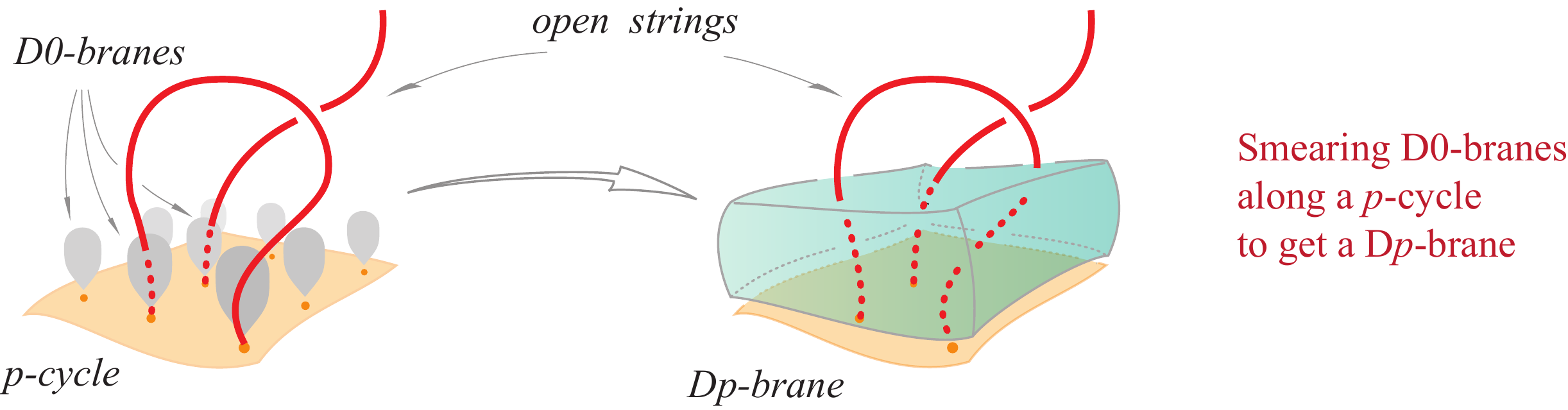}
%
\begin{figure}[htbp]
 \bigskip
  \centering
  \includegraphics[width=0.80\textwidth]{smear-d0-super.pdf}
 
  \bigskip
  \bigskip
 \centerline{\parbox{13cm}{\small\baselineskip 12pt
  {\sc Figure}~4-2-3-1. ([L-L-S-Y: {\sc Figure}~3-1-1].)
  The original stringy operational definition of D-branes as objects
     in the target space(-time) $Y$ of fundamental strings
   where end-points of open-strings can and have to stay
  suggests that smearing D$0$-branes along
    a (real) $p$-dimensional submanifold $X$ in $Y$ renders $X$ a D$p$-brane.	
 Such a smearing can be generalized to the supercase and 
   is realized as a map from the supermanifold $\widehat{X}$ (over and containing $X$)
   to the stack ${\frak M}^{D0}(Y)$ of D$0$-branes on $Y$.
 In the figure, the Chan-Paton sheaf ${\cal E}$ that carries
   the index information on the end-points of open strings
   is indicated by a shaded cloud.   
  }}
\end{figure}

\bigskip

\subsubsection{Aspect IV: From associated $\GL_r({\Bbb C})$-equivariant maps}

Again, we list a parallel issue in differential topology that remains to be understood:
\begin{itemize}
 \item[(1)] [{\it Fibered product}$\,$] \hspace{1em}
  The notion of the fibered product of stratified singular spaces with a structure sheaf
  and its generalization to stacks needs to be developed.
\end{itemize}
Subject to this missing detail, from the very meaning of a quotient stack, it is natural to anticipate that
 any natural definition of the notion of fibered product should lift a map
  $$
    \widehat{X}\;   \longrightarrow\;
   	{\frak M}_r^{\,0^{A\!z^{\!f}}}\!\!(Y)\,
       =\, [ \Quot_Y^{r}(({\cal O}_Y^{\,\Bbb C})^{\oplus r})/\GL_r({\Bbb C})  ]
  $$
 lift to a $\GL_r({\Bbb C})$-equivariant map
  $$
     P_{\widehat{X}}\;
	   \longrightarrow\;  \Quot_Y^r(({\cal O}_Y^{\,{\Bbb C}})^{\oplus r})\,,
   $$
 where $P_{\widehat{X}}$ is a principal $\GL_r({\Bbb C})$-bundle over $\widehat{X}$
  from the fibered product
  $$
   P_{\widehat{X}}\;
    :=\;   \widehat{X}  \times_{{\frak M}_r^{\,0^{A\!z^{\!f}}}\!\!(Y)}
	             \Quot_Y^{r}(({\cal O}_Y^{\,\Bbb C})^{\oplus r})
  $$
Conversely, any latter map should define a former map.
Together with Aspect III in Sec.~4.2.3, this gives a correspondence
 $$
  \begin{array}{l} \\[-1.2ex]
    \fbox{$\,C^k$-maps
	  $\widehat{\varphi}:(\widehat{X}^{\!A\!z}, \widehat{\cal E})\rightarrow Y$} \\[2ex]
	   \hspace{6em}\Longleftrightarrow\;
    \fbox{admissible $\GL_r({\Bbb C})$-equivariant  maps
	           $P_{\widehat{X}}
			      \rightarrow \Quot_Y^r(({\cal O}_Y^{\,{\Bbb C}})^{\oplus r})$}	   \\[2ex]
  \end{array}	
 $$

\bigskip

\subsection{Remarks on differentiable maps from a general endomorphism-ringed super-$C^k$-scheme
                          with a fundamental module to a real supermanifold} 	

Recall the last theme
  `{\sl Remarks on general endomorphism-ringed super-$C^k$-schemes and
         differential calculus thereupon}'	
   of Sec.~3,
 in which the notion of {\it endomorphism-ringed super-$C^k$-schemes with a fundamental module}
 is introduced, which generalizes the notion of Azumaya/matrix super-$C^k$-schemes with a fundamental module.
Similar notion of
 `differentiable maps from an endomorphism-ringed super-$C^k$-scheme with a fundamental module
 to a real supermanifold'
 is readily there from a generalization of the discussions in Sec.~4.1 and Sec.~4.2.
  
Consider the local study first.
Let
 \begin{itemize}
  \item[{\Large $\cdot$}]
   $R\simeq C^k({\Bbb R}^n)/I$ (i.e.\ $R$ a finitely generated $C^k$-ring) and\\
   $\mbox{\hspace{-1.2ex}}$
   $\xymatrix{\widehat{R}\ar@<.3ex>[r]  & R\ar@<.3ex>[l]}$ be a super-$C^k$-ring over $R$,
 
  \item[{\Large $\cdot$}]
   $S=C^k(V)$ where $V$ be an open set of ${\Bbb R}^n$ and\\
   $\mbox{\hspace{-1.2ex}}$
   $\xymatrix{\widehat{S}\ar@<.3ex>[r]  & S\ar@<.3ex>[l]}$ be a superpolynomial ring over $S$.
 \end{itemize}
Then
 \begin{itemize}
  \item[{\Large $\cdot$}]
   The {\it push-out} $R\otimes_{C^k}S$ of $R$ and $S$ exists in the category of $C^k$-rings
     with  $R\otimes_{C^k}S  \simeq C^k({\Bbb R}^n\times V)/I$
     and {\it with the built-in $C^k$-ring-homomorphisms
    $\pr_1^{\,\sharp}:R\hookrightarrow R\otimes_{C^k}S$ and
    $\pr_2^{\,\sharp}:S\hookrightarrow R\otimes_{C^k}S$}
    coincident with pulling back of functions via the projections $\pr_1$ and $\pr_2$
	of the product space to its factors.
   Here $I \hookrightarrow  C^k({\Bbb R}^n \times V)$
    via the canonical inclusion $C^k({\Bbb R}^n)\subset C^k({\Bbb R}^n \times V)$.	
	
  \item[{\Large $\cdot$}]	
   The {\it push-out} $\widehat{R}\otimes_{sC^k}\widehat{S}$
    	of $\widehat{R}$ and $\widehat{S}$ exists in the category of super-$C^k$-rings
  {\it with the built-in super-$C^k$-ring-homomorphisms
    $\widehat{\pr}_1^{\,\sharp}:\widehat{R}
	    \hookrightarrow \widehat{R}\otimes_{sC^k}\widehat{S}$		
		and
    $\widehat{\pr}_2^{\,\sharp}:\widehat{S}
	   \hookrightarrow \widehat{R}\otimes_{sC^k}\widehat{S}$}.
	
 \item[{\Large $\cdot$}]	
  One has a built-in commutative diagram of morphisms
  $$
   \xymatrix{
      &&&&&&& \;\;  \widehat{S} \;\; \ar@<.3ex>[dd]
                                                   	                \ar@{_{(}->}[dl]_-{\widehat{pr}_2^{\sharp}}  \\
	  &\;\; \widehat{R}\;\;
                	  \ar@<.3ex>[dd]
					  \ar@{^{(}->}[rrrrr]^-{\widehat{pr}_1^{\sharp}} 	                                      
	     &\rule{0ex}{1ex}  &  \rule{0ex}{1ex}
		 &&& \;\; \widehat{R}\otimes_{sC^k}\widehat{S}\ar@<.3ex>[dd]\;  \\
      &&
		 &&&&& \;\; S \ar@<.3ex>[uu]
		                         \ar@{_{(}->}[dl]_-{pr_2^{\sharp}}  \;\;    \\
	  &\;\; R\;\;  \ar@<.3ex>[uu]
				        \ar@{^{(}->}[rrrrr]^-{pr_1^{\sharp}} 	
	     &&&&& \;\; R\otimes_{C^k} S  \ar@<.3ex>[uu]		
			        &\;\;\;\;\;\;\;\;. 			
    }
  $$
 \end{itemize}	
 
Let
 \begin{itemize}
  \item[{\Large $\cdot$}]
   $\widehat{M}$ be a finitely generated $\widehat{R}^{\,\Bbb C}$-module   and
   
  \item[{\Large $\cdot$}]
   $\End_{\widehat{R}^{\,\Bbb C}}(\widehat{M})$
   be the $\widehat{R}^{\,\Bbb C}$-algebra
   of $\widehat{R}^{\,\Bbb C}$-endomorphisms of $\widehat{M}$.
 \end{itemize}	
We {\it assume that
  the built-in $\widehat{R}$-algebra-homomorphism
  $\widehat{R}\rightarrow \End_{\widehat{R}^{\,\Bbb C}}(\widehat{M})$ is injective}.
We will denote this inclusion by $\widehat{\pi}^{\sharp}$.
Then, the key notion in the whole setting is the following:

\bigskip

\begin{definition}{\bf [$C^k$-admissible superring-homomorphism].} {\rm
(Cf.\ [L-Y3: Definition~5.1.2] (D(11.1)) and
           Definition~4.1.1 in Sec.~4.1.)
       % Definition [admissible homomorphism from $C^k$-ring to Azumaya/matrix super-$C^k$-ring]
 A superring-homomorphism
  $$
    \widehat{\varphi}^{\sharp}\;:\;
      \widehat{S}\;\longrightarrow\; \End_{\widehat{R}^{\,\Bbb C}}(\widehat{M})
  $$	
  over ${\Bbb R}\hookrightarrow {\Bbb C}$
  is said to be {\it $C^k$-admissible}
 if the arrow $\widehat{\varphi}^{\sharp}$ extends to the following commutative diagram
  of superring-homomorphisms
    $$
	  \xymatrix{
	    \End_{\widehat{R}^{\,\Bbb C}}(\widehat{M})
		    && \widehat{S}\ar[ll]_-{\varphi^{\sharp}}\raisebox{-.6ex}{\rule{0ex}{1ex}}
			                        \ar@{_{(}->}[d]^-{\widehat{pr}_2^{\sharp}}  \\
         \hspace{1ex}\widehat{R}	\rule{0em}{1.2em}\hspace{1ex}
       		    \ar@{^{(}->}[u]^-{\widehat{\pi}^{\sharp}}		
				\ar@{^{(}->}[rr]_-{\widehat{pr}_1^{\sharp}}
			&& \widehat{R}\otimes_{sC^k}\widehat{S}
			         \ar[llu]_-{\tilde{\widehat{\varphi}}^{\sharp}}
	   }
	$$	
    such that
	 \begin{itemize}
	  \item[{\Large $\cdot$}]
	   $\Ker(\tilde{\widehat{\varphi}}^{\sharp})$ is super-$C^k$-normal and hence
       $\Image\tilde{\widehat{\varphi}}^{\sharp}$	
	     can be equipped with a quotient super-$C^k$-ring structure
	     from that of $\widehat{R}\otimes_{sC^k}\widehat{S}$
		 via $\tilde{\widehat{\varphi}}^{\sharp}$.
	
	 \item[{\Large $\cdot$}]
	  With respect to the super-$C^k$-ring structure on $\Image\tilde{\widehat{\varphi}}^{\sharp} $,
	   $\widehat{\varphi}^{\sharp}$ is a super-$C^k$-ring-homomorphism
	    as a superring-homomorphism
	    $\widehat{S}\rightarrow \Image\tilde{\widehat{\varphi}}^{\sharp}$.	
	 \end{itemize}
}\end{definition}

\bigskip
  
Moving on to the global study.
Let
    $\widehat{X}:=(X, {\cal O}_X, \widehat{\cal O}_X)$ be a super-$C^k$-scheme   and
	 %
     % \marginpar{\raggedright\tiny $\bullet$
	 %         By convention, all super-$C^k$-schemes in this note has their structure sheaf a sheaf
	 %	      of finitely generated super-$C^k$-rings.}
	 %	
	$\widehat{\cal F}$
	  be a finitely generated quasi-coherent $\widehat{\cal O}_X^{\,\Bbb C}$-module.
We shall {\it assume that the built-in $\widehat{\cal O}_X$-algebra-homomorphism
   $\widehat{\cal O}_X
       \rightarrow \Endsheaf_{\widehat{\cal O}_X^{\,\Bbb C}}(\widehat{\cal F})$
   is injective}.
Consider the endomorphism-ringed super-$C^k$-scheme with a fundamental module
   $$
     (\widehat{X}^{\nc}, \widehat{\cal F})\;
	    :=\; (\widehat{X},
     		       \widehat{\cal O}^{\nc}_X
            		:=\Endsheaf_{\widehat{\cal O}_X^{\,\Bbb C}}(\widehat{\cal F}),
                  \widehat{\cal F})\,.
   $$
   
\bigskip

\begin{definition}    {\bf [differentiable map].} {\rm
(Cf.\ [L-Y3: Definition~5.3.1.5] (D(11.1))   and
           Definition~4.2.1.3 in Sec.~4.2.1.)
		                 % Definition [[differentiable map as equivalence class of gluing systems]	
 Let $\widehat{Y}$ be a super-$C^k$-manifold.						
 A {\it $k$-times differentiable map} (i.e.\ {\it $C^k$-map})
   $$
     \widehat{\varphi}\; :\;
	   (\widehat{X}^{\nc}, \widehat{\cal F})\; \longrightarrow\; \widehat{Y}
   $$
    is defined contravariantly
    as an {\it equivalence class of gluing systems of $C^k$-admissible superring-homomorphisms},
	in notation,
   $$
     \widehat{\varphi}^{\sharp}\; :\;
	   \widehat{\cal O}_Y\; \longrightarrow\;  \widehat{\cal O}_X^{\nc}\,,
   $$
  exactly as in Definition~4.2.1.1, Definition~4.2.1.2, and Definition~4.2.1.3 in Sec.~4.2.1.
                    % Definition [gluing system of $C^k$-admissible ring-homomorphisms]
  					% Definition [equivalent systems]
                    % Definition [differentiable map as equivalence class of gluing systems]
}\end{definition}					

\bigskip					
					
As in
     [L-Y3: Sec.~5.3.1] (D(11.1))    and
     Sec.~4.2 for the case of $C^k$-maps from an Azumaya/matrix super-$C^k$-manifold to a $C^k$-manifold,
 one has the following well-defined basic notions:
 \begin{itemize}
   \item[{\Large $\cdot$}]	
    the {\it surrogate} $\widehat{X}_{\widehat{\varphi}}$ of $\widehat{X}^{\nc}$
  	specified by $\widehat{\varphi}$,

   \item[{\Large $\cdot$}]
    the {\it push-forward}  $\widehat{\varphi}_{\ast}\widehat{\cal F}$ of $\widehat{\cal F}$
	to $\widehat{Y}$,
	
   \item[{\Large $\cdot$}]	
    the {\it graph} $\tilde{\widehat{\cal F}}_{\widehat{\varphi}}$ of $\widehat{\varphi}$,
	which is an $\widehat{\cal O}_{X\times Y}^{\,\Bbb C}$-module
	on $\widehat{X}\times\widehat{Y}$.
 \end{itemize}	
One has now {\it Aspect I} and {\it Aspect II} for  $\varphi$.
When $\widehat{\cal F}$ is in addition locally free and $\widehat{Y}=Y$,
 one recovers the notion of $C^k$-map in Sec.~4.2.1
 and has in addition Aspect III and Aspect IV.
 
Recall the built-in inclusion $\widehat{\iota}: X\hookrightarrow \widehat{X}$ and let
 $$
  {\cal F}\; := \; \widehat{\iota}^{\ast}\widehat{\cal F}
 $$
 be the restriction of $\widehat{\cal F}$ to $X$.
Then, $\widehat{\varphi}$ induces a $C^k$-map
 $$
   \varphi\; :\;
     (X^{\nc},{\cal F})
	   := (X,
	          {\cal O}_X^{\nc}:=\Endsheaf_{{\cal O}_X^{\,\Bbb C}}({\cal F}),
			    {\cal F})\;
      \longrightarrow\;  Y				
 $$
 with a built-in commutative diagrams of morphisms
 $$
   \xymatrix{
    \; \widehat{\cal F}
	          \ar@{.>}[rrrd]     \ar@{.>}@/_1ex/[rrdd]    \ar@{.>}@/_2ex/[rddd]    \\
      &&& \widehat{X}^{\nc}                            	
	                                    \ar[rrrrd]^-{\widehat{\varphi}}
	                                    \ar@{->>}[ld]_-{\sigma_{\widehat{\varphi}}}
										\ar@<.3ex>@{->>}'[d]'[dd][dddd]^(.3){\dot{\widehat{\pi}}}
										\\
      &&\;\; \widehat{X}_{\widehat{\varphi}}\;\;
	                                   \ar[rrrrr]^(.43){f_{\widehat{\varphi}}}
	                                   \ar@{_{(}->} [rrrrd]^(.6){\tilde{\widehat{\varphi}}}
                                	   \ar@{->>}[ld]_-{\pi_{\widehat{\varphi}}}
									   \ar@<.3ex>@{->>}'[d][dddd]^(.3){\widehat{\pi}}
		 & \raisebox{-1.5em}{ \rule{0em}{2em}}
		 &&&& \;\;  \widehat{Y} \;\;  \ar@<.3ex>@{->>}[dddd]^(.42){\widehat{\pi}} \\
	  &\;\; \widehat{X}\;\;     \ar@<.3ex>@{->>}[dddd]^(.42){\widehat{\pi}}
	     &\rule{0ex}{1ex}  &  \rule{0ex}{1ex}
		 &&& \;\; \widehat{X}\times \widehat{Y}
		                   \ar@{->>}[ru]_-{pr_{\widehat{Y}}}
						   \ar@{->>}[lllll]^(.4){pr_{\widehat{X}}}
						   \ar@<.3ex>@{->>}[dddd]^(.4){\widehat{\pi}}
						   \;  \\
    \; {\cal F}:= \widehat{\iota}^{\ast}\widehat{\cal F}
	          \ar@{.>}[rrrd]     \ar@{.>}@/_1ex/[rrdd]    \ar@{.>}@/_2ex/[rddd]    \\
      &&& X^{\nc}\rule{0em}{1.2em}
	                                    \ar[rrrrd]^-{\varphi}
	                                    \ar@{->>}[ld]^-{\sigma_{\varphi}}
										\ar@<.3ex>@{^{(}->}'[uu]^(.7){\dot{\widehat{\iota}}}'[uuu][uuuu]
										\\
      &&\;\; X_{\varphi}\rule{0em}{1.2em}\;\;
	                                   \ar'[rrrr]^(.6){f_{\varphi}}[rrrrr]
	                                   \ar@{_{(}->} [rrrrd]^(.6){\tilde{\varphi}}
                                	   \ar@{->>}[ld]^-{\pi_{\varphi}}
									   \ar@<.3ex>@{^{(}->}'[uuu]^(.7){\widehat{\iota}} [uuuu]
		 &&&& \rule{2ex}{0ex}
		 & \;\; Y \rule{0em}{1.2em}\;\;
		              \ar@<.3ex>@{^{(}->}[uuuu]^(.58){\widehat{\iota}}		 \\
	  &\;\; X\rule{0em}{1.2em}\;\;
	            \ar@<.3ex>@{^{(}->}[uuuu]^(.58){\widehat{\iota}}
	     &&&&& \;\; X\times Y \rule{0em}{1.2em}
		                   \ar@{->>}[ru]_-{pr_Y}
						   \ar@{->>}[lllll]^(.4){pr_X}
                           \ar@<.3ex>@{^{(}->}[uuuu]^(.6){\widehat{\iota}} 		
			        &\;\;\;\;\;\;. 			
    }
  $$
 as in Proposition~4.2.1.8.
                            % Proposition [induced $C^k$-map $\varphi:(X^{\!A\1z},{\cal E})\rightarrow Y$]

\bigskip

\section{A glimpse of super D-branes, as dynamical objects, and the Higgs mechanism in the current setting}

We give in this section a glimpse of
  super D-branes, as dynamical objects in string theory,  and
  the Higgs mechanism on D-branes
 in the current setting.
It serves to give readers
  a taste of applications to string theory
    % of the notion of differentiable maps from matrix manifolds
    % in the setting of [L-Y3] (D(11.1)) and its super generalization
	and
 a bridge to sequels of the current note.

\bigskip

\subsection{Fermionic D-branes as fundamental/dynamical objects in string theory}

There are two versions of fermionic (fundamental, either open or closed) strings:
\begin{itemize}
  \item[(1)] {\it Ramond-Neveu-Schwarz} ({\it RNS}) {\it fermionic string},
   for which world-sheet spinors are manifestly involved
   ([N-S] of  Andr\'{e} Neveu and John Schwarz
   and [Ra] of Pierre Ramond);
   
  \item[(2)]  {\it Green-Schwarz} ({\it GS}) {\it fermionic string},
   for which space-time spinors are manifestly involved\\ ([G-S] of Michael Green and John Schwarz).
\end{itemize}
Mathematicians are referred particularly to [G-S-W: Chap.\  4 \& Chap.\ 5] of Green, Schwarz, and Witten
 for thorough explanations.
Once having the notion of differentiable maps from Azumaya/matrix manifold to a real manifold
  ([L-Y3] (D(11.1)))
  and its super-extension (Sec.~4.2 of the current note),
it takes no additional work to give a prototypical definition of {\it fermionic D-branes}
  in the style of either Ramond-Neveu-Schwarz or Green-Schwarz fermionic string
 once one understands
   the meaning of such fermionic strings from the viewpoint of Grothendieck's Algebraic Geometry.

\bigskip

\begin{flushleft}
{\bf Ramond-Neveu-Schwarz fermionic string and Green-Schwarz fermionic string
          from the viewpoint of Grothendieck's Algebraic Geometry}
\end{flushleft}
This discussion in this theme follows [G-S-W: Chap.\ 4 \& Chap.\ 5]
   (with possibly some mild change of notations to be compatible with the current note)  and [Ha: Chap.\ II].
Let
  ${\Bbb M}^{(d-1)+1}$   be the $d$-dimensional Minkowski space-time with coordinates
    $y:= (y^{\mu})_{\mu}=(y^0, y^1,\,\cdots\,, y^{d-1})$ and
  $\Sigma\simeq {\Bbb R}^1\times S^1$ or ${\Bbb R}^1\times [0,2\pi]$ be a string world-sheet 	
   with coordinates $\sigma:=(\sigma^0,\sigma^1)$.
   
\bigskip

\noindent
$(a)$ {\it Ramond-Neveu-Schwarz} ({\it RNS}) {\it fermionic string}

\medskip

\noindent
In this setting,
 there are both bosonic (world-sheet scalar) fields $y^{\mu}(\sigma)$
   and fermionic (world-sheet spinor) fields $\psi^{\mu}(\sigma)$
 on the string world-sheet $\Sigma$ for $\mu=0,1, \,\cdots\,, d-1$.
The former  collectively describe a map $f:\Sigma \rightarrow {\Bbb M}^{(d-1)+1}$ and
the latter as its superpartner.

Consider the supermanifold $\widehat{\Sigma}$ that have the same topology as $\Sigma$
 but with additional Grassmann coordinates $\theta := (\theta^A)_A =(\theta^1, \theta^2)$
   forming $2$-component Majorana spinor on $\Sigma$.
Then, after adding auxiliary (nondynamical) fields $B^{\mu}(\sigma)$ to the world-sheet,
these fields on $\Sigma$ can be grouped to superfields:(Cf.\ [G-S-W: Sec.\ 4.1.2; Eq.\ (4.1.16)].)
 $$
	 Y^{\mu}(\sigma) \;
		=\; y^{\mu}(\sigma)\,+\,\bar{\theta}\psi^{\mu}(\sigma)\,
		        +\, \frac{1}{2}\,\bar{\theta}\theta\,B^{\mu}(\sigma)\,.		
 $$	
  
 {From} the viewpoint of Grothendieck's Algebraic Geometry,
  a map $\widehat{f} : \widehat{\Sigma}\rightarrow {\Bbb M}^{(d-1)+1}$ is specified
  contravariantly by a homomorphism
   $$
    \begin{array}{ccccc}
      \widehat{f}^{\sharp}  &  :
	    & C^{\infty}({\Bbb M}^{(d-1)+1})
        & \longrightarrow     &   C^{\infty}(\widehat{\Sigma})\\[1.2ex]
	 && y^{\mu}  & \longmapsto   & \widehat{f}^{\sharp}(y^{\mu})
	\end{array}
   $$
   of the function rings in question.
 Since $C^{\infty}(\widehat{\Sigma})=C^{\infty}(\Sigma)[\theta^1,\theta^2]$
     a superpolynomial ring over $C^{\infty}(\Sigma)$,
 $\hat{f}^{\sharp}(y^{\mu})$ must be of the form
  $$
    \widehat{f}^{\sharp}(y^{\mu})\;
	  =\; f^{\mu}(\sigma)\,
	          +\,\bar{\theta}\psi^{\mu}(\sigma)\,
		          +\, \frac{1}{2}\,\bar{\theta}\theta\,B^{\mu}(\sigma)\,,
  $$
  which is exactly the previous quoted expression [G-S-W: Eq.\ (4.1.16)].
   % for defining a map from a RNS fermionic string world-sheet to ${\Bbb M}^{(d-1)+1}$.
 In conclusion,
  \begin{itemize}
   \item[{\Large $\cdot$}]
    {\it A Ramond-Neveu-Schwarz fermionic string moving in a Minkowski space-time ${\Bbb M}^{(d-1)+1}$
	    as studied in {\rm [G-S-W: Chap.\ 4]}
      can be described by
	   a map $\widehat{f}: \widehat{\Sigma} \rightarrow {\Bbb M}^{(d-1)+1}$
	    in the sense of Grothendieck's Algebraic Geometry.}
  \end{itemize}
   
\bigskip

\noindent
$(b)$ {\it Green-Schwarz} ({\it GS}) {\it fermionic string}

\medskip

\noindent
In this setting,
 in addition to the ordinary bosonic (world-sheet scalar) fields
   $y^{\mu}(\sigma)$, $\mu=0,1,\,\cdots\,, d-1$, on $\Sigma$
   that collectively describe a map $f: \Sigma\rightarrow {\Bbb M}^{(d-1)+1}$,
 there are also a set of {\it world-sheet scalar yet mutually anticommuting} fields
   $\theta^{Aa}(\sigma)$, $A=1,\,\cdots\,, N$ and $a=1,\,\cdots\,, s$, on $\Sigma$.
Here $s$ is the dimension of a spinor representation of the Lorentz group $\SO(d-1,1)$
 of the target Minkowski space-time ${\Bbb M}^{(d-1)+1}$.

{\it Differential geometrically} intuitively, one would think of these (world-sheet scalar) fields on $\Sigma$
 collectively as follows:
 \begin{itemize}
  \item[{\Large $\cdot$}]
   Let $\widehat{\Bbb M}^{(d-1)+1}$ be a superspace
     with coordinates
	   the original coordinates $y:=(y^{\mu})_{\mu}$ of ${\Bbb M}^{(d-1)+1}$
	  and additional anticommuting coordinates $\theta^{Aa}$,
	     $A=1,\,\cdots\,, N$ and $a=1,\,\cdots\,, s$,
     such that
   	 each tuple $(\theta^{A1}, \,\cdots\,, \theta^{As})$, $A=1,\,\cdots\,,N$,
   	   is in a spinor representation of the Lorentz group $\SO(d-1,1)$,
     	   the symmetry of the space-time ${\Bbb M}^{(d-1)+1}$ .
   Note that $\widehat{\Bbb M}^{(d-1)+1}\simeq {\Bbb R}^{d|Ns}$ as supermanifolds.
	
  \item[{\Large $\cdot$}]
   The collection $(y^{\mu}(\sigma), \theta^{Aa}(\sigma))_{\mu, A, a}$
    of (world-sheet scalar) fields on $\Sigma$ describe collectively
    a map $\widehat{f}: \Sigma \rightarrow \widehat{\Bbb M}^{(d-1)+1}$.
  In other words,
  a Green-Schwarz fermionic string moving in ${\Bbb M}^{(d-1)+1}$  is described
   by a map from an ordinary world-sheet to a super-Minkowski space-time.
 \end{itemize}
 
However, {\it algebraic geometrically} some revision to this naive differential geometric picture has to be made.
 \begin{itemize}
  \item[{\Large $\cdot$}]
   One would like a contravariant  equivalence between spaces and their function ring:
    $$
	 \begin{array}{ccccc}
       \widehat{f}& : & \Sigma  & \longrightarrow & \widehat{\Bbb M}^{(d-1)+1}
	 \end{array}
    $$
   with
   $$
    \begin{array}{cccccl}
      \widehat{f}^{\sharp}& :
	    &  C^{\infty}({\Bbb M}^{(d-1)+1})[\theta^{Aa}\,:\,  1\le A \le N,\, 1\le a\le s]
		& \longrightarrow &  C^{\infty}(\Sigma) \\[1.2ex]
	  &&	y^{\mu}       & \longmapsto     &   y^{\mu}(\sigma) \\[1.2ex]
	  &&    \theta^{Aa}& \longmapsto     &  ?                                         &.
	 \end{array}
   $$
  Here,
    $C^{\infty}({\Bbb M}^{(d-1)+1})[\theta^{Aa}\,:\,  1\le A \le N,\, 1\le a\le s]$
       is the superpolynomial ring over the $C^{\infty}$-ring $C^{\infty}({\Bbb M}^{(d-1)+1})$
	  with anticummuting generators in $\{\theta^{Aa}\}_{A, a}$.
   
  \item[{\Large $\cdot$}]
   The natural candidate for $\widehat{f}(\theta^{Aa})$ is certainly the world-sheet scalar field
     $\theta^{Aa}(\sigma)$ regarded as an element in the function-ring  of $\Sigma$.
   However, the anticommuting nature of fields $\theta^{Aa}$, $1\le A\le N$ and $1\le a\le s$,
    among themselves forbids them to lie in $C^{\infty}(\Sigma)$.

  \item[{\Large $\cdot$}]	
   The way out of this from the viewpoint of Grothendieck's Algebraic Geometry
    is to extend the world-sheet $\Sigma$ also
    to a superworld-sheet $\widehat{\Sigma}$ with the function ring the superpolynomial ring
    $C^{\infty}(\Sigma)[\theta^{\prime Aa}\,:\, 1\le A\le N,\, 1\le a\le s]$.
	
  \item[{\Large $\cdot$}]	
   One now has a well-defined super-$C^{\infty}$-ring-homomorphism
   $$
    \begin{array}{cccccl}
      \widehat{f}^{\sharp}& :
	   & C^{\infty}({\Bbb M}^{(d-1)+1})[\,\theta^{Aa}\,:\,  A,\, a\,]
	   & \longrightarrow
	   &  C^{\infty}(\Sigma)[\,\theta^{\prime Aa}\,:\, A,\, a\,]       \\[1.2ex]
	  &&	y^{\mu}       & \longmapsto     &   y^{\mu}(\sigma) \\[1.2ex]
	  &&    \theta^{Aa}& \longmapsto     &   \theta^{Aa}(\sigma)                                         &.
	 \end{array}
   $$
   
  \item[{\Large $\cdot$}]
   Furthermore, since all the fields $\theta^{Aa}(\sigma)$ are dynamical,
   in comparison with the setting for the RNS fermionic string, it is reasonable to require in addition that
   $$
    \widehat{f}^{\sharp}(\theta^{Aa})\; =\; \theta^{Aa}(\sigma)\;
	  \in \; \Span_{C^{\infty}(\Sigma)}\{\,\theta^{\prime Aa}\,|\, A, a\,\}\,.
   $$
 \end{itemize}

In conclusion,
  \begin{itemize}
   \item[{\Large $\cdot$}] {\it
    Assuming the notation from the above discussion.
	A Green-Schwarz fermionic string moving in a Minkowski space-time ${\Bbb M}^{(d-1)+1}$
	    as studied in {\rm [G-S-W: Chap.\ 5]}
      can be described in the sense of Grothendieck's Algebraic Geometry	
	  by a map $\widehat{f}: \widehat{\Sigma} \rightarrow  \widehat{\Bbb M}^{(d-1)+1}$,
	   defined by a super-$C^{\infty}$-ring-homomorphism
      $$
       \begin{array}{cccccl}
         \widehat{f}^{\sharp}& :
	      & C^{\infty}({\Bbb M}^{(d-1)+1})[\,\theta^{Aa}\,:\,  A,\, a\,]
	      & \longrightarrow
	      &  C^{\infty}(\Sigma)[\,\theta^{\prime Aa}\,:\, A,\, a\,]       \\[1.2ex]
	      &&	y^{\mu}       & \longmapsto     &   y^{\mu}(\sigma) \\[1.2ex]
	      &&    \theta^{Aa}& \longmapsto     &   \theta^{Aa}(\sigma)                                         
	   \end{array}
      $$
	 such that
      $$
        \widehat{f}^{\sharp}(\theta^{Aa})\; =\; \theta^{Aa}(\sigma)\;
	      \in \; \Span_{C^{\infty}(\Sigma)}\{\,\theta^{\prime Aa}\,|\, A, a\,\}\,.
      $$	
    }\end{itemize}

\bigskip

\begin{flushleft}
{\bf Fermionic D-branes as dynamical objects \`{a} la RNS or GS fermionic strings}
\end{flushleft}
\begin{terminology}
 {\it $[\,$Azumaya/matrix super-$C^k$-manifold associated to $({\cal S},{\cal E})$$\,]$.} {\rm
 Let
   $X$ be a $C^k$-manifold,
   ${\cal S}$ be a locally free ${\cal O}_X$-module of finite rank, and
   ${\cal S}$ be a locally free ${\cal O}_X^{\,\Bbb C}$-module of finite rank.
 For convenience, introduce the following terminologies:
   \begin{itemize}	
	  \item[{\Large $\cdot$}]
       $\widehat{X}:= (X,\widehat{\cal O}_X:=\bigwedge^{\bullet}{\cal S})$
	     be the {\it supermanifold generated by} ${\cal S}$ on $X$,
	  denote $\,\;\widehat{\cal E}:={\cal E}\otimes_{{\cal O}_X}\widehat{\cal O}_X$,
	
	  \item[{\Large $\cdot$}]
       $X^{\!A\!z} :=
	     (X,{\cal O}_X^{A\!z}:= \Endsheaf_{{\cal O}_X^{\,\Bbb C}}({\cal E}))$
       be the {\it Azumaya/matrix manifold associated to} ${\cal E}$ on $X$,
      $\;\;\widehat{X}^{\!A\!z} :=
         (\widehat{X},  \widehat{\cal O}_X^{A\!z}
                   :=\Endsheaf_{\widehat{\cal O}_X^{\,\Bbb C}}(\widehat{\cal E})
	                \simeq  {\cal O}_X^{A\!z}\otimes_{{\cal O}_X}\widehat{\cal O}_X)$
	     be the {\it Azumaya/matrix supermanifold specified by the pair} $({\cal S},{\cal E})$ on $X$,	
     (and $\widehat{\cal E}$ is the fundamental module of $\widehat{X}^{\!A\!z}$).  		
    \end{itemize}	
}\end{terminology}
	
\medskip

\begin{definition-prototype}
 {\bf [fermionic D-branes \`{a} la a RNS fermionic string].} {\rm
 Let
   \begin{itemize}
    \item[{\Large $\cdot$}]
     $Y$ be a $C^k$-manifold
    (e.g.\ a space-time with a Lorentzian metric,  or a Euclidean space-time from Wick rotation, or
      a Riemannian internal space in a compactification of superstring theory background).
   \end{itemize}
 Then,
   a {\it fermionic D-brane in $Y$ in the style of  a Ramond-Neveu-Schwarz fermionic string}
   consists of the following data:
   $$
     (X,\, {\cal S},\, {\cal E},\,
	     \widehat{\varphi}:
          (\widehat{X}^{\!A\!z},\widehat{\cal E})\rightarrow Y)\,,
   $$
  where
  \begin{itemize}
   \item[{\Large $\cdot$}]
    $X$ is a $C^k$-manifold
	(with a Riemannian or Lorentzian structure, depending on the context),
	
   \item[{\Large $\cdot$}]	
    ${\cal S}$ is a (finite) direct sum of sheaves of spinors on $X$,
	
   \item[{\Large $\cdot$}]	
   ${\cal E}$ is a locally free ${\cal O}_X^{\,\Bbb C}$-module of some finite rank $r$,
   
  \item[{\Large $\cdot$}]
   $(\widehat{X}^{\!A\!z},\widehat{\cal E})$
     is the Azumaya/matrix super-$C^k$-manifold with a fundamental module
	 specified by $({\cal S},{\cal E})$ on $X$ and
   $\,\widehat{\varphi}:
           (\widehat{X}^{\!A\!z},\widehat{\cal E})\rightarrow Y\,$					
     is a $C^k$-map from $(\widehat{X}^{\!A\!z},\widehat{\cal E})$ to $Y$,
	 as defined in Definition~4.2.1.3. 					
                        % Definition [differentiable map as equivalence class of gluing systems]
  \end{itemize}						
 }\end{definition-prototype}
 
\medskip

\begin{definition-prototype}
 {\bf [fermionic D-branes \`{a} la a GS fermionic string].} {\rm
 Let
  \begin{itemize}
   \item[{\Large $\cdot$}]
    $Y$ be a $C^k$-manifold
   (e.g.\ a space-time with a Lorentzian metric,  or a Euclidean space-time from Wick rotation, or
      a Riemannian internal space in a compactification of superstring theory background),
  
   \item[{\Large $\cdot$}]
    ${\cal S}_Y$ be a (finite) direct sum  of sheaves of spinors on $Y$,

   \item[{\Large $\cdot$}]
    $\widehat{Y}:= (Y, \widehat{\cal O}_Y:=\bigwedge^{\bullet}{\cal S}_Y )$
     be the super-$C^k$-manifold generated by ${\cal S}_Y$ on $Y$. 		
  \end{itemize}	
 Then,
   a {\it fermionic D-brane in $Y$ in the style of  a Green-Schwarz fermionic string}
   consists of the following data:
   $$
     (X,\, {\cal S},\, {\cal E},\,
	     \widehat{\varphi}:
          (\widehat{X}^{\!A\!z},\widehat{\cal E})\rightarrow \widehat{Y})\,,
   $$
  where
  \begin{itemize}
   \item[{\Large $\cdot$}]
    $X$ is a $C^k$-manifold
	(with a Riemannian or Lorentzian structure, depending on the context),
	
   \item[{\Large $\cdot$}]	
    ${\cal S}$ is a locally free ${\cal O}_X$-module
	    of the same rank as that of the ${\cal O}_Y$-module ${\cal S}_Y$,
	
   \item[{\Large $\cdot$}]	
   ${\cal E}$ is a locally free ${\cal O}_X^{\,\Bbb C}$-module of some finite rank $r$,
   
  \item[{\Large $\cdot$}]
   $(\widehat{X}^{\!A\!z},\widehat{\cal E})$
     is the Azumaya/matrix super-$C^k$-manifold with a fundamental module
	 specified by $({\cal S},{\cal E})$ on $X$ and
   $\,\widehat{\varphi}:
           (\widehat{X}^{\!A\!z},\widehat{\cal E})\rightarrow \widehat{Y}\,$					   
     is a $C^k$-map from $(\widehat{X}^{\!A\!z},\widehat{\cal E})$ to $\widehat{Y}$,
	    as defined in Definition~4.3.2,  					
                           % Definition [differentiable map]
    such that
    $$
      \varphi^{\sharp}({\cal S}_Y)\;
	     \subset \; {\cal O}_X^{A\!z}\otimes_{{\cal O}_X}{\cal S}\,.
    $$															
  \end{itemize}						
}\end{definition-prototype}

\medskip

\begin{remark} $[\,$action functional for fermionic D-brane$\,]$. {\rm
 The supersymmetric action functional of either type of fermionic D-branes in the above prototypical definitions
  remains to be worked out and understood.
 Cf.\ Sec.~6.
}\end{remark}

\bigskip

\subsection{The Higgs mechanism on D-branes vs.\ deformations of maps from a matrix brane}

Throughout the series of works in this project,
we have in a few occasions brought out the term `Higgsing/un-Higgsing' of D-branes
 in the study of deformations of maps/morphisms from an Azumaya/matrix scheme or manifold
 with a fundamental module;
 cf.\ [L-Y1: Sec.~2.2] and [L-Y2: Example~2.3.2.11].
In particlular, recall {\sc Figure} 5-2-0-1.
 %
 % \marginpar{\raggedright\tiny $\bullet$ {\sc Figure}:  \\  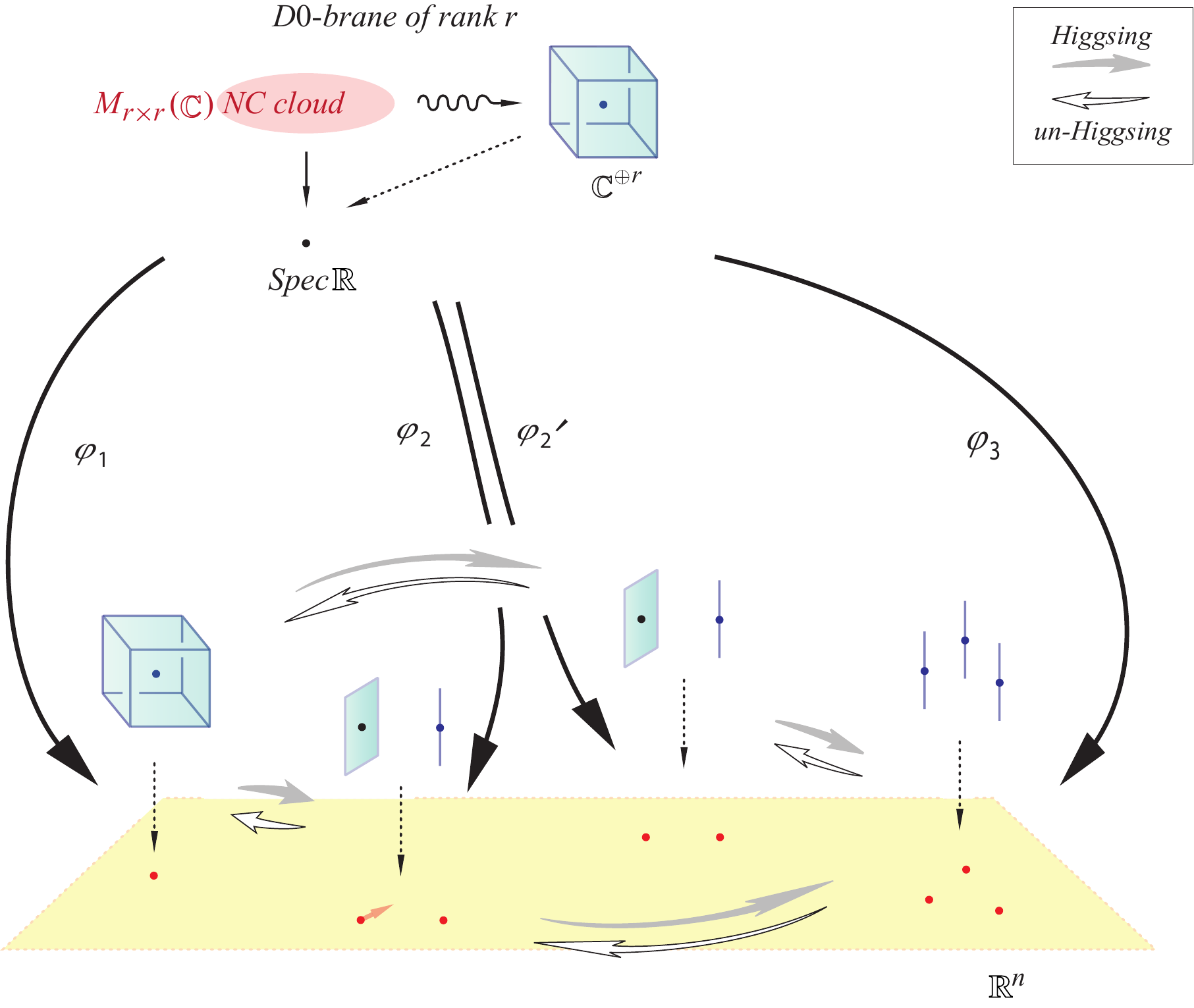}
 %
 \begin{figure} [htbp]
  \bigskip
  \centering
  \includegraphics[width=0.80\textwidth]{higg-un.pdf}
 
  \bigskip
  \bigskip
  \centerline{\parbox{13cm}{\small\baselineskip 12pt
   {\sc Figure}~5-2-0-1. (Cf.\ [L-Y2: {\sc Figure}~2-1-1] (D(6)).)   
    Readers are referred to [L-Y3: {\sc Figure} 3-4-1, caption] (D(11.1)) for explanations.
       }}
  \bigskip
 \end{figure}	
A closer look at the link of the two can now be made.

\bigskip

\subsubsection{The Higgs mechanism in the Glashow-Weinberg-Salam model}

{To} manifest the parallel setting in our situation,
we highlight in this subsubsection the relevant classical part of
 the Higgs mechanism in the Glashow-Weinberg-Salam model for leptons
  (here, electron and muon and their corresponding neutrinos)
  that breaks the gauge symmetry from $\SU(2)\times U(1)_Y $
  (the gauge symmetry for the electroweak interaction)
   to $U(1)_{em}$ (the gauge symmetry for the electromagnetic interaction).
Here, both $U(1)_Y$ and $U(1)_{em}$ are isomorphic to the $U(1)$ group;
 the different labels $Y$ and {\it em}  indicates that $U(1)_{em}$ is a subgroup
  of $\SU(2)\times U(1)_Y$ that is different from the factor $U(1)_Y$ in the product.
 {\it em} here stands for `{\it e}lectro{\it m}agnetic'.
Mathematicians are referred to
  [P-S: Sec.~20.2] and also
  [I-Z: Sec.~12.6], [Mo: Chap.\ 3], [Ry: Sec.~8.5]
  for a complete discussion that takes care also of quantum-field-theoretical issues
  such as renormalizability of and anomalies in the theory.

\clearpage
%\bigskip

\begin{flushleft}
{\bf The Lagrangian density of the Glashow-Weinberg-Salam model}
\end{flushleft}
This is a $4$-dimensional quantum field theory on the Minkowski space-time ${\Bbb M}^{3+1}$
 (with coordinates $x:=(x^{\mu})_{\mu=0,1,2,3}$),
whose Lagrangian density is given by
 $$
  {\cal L}\;  =\;
     {\cal L}_{\fermionscriptsize}\,+\,{\cal L}_{\gaugescriptsize}\,
	   +\, {\cal L}_{\Higgsscriptsize}\, +\, {\cal L}_{\Yukawascriptsize}\,,
 $$
 where
 \begin{itemize}
  \item[{\Large $\cdot$}]
   ${\cal L}_{\fermionscriptsize}\;=\;
      \bar{E}_L(i\,/\!\!\!\!D)E_L+\bar{e}_R(i\,/\!\!\!\!D)e_R
	  + \bar{Q}_L(i\,/\!\!\!\!D)Q_L+\bar{u}_R(i\,/\!\!\!\!D)u_R
	  + \bar{d}_R(i\,/\!\!\!\!D)d_R$,
 
  \item[{\Large $\cdot$}]
   ${\cal L}_{\gaugescriptsize}\;
     =\;  -\,\frac{1}{4} F_{\mu\nu}F^{\mu\nu}$,
   
  \item[{\Large $\cdot$}]
   ${\cal L}_{\Higgsscriptsize}\;
      =\;  (D_{\mu}\phi)^{\dagger}(D^{\mu}\phi)
	            +  \mu^2 \phi^{\dagger}\phi  -  \lambda (\phi^{\dagger}\phi)^2$,

  \item[{\Large $\cdot$}]
  ${\cal L}_{\Yukawascriptsize}\;
    =\; -\lambda_e\bar{E}_L\cdot\phi e_R
	     - \lambda_d\bar{Q}_L\cdot\phi d_R
		 - \lambda_u\epsilon^{ab}\bar{Q}_{La}\phi^{\dagger}_bu_R
		 + \, \mbox{(hermition conjugates)}$\\[.6ex]
   consists of the cubic interaction terms
   that are linear in Higgs-field components and quadratic in other matter-field components.
 \end{itemize}
The part in this expression that is most relevant to us is explained/reviewed below:
(Assuming up-down-repeated-dummy-index summation convention.)
 \begin{itemize}
  \item[(1)]
   The Lie algebra $\su(2)$ takes the basis $\{\tau_1, \tau_2, \tau_3\}$ from Pauli matrices
     $\sigma_1,\, \sigma_2,\, \sigma_3$   with
    $$
	  \tau_1\:=\:\frac{1}{2}\,\sigma_1\:
	   =\: \frac{1}{2}\left[ \begin{array}{cc}0  & 1  \\[.6ex] 1 &  0  \end{array} \right],	
       \hspace{1em}	
	  \tau_2\: =\: \frac{1}{2}\,\sigma_2\:
	   =\: \frac{1}{2}\left[ \begin{array}{cr} 0  & -i   \\[.6ex]  i &  0  \end{array} \right],	  
	   \hspace{1em}
	  \tau_3\: =\: \frac{1}{2}\,\sigma_3\:
	   =\: \frac{1}{2}\left[ \begin{array}{cr} 1  & 0  \\[.6ex]  0 & -1  \end{array} \right];
	$$
    $A = A_{\mu}^a(x)\tau_adx^{\mu}$ is the $\su(2)$ gauge field
     in the adjoint representation of $\su(2)$,
   $B = B_{\mu}(x)dx^{\mu}$ is the $U(1)_Y$ gauge field,
   $F=F_{\mu\nu}(x)dx^{\mu}\wedge dx^{\nu}$ the field strength (i.e.\ curvature)
     of the $\su(2)\oplus u(1)_Y$-valued $1$-form field $A \oplus B$.
	
  \item[(2)]	
   The $\su(2)\oplus u(1)_Y$ representation-theoretical contents of all the fields in the model
   are given by the following table:
 
  \vspace{1.2ex}
  \item[]
   \centerline{
   \begin{tabular}{lcccl}
    field in $\su(2)$-rep
	      && \hspace{-1.2ex}$\su(2)\oplus u(1)_Y$    \\[-.6ex]
    components 		&&  representation
          && \raisebox{-1.2ex}{\rule{0ex}{1.2em}}\hspace{1em}name \\  \hline\hline
	 \hspace{1em} gauge fields    \raisebox{-1.2ex}{\rule{0ex}{1.8em}}\\ \hline
      $A_{\mu}\; =\; \left(\!\!\mbox{\scriptsize
	        $\begin{array}{c} A_{\mu}^1 \\[1.2ex]
			                                           A_{\mu}^2 \\[1.2ex]
													   A_{\mu}^3    \end{array}$}\!\!\right)$ 	  
	   && $(3, 0)$
	   && {\footnotesize \hspace{-2.4ex}
	             $\begin{array}{l}\mbox{gauge boson}  \\[.2ex]
  				                                         \mbox{gauge boson}  \\[.2ex]
														 \mbox{gauge boson} \end{array}$}
				 \raisebox{-2.4ex}{\rule{0ex}{3.6em}} \\[.6ex] 		
     $B_{\mu}$	 && (1,0)  && {\footnotesize gauge boson}			   \\[.6ex]  \hline\hline				 
     \hspace{1em} matter fields    \raisebox{-1.2ex}{\rule{0ex}{1.8em}}\\ \hline
     $E_L\; =\; \left(\!\!\mbox{\scriptsize
	                            $\begin{array}{c} \nu_{eL} \\ e_L \end{array}$}\!\!\right)$ 	  
	   && $(2, -\frac{1}{2})$
	   && {\footnotesize \hspace{-2.4ex}
	             $\begin{array}{l}\mbox{neutrino (left-handed)}  \\[.2ex]
  				                                         \mbox{electron (left-handed)}\end{array}$}
				 \raisebox{-2.4ex}{\rule{0ex}{2.6em}} \\[.6ex]
     $Q_L\; =\; \left(\!\!\mbox{\scriptsize
	                            $\begin{array}{c} u_L \\ d_L  \end{array}$}\!\!\right)$
       && $(2, \frac{1}{6})$
	   &&{\footnotesize \hspace{-2.4ex}
	             $\begin{array}{l}\mbox{quark (left-handed)}\\[.2ex]
				                                         \mbox{quark (left-handed)}\end{array}$}
				 \raisebox{-2.4ex}{\rule{0ex}{2.6em}}						\\[.6ex]
     $e_R$       && $(1, -1)$      && {\footnotesize electron (right-handed)} \\[.6ex]
     $u_R$       && $(1, \frac{2}{3})$      && {\footnotesize quark (right-handed)}  \\[.6ex]
     $d_R$       && $(1, -\frac{1}{3})$
                          && {\footnotesize quark (right-handed)}
 						          \raisebox{-1.6ex}{\rule{0ex}{1.2em}} \\ \hline\hline
     \hspace{1em} Higgs fields    &&  \raisebox{-1.2ex}{\rule{0ex}{1.8em}}   \\ \hline
     $\phi\; =\; \left(\!\!\mbox{\scriptsize
	                       $\begin{array}{c}\phi^+ \\  \phi^0\end{array}$}\!\!\right)$		
       && $(2,  \frac{1}{2})$
	   && {\footnotesize  Higgs field}   \raisebox{-2.4ex}{\rule{0ex}{2.6em}}\\  \hline
   \end{tabular}} % end-centerline
  
  \vspace{1.2ex}
  \item[]
   In the above table,  matter fields
      $\nu_{eL},\,  e^-_L,\,  u_L,\,  d_L$ (resp.\  $e_R,\, u_R,\, d_R$)
    are described by left-handed (resp.\ right-handed)  Weyl spinor fields on ${\Bbb M}^{3+1}$.
 
  \item[{\Large $\cdot$}]
   For a field $\psi$ belonging to a representation $T$ of $\su(2)$ with $u(1)_Y$ charge $Y$,
   the covariant derivative $D_{\mu}\psi$ of $\psi$ is given by
   $$
     D_{\mu}\psi \; =\;  (\partial _{\mu} -igA_{\mu}^aT_a -ig^{\prime}YB_{\mu})\,\psi\,.
   $$
   For example, for the complex $2$-component  Higgs field $\phi$,
   $$
     D_{\mu}\phi\;
     =\;  \left(  \partial_{\mu}-igA^a_{\mu}(x)\tau_a
	                   - \frac{i}{2}g^{\prime}B_{\mu}(x)\right) \phi\,.
   $$
  Similarly, for the covariant derivative of $E_L$, $Q_L$, $e_R$, $u_R$, $d_R$.
    
  \item[{\Large $\cdot$}]
   $g$, $g^{\prime}$, $\mu^2$, $\lambda$, and $\lambda_e$, $\lambda_d$, $\lambda_u$
   in ${\cal L}$ are real-valued coupling constants of the model.
	
  \item[{\Large $\cdot$}]	
   ${\cal L}_{\fermionscriptsize}$ as given is  the standard gauge-invariant Lagrangian density
   without a potential for massless fermions.
 \end{itemize}
Formally and classically, one has
  massless fermions ($\nu_{eL}$, $e_L$, $u_L$, $d_L$, $e_R$, $u_R$, $d_R$),
  massless gauge bosons ($A_{\mu}^a$, $B_{\mu}$),
  and tachyonic Higgs bosons ($\phi^+$, $\phi^0$) to begin with.
 
\bigskip

\begin{ssremark}$[\,$quark and strong interaction$\,]$. {\rm
 The above model contains quarks $u_L$, $u_R$, $d_L$, $d_R$.
 However, only their involvement with the electroweak interaction is considered.
 The Standard Model, which takes into account also the strong interaction,
  enlarges the gauge group from the current
  $\SU(2)\times U(1)_Y$ to $\SU(2)_L\times U(1)_Y\times \SU(3)_c$.
 On the other hand,
   as long as the purpose of comparison to Higgs mechanism on D-branes in our setting is concerned,
 one can remove all the quarks in the model and consider only the truncated theory
  $$
   \begin{array}{ccl}
   {\cal L}^{\prime}   & =
      &
     {\cal L}_{\fermionscriptsize}^{\prime}\,+\,{\cal L}_{\gaugescriptsize}\,
	   +\, {\cal L}_{\Higgsscriptsize}\, +\, {\cal L}_{\Yukawascriptsize}^{\prime} \\[1.2ex]
    &= &  \bar{E}_L(i\,/\!\!\!\!D)E_L+\bar{e}_R(i\,/\!\!\!\!D)e_R\;
		         -\,\frac{1}{4} F_{\mu\nu}F^{\mu\nu}\\[1.2ex]
    &&		       +  (D_{\mu}\phi)^{\dagger}(D^{\mu}\phi)\,
                              +  \mu^2 \phi^{\dagger}\phi   - \lambda (\phi^{\dagger}\phi)^2\;
			    -\lambda_e\bar{E}_L\cdot\phi e_R\;
	          + \, \mbox{(hermition conjugates)}\,.			
   \end{array}
  $$
}\end{ssremark}

\bigskip

\begin{flushleft}
{\bf  Spontaneous gauge-symmetry breaking from a spontaneous settling-down to\\ a vacuum of the Higgs field}
\end{flushleft}
The vacuum manifold for the Higgs field $\phi$ is given by the locus in the $\phi$-space $\simeq {\Bbb C}^2$
 on which the potential $V(\phi)$ takes its minimum:
 $$
   \{(z^1,z^2)\,|\,    |z^1|^2+|z^2|^2\,=\,  \mu^2/(2\lambda)   \}\;\subset \; {\Bbb C}^2\,.
 $$
Let
 $$
   v\;=\;  \sqrt{\mu^2/\lambda}\; >\; 0\,.
 $$
Then, up to a gauge transformation, we may assume that the Higgs field ``condenses" to
(i.e.\ takes its value only at) the {\it vacuum expectation value} ({\it VEV}) $(0,v/\sqrt{2})$.
This reduces the gauge group from the original $\SU(2)\times U(1)_Y$
 to the subgroup $\Stab((0,v/\sqrt{2}))$ (the {\it stabilizer}
      or {\it isotropy group} of $(0,v/\sqrt{2})$)
  consisting of all elements of $\SU(2)\times U(1)_Y$ that leave $(0,v/\sqrt{2})$ fixed.
Or equivalently in terms of Lie algebras, it reduces $\su(2)\oplus u(1)$
 to the sub-Lie algebra $\Ann((0,v/\sqrt{2}))$ (the {\it annihilator} of $(0,v/\sqrt{2})$)
  consisting of all elements in $\su(2)\oplus u(1)_Y$ that annihilate $(0,v/\sqrt{2})$:
  $$
     u(1)_{em}\;
	   :=\;  \{\, (c^1\tau_1 + c^2\tau_2 +c^3\tau_3, c^4)\,|\,   c^1=c^2=0,\, c^3=c^4\,\}\;
	   \subset\;  \su(2)\oplus u(1)_Y\,.
  $$
  
The original model ${\cal L}$ descends to a new quantum field theory
 on fields that fluctuate around the Higgs vacuum $(0,v/\sqrt{2})$
  (a procedure that influences only the Higgs field in the current model)
 and in terms of the induced representations of the residual $u(1)_{em}$
   (a procedure that reorganizes and may regroup and influence all the fields in the model):
 \begin{itemize}
  \item[(1)]
   A general Higgs field $\phi$ around the VEV
    $\frac{1}{\sqrt{2}}${\scriptsize $\left(\!\!\!\begin{array}{c} 0\\ v \end{array}\!\!\!\right)$}
    can be expressed as 	
	$$
	  \phi(x)\; =\; U(x)\frac{1}{\sqrt{2}}\left(\!\!\!\begin{array}{c} 0\\ v + h(x)\end{array}\!\!\!\right)
	$$
	under a gauge transformation $U(x)$ with value in $\SU(2)\times U(1)_Y$,
  where $h(x)$ is a real-valued scalar field 	on ${\Bbb M}^{3+1}$ with $\langle h(x)\rangle =0$.
  Thus, the only physical degree of freedom from $\phi$ after gauge-symmetry breaking  is $h$.
  
  \item[(2)]	
   The $u(1)_{em}$ representation-theoretical contents of all the fields around vacuum
    in the model after the symmetry breaking are given by the following table.
   Each is specified by its $u(1)_{em}$-charge.
 
  \vspace{1.2ex}
  \item[]
   \centerline{
   \begin{tabular}{lcrcl}
    field in $u(1)_{em}$-rep
	      && \hspace{-1.2ex}$u(1)_{em}$-charge
		  && \raisebox{-1.2ex}{\rule{0ex}{1.2em}}\hspace{1em}name \\  \hline\hline
	 \hspace{1em} ($u(1)_{em}$) gauge field    \raisebox{-1.2ex}{\rule{0ex}{1.8em}}\\ \hline      
    $A^{em}_{\mu}\;
	  :=\; \frac{1}{\sqrt{g^2+g^{\prime 2}}}(g^{\prime}A_{\mu}^3+gB_{\mu})$	
	   &&    $0$ \hspace{2.2em}
	   &&   {\footnotesize gauge boson} \raisebox{-2.8ex}{\rule{0ex}{2.6em}}    \\ \hline\hline	
	 \hspace{1em} massive vector fields    \raisebox{-1.2ex}{\rule{0ex}{1.8em}}\\ \hline      	   
	$W_{\mu}^+\; :=\; \frac{1}{\sqrt{2}}(A_{\mu}^1 - iA_{\mu}^2)$
	   &&   $1$ \hspace{2.2em}  && {\footnotesize vector boson}         \rule{0ex}{1.6em}  \\[1.8ex]
	$W_{\mu}^-\; :=\; \frac{1}{\sqrt{2}} (A_{\mu}^1 + i A_{\mu}^2)$
       && $-1$ \hspace{2.2em}        &&  {\footnotesize vector boson}         \\[1.8ex]
    $Z_{\mu}^0\;
	   :=\; \frac{1}{\sqrt{g^2+g^{\prime 2}}}(gA_{\mu}^3 - g^{\prime}B_{\mu})$	
       &&  $0$ \hspace{2.2em}       && {\footnotesize vector boson}	
	                          \raisebox{-3ex}{\rule{0ex}{1.8em}}	   \\[.6ex]  \hline\hline				 
     \hspace{1em} matter fields    \raisebox{-1.2ex}{\rule{0ex}{1.8em}}\\ \hline
     $E_L\; =\; \left(\!\!\mbox{\scriptsize
	                            $\begin{array}{c} \nu_{eL} \\ e_L \end{array}$}\!\!\right)$ 	  
	   &&  {\footnotesize $\begin{array}{r}  0 \\  -1  \end{array}$} \hspace{1.8em}
	   && {\footnotesize \hspace{-2.4ex}
	             $\begin{array}{l}\mbox{neutrino (left-handed)}  \\[.2ex]
  				                                         \mbox{electron (left-handed)}\end{array}$}
				 \raisebox{-2.4ex}{\rule{0ex}{2.6em}} \\[.6ex]
     $Q_L\; =\; \left(\!\!\mbox{\scriptsize
	                            $\begin{array}{c} u_L \\ d_L  \end{array}$}\!\!\right)$
       &&  {\footnotesize $\begin{array}{r}  2/3 \\[.2ex]  - 1/3  \end{array}$} \hspace{1.8em}
	   &&{\footnotesize \hspace{-2.4ex}
	             $\begin{array}{l}\mbox{quark (left-handed)}\\[.2ex]
				                                         \mbox{quark (left-handed)}\end{array}$}
				 \raisebox{-2.4ex}{\rule{0ex}{2.6em}}						\\[.6ex]
     $e_R$       &&  $-1$  \hspace{2.2em}    && {\footnotesize electron (right-handed)} \\[.6ex]
     $u_R$       && $2/3$ \hspace{2.2em}
	                      && {\footnotesize quark (right-handed)}  \\[.6ex]
     $d_R$       && $-1/3$ \hspace{2.2em}
                          && {\footnotesize quark (right-handed)}
 						          \raisebox{-1.6ex}{\rule{0ex}{1.2em}} \\ \hline\hline
     \hspace{1em} Higgs field    &&  \raisebox{-1.2ex}{\rule{0ex}{1.8em}}   \\ \hline
     $h$             && 0  \hspace{2.2em}
	   && {\footnotesize  Higgs field}   \raisebox{-2.4ex}{\rule{0ex}{2.6em}}\\  \hline
   \end{tabular}} % end-centerline
  
  \vspace{1.2ex}
  \item[]
   Note in particular that
   the $u(1)_{em}$-charge of both Weyl spinors $e_L$ and $e_R$ on ${\Bbb M}^{3+1}$ is $-1$,
   reaffirming that they correspond to electrons of different chiralities moving in the Minkowski space-time.
   Note also that all quarks have fractional $u(1)_{em}$-charge.
 \end{itemize}
This gives the full field contents of the model after the gauge-symmetry breaking
  in terms of representations of the $U(1)_{em}$ symmetry left.

\bigskip

\begin{flushleft}
{\bf Mass generation to fermions and gauge bosons after the symmetry breaking}
\end{flushleft}
Rewrite the Lagrangian density ${\cal L}$ now in terms of fields after the gauge-symmetry breaking.
Then, up to an overall constant,
$$
 \begin{array}{lcl}
 {\cal L}_{\Higgsscriptsize}
     & =   & \frac{1}{2}(\partial_{\mu}h)^2   - \mu^2h^2
	                 +\frac{g^2v^2}{4}W_{\mu}^-W_{\mu}^+
                     + \frac{(g^2+g^{\prime 2})v^2}{4}(Z_{\mu}^0)^2  \\[1.2ex]
  && \hspace{6em}
      	+\,  (\,\mbox{higher order terms in $h$, $W_{\mu}^-$, $W_{\mu}^+$, $Z_{\mu}^0$}\,)					 
	             % - \lambda vh^3 - \frac{\lambda}{4} h^4
	             \\[2.4ex]
 {\cal L}_{\Yukawascriptsize}
    & =    &  - \frac{\lambda_e v}{\sqrt{2}}\bar{E}_L\cdot e_R
	               - \frac{\lambda_d v}{\sqrt{2}}\bar{Q}_L\cdot d_R
		           - \frac{\lambda_u v}{\sqrt{2}}\bar{u}_L u_R                   \\[1.2ex]
   &&	\hspace{6em}       + \, (\,\mbox{higher order terms and hermition conjugates}\,)\,.	
 \end{array}				
$$
Note that the kinetic terms
  for $h$ in ${\cal L}_{\Higgsscriptsize}$,
        $W_{\mu}^-$, $W_{\mu}^+$, $Z_{\mu}^0$ in ${\cal L}_{\gaugescriptsize}$,
        $e_L$, $e_R$,  $d_L$, $d_R$, $u_L$, $u_R$ in ${\cal L}_{\fermionscriptsize}$
  are all in the standard form.
It follows that classically the mass of particles associated to
  $h$, $W_{\mu}^-$, $W_{\mu}^+$, $Z_{\mu}^0$
 (resp.\ $e_L$, $e_R$, $d_L$, $d_R$, $u_L$, $u_R$)
 are read off from ${\cal L}_{\Higgsscriptsize}$ (resp.\ ${\cal L}_{\Yukawascriptsize}$) as
 $$
   m_h\; =\;  \sqrt{2}\mu\,, \hspace{2em}
   m_{W_{\mu}^-}\; =\;  m_{W_{\mu}^+}\; =\;  \frac{gv}{2}\,, \hspace{2em}
   m_{Z_{\mu}^0}\;=\; \frac{\sqrt{g^2+g^{\prime 2}}v}{2}\,,
 $$
 $$
   m_{e_L}\;=\; m_{e_R}\;=\; \frac{\lambda_e v}{\sqrt{2}}\,,  \hspace{2em}
   m_{d_L}\; =\; m_{d_R}\; =\; \frac{\lambda_d v}{\sqrt{2}}\,,  \hspace{2em}
   m_{u_L}\;=\; m_{u_R}\; =\; \frac{\lambda_u v}{\sqrt{2}}  \,.
 $$

{To} recap in words,
 \begin{itemize}
  \item[(1)]
   Before gauge-symmetry breaking,
     ${\cal L}_{\Higgsscriptsize}$ is the only part in ${\cal L}$ that contains
      quartic terms that are quadratic in gauge fields and quadratic in Higgs field.
   After gauge-symmetry breaking, such terms create a mass term for the gauge fields that correspond to broken
    gauge symmetry, rendering them massive vector bosons in the symmetry-broken theory.   	

  \item[(2)]
   Before gauge-symmetry breaking,
     ${\cal L}_{\Yukawascriptsize}$ is the only part in ${\cal L}$ that contains
      cubic terms that are quadratic in fermionic fields and linear in Higgs fields.
   After gauge-symmetry breaking, such terms create a mass term for the fermionic fields,
    rendering them massive fermions in the symmetry-broken theory. 	
 \end{itemize}

\bigskip

\begin{ssremark}
 $[\,$mathematical reflection$\,:$ principal bundle, representation,  associated bundle, reduction of gauge group,
          induced representation$\,]$.
 {\rm
 In mathematical terms,
 let
  \begin{itemize}
    \item[{\Large $\cdot$}]
    $P_{\Lorentzscriptsize}$ be the principle Lorentz-frame bundle
	(with group $\SO(3,1)$   and
    	trivialized by the flat Levi-Civita connection associated to the space-time metric)
	over ${\Bbb M}^{3+1}$,
   
    \item[{\Large $\cdot$}]
	 $P_G$
     be a principle $G$-bundle (trivial in the above case) with group $G=\SU(2)\times U(1)_Y$.
 \end{itemize}
 The various tensor products of irreducible representations $V_{\rho^L}$ of $\SO(3,1)$
  and irreducible representations $V_{\rho^G}$ of $G$
  give rise to various associated vector bundles $E_{\rho^L\otimes\rho^G}$
  of  the $\SO(3,1)\times G$-principle bundle
  $P_{\Lorentzscriptsize}\times_{{\Bbb M}^{3+1}}P_G $
 whose sections corresponds to various fields $\psi_{\rho^L\otimes \rho^G}$
   on the space-time ${\Bbb M}^{3+1}$.
 $\rho_L$ determines the spin (e.g.\ bosons vs.\ fermions) of $\psi_{\rho^L\otimes \rho^G}$
   while $\rho^G$ distinguishes other particle features (e.g.\ electrons vs.\ quarks).
 A choice of a collection of such representations and
   a choice of a gauge invariant Lagrangian density ${\cal L}$ for the fields corresponding to these representations
  together give a model of particle physics.

 When some of the fields take their VEV,
  the gauge group is reduced to the stabilizer subgroup $H\subset G$   of the VEV and
  a principal subbundle $P_H\subset P_G$
   (and hence $P_{\Lorentzscriptsize}\times_{{\Bbb M}^{3+1}}P_H
                                 \subset P_{\Lorentzscriptsize}\times_{{\Bbb M}^{3+1}}P_G$)
  is selected.
 Original fields expanding around VEV assume naturally induced representations from $G$ to its subgroup $H$.
 Re-writing ${\cal L}$ in terms of fields corresponding to these induced representations
  gives the classical picture of the Higgs mechanism.
  
 Mathematicians should be aware that this is the easy part.
 It is the contents at the quantum level that take works.
 And for the Glashow-Weinberg-Salam model, such quantum-field-theoretical conclusions
  are experimentally justified, up to possibly higher order corrections.
}\end{ssremark}

\bigskip

\subsubsection{The Higgs mechanism on the matrix brane world-volume}

While the full detail of the Higgs mechanism that fits our setting can be produced
 only after the action functional for differentiable maps from a matrix brane with fields is introduced,
 the basic structure that initiates the mechanism and the associated gauge-symmetry-breaking pattern
  are readily there in the setting.
Indeed, the following two steps
 \begin{itemize}
  \item[(1)]
   Recall Remark~5.2.1.2
          % Remark [mathematical reflection : principal bundle, representation, associated bundle, 
		  %                 reduction of gauge group, induced representation]
     and consider the associated Lie-algebra bundle
    $E_{\frak G}$ of $P_G$ from the Adjoint representation $G$ on its Lie algebra ${\frak G}$.
   Note that all $G$-modules are naturally ${\frak G}$-modules from the induced endomorphisms.	
	
 \item[(2)]	
  Promote the Lie algebra ${\frak G}$ to a (unital) associative algebra ${\cal A}$,
    the Lie-algebra bundle $E_{\frak G}$ to an associative-algebra bundle $E_{\cal A}$ and
    all ${\frak G}$-modules to ${\cal A}$-modules;
  and  do the same reduction procedure as in Sec.~5.2.1,
    with the sub-Lie algebra of ${\frak G}$ specified by a VEV
	replaced by an appropriate subalgebra of ${\cal A}$.   	
 \end{itemize}
 give the essential formal classical picture of the Higgs mechanism on D-branes in our setting.
(And this is why we call it Higgs mechanism in our setting after all.)
We now explain the details,
 assuming that
   an unspecified action functional for D-branes in our setting with fields thereupon is given
 (cf.\ Sec.~6).

\bigskip

\begin{flushleft}
{\bf Candidates for the Higgs fields}
\end{flushleft}
There are two classes of fields on a matrix brane that can serve as the Higgs fields:
 \begin{itemize}
   \item[(1)]
    {\it Differentiable maps from the matrix brane to the target space(-time)}.\\[1.2ex]
	This can be traced back to the origin of this project ([L-Y1] (D(1))).
	Coincident D-branes in the space-time exhibit, in addition to enhanced gauge symmetries,
	 an enhanced matrix-valued scalar field on the D-brane world-volume that describes the deformations
	 of the D-branes collectively. Cf. [Po2: vol.\ I, Sec.\ 8.7] of Joseph Polchinski.
	
   \item[(2)]
    {\it Differentiable sections on the fundamental bundle $E$ or its dual $E^{\vee}$}. \\[1.2ex]
	Such fields can occur when a dynamical D-brane is immersed
	  in a non-dynamical background D-brane in the space-time,
	  in the same spirit as [Do-G: Sec.~5] of Michael Douglas and Gregory Moore.
    Cf.\ {\sc Figure}~1-2 and its caption. 	  	
 \end{itemize}

\medskip

\begin{ssremark} $[\,$soft lower-dimensional D-branes in a hard higher-dimensional D-brane$\,]$. {\rm
 Recall that the tension $T_p$ of a D$p$-brane in a target-space-time of a superstring is given by
  $$
    T_p\;  =\;   2\pi\,\left(\frac{T}{2\pi}\right)^{\frac{p+1}{2}}\,,
  $$
 where $T$ is the tension of the superstring.
 Thus, in the regime of the superstring theory where $T\gg 2\pi$,
  higher-dimensional D-branes would have much larger tension than the lower-dimensional ones and
 it makes sense to consider a soft/dynamical D-branes immersed in a hard/background D-branes.
}\end{ssremark}
 
\medskip

\begin{ssremark} $[\,$comparison with Glashow-Weinberg-Salam model$\,]$. {\rm
 If the Glashow-Weinberg-Salam model is realized as a gauge theory
   on the world-volume of coincident D3-branes,
 then the Higgs field $\phi$ in the model would correspond to a section of the Chan-Paton bundle $E$.
}\end{ssremark}

\bigskip

\begin{flushleft}
{\bf Reduction induced by a VEV of Higgs field: Transverse fluctuations and the new field contents}
\end{flushleft}
The notion of
  `{\it reduction}' induced by a VEV of Higgs field:
  \begin{itemize}
   \item[{\Large $\cdot$}]
    a `{\it transverse fluctuation}' of a Higgs field around its vacuum expectation value (VEV)  and
    the resulting `{\it new field contents}' under the ``{\it induced symmetry-breaking}'
       from the VEV of the Higgs field
  \end{itemize}	
 are naturally built into the setting  as follows.

\bigskip

\noindent
{\it Case $(a)$}: {\it Differentiable map from the brane world-volume  as Higgs field}

\medskip

\noindent
Let
 $\varphi_0:(X^{\!A\!z},{\cal E})\rightarrow Y$ be a $C^k$-map
 defined by a $C^k$-admissible $\varphi_0^{\sharp}:{\cal O}_Y\rightarrow {\cal O}_X^{A\!z}$,
 and
 $$
    \xymatrix{
    {\cal O}_X^{A\!z}= \Endsheaf_{{\cal O}_X^{\,\Bbb C}}({\cal E})\\
       {\cal A}_{\varphi_0}\;:=\, \rule{0ex}{3ex}
         		{\cal O}_X\langle\Image\varphi_0^{\sharp}\rangle \ar@{^{(}->}[u]
                    				&&& {\cal O}_Y\ar[lll]_-{\varphi_0^{\sharp}}\\
    {\cal O}_X\rule{0ex}{3ex}  \ar@{^{(}->}[u]							
    }
 $$
 be the underlying diagram.
Let ${\cal C}({\cal A}_{\varphi_0})$ be the {\it commutant sheaf} of
  ${\cal A}_{\varphi_0}$ in ${\cal O}_X^{A\!z}$, defined by
 $$
  {\cal C}({\cal A}_{\varphi_0})(U)\;
      :=\; \{a\in {\cal O}_X^{A\!z}(U)\,|\,
	                 \mbox{$[a, a^{\prime}]=0$ for all $a^{\prime}\in {\cal A}_{\varphi_0}(U)$}\}
 $$
 for open sets $U$ of $X$.
Here, $[a,a^{\prime}]:= aa^{\prime}-a^{\prime}a$.
Then,
 ${\cal C}({\cal A}_{\varphi_0})$
  is an ${\cal O}_X^{\,\Bbb C}$-subalgebra of ${\cal O}_X^{A\!z}$,
which contains ${\cal A}_{\varphi_0}$ since ${\cal A}_{\varphi_0}$ is commutative.
    
Consider
 \begin{itemize}
  \item[{\Large $\cdot$}]
    the class of  $C^k$-maps
      $\varphi:(X^{\!A\!z},{\cal E})\rightarrow Y$
        with the constraint that $\Image\varphi^{\sharp}\subset {\cal C}({\cal A}_{\varphi_0})$.
 \end{itemize}
A $C^k$-map  $\varphi$ in this class has the property
  that  ${\cal A}_{\varphi}\subset {\cal C}({\cal A}_{\varphi_0}) $ as well.
It follows that
 such $\varphi^{\sharp}:{\cal O}_Y\rightarrow {\cal O}_X^{A\!z}$
 induces an equivalence class of gluing systems of ring-homomorphisms
 $$
    \underline{\varphi}^{\sharp}\;:\; {\cal O}_Y\;
	   \longrightarrow\;   {\cal C}({\cal A}_{\varphi_0})
 $$
 and, equivalently,  $\varphi$ induces a map
 $$
   \underline{\varphi}\;:\; (X,{\cal C}({\cal A}_{\varphi_0}))\;\longrightarrow\; Y\,.
 $$
By construction,
 $\varphi$ factors through $\underline{\varphi}$ as indicated in the following commutative diagram:
 $$
  \xymatrix{
   &  (X, {\cal O}_X^{A\!z},{\cal E})\hspace{1em}
            \ar@<-3ex>@{->>}[d]      \ar[rr]^-{\varphi}                                                       && Y  \\
   &  (X, {\cal C}({\cal A}_{\varphi_0}), {\cal E} )
                                                \ar[rru]_-{\underline{\varphi}}   && &,
   }
 $$
 where
  \begin{itemize}
   \item[{\Large $\cdot$}]
    $(X,{\cal C}({\cal A}_{\varphi_0}))$
	  is a noncommutative space with the underlying topology $X$
	and the structure sheaf ${\cal C}({\cal A}_{\varphi_0})$;
	
   \item[{\Large $\cdot$}]	
    ${\cal E}$ is now regarded as the ${\cal C}({\cal A}_{\varphi_0})$-module
	via the built-in inclusion
    ${\cal C}({\cal A}_{\varphi_0})\subset {\cal O}_X^{A\!z}$;
	
   \item[{\Large $\cdot$}] 	
    the surjection
	  $(X,{\cal O}_X^{A\!z})\rightarrow (X,{\cal C}({\cal A}_{\varphi_0}))$
	 is defined by the built-in inclusion
	 ${\cal C}({\cal A}_{\varphi_0})\hookrightarrow {\cal O}_X^{A\!z}$.
  \end{itemize}
Note that, also by construction,
 \begin{itemize}
  \item[{\Large $\cdot$}] {\it
   $\varphi$ and $\underline{\varphi}$ have identical surrogates, i.e.,
   $X_{\varphi}=X_{\underline{\varphi}}$,    and
   identical  $C^k$-maps
     $f_{\varphi}:X_{\varphi}\rightarrow Y$ and
	 $f_{\underline{\varphi}}:X_{\underline{\varphi}}\rightarrow Y$
	   under the above identification of surrogates.}
 \end{itemize}
With Sec.~5.2.1 in mind,
it is natural to interpret such $\varphi$ as a transverse fluctuation of Higgs field around $\varphi_0$
 in the current case. Moreover,
 \begin{itemize}
  \item[{\Large $\cdot$}] {\it
   Note that
     ${\cal C}({\cal A}_{\varphi_0})$
  	   is an ${\cal O}_X^{\,\Bbb C}$-subalgebra of ${\cal O}_X^{A\!z}$
      of lower rank as ${\cal O}_X^{\,\Bbb C}$-modules.
   It is in this sense that
     $\varphi_0$ ``breaks the original symmetry" in the current context.}
 \end{itemize}
All ${\cal O}_X^{A\!z}$-modules  can be regarded
 also as ${\cal C}({\cal A}_{\varphi_0})$-modules naturally.
 
As for the generation of masses by $\varphi_0$,
 as long as there are terms 
   ${\cal L}_{\Yukawascriptsize}$ and ${\cal L}_{\Higgsscriptsize}$
 in the action functional for fields on $X^{\!A\!z}$
   that are parallel to like terms of the same notation in the Glashow-Weinberg-Salam model in Sec.~5.2.1,
 $\varphi_0$ would play such role.

\bigskip

\noindent
{\it Case $(b)$}:  {\it Section of fundamental module as Higgs field}

\medskip

\noindent
Let $\xi_0$ be a nowhere-zero $C^k$-section of $E$.
Since fiberwise the $\Aut(E)$-orbit of $\xi_0$ is open in $E$,
 there is no transverse fluctuation of the Higgs field in the current case.
(However, see Remark 5.2.2.3.)
                  % Remark [reduction of structure group from $\GL(r,{\Bbb C})$ to $U(r)$]
Nevertheless,
 $\xi_0$,  regarded now as the VEV of the Higgs field in the current case,
  remains to have effect both on symmetry-breaking and on generation of masses, as we now explain.

Define the {\it null-sheaf} ${\cal N}(\xi_0)$ of $\xi_0$ in ${\cal O}_X^{A\!z}$
 to be the sheaf on $X$ defined by
 $$
  {\cal N}(\xi_0)(U)\;
        :=\; \left\{ a \in {\cal O}_X^{A\!z}(U)\,|\, a\cdot \xi_0|_U\;=\;0\,  \right\}
 $$
 for open sets $U$ of $X$.
Then, ${\cal N}(\xi_0)$ is an ${\cal O}_X^{\,\Bbb C}$-module that is multiplicatively closed.
But it is not an ${\cal O}_X^{\,\Bbb C}$-algebra since it has no unit element for the multiplication.
{To} remedy this, consider
  $$
    {\cal N}^+(\xi_0) \; :=\; {\cal N}(\xi_0) + {\cal O}_X^{\,\Bbb C}\;
	  \subset\;  {\cal O}_X^{A\!z}
  $$
This is now an ${\cal O}_X^{\,\Bbb C}$-subalgebra of ${\cal O}_X^{A\!z}$.
Note that ${\cal N}(\xi_0)\cap {\cal O}_X^{\,\Bbb C}=0$	in ${\cal O}_X^{A\!z}$;
  thus, ${\cal N}^+(\xi_0)\simeq {\cal N}(\xi_0)\oplus {\cal O}_X^{\,\Bbb C}$.

Consider
 \begin{itemize}
  \item[{\Large $\cdot$}]
    the class of  $C^k$-maps
      $\varphi:(X^{\!A\!z},{\cal E})\rightarrow Y$
        with the constraint that $\Image\varphi^{\sharp}\subset {\cal N}^+(\xi_0)$.
 \end{itemize}
A $C^k$-map $\varphi$ in this class has the property
  that  ${\cal A}_{\varphi}\subset {\cal N}^+(\xi_0)$ as well.
It follows that
 such $\varphi^{\sharp}:{\cal O}_Y\rightarrow {\cal O}_X^{A\!z}$
 induces an equivalence class of gluing systems of ring-homomorphisms
 $$
    \underline{\varphi}^{\sharp}\;:\; {\cal O}_Y\;\longrightarrow\;   {\cal N}^+(\xi_0)
 $$
 and, equivalently,  $\varphi$ induces a map
 $$
   \underline{\varphi}\;:\; (X,{\cal N}^+(\xi_0))\;\longrightarrow\; Y\,.
 $$
By construction,
 $\varphi$ factors through $\underline{\varphi}$ as indicated in the following commutative diagram:
 $$
  \xymatrix{
   &  (X, {\cal O}_X^{A\!z},{\cal E})\hspace{1em}
            \ar@<-3ex>@{->>}[d]      \ar[rr]^-{\varphi}                                                       && Y  \\
   &  (X, {\cal N}^+(\xi_0), {\cal E} )\ar[rru]_-{\underline{\varphi}}   && &,
   }
 $$
 where
  \begin{itemize}
   \item[{\Large $\cdot$}]
    $(X,{\cal N}^+(\xi_0))$ is a noncommutative space with the underlying topology $X$
	and the structure sheaf ${\cal N}^+(\xi_0)$;
	
   \item[{\Large $\cdot$}]	
    ${\cal E}$ is now regarded as the ${\cal N}^+(\xi_0)$-module via the built-in inclusion
    ${\cal N}^+(\xi_0)\subset {\cal O}_X^{A\!z}$;
	
   \item[{\Large $\cdot$}] 	
    the surjection $(X,{\cal O}_X^{A\!z})\rightarrow (X,{\cal N}^+(\xi_0))$
	 is defined by the built-in inclusion ${\cal N}^+(\xi_0)\hookrightarrow {\cal O}_X^{A\!z}$.
  \end{itemize}
Note that, also by construction,
 \begin{itemize}
  \item[{\Large $\cdot$}] {\it
   $\varphi$ and $\underline{\varphi}$ have identical surrogates, i.e.,
   $X_{\varphi}=X_{\underline{\varphi}}$,    and
   identical  $C^k$-maps
     $f_{\varphi}:X_{\varphi}\rightarrow Y$ and
	 $f_{\underline{\varphi}}:X_{\underline{\varphi}}\rightarrow Y$
	   under the above identification of surrogates.}
 \end{itemize}
 
On  the other hand,
 \begin{itemize}
  \item[{\Large $\cdot$}] {\it
    Since $\xi_0$ is nowhere-zero,
     ${\cal N}^+(\xi_0)$ is an ${\cal O}_X^{\,\Bbb C}$-subalgebra of ${\cal O}_X^{A\!z}$
     of lower rank as ${\cal O}_X^{\,\Bbb C}$-modules.
   It is in this sense that
     $\xi_0$ ``breaks the original symmetry" in the current context.}
 \end{itemize}
 
As for the generation of masses by $\xi_0$, the situation is the same as in the previous case: 
 as long as there are terms 
   ${\cal L}_{\Yukawascriptsize}$ and ${\cal L}_{\Higgsscriptsize}$
 in the action functional for fields on $X^{\!A\!z}$
   that are parallel to like terms of the same notation in the Glashow-Weinberg-Salam model in Sec.~5.2.1,
 such role of $\xi_0$ remain intact.
  
\bigskip
 
\begin{ssremark} $[\,$reduction of structure group from $\GL(r,{\Bbb C})$ to $U(r)$$\,]$. {\rm
 The notion of `transverse fluctuation' of the Higgs field in the current Case (b) will be postponed to the future.
 It is better addressed after a Hermitian metric on the Chan-Paton bundle $E$ is introduced
   and the details of the reduction
      $M_{r\times r}({\Bbb C})\Rightarrow \GL(n,{\Bbb C})\Rightarrow U(r)$
      in line with our setting are understood.
 Cf.~Remark~1.1.
         % Remark [reduction from $M_{r\times r}({\Bbb C})$ to $u(r)$]
 %%%%------------------------------------------------------------------------------------------------------------------------- 
 % So far in this project we have not yet discussed the issue of reduction of gauge group
 %   from $\GL(n,{\Bbb C})$ to $U(n)$, using the Hermitian metric on the Chan-Paton bundle $E$.
 %  This is an important issue to link what we have done  to D-brane in the string-theory literature
 %    and should be addressed later.
 %%%%%%%%%%%%---------------------------------------------------------------------------------------------------
 }\end{ssremark}

\bigskip

The following toy model  of a dynamical Azumaya/matrix brane with fermions illustrates
 the further issue of the generation of masses from the VEV of a Higgs field.
It is a modification (by a potential) of a truncated and simplified version of an action functional
 motivated by Matrix Theory in the sense of
  Thomas Banks, Willy Fischler, Stephen Shenker, and Leonard Susskind  [B-F-S-S].
See, for example, [T-vR1], [T-vR2], [T-vR3] for some related study in curved space-time.

\bigskip
  
\begin{ssexample} {\bf [Higgs mechanism on Azumaya/matrix brane with fermion].} {\rm
 Let
  \begin{itemize}
   \item[{\Large $\cdot$}]
    $X={\Bbb R}^1$ be the real line as a $C^{\infty}$-manifold with coordinate $t$, and\\
	${\cal O}_X := {\cal O}_{{\Bbb R}^1}$
	   be the structure sheaf of $C^{\infty}$-functions on ${\Bbb R}^1$;
	
   \item[{\Large $\cdot$}]	
    ${\cal E}\simeq {\cal O}_{{\Bbb R}^1}\otimes_{\Bbb R}{\Bbb C}^{\oplus r}$
	  be a free ${\cal O}_{{\Bbb R}^1}^{\,\Bbb C}$-module of rank $r$;
    for concreteness, we assume that ${\cal E}$ is trivialized;
	
   \item[{\Large $\cdot$}]	
    $(X^{\!A\!z}, {\cal E})
	    := ({\Bbb R}^{1, A\!z},{\cal E})
        := ({\Bbb R}^1,
		         {\cal O}_{{\Bbb R}^1}^{A\!z}
				           := \Endsheaf_{{\cal O}_{{\Bbb R}^1}^{\,\Bbb C}}({\cal E}),
				 {\cal E})$
    be an Azumaya/matrix real line with a fundamental module  and
   
   \item[{\Large $\cdot$}]
   $Y={\Bbb M}^{(d-1)+1}$ be the $d$-dimensional Minkowski space-time,
    as a $C^{\infty}$-manifold with coordinates $(y^a)_a :=(y^0, y^1,\,\cdots\,,y^{d-1})$
	and a flat Lorentzian metric\\
	$ds^2= -(dy^0)^2+(dy^1)^2+\,\cdots\,+ (dy^{d-1})^2$.
  \end{itemize}	
  
 Consider a $1$-dimensional quantum field theory on the matrix real line ${\Bbb R}^{1,A\!z}$
  with fields:
  \begin{itemize}
   \item[{\Large $\cdot$}]  {\it Bosonic}$\,$:
    \begin{itemize}
	 \item[-]
	  {\it $C^{\infty}$-maps} $\,\varphi: {\Bbb R}^{1,A\!z}\rightarrow {\Bbb M}^{(d-1)+1}\,$	  
	  from the matrix real line to the Minkowski space-time;
	 recall that $\varphi$ is defined by a $C^{\infty}$-admissible ring-homomorphism
	 $$
	  \begin{array}{ccccccc}
	   \hspace{3em}
	      & \varphi^{\sharp} & : & C^{\infty}({\Bbb M}^{(d-1)+1})
	      & \longrightarrow  & M_{r\times r}(C^{\infty}({\Bbb R}^1))\\[.6ex]
       &&& y^a  & \longmapsto  & Y^a(t)\,,   & a=0,\, 1,\,\cdots\,, d-1\,;
	  \end{array}
	 $$
	 	
	 \item[-]
	 {\it gauge fields}  $\,A(t)\, dt\,,  A(t)\in M_{r\times r}(C^{\infty}({\Bbb R}^1))\,$,
 	   on the fundamental module ${\cal E}$;
     recall that its induced connection on ${\cal O}_{{\Bbb R}^1}^{A\!z}$
    	is simply the inner derivation $[A(t),\,\cdot\,]\,dt$    and
	 note that  $A$ is nondynamical since its curvature vanishes;
	\end{itemize}
   
   \item[{\Large $\cdot$}] {\it Fermionic}$\,$:
     \begin{itemize}
	  \item[-]
       {\it ${\cal O}_{{\Bbb R}^1}^{A\!z}$-valued spinor fields} $\,\Theta^{\alpha}
	                     \in  ({\cal O}_{{\Bbb R}^1}^{A\!z}
		                                   \otimes_{{\cal O}_{{\Bbb R}^1}}  {\cal S}) ({\Bbb R}^1)
	                  =  M_{r\times r}(C^{\infty}(S))\,$
		  on ${\Bbb R}^1$,\\    $\alpha=1,\,\cdots\,, N$,
	     where
         \begin{itemize}	
	      \item[]
         {\Large $\cdot$}\hspace{.5ex}	
	         $S$ is the spinor bundle on ${\Bbb R}^1$ (with flat metric $(dt)^2$) of rank $1$, \\
	     {\Large $\cdot$}\hspace{.5ex}
		     ${\cal S}$ is the sheaf of ${\cal O}_{{\Bbb R}^1}$-modules associated to $S$,  and \\
		 {\Large $\cdot$}\hspace{.5ex}
	         $N$ is the dimension of a spinor representation of $\SO(d-1,1)$;
	     \end{itemize}
	   note that
  	    since ${\cal S}\simeq {\cal O}_{{\Bbb R}^1}$
	     as ${\cal O}_{{\Bbb R}^1}$-modules,
		 $M_{r\times r}(C^{\infty}(S))\simeq M_{r\times r}(C^{\infty}({\Bbb R}^1))$;	
	 \end{itemize}	
  \end{itemize}
 and Lagrangian: (Up-low-repeated-dummy-index summation rule is assumed.)
 $$
   \begin{array}{ccl}
    {\cal L}(\varphi, \Theta, A)
       &   :=     &   T_0\,  \Tr   \left\{\rule{0ex}{1em}\right.
                                   - D_tY_aD_tY^a\,  +\,   V(Y^0, Y^1,\,\cdots\,, Y^{d-1}) \\[1.2ex]
     &&	 \hspace{12em}
             	 +\,   \Theta_{\alpha}D_t\Theta^{\alpha}\,
                  -\,  c_0 \Theta^{\alpha}\gamma_{a,\alpha\beta}
						                          [\, Y^a, \Theta^{\beta}\,]                 \left.\rule{0ex}{1em}\right\}\,,
   \end{array}												
 $$
 where
  \begin{itemize}
   \item[{\Large $\cdot$}]
    $T_0$ and $c_0$ are constants (depending on the string tension and the string coupling constant
	 when the setting is fitted into string theory);
	
   \item[{\Large $\cdot$}]	
    $D_tY^a = \partial_t Y^a +  [A, Y^a]$  and
    $D_t\Theta^{\alpha}= \partial_t\Theta^{\alpha} + [A,\Theta^{\alpha}]$
	are covariant derivatives of $Y^a$ and $\Theta^{\alpha}$ respectively;
	
   \item[{\Large $\cdot$}] 	
    $\gamma_a$, $a=0,\,1,\,\cdots\,, d-1$, are the $\gamma$-matrices for ${\Bbb M}^{(d-1)+1}$.
  \end{itemize}
 As given, this quantum field theory on ${\Bbb R}^{1, A\!z}$ has massless fermions.
 
 In comparison to  the Lagrangian density for the Glashow-Weinberg-Salam model in Sec.~5.2.1,
  one has immediately that
  $$
    \begin{array}{lclcclcl  }
	{\cal L}_{\fermionscriptsize}
	     & = &    \Theta_{\alpha}D_t\Theta^{\alpha}\,,
       &&&  {\cal L}_{\Higgsscriptsize}\;
	                & =   &   - D_tY_aD_tY^a\,  +\,   V(Y^0, Y^1,\,\cdots\,, Y^{d-1})\,, \\[1.2ex]
  {\cal L}_{\gaugescriptsize}  & =  & 0\,,
	   &&& {\cal L}_{\Yukawascriptsize}
	                &  =  &   -\,  c_0 \Theta^{\alpha}\gamma_{a,\alpha\beta}
						                                                                   [\, Y^a, \Theta^{\beta}\,]\,.
   \end{array}		
  $$
 With the $C^{\infty}$-maps
   $\varphi: {\Bbb R}^{1,A\!z}\rightarrow {\Bbb M}^{(d-1)+1}$ serving as Higgs fields
    of the current toy model  and
   the Higgs mechanism in the Glashow-Weinberg-Salam model and
    the discussion of open strings and D-branes in [Po: vol.\ I: Sec.\ 8.6 and Sec.\ 8.7] of Polchinski in mind,
  one expects thus:
  
 \bigskip
  
 \noindent   % begin-claim
 {\bf Claim 5.2.2.4.1. [mass generation of fermion].} {\it
  A vacuum expectation value VEV $\varphi_0$  of $\varphi$ may generate mass for some of the fermions
    in the model.
  Furthermore,
    the mass the VEV $\varphi_0$ generates for fermions in the model may depend on the distance
	between the connected or irreducible components of the image brane $\varphi_0({\Bbb R}^{1,A\!z})$
	in the space-time ${\Bbb M}^{(d-1)+1}$.
 } % end-claim			
 
 \bigskip
   
 {To} justify this, observe that

 \bigskip
 
 \noindent % begin-claim
  {\bf Lemma 5.2.2.4.2. [commutator with diagonal matrix].} {\it
   Let
    $r=r_1+\,\cdots\,+r_l$ be a positive-integer decomposition of $r$,
    $\Id_{r_i\times r_i}$ be the $r_i\times r_i$ identity matrix,  and
   	$$
      M^{(r_1,\,\cdots\,, r_l)}_{(\lambda_1,\,\cdots\,\lambda_l)}\;  :=\;
	   \left[
	     \begin{array}{cccc}
		  \lambda_1\,\Id_{r_1\times r_1} \\
		  & \ddots   \\
		  && \ddots \\
		  &&& \lambda_l\,\Id_{r_l\times r_l}		
		 \end{array}
 	   \right]\,
    $$	
    with the $i$-th block in the diagonal being $\lambda_i\cdot\Id_{r_i\times r_i}$	
	and all other entries being $0$.
  Let
   $$
      B=\left[ B_{ij}\right]_{ij,\, l\times l}\in M_{r\times r}({\Bbb C})
   $$
    be an $r\times r$-matrix in the $l\times l$ block-matrix form,
    where $B_{ij}\in M_{r_i\times r_j}({\Bbb C})$ is an $r_i\times r_j$-matrix.
  Then, the commutator
    $$
      [M^{(r_1,\,\cdots\,, r_l)}_{(\lambda_1,\,\cdots\,,\lambda_l)}\,,\, B]\;
	   =\;  \left[(\lambda_i-\lambda_j)\,B_{ij} \right]_{ij,\, l\times l}\,.
    $$	
   } % end-claim
 
 \bigskip
   
 Suppose now that the potential $V(\varphi):= V(Y^0, Y^1,\,\cdots\,, Y^{d-1})$ is chosen so that
  it takes a VEV at a $C^{\infty}$-map
   $\varphi_0: {\Bbb R}^{1,A\!z}\rightarrow {\Bbb M}^{(d-1)+1}$
   defined by
   $$
	 \begin{array}{ccccccc}
	   \hspace{3em}
	      & \varphi_0^{\sharp} & : & C^{\infty}({\Bbb M}^{(d-1)+1})
	      & \longrightarrow  & M_{r\times r}(C^{\infty}({\Bbb R}^1))\\[.6ex]
       &&& y^a  & \longmapsto  & Y_0^a(t)\,,   & a=0,\, 1,\,\cdots\,, d-1
	 \end{array}
   $$
 with
  $$
    Y_0^0(t)\;=\;  t\cdot\Id_{r\times r}     \hspace{2em}\mbox{and}\hspace{2em}
    Y_0^a(t)\;=\;  M^{(r_1,\,\cdots\,, r_l )}_{(\lambda^a_1,\,\cdots\,,\lambda^a_l)}
	\hspace{1em}\mbox{for  $a=1,\,\cdots\,, d-1$}\,,
  $$
 for some positive-integer decomposition $r=r_1+\,\cdots\,+ r_l$ of $r$ independent of $a$.
 Recall from [L-Y3: Sec.\ 3] (D(11.1)) that
  all the $\lambda^a_i$ must be real and
  they correspond to the $y^a$-coordinate of the $i$-th components of $\varphi_0({\Bbb R}^{1,A\!z})$
  in ${\Bbb M}^{(d-1)+1}$.
 As elements in $M_{r\times r}(C^{\infty}(S))$,
 express
   $$
     \Theta^{\beta}\;=\;  [\Theta^{\beta}_{ij}]_{ij,\, l\times l}
   $$
  in the $\l \times l$ block-matrix form.
 Then,
  after
    formally expanding  $\varphi$ around the VEV $\varphi_0$,
    rewriting the Lagrangian ${\cal L}$ as done in the Glashow-Weinberg-Salam model,  and
	applying Lemma~5.2.2.4.2,
	           % Lemma [commutator with diagonal matrix]
 one concludes that
 $$
  {\cal L}_{\Yukawascriptsize}\;
    % =\;   -\,  c_0\, \Theta^{\alpha}\gamma_{a,\alpha\beta}[\, Y^a, \Theta^{\beta}\,]\;
    =\;   -\, c_0\, \ \Theta^{\alpha}\gamma_{a,\alpha\beta}
	                                   [(\lambda^a_i-\lambda^a_j) \Theta^{\beta}_{ij}]_{ij,\, l\times l}\,
               +\, \mbox{(higher order terms)}\,.									   									   
 $$
 Assume that there is at least one $a$ such that $\lambda^a_i$ are distinct for all $i$.
 Then one has now a nontrivial mass-matrix for fermions that depends only on the set
  $$
    \{\lambda^a_i-\lambda_j^a\,|\, a=1,\,\cdots\,, d-1\,; 1\le i,j\le l\}
  $$
  that describes  the relative position/distance of connected/irreducible components of
  $\varphi_0({\Bbb R}^{1,A\!z})$ in ${\Bbb M}^{(d-1)+1}$.
 This justifies the claim and concludes the example.
 Cf.~{\sc Figure}~5-2-2-4-1.
 %
 % \marginpar{\raggedright\tiny $\bullet$ {\sc Figure} 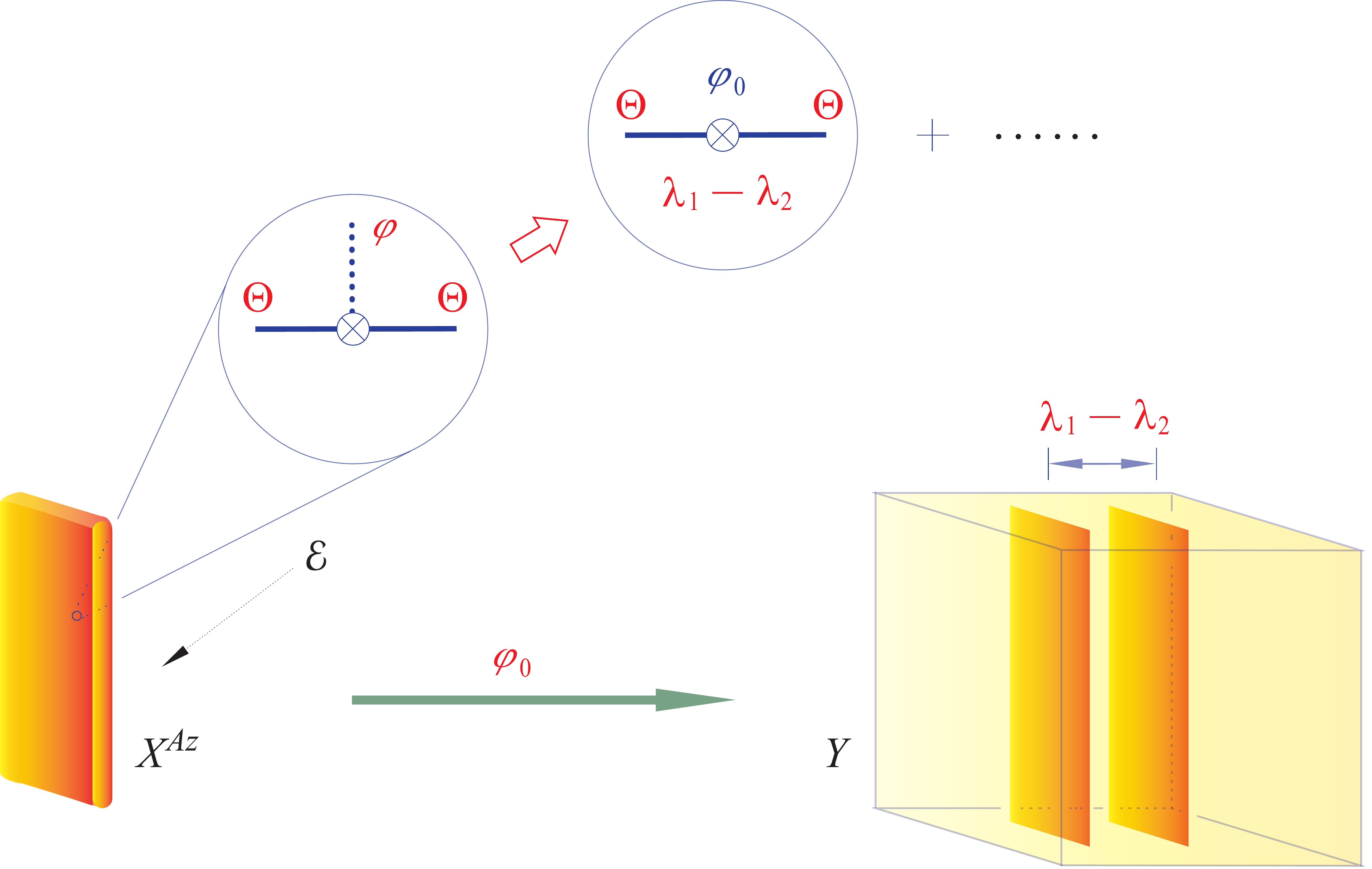}
 %
 \begin{figure}[htbp]
  \bigskip
   \centering
   \includegraphics[width=0.80\textwidth]{mass-generation.pdf}
 
   \bigskip
   \bigskip
  \centerline{\parbox{13cm}{\small\baselineskip 12pt
   {\sc Figure}~5-2-2-4-1.
   Generation of mass for fermions on the D-brane world-volume $(X^{\!A\!z},{\cal E})$, 
       a matrix manifold with a fundamental module,  
     through the Higgs mechanism  that takes $C^{\infty}$-maps $\varphi:X^{\!A\!z}\rightarrow Y$
	  as Higgs field. 
   In the {\sc Figure}, this process is indicated by 
       the arrow $\Rightarrow$ that transforms a $\Theta \varphi \Theta$ Feynman diagram 	
       to a series of diagrams with the lowest-order term the propagator diagram for fermions.
   The geometry of the image configuraion $\Image(\varphi_0)\subset Y$ of $X^{\!A\!z}$
        under a VEV $\varphi_0$
	  determines the mass of fermions $\Theta$ through the Yukawa coupling terms in the actional functional,  	  
	   as indicated by the relevant Feynman diagram, 
       with fermions in solid line $\,\mbox{-----}\,$ and 
	           Higgs fields in dashed line $\,\raisebox{.5ex}{......}\,$.
   Such Feynman diagrams may be thought of as reflecting the scattering in $X^{\!A\!z}$ 
     of particles associated to these fields on $X^{\!A\!z}$.			   
   The factor $(\lambda_1-\lambda_2)$ for fermion mass terms after Higgsing 
	    is only meant to be schematic, 
	   indicating its dependence on the distance ``$\lambda_1-\lambda_2$" of components 
	   of the image $\Image(\varphi_0)$ of $X^{\!A\!z}$ in $Y$. 			   
   }}% end-small % end-centerline
 \end{figure}
}\end{ssexample}

\bigskip

Further details of the Higgs mechanism on D-branes in our setting should be re-picked up
after the details of the action functional is understood; cf.\ Sec.~6.

\bigskip

\section{Where we are, and some more new directions}

Recall the following guiding question (cf.\ [L-Y1: Sec.~2.2] (D(1))):
 \begin{itemize}
  \item[{\bf Q.}]
  {\it What is a D-brane intrinsically?}
 \end{itemize}
 that initiated our D-brane project.
Following the line of Grothendieck's theory of schemes for modern algebraic geometry,
[L-Y1] (D(1)) provided a proto-typical setting for dynamical D-branes
  in the common realm of string theory and algebraic geometry,
  as {\it maps/morphisms from an Azumaya/matrix scheme with a fundamental module}.
Its equivalent settings were realized in [L-L-S-Y] (D(2), with Si Li and Ruifang Song).
Seven years later,
 [L-Y3] (D(11.1)) and the current note D(11.2) together
 brought into play the notion of differentiable rings
    from synthetic differential geometry and algebraic geometry over differentiable rings,
 and extends such settings to the common realm of string theory and differential/symplectic geometry,
  as {\it differentiable maps from an Azumaya/matrix manifold with a fundamental module}
  with similar equivalent settings.
Before moving on,
 it is instructive to pause here, as a conclusion of this note,
   with a reflection on where we are now --- in comparison with how string theory began ---  and
   a sample list of new themes/directions naturally brought out from the study.

\bigskip
   
First, another guiding question:
 \begin{itemize}
  \item[{\bf Q.}]
   {\it How does string theory begin? }
 \end{itemize}
Physically and historically, it began with the attempt to understand hadrons
 (particles that interacts through the strong interaction).
However,  as you open any textbook on string theory,
 another answer from another aspect may immediately come to you:
 \begin{itemize}
  \item[{\bf A.}]
   {\it Mathematically},
   string theory begins with the notion of a {\it differentiable map from a string or the world-sheet of a string}
   (open or closed, with or without world-sheet fermions) {\it to a space-time}.
 \end{itemize}
For example, 
  [B-B-Sc: Sec.\ 2.2], [G-S-W: vol.\ 1: Sec.\ 1.3.2], [Joh: Sec.\ 2.2],  [Po2: vol.\ I: Sec.\ 1.2], 
  [Zw: Chap.\ 6].
Indeed,
 replacing `string' (resp.\ `world-sheet') with any physical object (resp.\ `world-volume'),
  the same answer should work for any dynamical object moving in a space-time;
 in particular, D-branes.
And that's what we just completed and that's exactly where we are:
 \begin{itemize}
  \item[{\Large $\cdot$}]
  {\it After [L-Y3] (D(11.1)) and the current note D(11.2),   we are now
   at the beginning/entrance of a theory of D-branes 
    --- purely bosonic or with fermionic fields and supersymmetry --- 
    as dynamical objects moving in a space-time.}
   {\sc Figure}~6-1.
 \end{itemize}
 %
 % \marginpar{\raggedright\tiny $\bullet$ {\sc Figure}:   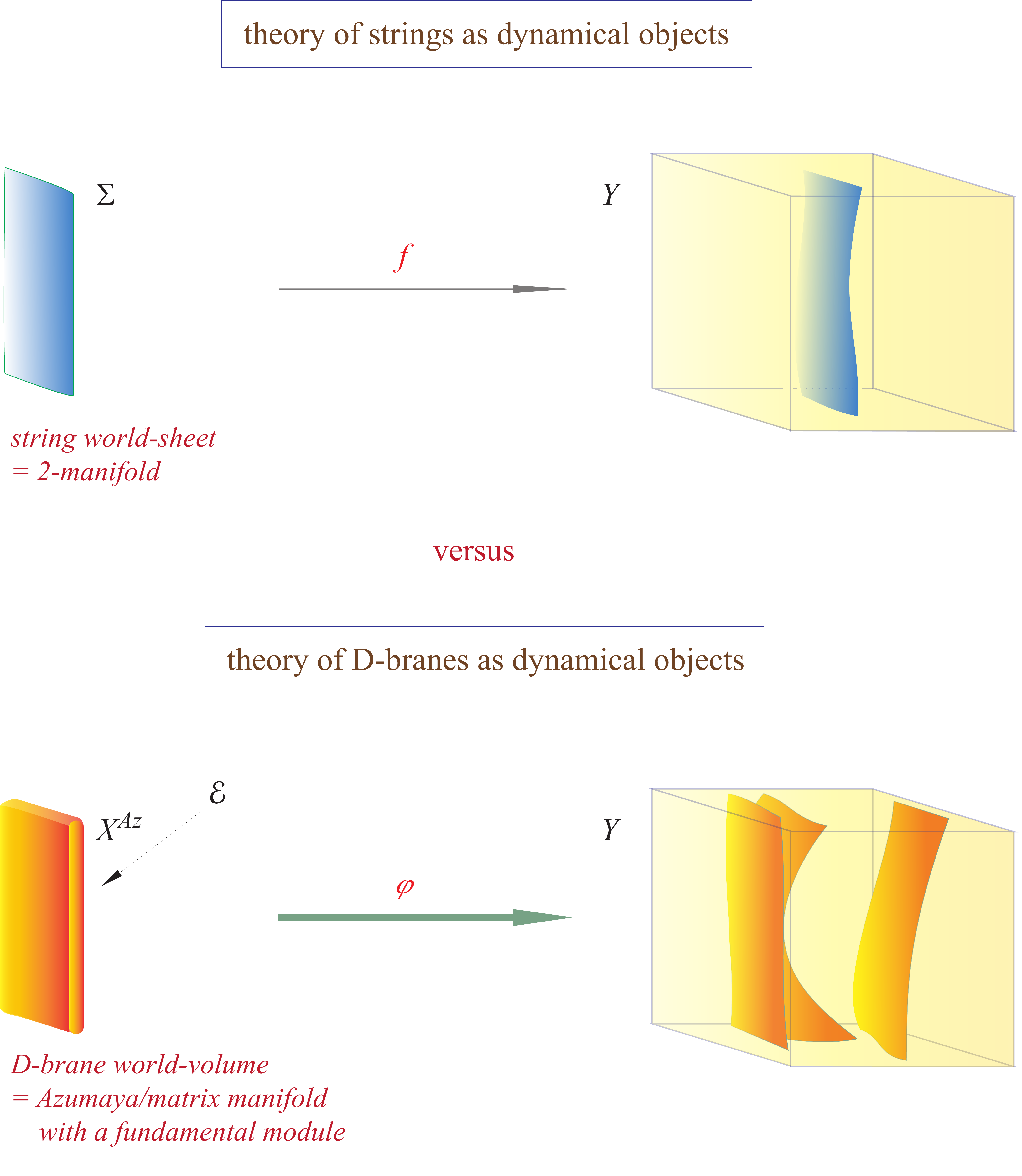}
%
\begin{figure}[htbp]
 \bigskip
  \centering
  \includegraphics[width=0.80\textwidth]{string-vs-Dbrane-dynamical.pdf}
  
  \bigskip
  \bigskip
 \centerline{\parbox{13cm}{\small\baselineskip 12pt
  {\sc Figure}~6-1.
  The mathematical starting point of string theory with string as a dynamical object moving in a space-time
        (cf.\ [G-S-W: Sec.\ 1.3, {\sc Figure} 1.3] of Green, Schwarz, and Witten)
   vs.\ the mathematical starting point of D-brane theory with D-brane as a dynamical object
  	      moving in a space-time: 
  The former begins with the notion of  `differentiable maps $f:\Sigma\rightarrow Y$ 
      from a string world-sheet to the space-time'
    while the latter begins with the notion of  
	 `differentiable maps $\varphi:(X^{\!A\!z},{\cal E})\rightarrow Y$
	   from a matrix manifold (i.e.\ the D-brane world-volume) with a fundamental module
      to the space-time'.  
   {\it Unlike } the string world-sheet	$\Sigma$, $X^{\!A\!z}$ carries a matrix-type ``noncommutative cloud"
       over its underlying topology. 
	 Under a differentiable map $\varphi$ 
	     as defined in [L-Y3: Definition~5.3.1.5] (D(11.1), 
               		 (cf.\ Definition~4.2.1.3 of the current note)
		                   % Definition [differentiable map as equivalence class of gluing systems]
       the image $\varphi(X^{\!A\!z})$ can behavior in a complicated way.	
     In particular, it could be disconnected or carry some nilpotent fuzzy structure.	  
	 See also {\sc Figure}~5-2-0-1.
	                        % Figure: higg-un.pdf
  }}% end-small % end-centerline
\end{figure}

\bigskip
 
With the above comparison to the history of string theory in mind,
in [L-Y3] (D(11.1)) we bring out
  a sample of five new directions all related to or motivated by D-branes in string theory
  that the notion of differentiable maps from matrix manifolds may play a role.
Some more immediate new directions include:
 \begin{itemize}
  \item[(1)]
   The {\it Dirac-Born-Infeld term}, the {\it Chern-Simons term},
    as well as any other term,  and their supersymmetric generalization
	in the full action functional for coincident D-branes
	from the aspect of functionals for maps from a matrix manifold
	         with various bosonic and fermionic fields thereupon.
        
  \item[(2)]
   {\it Synthetic/$\,C^k$-algebraic symplectic geometry}.

  \item[(3)]
   {\it Synthetic/$\,C^k$-algebraic calibrated geometry}.
       
  \item[(4)]
   A {\it new matrix theory} based on complex matrices of real eigenvalues.
   % i.e.\ a new theory on D0-branes or D-instantons.
 \end{itemize}
{From} the {\it string-theory point of view},
 \begin{itemize}
  \item[{\Large $\cdot$}]
  Theme (1) is {\it the next} guiding theme:
  Only when one is able to give a string-theory-compatible action functional
   on differentiable maps from an Azumaya/matric manifold with a fundamental module to a space-time
   can one begin to address physics of D-branes as fundamental objects in their own right in string theory .
 \end{itemize}
{From} the {\it mathematical point of view},
 \begin{itemize}
  \item[{\Large $\cdot$}]
  Theme (2) and Theme (3) have been missing in symplectic/calibrated geometry
    when Lagrangian or calibrated submanifolds and their deformations/collidings were studied.
  Whatever the reason they are overlooked,
   we now provide a motivation to study them
     from the new aspect of dynamical D-branes, cf.\ [L-Y3: Sec.\ 7.2] (D(11.1)).
 This should bring  the study of Lagrangian/calibrated submanifolds (possibly supporting a decorated sheaf)
  to a footing closer to that of Hilbert- or Quot-schemes or moduli of coherent or (semi-)stable sheaves
  in algebraic geometry.
 \end{itemize}
{From} {\it both mathematical and physical aspects},
 \begin{itemize}
  \item[{\Large $\cdot$}]
   Theme (4), the new matrix theory ---
       as a theory for differentiable maps from an Azumaya/matrix point  with a fundamental module to a space ---
	 could provide one with a starting point before attacking questions for general Azumaya/matrix manifolds.
 \end{itemize}

\bigskip
 
With the notion/framework of differentiable maps
 from a matrix manifold-or-supermanifold to a space-time (or superspace-time) in place,
the stage has just been set.

\newpage

\vspace{12em}
\baselineskip 13pt
%references
{\footnotesize

\vspace{1em}

\noindent
chienhao.liu@gmail.com, chienliu@math.harvard.edu; \\
  % vafa@physics.harvard.edu;\\
yau@math.harvard.edu

}%endfootnotesize

\end{document}